\documentclass[11pt]{amsart}

\usepackage[T1]{fontenc}
\usepackage{microtype}      

\usepackage{amsmath,amssymb,amsfonts,amsthm}
\allowdisplaybreaks

\usepackage{mathtools}
\usepackage{mathrsfs}
\usepackage{bm}

\usepackage{placeins}

\numberwithin{equation}{section}

\usepackage[margin=1in]{geometry}

\usepackage{graphicx}
\usepackage{enumitem}

\usepackage[numbers,sort&compress]{natbib}

\usepackage[pdftex,hyperfootnotes=false]{hyperref}
\hypersetup{
	colorlinks=true,     
	pdfstartview=FitH,
	bookmarksopen=true,
	bookmarksnumbered=true
}

\theoremstyle{plain}
\newtheorem{theorem}{Theorem}[section]
\newtheorem{theorem*}{Theorem}
\newtheorem{maintheorem}{Theorem}

\newtheorem{mainproposition}[maintheorem]{Proposition}

\newtheorem{proposition}[theorem]{Proposition}
\newtheorem{lemma}[theorem]{Lemma}
\newtheorem{corollary}[theorem]{Corollary}

\theoremstyle{definition}
\newtheorem{definition}[theorem]{Definition}

\theoremstyle{remark}
\newtheorem{remark}[theorem]{Remark}

\newcommand{\CC}{\mathbb{C}}
\newcommand{\QQ}{\mathbb{Q}}
\newcommand{\R}{\mathbb{R}}
\newcommand{\NN}{\mathbb{N}}

\DeclareMathOperator{\Disc}{Disc}

\DeclareMathOperator{\Ran}{Ran}
\DeclareMathOperator{\Span}{Span}

\newcommand{\HS}{\mathrm{HS}}

\begin{document}

\title{Spectral Structure of the Mixed Hessian of the Dispersionless Toda $\tau$-Function}

\author{Oleg Alekseev}

\address{Department of Mathematics, HSE University (National Research University Higher School of Economics), Moscow, Russia}

\address{Chebyshev Laboratory, Department of Mathematics and Computer Science, Saint Petersburg State University, Saint Petersburg, Russia}



\date{}

\begin{abstract}
We study the mixed Hessian of the dispersionless Toda \(\tau\)-function for
the one-harmonic \(s\)-fold symmetric conformal map
\(f(w)=rw+aw^{1-s}\). This Hessian is the susceptibility matrix generated by
the inverse conformal map. Our spectral statements are formulated for its
weighted symmetry-block realizations on a fixed Hilbert space. In that
realization, the first spectral transition occurs at the analytic threshold
\(\zeta_c=(s-1)^{s-1}/s^s\), where the dominant square-root singularity of the
inverse map reaches the normalization circle, rather than at the geometric
threshold \(\zeta_{\mathrm{univ}}=1/(s-1)\), where univalence fails. After
symmetry decomposition and weighted realization, each block develops exactly
one logarithmically diverging eigenvalue as \(\zeta\uparrow\zeta_c\), while
the remaining spectrum stays bounded and converges to a compact limit. The
instability is therefore rank one in every symmetry sector of the weighted
block theory.

We then continue the scalar Gram generating functions beyond \(\zeta_c\).
They are generalized hypergeometric functions on the slit plane
\(\mathbb{C}\setminus[\zeta_c^2,\infty)\), their branch-point expansion contains
the logarithmic term responsible for the divergence, and in the range
\(1\le p\le s\) they admit Cauchy--Stieltjes and Jacobi-matrix realizations.
In particular, the continued scalar quantities remain regular at
\(\zeta_{\mathrm{univ}}\), so the analytic spectral transition strictly
precedes the geometric breakdown of univalence.
\end{abstract}

\maketitle


\section{Introduction}
\label{sec:intro}


The dispersionless Toda hierarchy associates to a planar domain a
\(\tau\)-function whose mixed second derivatives with respect to harmonic
moments define a natural susceptibility matrix. For simply connected domains,
this mixed Hessian is generated by the inverse conformal map and therefore
encodes analytic information about the domain in a form adapted to
integrable-hierarchy methods
\cite{MineevWeinstein2000,WiegmannZabrodin2000,MarshakovWiegmannZabrodin2002,BOOKtau,Takhtajan2001,GustafssonLin2013,ZabrodinWiegmann2006,Teo2009}.
In the present paper we study this Hessian for the explicit one-harmonic
\(s\)-fold symmetric family $
	f(w)=rw+aw^{1-s}$,
which provides a nontrivial solvable case at the interface of conformal
mapping, dispersionless integrability, and spectral analysis.

Our main observation is that the first spectral instability of the weighted
symmetry-block realization occurs strictly before the conformal map ceases to be univalent. More precisely, we identify
an analytic threshold $\zeta_c=(s-1)^{s-1}/s^s$, at which the dominant
square-root singularity of the inverse map reaches the normalization
circle, and we prove that this threshold is strictly smaller than the
geometric threshold $\zeta_{\mathrm{univ}}=1/(s-1)$ for loss of univalence.
The spectral statement is formulated for the weighted block realizations on a
fixed Hilbert space introduced below: in that realization, each symmetry
block develops exactly one logarithmically diverging eigenvalue as
\(\zeta\uparrow\zeta_c\), while all remaining eigenvalues stay bounded.
Thus the onset of instability is rank one in every symmetry sector.

Let \(D\subset\CC\) be a bounded simply connected domain with rectifiable
boundary, and let
\[
	f:\{|w|>1\}\to\CC\setminus D,
	\qquad
	f(w)=rw+a_0+a_1w^{-1}+\cdots,
\]
be the unique exterior conformal map normalized by \(f(w)/w\to r\) as
\(w\to\infty\). Writing \(w=w(z)\) for the inverse map near infinity, one
obtains the mixed logarithmic kernel
\begin{equation}\label{eq:intro-grunsky}
	\log\!\left(1-\frac{1}{w(z)\,\overline{w(z')}}\right)
	=
	-\sum_{m,n\ge1}H_{mn}\,
	\frac{(z/r)^{-m}}{m}\frac{(\bar z'/r)^{-n}}{n},
\end{equation}
valid for \(|z|\) and \(|z'|\) sufficiently large. It is obtained by pulling
back the canonical mixed logarithmic kernel of the exterior disk under the
inverse conformal map. In the dispersionless Toda framework this is precisely
the Wiegmann--Zabrodin mixed kernel identity
\cite{WiegmannZabrodin2000,BOOKtau}. The coefficients \(H_{mn}\) are the main
object of the present paper.

In the geometric realization of the dispersionless \(2\)D Toda hierarchy,
the exterior conformal map evolves under commuting flows whose times
\((t_0,t_1,t_2,\dots;\bar t_1,\bar t_2,\dots)\) are
identified with the harmonic moments of \(\CC\setminus D\). In a standard normalization, \(t_0=(1/\pi)\int_D d^2z\) is the area of \(D\)
divided by \(\pi\), while for \(n\ge1\) the variables \(t_n\) and \(\bar t_n\)
are the holomorphic and antiholomorphic harmonic moments of
\(\CC\setminus D\). The hierarchy
possesses a dispersionless \(\tau\)-function whose logarithm
\(\mathcal F=\log\tau\) plays the role of a free energy, and in a standard
normalization one has
\[
	H_{mn}
	=
	r^{-(m+n)}\frac{\partial^2\mathcal F}{\partial t_m\,\partial\bar t_n}.
\]
Thus \((H_{mn})\) is the scale-invariant mixed Hessian of the dispersionless
free energy. This interpretation is closely connected with Takhtajan's
free-boson approach to the \(\tau\)-function for smooth Jordan contours
\cite{Takhtajan2001}. In the genus-zero case, \(\log\tau\) is a K\"ahler
potential for a natural Hermitian metric on the space of contours of fixed
area, so the mixed derivatives of \(\log\tau\) represent the corresponding
metric coefficients in harmonic-moment coordinates.

The same coefficients therefore appear naturally in several standard
frameworks. In
Laplacian growth and Hele--Shaw dynamics, \((H_{mn})\) describes the
second-order coupling between moment deformations. In the dispersionless Toda
hierarchy, it is the matrix of mixed second derivatives of \(\log\tau\). In
Bergman-kernel language, one has, with our normalization,
\begin{equation}\label{eq:bergman-identity}
	\pi B_{\CC\setminus D}(z,z')
	=
	-\partial_z\partial_{\bar z'}
	\log\!\left(1-\frac{1}{w(z)\,\overline{w(z')}}\right),
\end{equation}
so the same coefficients may be recovered from the exterior Bergman kernel in
conformal coordinates \cite{Bell1992,Jeong2007}. In the normal matrix model,
\(\mathcal F\) is the planar free energy, or equivalently the logarithmic
energy of the equilibrium measure on \(D\)
\cite{ZabrodinWiegmann2006,Teo2009}, and \((H_{mn})\) may be viewed as a
mixed \emph{susceptibility} matrix with respect to moment deformations.

The main contribution of the paper is to identify the first singularity seen
by the block spectrum of the mixed Hessian and to show that this singularity
is analytic rather than geometric. This question is genuinely different from
the classical holomorphic Grunsky theory, whose operator bounds are tied
directly to univalence and quasiconformal extension
\cite{Pommerenke1992,Duren1983}. In the mixed sector studied here, the first instability already appears at the
analytic threshold and concentrates into one logarithmically diverging
\emph{stiff} mode per symmetry sector, while the remaining sectorial spectrum
stays bounded. The resulting
separation between analytic and geometric criticality is invisible in a
purely holomorphic framework. It should also be distinguished from spectral
and determinantal questions for the classical Grunsky operator, including
recent asymptotics for truncated holomorphic Grunsky operators
\cite{Schiffer1981,Shen2010,JohanssonViklund2023}: those works concern the
holomorphic Grunsky coefficients or Fredholm determinants built from them, whereas
here the basic object is the mixed Hessian \((H_{mn})\), and the relevant
transition is a logarithmic rank-one spike at the earlier analytic threshold
\(\zeta_c\).

We now restrict to a family in which the analysis is completely explicit. We
study the \(s\)-fold symmetric one-harmonic family
\begin{equation}\label{eq:map}
	f(w)=rw+aw^{1-s},
	\qquad
	s\ge2,\quad r>0,\quad a\in\R,
\end{equation}
which is the simplest family in which the analytic singularity of the inverse
map and the geometric loss of univalence separate. Maps of the form
\eqref{eq:map} are standard in Laplacian growth and related free-boundary
problems \cite{GustafssonLin2013}. They already exhibit nontrivial geometric
transitions, including loss of univalence and, for \(s\ge3\), boundary cusps
at the geometric threshold. Because they depend on a single dimensionless
shape parameter \(\zeta:=a/r>0\), they provide a minimal and fully explicit
example in which the spectral behavior of the mixed Hessian can be analyzed in
detail. The present paper is deliberately restricted to this one-harmonic
family. Its purpose is to isolate the threshold mechanism and the resulting
rank-one spectral instability in the simplest case where both can be made
completely explicit.

Writing $x:=r/z$ and $w(z)^{-1}=x\,U(x;\zeta)$,
one finds that the inverse map is governed by the algebraic equation
\begin{equation}\label{eq:intro-U}
	U=1+\zeta x^sU^s.
\end{equation}
This elementary relation already contains the two thresholds that control the
problem. The first is the \emph{analytic threshold}
\[
	\zeta_c=\frac{(s-1)^{s-1}}{s^s},
\]
at which the dominant square-root singularity of the inverse map reaches
the unit circle in the \(x\)-plane. The second is the \emph{geometric threshold}
\[
	\zeta_{\mathrm{univ}}=\frac{1}{s-1},
\]
at which the conformal map ceases to be univalent and the boundary develops
its first geometric singularity: for \(s\ge3\), semicubical cusps, and for
\(s=2\), the degenerate Joukowski segment. The problem is to determine which
of these thresholds governs the first spectral instability seen by the
weighted block realizations of the mixed Hessian.

For the weighted block Hessian, the relevant threshold is~\(\zeta_c\). The \(s\)-fold
symmetry of \eqref{eq:map} implies \(H_{mn}=0\) unless
\(m\equiv n\pmod s\), so the Hessian decomposes into \(s\) independent
blocks. These blocks admit an exact positive Gram factorization in terms of
Raney coefficients generated by \eqref{eq:intro-U}. The spectral theorem is
proved for the associated \emph{weighted block realizations} on the fixed
Hilbert space \(\ell^2(\NN_0)\): the coefficient matrix is intrinsic, while
the weighting is the method that makes the logarithmic instability visible in
a fixed operator framework. In this realization, each symmetry block develops
exactly one logarithmically diverging eigenvalue, while the remaining
sectorial spectrum stays bounded. Equivalently, there is one stiff mode per
symmetry sector, so at most \(s\) logarithmically diverging eigenvalues
globally. The first spectral transition is therefore an analytic phenomenon
of the inverse map rather than a manifestation of geometric
non-univalence.

A second part of the paper concerns the intermediate regime, $\zeta_c<\zeta<\zeta_{\mathrm{univ}}$,
where the conformal map remains
univalent although the weighted compact operator realization from the
subcritical regime is no longer available. The block-operator framework
therefore disappears, but the scalar objects underlying the Gram
construction remain meaningful and continue analytically. More precisely, the
Gram weights of the subcritical theory are generated by series built from the
squared Raney coefficients, and these scalar functions admit analytic
continuation beyond \(\zeta_c\). We identify the continued functions as
generalized hypergeometric objects, derive their resonant branch-point
expansion, establish a Cauchy--Stieltjes representation, and, for \(1\le p\le s\), obtain a Jacobi-matrix realization. Thus the continued
scalar Gram quantities retain the same analytic branch-point mechanism that
underlies the subcritical logarithmic instability, even though the weighted
compact realization is no longer available. In particular, the singularity at \(\zeta_c\) is not a geometric one: the two lateral boundary values of the
continued scalar quantities remain finite at the univalence threshold.

\subsection{Main results}
\label{subsec:main-results}

The main results address three points. Theorem~\ref{thm:A} gives the
operator-theoretic form of the analytic threshold \(\zeta_c\): for a fixed
weight parameter \(\beta>0\), each symmetry block splits into one
logarithmically stiff direction and a bounded soft remainder. Theorem~\ref{thm:B}
identifies the scalar Gram quantities that survive beyond the weighted compact
regime and continues them across the cut as generalized hypergeometric and
Stieltjes objects. Proposition~\ref{prop:C} then shows that this analytic transition
occurs strictly before the geometric loss of univalence at
\(\zeta_{\mathrm{univ}}\).

Fix a symmetry sector \(q\in\{1,\dots,s\}\) and a weight parameter
\(\beta>0\). Let \(H^{(q)}\) denote the intrinsic coefficient matrix of the
\(q\)-th Hessian block, and let \(\widetilde G^{(q)}(\zeta)\) be the
corresponding \emph{weighted} block Gram operator. The spectral analysis in this
paper is carried out for these weighted realizations on the fixed Hilbert
space \(\ell^2(\NN_0)\), where \(\NN_0:=\{0,1,2,\dots\}\). This point is
important: the coefficient matrix \(H^{(q)}\) is intrinsic, but the spectral
theorem concerns the weighted operators \(\widetilde G^{(q)}(\zeta)\) and
equivalently \(\widetilde H^{(q)}(\zeta)\). The weighting therefore does not modify the coefficients. Rather, it places the problem in a fixed Hilbert-space framework in which the critical rank-one spike can be isolated. Since the weighted Gram and weighted Hessian blocks have the same nonzero spectrum, the spectral conclusions may be stated equivalently for \(\widetilde G^{(q)}(\zeta)\) or \(\widetilde H^{(q)}(\zeta)\).

\begin{maintheorem}[Rank-one logarithmic instability]
	\label{thm:A}
	Fix \(q\in\{1,\dots,s\}\) and \(\beta>0\). The weighted block Gram
	operator admits, as \(\zeta\uparrow\zeta_c\), a rank-one logarithmic
	decomposition
	\[
		\widetilde G^{(q)}(\zeta)
		=
		L(\zeta)\,\widetilde{\bm d}^{(q)}\otimes \widetilde{\bm d}^{(q)*}
		+\widetilde C^{(q)}(\zeta),
		\qquad
		L(\zeta):=\log\Bigl(\frac{1}{1-\zeta^2/\zeta_c^2}\Bigr)\to+\infty,
	\]
	where \(\widetilde{\bm d}^{(q)}\neq0\) is independent of \(\zeta\), while
	\(\widetilde C^{(q)}(\zeta)\) remains uniformly bounded and converges in
	operator norm to a compact limit.

	In particular, exactly one eigenvalue in the \(q\)-th sector diverges
	logarithmically,
	\[
		\mu_1^{(q)}(\zeta)
		=
		\|\widetilde{\bm d}^{(q)}\|_{\ell^2}^2\,L(\zeta)+O(1).
	\]
	All remaining sectorial eigenvalues stay bounded. More precisely, after
	removing the spike direction, each fixed soft eigenvalue converges to the
	corresponding eigenvalue of the compressed compact limit described in
	Proposition~\ref{prop:soft-spectrum-convergence}.
\end{maintheorem}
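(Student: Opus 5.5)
The plan is to obtain the decomposition directly from the Gram factorization of the \(q\)-th block in terms of Raney coefficients, and then to read off the spectral statements by rank-one perturbation theory. Concretely, write the weighted Gram operator entrywise as
\[
\widetilde G^{(q)}_{mn}(\zeta)=\sum_{k\ge0} \widetilde A^{(q)}_{m,k}\,\widetilde A^{(q)}_{n,k}\,\Bigl(\frac{\zeta^2}{\zeta_c^2}\Bigr)^{k},
\qquad m,n\in\NN_0,
\]
where \(\widetilde A^{(q)}_{m,k}\) is the Raney coefficient generated by \eqref{eq:intro-U}, with the factor \(\zeta_c^{-2k}\) absorbed and multiplied by the weight \((1+m)^{-\beta}\).

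\emph{Step 1: uniform asymptotics of the products.} From the explicit binomial form of the Raney numbers and Stirling's formula, I would derive the uniform expansion
\[
\widetilde A^{(q)}_{m,k}\widetilde A^{(q)}_{n,k}=\frac{\widetilde d^{(q)}_m\,\widetilde d^{(q)}_n}{k}+\rho_{mn}(k).
\]
Here \(\widetilde d^{(q)}_m\) is \((1+m)^{-\beta}\) times the square-root-singularity amplitude of the \(m\)-th power of \(U\) at the dominant branch point. The remainder should obey \(|\rho_{mn}(k)|\le C(1+m)^{c-\beta}(1+n)^{c-\beta}k^{-1-\delta}\) for some \(\delta>0\). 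This is the square-root singularity of \(U\) at \(x^s=\zeta_c/\zeta\) reaching the circle, seen in coefficient space.

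\emph{Step 2: split off the spike.} Summing the leading term gives \(\widetilde d_m\widetilde d_n\sum_{k\ge1}k^{-1}(\zeta/\zeta_c)^{2k}=\widetilde d_m\widetilde d_n\,L(\zeta)\). This produces \(L(\zeta)\,\widetilde{\bm d}\otimes\widetilde{\bm d}^*\), with \(\widetilde{\bm d}\in\ell^2\) once the weight dominates the polynomial growth in \(m\); the vector is \(\zeta\)-independent and nonzero because the amplitudes are positive. The \(k=0\) term and the sum of the \(\rho_{mn}(k)\) are collected into \(\widetilde C^{(q)}(\zeta)\). The bound \(k^{-1-\delta}\) makes \(\sum_k\rho_{mn}(k)(\zeta/\zeta_c)^{2k}\) converge absolutely and uniformly up to \(\zeta=\zeta_c\). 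Combined with the weight, this gives a uniform Hilbert--Schmidt bound, so dominated convergence yields convergence in Hilbert--Schmidt norm, hence in operator norm, to a compact limit \(\widetilde C^{(q)}(\zeta_c)\).

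\emph{Step 3: spectral consequences.} Since \(\widetilde G\) is positive and self-adjoint, Weyl's inequalities for the rank-one-plus-bounded form give \(|\mu_1-\|\widetilde{\bm d}\|^2L|\le\|\widetilde C\|=O(1)\). Interlacing for a rank-one perturbation gives \(\mu_{j+1}(\widetilde G)\le\mu_j(\widetilde C)\le\sup_\zeta\|\widetilde C(\zeta)\|\), so all other eigenvalues stay bounded. For the soft spectrum, I would compress to \(\widetilde{\bm d}^{\perp}\) with the fixed projection \(P\). Then \(P\widetilde G P=P\widetilde C P\) converges in norm to \(P\widetilde C(\zeta_c)P\), and norm convergence of compact self-adjoint operators gives convergence of each ordered eigenvalue, as in Proposition~\ref{prop:soft-spectrum-convergence}. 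The cross terms \(P\widetilde G\widetilde{\bm d}\) are \(O(1)\), so a Schur-complement argument shows that the eigenvalues of \(\widetilde G\) other than \(\mu_1\) differ from those of \(P\widetilde CP\) by \(O(1/L)\).

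The main obstacle is Step 1. The uniform remainder estimate must hold jointly in \(m,n\) and \(k\), including the transitional regime \(k\sim m\), where the Stirling asymptotics are not uniform. It must also be strong enough to give summability both in \(k\) and in weighted \(\ell^2\) of \((m,n)\). I would first try a singularity-analysis approach: use the Lagrange-inversion representation of \(U^{m}\) and a Darboux/transfer bound with \(m\)-dependence tracked through the factor \((1+m)^{c}\) of the local expansion. Failing that, I would fall back on explicit Gamma-ratio bounds, splitting the range into \(k\ge m^2\) and \(k<m^2\).
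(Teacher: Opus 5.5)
The decisive gap is the weight. You weight by \((1+m)^{-\beta}\) and assert that the spike amplitudes grow only polynomially in the block index. In fact they grow exponentially. By Lemma~\ref{lem:U-critical}, \(U^{p}=U_c^{p}-p\,U_c^{p-1}\kappa\sqrt{1-\zeta x^s/\zeta_c}+\cdots\) with \(U_c=M=s/(s-1)>1\). Hence the transfer amplitude is \(A_{s,p}=p\,M^{p}/\sqrt{2\pi s(s-1)}\), and the intrinsic spike amplitudes are \(d^{(q)}_j=sA_{s,p_j}/\sqrt{p_j}\propto\sqrt{p_j}\,M^{p_j}\). Three consequences follow for any polynomial weight:
\begin{itemize}
\item \(\widetilde{\bm d}\notin\ell^2(\NN_0)\);
\item your bound \(|\rho_{mn}(k)|\le C(1+m)^{c-\beta}(1+n)^{c-\beta}k^{-1-\delta}\) cannot hold;
\item the operator is not even bounded for fixed \(\zeta\in(0,\zeta_c)\). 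By Cauchy--Schwarz, \(\sigma_p(\zeta)\ge p\,\mathcal G_p(\zeta^2)\ge p(1-\zeta^2)U_*^{2p}\), where \(U_*>1\) is the value of the Taylor branch \(U\) at \(\zeta x^s=\zeta^2\). So the diagonal entries blow up exponentially in \(j\).
\end{itemize}
The weight must cancel this background growth. The paper takes \(w_j=p_j^{3/2+\beta}M^{p_j}\), which gives \(\widetilde d^{(q)}_j=\sqrt{s/(2\pi(s-1))}\,p_j^{-1-\beta}\). Every Hilbert--Schmidt estimate relies on that polynomial margin.

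Second, your entry formula \(\widetilde G_{mn}=\sum_k\widetilde A_{m,k}\widetilde A_{n,k}(\zeta/\zeta_c)^{2k}\) with \(\zeta\)-independent coefficients is not available. The entry \(G^{(q)}_{j_1j_2}=\langle\bm v^{(p_{j_1})},\bm v^{(p_{j_2})}\rangle\) pairs \(R_{s,p_{j_1}}(m+\Delta)\) with \(R_{s,p_{j_2}}(m)\) and carries \(\zeta^{2m+\Delta}\), where \(\Delta=|j_1-j_2|\) (Lemma~\ref{lem:Gram-entry}). Summing the leading asymptotics therefore gives \(L(\zeta)\,\eta^{|j_1-j_2|}\,\widetilde d_{j_1}\widetilde d_{j_2}\) with \(\eta=\zeta/\zeta_c\), not \(L(\zeta)\widetilde d_{j_1}\widetilde d_{j_2}\). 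The discrepancy \(L(\zeta)\bigl(\eta^{|j_1-j_2|}-1\bigr)\widetilde d_{j_1}\widetilde d_{j_2}\) carries the divergent factor \(L\), so a uniformly summable remainder does not absorb it. You need the separate estimate \(L(\zeta)\|K_\eta-K_1\|\to0\), where \(K_\eta=(\eta^{|j_1-j_2|}\widetilde d_{j_1}\widetilde d_{j_2})\) and \(K_1=\widetilde{\bm d}\otimes\widetilde{\bm d}^*\). This holds because \(\|K_\eta-K_1\|_{\HS}=O\bigl((1-\eta)^{\gamma}\bigr)\) for some \(\gamma>0\) when \(\widetilde d_j\asymp p_j^{-1-\beta}\) (Lemma~\ref{lem:toeplitz-removal}).

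On the obstacle you flagged, your fallback scale \(k\sim p^2\) is the right one. Uniformly for \(1\le p\le m\),
\(R_{s,p}(m)=A_{s,p}\zeta_c^{-m}m^{-3/2}\exp\bigl(-\tfrac{p^2}{2s(s-1)m}+O(p/m)+O(p^3/m^2)\bigr)\).
So the one-term expansion has relative error \(O(p^2/m)\) and is effective only for \(m\gtrsim p^2\). In particular, the uniform \(O(1/m)\) error on \(p\le\theta m\) asserted in Lemma~\ref{lem:raney-uniform-expansion} is too strong. At \(p=\theta m\), the ratio \(R_{s,p}(m)/(A_{s,p}\zeta_c^{-m}m^{-3/2})\) tends to \(0\) exponentially. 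The paper's split at \(m=2p_{j_2}\) rests on that lemma. The window \(2p_{j_2}\le m\lesssim p_{j_2}^2\) should instead be bounded by Proposition~\ref{prop:raney-uniform}. This costs a factor \(\log p_{j_2}\), which the \(\beta\)-margin absorbs. With the corrected weights and the shift accounted for, your Step~1 then goes through.

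Your spectral step is sound and is a legitimate alternative to the paper's min--max and \(Q_\zeta\)-compression argument. It uses Weyl's inequality for \(\mu_1\), rank-one interlacing for \(k\ge2\), and a fixed-projection Schur complement for the soft spectrum.
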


Theorem~\ref{thm:A} is proved by combining the square-root singularity analysis of the
inverse map with the blockwise Gram representation of the Hessian. The
logarithmic term is first isolated at the level of matrix entries, and the
weighted realization for fixed \(\beta>0\) then yields a compact operator
framework in which the rank-one singular part separates cleanly from the
bounded soft sector.

The second principal result concerns the scalar quantities that underlie the
block Gram construction itself and therefore remain meaningful beyond the
weighted operator regime. For each \(p\ge1\), consider the generating
function
\begin{equation}\label{eq:intro-Gp}
	\mathcal G_p(u)
	:=
	\sum_{m\ge0}R_{s,p}(m)^2\,u^m.
\end{equation}
For the subcritical conformal family one has the specialization \(u=\zeta^2\).
Below \(\zeta_c\), the \emph{Gram weights} entering the block operators are
obtained from these functions by explicit Euler-type differential operators.
The importance of \(\mathcal G_p\) is therefore that, in scalar form, they
capture the same branch-point mechanism that drives the logarithmic spectral
instability. These functions continue analytically beyond the radius of
convergence \(u=\zeta_c^2\), even though the weighted compact operator
picture no longer persists there.

\begin{maintheorem}[Analytic continuation of the generating function]
	\label{thm:B}
	For each integer \(p\ge1\), the function \(\mathcal G_p\) extends from its disk
	of convergence to a single-valued holomorphic function on
	\(\CC\setminus[\zeta_c^2,\infty)\). It is a generalized hypergeometric
	function, and near the branch point \(u=\zeta_c^2\) it admits a resonant
	expansion of the form
	\begin{equation}\label{eq:intro-resonant}
		\mathcal G_p(u)
		=
		A(u)+B(u)\Bigl(1-\frac{u}{\zeta_c^2}\Bigr)^2
		\log\Bigl(1-\frac{u}{\zeta_c^2}\Bigr),
	\end{equation}
	with \(A\) and \(B\) analytic near \(\zeta_c^2\). The scalar Gram weights are recovered from \(\mathcal G_p\) by explicit
Euler-type differential operators. These remove the quadratic prefactor and
produce the logarithmic divergence at \(u=\zeta_c^2\).
\end{maintheorem}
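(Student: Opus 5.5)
The plan is to make everything explicit through the Raney coefficients. By Lagrange inversion applied to \eqref{eq:intro-U}, $U^p=\sum_{m\ge0}R_{s,p}(m)\,(\zeta x^s)^m$ with
\[
R_{s,p}(m)=\frac{p}{sm+p}\binom{sm+p}{m}.
\]
I will first check that $\mathcal G_p$ is hypergeometric. The ratio $R_{s,p}(m+1)^2/R_{s,p}(m)^2$ is a rational function of $m$. Applying the Gauss multiplication formula to $(sm+p)!$ and to $((s-1)m+p)!$ turns it into a ratio of Pochhammer symbols in $m$, after cancelling the common factors coming from $p/(sm+p)$. This writes
\[
\mathcal G_p(u)={}_{Q+1}F_{Q}\bigl(\bm a;\bm b;\,u/\zeta_c^2\bigr)
\]
with explicit rational parameters. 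The argument is normalised by $u/\zeta_c^2$ because the multiplication-formula constants are $s^{s}/(s-1)^{s-1}=\zeta_c^{-1}$, squared. The same computation, through Stirling, gives
\[
R_{s,p}(m)^2\sim C_p\,m^{-3}\zeta_c^{-2m},
\]
so the radius of convergence is exactly $\zeta_c^2$ and the parameter excess is $\sum b_j-\sum a_j=2$, an integer.

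Next comes the continuation. I will use the standard fact that ${}_{Q+1}F_Q(\bm a;\bm b;z)$ extends single-valuedly to $\CC\setminus[1,\infty)$. One route is the Mellin--Barnes representation
\[
\frac{1}{2\pi i}\int \frac{\prod\Gamma(a_j+t)}{\prod\Gamma(b_j+t)}\,\Gamma(-t)\,(-z)^t\,dt,
\]
which converges for $|\arg(-z)|<\pi$. Another route is to note that $F$ solves the hypergeometric ODE, whose only finite singular points are $z=0$ and $z=1$, and continue along paths in the simply connected slit plane. With $z=u/\zeta_c^2$ this gives holomorphy of $\mathcal G_p$ on $\CC\setminus[\zeta_c^2,\infty)$.

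The main step, and the main obstacle, is the resonant local expansion \eqref{eq:intro-resonant} at $z=1$. For the ${}_{Q+1}F_Q$ equation the local exponents at $z=1$ are $0,1,\dots,Q-1$ together with the excess $\sigma=\sum b_j-\sum a_j=2$. Since $2$ coincides with one of the integer exponents when $Q\ge3$ (which holds for $s\ge2$), there is a resonance. So there are $Q$ holomorphic Frobenius solutions together with one solution of the form $(1-z)^2\log(1-z)\,\phi(z)+\psi(z)$, with $\phi,\psi$ holomorphic near $z=1$. I will argue that $\mathcal G_p$ is some linear combination of these, which yields the claimed form $A+B(1-z)^2\log(1-z)$. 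The delicate point is to show that $B\not\equiv0$ and that no further logarithmic terms occur. My first attempt will invoke Bühring's continuation formula for ${}_{Q+1}F_Q$ near $z=1$ with integer excess, which identifies $B(1)$ as a nonzero multiple of $\prod\Gamma(b_j)/\prod\Gamma(a_j)$. As a consistency check, a Tauberian comparison with the coefficient asymptotics $C_pm^{-3}$ forces
\[
B(1)=-\frac{C_p}{2}\neq0,
\]
since $(1-z)^2\log(1-z)=\sum_m \frac{2}{m(m-1)(m-2)}z^m$ up to sign and the terms of order $m^{-3}$ must match.

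Finally, the Gram weights are obtained from $\mathcal G_p$ by Euler operators $\vartheta=u\,d/du$, as recorded in the subcritical construction. Applying $\vartheta$ (or $\vartheta^2$, together with multiplication by explicit polynomials in $m$, depending on the weight) lowers the order at the branch point: $(1-z)^2\log(1-z)$ becomes $(1-z)\log(1-z)$ and then $\log(1-z)$, while the analytic part $A$ stays analytic. Evaluating at $u=\zeta^2$ gives the divergence $L(\zeta)=\log\bigl(1/(1-\zeta^2/\zeta_c^2)\bigr)$ with an explicit nonzero coefficient. This completes the statement of Theorem~\ref{thm:B} and is consistent with the spike coefficient in Theorem~\ref{thm:A}.
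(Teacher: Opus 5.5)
Your proposal is correct and follows essentially the paper's route. The steps match: the coefficient ratio identifies \(\mathcal G_p\) as a \({}_{2s}F_{2s-1}\) in \(u/\zeta_c^2\) with parametric excess \(2\); the continuation runs through the Fuchsian equation, whose only finite singular points are \(\xi=0,1\); Frobenius analysis at \(\xi=1\) yields the single resonant term \((1-\xi)^2\log(1-\xi)\); \(B(\zeta_c^2)=-C_{s,p}/2\) comes from matching the \(m^{-3}\) coefficient asymptotics (the paper subtracts \(C_{s,p}\,\mathrm{Li}_3(\xi)\) and uses that the remainder is \(C^2\) up to \(\xi=1\), which avoids a transfer theorem); and \(\frac1p\bigl(p+s\,u\frac{d}{du}\bigr)^2\) turns the resonant term into \(2s^2\log(1-u/\zeta_c^2)\) plus bounded terms. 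One small caveat: your \(Q\ge3\) holds for the unreduced \({}_{2s}F_{2s-1}\), but after cancelling common parameters one can get \(Q=2\) (e.g.\ \(s=2\), \(p=1\) gives \({}_3F_2(\tfrac12,\tfrac12,1;2,2;16u)\)); there the exponents \(\{0,1,2\}\) are distinct and the logarithm comes from their integer differences rather than a coincidence, with the same conclusion.
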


Theorem~\ref{thm:B} is obtained from the explicit coefficient ratio of
\(R_{s,p}(m)^2\), which identifies \(\mathcal G_p\) with a generalized
hypergeometric function. The local expansion
\eqref{eq:intro-resonant} is then derived by Frobenius analysis at the
regular singular point \(u=\zeta_c^2\). Moreover, the boundary values across
the cut define a discontinuity density, and \(\mathcal G_p\) admits a
Cauchy--Stieltjes representation. In the positive range \(1\le p\le s\), the corresponding Stieltjes measure is
positive by the Hausdorff-moment theorem of Liu--Pego
\cite{LiuPego2016,LiuPegoCorrigendum2026}. As a consequence, \(\mathcal G_p\)
admits a Jacobi-matrix realization as a Weyl \(m\)-function.

The third principal result clarifies the relation between analytic and
geometric criticality.

\begin{mainproposition}[Separation of thresholds]
	\label{prop:C}
	For every \(s\ge2\),
	\begin{equation}\label{eq:intro-separation}
		\zeta_c<\zeta_{\mathrm{univ}}.
	\end{equation}
	Hence the logarithmic spectral instability of Theorem~\ref{thm:A} in the weighted block realization occurs
	while the conformal map is still univalent and the boundary is still
	smooth. Moreover, the analytically continued scalar quantities remain finite at
	\(\zeta=\zeta_{\mathrm{univ}}\)
	\textup{(Propositions~\ref{prop:separation} and~\ref{prop:univ-regularity})}.
\end{mainproposition}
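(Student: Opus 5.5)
The plan has three parts. First prove the inequality \eqref{eq:intro-separation} by elementary calculus. Then derive the geometric consequences from the univalence criterion for \eqref{eq:map}. Finally, deduce finiteness at \(\zeta_{\mathrm{univ}}\) from the analytic continuation in Theorem~\ref{thm:B}.

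\textbf{The inequality.} Since \(\zeta_c/\zeta_{\mathrm{univ}}=(s-1)^s/s^s=(1-1/s)^s\), inequality \eqref{eq:intro-separation} is equivalent to \((1-1/s)^s<1\). This holds for every \(s\ge2\) because \(0<1-1/s<1\). Even more, \((1-1/s)^s\le e^{-1}\), since \(\log(1-x)\le -x\). So the ratio is bounded away from \(1\) uniformly in \(s\), and the first claim is immediate.

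\textbf{Geometric consequences.} Next I will check that \(\zeta_{\mathrm{univ}}=1/(s-1)\) is exactly the univalence threshold and that the boundary is smooth below it. We have \(f'(w)=r\bigl(1-(s-1)\zeta w^{-s}\bigr)\), which is nonvanishing on \(|w|\ge1\) precisely when \((s-1)\zeta<1\). Local injectivity and smoothness of the image of \(|w|=1\) therefore hold for \(\zeta<\zeta_{\mathrm{univ}}\).

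For global injectivity on \(|w|>1\), I will use the difference quotient
\[
	\frac{f(w)-f(w')}{r(w-w')}=1-\zeta\,\frac{w^{1-s}-w'^{1-s}}{w-w'}.
\]
The second term has modulus at most \(\zeta\sum_{j=1}^{s-1}|w|^{-j}|w'|^{-(s-j)}\le(s-1)\zeta<1\). This is the standard Alexander/Noshiro-type estimate, and it gives univalence for all \(\zeta<\zeta_{\mathrm{univ}}\). Since \(\zeta<\zeta_c<\zeta_{\mathrm{univ}}\) throughout the regime of Theorem~\ref{thm:A}, the instability occurs while \(f\) is univalent with a smooth analytic boundary.

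\textbf{Finiteness at \(\zeta_{\mathrm{univ}}\).} For the last assertion I will invoke Theorem~\ref{thm:B}. Each \(\mathcal G_p\) is holomorphic on \(\CC\setminus[\zeta_c^2,\infty)\), and \(u_{\mathrm{univ}}=\zeta_{\mathrm{univ}}^2\) lies on the open cut strictly beyond \(\zeta_c^2\). The point to establish is that the lateral boundary values \(\mathcal G_p(u\pm i0)\) are finite at \(u_{\mathrm{univ}}\), and likewise for the Euler-type derivatives giving the Gram weights.

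I will argue through the hypergeometric ODE satisfied by \(\mathcal G_p\). Its singular points are only \(0\), \(\zeta_c^2\) and \(\infty\), as read off from the coefficient ratio of \(R_{s,p}(m)^2\), which has a single finite nonzero singularity. Hence the ODE is regular at every \(u\in(\zeta_c^2,\infty)\), and both lateral continuations are analytic there. In particular they are finite at \(u_{\mathrm{univ}}\), since \(u_{\mathrm{univ}}\ne\zeta_c^2\) by the first part of the proof.

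The main obstacle is confirming that the generalized hypergeometric equation has no additional singularity that could coincide with \(u_{\mathrm{univ}}\). For a \({}_{q+1}F_q\)-type equation in the variable \(u/\zeta_c^2\), the only finite singular points are \(0\) and \(1\). I expect this normalization to follow from the ratio computation behind Theorem~\ref{thm:B}: the ratio tends to \(1/\zeta_c^2\), which also confirms the radius of convergence. With that settled, the claim reduces to \(\zeta_{\mathrm{univ}}^2/\zeta_c^2=(s/(s-1))^{2s}>1\), which is the same inequality already proved.
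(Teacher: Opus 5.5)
Your proposal is correct, and two of its three parts coincide with the paper. The inequality is the same one-line computation $\zeta_c/\zeta_{\mathrm{univ}}=((s-1)/s)^s<1$. Finiteness at $\zeta_{\mathrm{univ}}$ is obtained exactly as in Lemma~\ref{lem:interior-cut-ordinary} and Proposition~\ref{prop:univ-regularity}: the hypergeometric equation has finite singular points only at $u=0$ and $u=\zeta_c^2$, so each lateral branch extends holomorphically across any interior point of the cut, and the Euler operator has analytic coefficients.

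The genuine difference is the univalence step. The paper proves injectivity of the boundary trace via $|\sin\delta|=\zeta|\sin((s-1)\delta)|\le(s-1)\zeta|\sin\delta|$ and checks $f'\neq0$ on $|w|\ge1$. It then passes to the exterior disk by a properness/covering-degree argument. Your difference-quotient estimate uses $\frac{w^{1-s}-w'^{1-s}}{w-w'}=-\sum_{j=1}^{s-1}w^{-j}w'^{-(s-j)}$ and gives directly $|f(w)-f(w')|\ge r\bigl(1-(s-1)\zeta\bigr)|w-w'|$ on the whole closed set $|w|\ge1$. This yields injectivity in the interior and on the circle at once. It also dispenses with the topological step, which in the paper implicitly needs the image of the circle to be disjoint from the image of the open exterior for properness onto the image to hold. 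In exchange, the paper's boundary-trace identity is the natural tool for describing the self-intersections above the threshold.

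One minor slip: the quotient equals $1+\zeta\frac{w^{1-s}-w'^{1-s}}{w-w'}$, not $1-\zeta(\cdots)$. This is harmless, since only its modulus is used.
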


Thus the singularity seen in the weighted mixed-Hessian block theory at \(\zeta_c\) is not a consequence of
cusp formation. It is already encoded in the branch-point structure of the
inverse map and therefore precedes geometric breakdown.

\subsection{Organization of the paper}
\label{subsec:organization}

Section~\ref{sec:inversion} analyzes the inverse conformal map, derives the
Raney-number formulas, and identifies the square-root singularity at the
analytic threshold. Section~\ref{sec:Hessian} introduces the scale-invariant
Toda Hessian, proves the blockwise Gram factorization, and presents the scalar
Gram quantities that will later be continued across the threshold.
Section~\ref{sec:regime-I} develops the weighted spectral theory in the
subcritical regime \(0<\zeta<\zeta_c\) and proves Theorem~\ref{thm:A}.
Section~\ref{sec:regime-2} studies the scalar continuation problem beyond
\(\zeta_c\), including the hypergeometric representation, the resonant
expansion, the Cauchy--Stieltjes formula, and the Jacobi realization.
Finally, Section~\ref{sec:univalence} treats the geometric threshold
\(\zeta_{\mathrm{univ}}\), proves the strict separation
\eqref{eq:intro-separation}, and shows that the continued scalar quantities remain
finite at the univalence boundary. The appendices contain the uniform Raney
asymptotics, the entrywise rank-one extraction, the hypergeometric
continuation argument, and the computation of the branch-point coefficient.


\section{Inversion of the conformal map}
\label{sec:inversion}


We begin with the inverse conformal map, whose singularity structure drives
the later spectral analysis. The main results of this section are the functional equation for the inverse map
(Proposition~\ref{prop:functional-eq}), the explicit Taylor coefficients in terms of Raney numbers
(Proposition~\ref{prop:raney-lagrange}), and the identification of the dominant singularity together with its square-root character
(Lemma~\ref{lem:U-critical}). The last point produces the universal
$m^{-3/2}$ coefficient decay responsible for the logarithmic spectral
divergence established in Section~\ref{sec:regime-I}.

Throughout this manuscript we introduce the scale-free variable
\begin{equation}\label{eq:x-def}
	x := \frac{r}{z}.
\end{equation}

\subsection{The generating function and its functional equation}
\label{subsec:generating-function}

We represent the inverse map $w(z) \sim z/r$ as $z \to \infty$ by factoring out
the leading linear growth:
\begin{equation}\label{eq:w-U}
	w(z) = \frac{z}{r} \, U(x;\zeta)^{-1},
\end{equation}
where $U$ is a scalar function to be determined. The normalization
$w(z) \sim z/r$ corresponds to $U(0;\zeta) = 1$.

\begin{proposition}[Functional equation for the inverse map]
	\label{prop:functional-eq}
	Let $f(w) = rw + aw^{1-s}$ with $\zeta = a/r$, and let $w(z)$ be the inverse
	map normalized by $w(z) \sim z/r$ at infinity. Define $U(x;\zeta)$ by
	\eqref{eq:w-U}. Then $U$ satisfies the algebraic functional equation
	\begin{equation}\label{eq:U-equation}
		U(x;\zeta) = 1 + \zeta x^s U(x;\zeta)^s, \qquad U(0;\zeta) = 1.
	\end{equation}
\end{proposition}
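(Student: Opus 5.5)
The plan is to substitute the ansatz \eqref{eq:w-U} directly into the defining relation \(f(w(z))=z\) and then resolve the ambiguity of the resulting algebraic equation with the normalization at infinity. Throughout, \(|z|\) is taken large, so that \(x=r/z\) is small and \(w(z)\) lies in the exterior disk, where \(f\) is injective.

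\textbf{Algebraic identity.} Write \(w=(z/r)U^{-1}=x^{-1}U^{-1}\). Then
\[
	f(w)=r\,x^{-1}U^{-1}+a\,(x^{-1}U^{-1})^{1-s}=r\,x^{-1}U^{-1}+a\,x^{s-1}U^{s-1}.
\]
Setting this equal to \(z=r x^{-1}\) and multiplying by \(xU/r\) gives
\[
	1=U^{-1}\cdot U+\dots,
\]
or more carefully,
\[
	r x^{-1}=r x^{-1}U^{-1}+a x^{s-1}U^{s-1}.
\]
Multiplying through by \(xU/r\) yields
\[
	U=1+\zeta x^{s}U^{s},
\]
which is \eqref{eq:U-equation}.

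\textbf{Normalization.} The condition \(w(z)\sim z/r\) as \(z\to\infty\) translates into \(U(x)\to1\) as \(x\to0\), so \(U(0;\zeta)=1\).

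\textbf{Identification of the branch.} This is the only genuine point. Consider \(F(x,U)=U-1-\zeta x^sU^s\). At \((0,1)\) we have \(\partial_UF=1\neq0\), so the implicit function theorem yields a unique analytic solution \(U(x)\) near \(x=0\) with \(U(0)=1\). The analytic function \(zw(z)^{-1}/r\) is a solution of the same equation, is analytic near \(x=0\) (since \(w\) is analytic at \(\infty\) with a simple pole), and equals \(1\) at \(x=0\). By uniqueness the two coincide, so \(U\) is exactly the inverse-map factor.

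I do not expect any real obstacle here. The only care needed is in tracking the exponent \(1-s\), which turns \(a w^{1-s}\) into \(a x^{s-1}U^{s-1}\), and in invoking the implicit function theorem to rule out the other roots of the degree-\(s\) polynomial equation.
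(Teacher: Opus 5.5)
Your proposal is correct and follows the paper's proof. The paper substitutes \(w=(z/r)U^{-1}\) into \(z=rw+aw^{1-s}\) with \(x=r/z\), obtains \(1=U^{-1}+\zeta x^sU^{s-1}\), and multiplies by \(U\); your computation is the same up to the overall factor \(r x^{-1}\). Your extra implicit-function-theorem step is not needed for the statement as posed, since \(U\) is defined directly through the inverse map. It is nonetheless a harmless and useful supplement: it justifies treating \(U\) as the unique Taylor branch with \(U(0)=1\), which the paper relies on later.
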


\begin{proof}
	Substituting \eqref{eq:w-U} into the defining relation $z = rw + aw^{1-s}$ and
	using $x = r/z$ gives
	\[
		1 = U^{-1} + \zeta x^s U^{s-1}.
	\]
	Multiplying by $U$ yields \eqref{eq:U-equation}.
\end{proof}

Throughout we restrict to the one-parameter family with $\zeta:=a/r>0$.
For the present one-parameter family, changing the sign of $a$ amounts to a
rigid rotation of the image domain by $\pi/s$.

\subsection{Square-root criticality}
\label{subsec:sqrt-critical}

We now determine the local behavior of $U(x;\zeta)$ at its dominant singularity. This is the central analytic ingredient for
the spectral analysis: the square-root nature of the singularity produces the
universal $m^{-3/2}$ coefficient asymptotics that drive the logarithmic
divergence of the Hessian.

\begin{lemma}[Critical point and local expansion]
	\label{lem:U-critical}
	The function $U(x;\zeta)$ defined by $U = 1 + \zeta x^s U^s$ is a power
	series in the combination $\zeta x^s$, with radius of convergence
	$\zeta_c = (s-1)^{s-1}/s^s$ in that variable. Equivalently, for fixed
	$\zeta>0$ its radius of convergence in $x$ is $(\zeta_c/\zeta)^{1/s}$.
	Its only singularity in the variable $t:=\zeta x^s$ on $|t|=\zeta_c$ is at $t=\zeta_c$.
	Equivalently, for fixed $\zeta>0$ the corresponding singularities in the
	$x$-plane form the $s$-point orbit $\zeta x^s=\zeta_c$.
	At this point the function has a square-root branch point with the local
	expansion
	\begin{equation}\label{eq:U-sqrt}
		U(x;\zeta) = \frac{s}{s-1} - \kappa \sqrt{1 - \frac{\zeta x^s}{\zeta_c}}
		+ O\!\left(1 - \frac{\zeta x^s}{\zeta_c}\right), \qquad \zeta x^s \to \zeta_c,
	\end{equation}
	where $\kappa = \sqrt{2s/(s-1)^3} > 0$.
\end{lemma}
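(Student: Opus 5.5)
We work entirely in the variable \(t:=\zeta x^s\) and write \(U=V(t)\), where \(V\) is the unique formal solution of \(V=1+tV^s\) with \(V(0)=1\). Lagrange inversion applied to \(V-1=t\,\phi(V-1)\), with \(\phi(y)=(1+y)^s\), gives
\[
	[t^m]V=\frac{1}{sm+1}\binom{sm+1}{m}.
\]
These are the Fuss--Catalan numbers, i.e.\ Raney numbers with \(p=1\). Their coefficients are positive, so by Pringsheim's theorem the radius of convergence \(\rho\) is itself a singular point on the positive axis.

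To find \(\rho\) we use the implicit function theorem for \(F(t,V)=V-1-tV^s\). The series continues analytically along \((0,\rho)\) as long as \(F_V=1-stV^{s-1}\neq0\). The candidate singular point therefore solves \(F=F_V=0\):
\[
	stV^{s-1}=1 \quad\Longrightarrow\quad tV^s=V/s \quad\Longrightarrow\quad V=1+V/s .
\]
This gives \(V_c=s/(s-1)\) and \(t_c=1/(sV_c^{s-1})=(s-1)^{s-1}/s^s=\zeta_c\). A Stirling estimate of \(\binom{sm+1}{m}\) gives growth \(\zeta_c^{-m}m^{-3/2}\), which independently confirms \(\rho=\zeta_c\).

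The local expansion follows from a Taylor expansion of \(F\) at \((\zeta_c,V_c)\). We need \(F_t=-V_c^s\) and \(F_{VV}=-s(s-1)\zeta_cV_c^{s-2}=-(s-1)/V_c\), the last equality using \(s\zeta_cV_c^{s-1}=1\). Writing \(V=V_c+\delta\), the relation \(F=0\) becomes
\[
	-V_c^s\,(t-\zeta_c)-\tfrac12\,\tfrac{s-1}{V_c}\,\delta^2+\dots=0 ,
\]
so that
\[
	\delta^2=\tfrac{2V_c^{s+1}\zeta_c}{s-1}\Bigl(1-\tfrac{t}{\zeta_c}\Bigr)+\dots .
\]
Using \(V_c^{s+1}\zeta_c=V_c^2/s=s/(s-1)^2\), this yields \(\delta^2=\frac{2s}{(s-1)^3}(1-t/\zeta_c)+\dots\). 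The minus sign in \(\delta=-\kappa\sqrt{1-t/\zeta_c}\) is the one consistent with \(V\) increasing towards \(V_c\) on \((0,\zeta_c)\). The higher-order terms are handled by the Weierstrass preparation theorem, or equivalently by the standard square-root singularity theorem for smooth implicit schemes, which gives a convergent Puiseux series in \(\sqrt{1-t/\zeta_c}\). Hence the error is \(O(1-t/\zeta_c)\), with \(\kappa=\sqrt{2s/(s-1)^3}\).

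The step we expect to be the main obstacle is uniqueness of the singularity on \(|t|=\zeta_c\). We plan to use the fact that \(V(t)-1=t\,(V(t))^s\) is a tree function in the sense of Flajolet--Sedgewick. Since \(\phi(y)=(1+y)^s\) has \(\phi(0)\neq0\) and is aperiodic (it has nonzero coefficients of \(y^0\) and \(y^1\)), the only dominant singularity is \(t=\zeta_c\). Concretely, for \(|t|=\zeta_c\), \(t\neq\zeta_c\), the triangle inequality is strict, \(|V(t)-1|<V(\zeta_c)-1\). This is because the coefficients of \(V-1\) are positive at all orders \(t^m\), \(m\ge1\), and not all supported on a sublattice. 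Consequently \(|stV^{s-1}|<s\zeta_cV_c^{s-1}=1\) strictly, so \(F_V\neq0\) and \(V\) continues analytically there.

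Finally we translate back to \(x\). Since \(U\) depends on \(x\) only through \(\zeta x^s\), the radius of convergence in \(x\) is \((\zeta_c/\zeta)^{1/s}\), and the singular set is the \(s\)-point orbit \(\zeta x^s=\zeta_c\).
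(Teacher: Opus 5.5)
Your proposal is correct. The local part is the same as the paper's: you solve $F=F_V=0$ to get $V_c=s/(s-1)$ and $t_c=\zeta_c$, expand $F$ to second order, and invoke Weierstrass preparation to obtain $\kappa^2=2s/(s-1)^3$.

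You differ in how you pin down the radius and the uniqueness of the singularity on $|t|=\zeta_c$. The paper argues purely algebraically. Since $U$ is algebraic, its singularities lie among the common zeros of $F$ and $\partial_yF$, and the elimination shows that the only nonzero such $t$ is $\zeta_c$. Uniqueness on the circle, and in fact in the whole punctured plane on every sheet, is then immediate.

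You instead use positivity of the Fuss--Catalan coefficients. Pringsheim places a singularity at some $\rho>0$, Stirling gives $\rho=\zeta_c$, and a strict triangle inequality (aperiodicity, because the coefficients of $t$ and $t^2$ are both positive) gives $|stV^{s-1}|<1$ at every other point of the circle.

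Your route supplies two things the paper leaves implicit:
\begin{itemize}
\item the Taylor branch is actually singular at $\zeta_c$, i.e.\ it reaches the double root $V_c$ rather than passing through another root, so the radius is exactly $\zeta_c$;
\item the choice of the minus sign in front of $\kappa$, which follows from the monotonicity of $V$ on $(0,\zeta_c)$.
\end{itemize}

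On the other hand, the triangle-inequality step is heavier than necessary. Your own elimination $stV^{s-1}=1\Rightarrow V=s/(s-1)\Rightarrow t=\zeta_c$ never uses positivity and holds for complex $t$. Combined with the fact that the leading coefficient $-t$ of $F$ in $V$ vanishes only at $t=0$, it already excludes every other point of the circle; this is exactly the paper's argument.

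The aperiodicity argument is the one that generalizes to tree schemes without such a clean elimination. Note that it relies on absolute convergence of the series on $|t|=\zeta_c$, which your $m^{-3/2}$ bound supplies. That convergence makes $V$ continuous up to the boundary, so you can apply the implicit function theorem at boundary points and identify its local solution with $V$.
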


\begin{proof}
	Consider the implicit function $F(y,x;\zeta) := y - 1 - \zeta x^s y^s$. The Taylor branch
	$U(x;\zeta)$ is defined by $F(U,x;\zeta) = 0$ with $U(0;\zeta) = 1$.
	A singularity of this branch occurs where the implicit function theorem fails,
	i.e., where $\partial_y F = 0$:
	\[
		F(U,x;\zeta) = 0, \qquad \partial_y F(U,x;\zeta) = 1 - s\zeta x^s U^{s-1} = 0.
	\]
	From the second equation, $\zeta x^s U^{s-1} = 1/s$. Substituting into the
	first equation:
	\[
		U = 1 + \zeta x^s\,U^s = 1 + \frac{U}{s},
	\]
	which gives $U_c : = s/(s-1)$. Substituting back:
	\[
		\zeta_c = \frac{1}{s\,U_c^{s-1}}
		= \frac{1}{s} \left(\frac{s-1}{s}\right)^{s-1}
		= \frac{(s-1)^{s-1}}{s^s}.
	\]
	Since $U$ is algebraic, its singularities occur only at discriminant points
	of the polynomial $F(y,x;\zeta)=y-1-\zeta x^s y^s$ in the variable $y$.
	Equivalently, they occur exactly when the system
	\[
		F(y,x;\zeta)=0,
		\qquad
		\partial_yF(y,x;\zeta)=0
	\]
	has a common root. The computation above shows that the only nonzero
	critical value of the product $\zeta x^s$ is $\zeta_c$.

	To determine the local behavior, we verify that the singularity is simple
	(i.e., $\partial_{yy} F \neq 0$ at the critical point):
	\[
		\partial_{yy} F = -s(s-1)\zeta x^s U^{s-2}
		= -\frac{(s-1)}{U_c} = -\frac{(s-1)^2}{s} \neq 0.
	\]
	Moreover, near any critical point with $\zeta x^s=\zeta_c$ and $y=U_c$, the
	Taylor expansion of $F$ has the form
	\[
		F(y,x;\zeta)
		=
		\frac12\,\partial_{yy}F(U_c,x;\zeta)\,(y-U_c)^2
		- U_c^s\,(\zeta x^s-\zeta_c)
		+ O\!\bigl(|y-U_c|^3 + |\zeta x^s-\zeta_c|\,|y-U_c|\bigr),
	\]
	because $\partial_yF(U_c,x;\zeta)=0$ at criticality and the coefficient of
	$(\zeta x^s-\zeta_c)$ is $-U_c^s$. Since
	$\partial_{yy}F(U_c,x;\zeta)\neq0$, the Weierstrass preparation theorem gives
	a square-root branch point, hence \eqref{eq:U-sqrt}. Matching the leading
	coefficients yields
	\[
		\frac12\,\partial_{yy}F(U_c,x;\zeta)\,\kappa^2
		+
		U_c^s\zeta_c
		=
		0.
	\]
	Using $\partial_{yy}F(U_c,x;\zeta)=-(s-1)^2/s$ and
	$U_c^s\zeta_c=1/(s-1)$, we obtain
	\[
		\kappa^2=\frac{2s}{(s-1)^3},
	\]
	as claimed.
\end{proof}

\begin{remark}[Geometry in the $z$-plane]
	\label{rem:z-plane}
	Returning to the original variable $z$ via $x = r/z$, the condition
	$|\zeta x^s| = \zeta_c$ becomes $|z| = r(\zeta/\zeta_c)^{1/s}$. Thus the $s$ branch
	points of the inverse map $w(z)$ lie on a circle whose radius shrinks toward $r$
	as $\zeta \uparrow \zeta_c$. When $\zeta = \zeta_c$, the branch points reach
	the circle $|z| = r$, which is the image of the unit circle under the leading
	term $z = rw$ of the conformal map. This is the geometric manifestation of the
	analytic threshold.
\end{remark}

\subsection{Raney numbers and coefficient formulas}
\label{subsec:raney}

The Taylor coefficients of powers of $U$ admit a classical closed form. We
define these coefficients directly and then identify them with the Raney numbers. For any $p \in \mathbb{Z}$, write
\begin{equation}\label{eq:raney-generating}
	U(x;\zeta)^{p} = \sum_{n=0}^{\infty} R_{s,p}(n) \, (\zeta x^s)^n.
\end{equation}

\begin{proposition}[Raney numbers]
	\label{prop:raney-lagrange}
	For all integers $s \ge 2$, $p \ge 1$, and $n \ge 0$,
	\begin{equation}\label{eq:raney-closed}
		R_{s,p}(n) = \frac{p}{sn + p} \binom{sn + p}{n}.
	\end{equation}
	For $p \ge 1$, these are the classical Raney numbers. In particular, the case $p = 1$
	gives the Fuss--Catalan numbers \cite{Raney1960,GrahamKnuthPatashnik1994}.
\end{proposition}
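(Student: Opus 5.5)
The natural route is Lagrange inversion applied to the functional equation \eqref{eq:U-equation} of Proposition~\ref{prop:functional-eq}. Set \(t:=\zeta x^s\) and \(V:=U-1\). Then \eqref{eq:U-equation} reads \(V=t\,(1+V)^s\), so \(V\) is the unique formal power series with \(V(0)=0\) solving \(V=t\,\phi(V)\) with \(\phi(y)=(1+y)^s\). By Lemma~\ref{lem:U-critical}, \(U\) is indeed a power series in \(t\), so the coefficients \(R_{s,p}(n)\) in \eqref{eq:raney-generating} are the Taylor coefficients of \(H(V)\) with \(H(y)=(1+y)^p\).

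The Lagrange--B\"urmann formula gives, for \(n\ge1\),
\[
	[t^n]\,H(V(t))=\frac1n\,[y^{n-1}]\,H'(y)\,\phi(y)^n
	=\frac{p}{n}\,[y^{n-1}](1+y)^{sn+p-1}
	=\frac{p}{n}\binom{sn+p-1}{n-1}.
\]
The identity
\[
	\frac{p}{n}\binom{sn+p-1}{n-1}=\frac{p}{sn+p}\binom{sn+p}{n}
\]
follows from \(\binom{N}{n}=\frac{N}{n}\binom{N-1}{n-1}\) with \(N=sn+p\). The case \(n=0\) is immediate, since \(U(0)^p=1=\frac{p}{p}\binom{p}{0}\).

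A purely algebraic alternative avoids analytic inversion altogether. One checks that \(B_s(t):=\sum_n \frac{1}{sn+1}\binom{sn+1}{n}t^n\) satisfies \(B=1+tB^s\), and then invokes the classical Raney convolution identity, that \(\sum_n \frac{p}{sn+p}\binom{sn+p}{n}t^n\) equals \(B_s(t)^p\). This route is longer, so Lagrange inversion is the primary plan.

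The only point needing care is justifying that Lagrange inversion applies. It holds at the level of formal power series in \(t\) because \(\phi(0)=1\neq0\), and uniqueness of the formal solution with \(V(0)=0\) identifies it with the analytic Taylor branch of Lemma~\ref{lem:U-critical}. The identification with the classical Raney numbers and, for \(p=1\), with the Fuss--Catalan numbers \(\frac{1}{sn+1}\binom{sn+1}{n}\) is then by inspection of the closed form \eqref{eq:raney-closed}.
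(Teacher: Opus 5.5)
Your proposal is correct and matches the paper's proof step for step: the substitution \(V=U-1\), the Lagrange--B\"urmann formula applied to \((1+V)^p\) with \(\Phi(v)=(1+v)^s\), the identity \(\frac{p}{n}\binom{sn+p-1}{n-1}=\frac{p}{sn+p}\binom{sn+p}{n}\), and a separate check at \(n=0\). Your remark that formal inversion is legitimate because \(\Phi(0)=1\neq0\), and that the unique formal solution coincides with the Taylor branch, is a small justification the paper leaves implicit.
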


\begin{proof}
	Set $U (t) = 1 + V(t)$ with $V(0) = 0$. Then
	$V = t(1+V)^s$, which is the Lagrange form $V = t\Phi(V)$ with
	$\Phi(v) = (1+v)^s$. We apply the Lagrange--B\"urmann inversion formula to the
	composite function $F(V) = (1+V)^{p} = U^{p}$. The standard
	coefficient extraction \cite{Henrici1974,Comtet1974} gives, for $n \ge 1$,
	\[
		[t^n] F(V(t)) = \frac{1}{n} [v^{n-1}] F'(v) \Phi(v)^n.
	\]
	Since $F'(v) = p(1+v)^{p-1}$ and $\Phi(v)^n = (1+v)^{sn}$, we obtain
	\[
		[t^n] U(t)^{p}
		= \frac{p}{n} [v^{n-1}] (1+v)^{sn+p-1}
		= \frac{p}{n} \binom{sn+p-1}{n-1}.
	\]
	The identity
	$\frac{p}{n}\binom{sn+p-1}{n-1} = \frac{p}{sn+p}\binom{sn+p}{n}$
	yields \eqref{eq:raney-closed} for $n \ge 1$. For $n = 0$, we have
	$R_{s,p}(0) = U(0)^{p} = 1$ (when $p \neq 0$), which agrees
	with \eqref{eq:raney-closed}.
\end{proof}

\begin{remark}[Extension to $p \le 0$]
	\label{rem:raney-negative}
	The generating function $U(x;\zeta)^{p}$ is well-defined for all
	$p \in \mathbb{Z}$. For $p = 0$, we have $U^0 \equiv 1$,
	so $R_{s,0}(n) = \delta_{n,0}$. For $p < 0$, the formula
	\eqref{eq:raney-closed} can be extended using the generalized binomial
	coefficient, but requires separate treatment at values of $n$ where
	$sn + p = 0$. Since all applications in this paper involve $p \ge 1$
	(the index $p$ in the Hessian expansion), we restrict to positive $p$
	in the main statement.
\end{remark}

\begin{corollary}[Convolution property]
	\label{cor:convolution}
	For any integers $k\ge1$ and $p_1,\ldots,p_k\in\mathbb{Z}$, and any $m\ge0$,
	\begin{equation}\label{eq:convolution}
		\sum_{\substack{n_1 + \cdots + n_k = m \\ n_i \ge 0}}
		\prod_{i=1}^{k} R_{s,p_i}(n_i) = R_{s,p_1 + \cdots + p_k}(m).
	\end{equation}
\end{corollary}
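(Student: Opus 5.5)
The convolution identity is the coefficient form of the multiplicativity of powers: $U^{p_1}\cdots U^{p_k}=U^{p_1+\cdots+p_k}$ as formal power series in $t=\zeta x^s$. The plan is to expand both sides using the definition \eqref{eq:raney-generating} and compare coefficients of $t^m$.

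First, note that \eqref{eq:raney-generating} defines $R_{s,p}(n)$ for every $p\in\mathbb{Z}$, not only for $p\ge1$. By Lemma~\ref{lem:U-critical}, $U$ is a power series in $t$ with $U(0)=1$ and positive radius of convergence $\zeta_c$. Hence $U$ is a unit in $\CC[[t]]$, and $U^{p}$ is a well-defined power series for every integer $p$, including negative $p$ via $U^{-1}$. In particular, we use only the coefficients defined through the generating function, not the closed form \eqref{eq:raney-closed}. This sidesteps the separate treatment of $p\le0$ mentioned in Remark~\ref{rem:raney-negative}.

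Second, multiply the $k$ series $U^{p_i}=\sum_{n_i\ge0}R_{s,p_i}(n_i)t^{n_i}$. The Cauchy product in $\CC[[t]]$ gives
\[
	[t^m]\prod_{i=1}^k U^{p_i}=\sum_{n_1+\cdots+n_k=m}\prod_{i=1}^k R_{s,p_i}(n_i),
\]
where the sum is finite since all $n_i\ge0$. Alternatively, one can argue analytically: the series converge absolutely for $|t|<\zeta_c$, so the product is legitimate as convergent series. Both routes work, and the formal one suffices.

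Third, use $\prod_{i} U^{p_i}=U^{\sum_i p_i}$, which holds in the commutative ring $\CC[[t]]$ because $U$ is invertible. Taking $[t^m]$ of $U^{\sum p_i}$ gives $R_{s,p_1+\cdots+p_k}(m)$ by definition, which is \eqref{eq:convolution}. The case $k=1$ is trivial, and one could equally induct on $k$ from $k=2$.

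No real obstacle is expected. The only point needing care is justifying negative exponents, which is handled by invertibility of $U$ since $U(0)=1$. One should also make sure the identity is read with $R_{s,p}$ defined through \eqref{eq:raney-generating} whenever some $p_i\le0$. With that reading the statement holds, and for instance $R_{s,0}(n)=\delta_{n,0}$ is consistent.
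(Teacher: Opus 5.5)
Your proposal is correct and matches the paper's proof: you multiply the generating series \eqref{eq:raney-generating} for $U^{p_1},\dots,U^{p_k}$ and extract the coefficient of $(\zeta x^s)^m$. Your added remark is a useful clarification that the paper leaves implicit: since $U(0)=1$, $U$ is a unit in $\CC[[t]]$, so negative $p_i$ are handled through the generating-function definition rather than the closed form.
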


\begin{proof}
	Multiply the generating series \eqref{eq:raney-generating} and extract the
	coefficient of $(\zeta x^s)^m$.
\end{proof}

We now present the expansions needed for the Hessian kernel in
Section~\ref{sec:Hessian}. The kernel involves the logarithm
$\log(1 - (w\bar{w}')^{-1})$, which we will expand in powers of $w^{-1}$. Hence
we need the following.

\begin{lemma}[Powers of the inverse map]
	\label{lem:inverse-powers}
	For each integer $p \ge 1$,
	\begin{equation}\label{eq:w-inverse-p}
		w(z)^{-p} = \left(\frac{r}{z}\right)^p \sum_{m=0}^{\infty} R_{s,p}(m) \,
		(\zeta x^s)^m
		= \sum_{m=0}^{\infty} R_{s,p}(m) \, \zeta^m \, x^{p+ms}.
	\end{equation}
	In particular, the exponents of $x$ appearing in the expansion of $w(z)^{-p}$
	are exactly $\{p + ms : m \ge 0\}$.
\end{lemma}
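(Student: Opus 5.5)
The identity follows directly from the representation \eqref{eq:w-U} combined with Proposition~\ref{prop:raney-lagrange}. By \eqref{eq:w-U} and the definition \eqref{eq:x-def} we have
\[
	w(z)^{-1}=\frac{r}{z}\,U(x;\zeta)=x\,U(x;\zeta).
\]
Here $U$ is the Taylor branch with $U(0;\zeta)=1$. Raising this to the $p$-th power for an integer $p\ge1$ gives $w(z)^{-p}=x^pU(x;\zeta)^p$. Since $p$ is a positive integer, no choice of branch is involved.

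The next step is to substitute the expansion \eqref{eq:raney-generating}, namely $U^p=\sum_{m\ge0}R_{s,p}(m)(\zeta x^s)^m$. By Lemma~\ref{lem:U-critical}, $U$ is a power series in $t=\zeta x^s$ with radius of convergence $\zeta_c$, and therefore so is $U^p$. Hence for $|\zeta x^s|<\zeta_c$, that is for $|z|>r(\zeta/\zeta_c)^{1/s}$, the series converges absolutely. This justifies multiplying by $x^p$ term by term, which yields
\[
	w(z)^{-p}=\sum_{m\ge0}R_{s,p}(m)\,\zeta^m x^{p+ms},
\]
the second form of \eqref{eq:w-inverse-p}. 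The first form is the same expression before distributing $x^p=(r/z)^p$.

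For the statement about exponents, the coefficient of $x^{p+ms}$ is $R_{s,p}(m)\zeta^m$. By \eqref{eq:raney-closed}, $R_{s,p}(m)=\frac{p}{sm+p}\binom{sm+p}{m}>0$ for $p\ge1$, and $\zeta>0$, so every coefficient is nonzero. The exponents $p+ms$ are distinct for distinct $m$, so no cancellation can occur, and every other exponent has coefficient zero. The exponent set is therefore exactly $\{p+ms:m\ge0\}$.

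The only point that needs any care is the domain on which the identity holds. It should be read as an identity of convergent series near $z=\infty$, on the region identified in Remark~\ref{rem:z-plane}, or alternatively as an identity of formal power series in $x$. Beyond this, the argument is a direct substitution.
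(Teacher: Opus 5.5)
Your proposal is correct and follows the paper's proof exactly. You write $w(z)^{-1}=x\,U(x;\zeta)$ from \eqref{eq:w-U}, raise it to the $p$-th power, and substitute the Raney expansion \eqref{eq:raney-generating} of $U^p$. Your extra remarks are correct refinements that the paper leaves implicit: the convergence region $|z|>r(\zeta/\zeta_c)^{1/s}$, and the positivity of $R_{s,p}(m)\zeta^m$, which is what justifies the word ``exactly'' in the exponent claim.
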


\begin{proof}
	From \eqref{eq:w-U}, $w^{-1} = (r/z)U = xU$, so $w^{-p} = x^p U^p$. Expanding
	$U^p$ via \eqref{eq:raney-generating} gives \eqref{eq:w-inverse-p}.
\end{proof}

\begin{remark}[Coefficient asymptotics]
	\label{rem:coeff-asymp}
	The square-root singularity \eqref{eq:U-sqrt} implies, via standard transfer
	theorems \cite{FlajoletSedgewick2009}, the universal large-$n$ asymptotics
	\begin{equation}\label{eq:Raney-asymp}
		R_{s,p}(n) = A_{s,p} \, n^{-3/2} \, \zeta_c^{-n} \bigl(1 + O(n^{-1})\bigr),
		\qquad n \to \infty,
	\end{equation}
	with an explicit amplitude $A_{s,p} > 0$ depending on $s$ and $p$.
	The exponent $-3/2$ is universal for simple algebraic singularities of
	square-root type. This is the only asymptotic information about the inverse map
	that enters the Hessian estimates in Section~\ref{sec:regime-I}. The detailed
	derivation is given in Appendix~\ref{app:raney-asymp}.
\end{remark}



\section{The Hessian of the \texorpdfstring{$\tau$}{tau}-function}
\label{sec:Hessian}


We now introduce the mixed Hessian of $\log\tau$ with respect to the
dispersionless Toda times. Throughout, we use the scale-free variables from
Section~\ref{sec:inversion}, which remove the trivial dependence on the
conformal radius and leave a Hessian depending only on the shape parameter
$\zeta=a/r$. We first recall the Wiegmann--Zabrodin kernel formula and define
the scale-invariant Hessian
(Subsection~\ref{subsec:kernel-hessian}). We then expand the kernel using the
Raney machinery from Section~\ref{sec:inversion}, obtaining an exact Gram
representation as a sum of rank-one operators
(Subsection~\ref{subsec:gram-representation}). Finally, we analyze the Gram
weights and locate the analytic threshold $\zeta_c$ at which the naive
unweighted Gram realization breaks down (Subsection~\ref{subsec:gram-weights}).

\subsection{Kernel representation and scale-invariant formulation}
\label{subsec:kernel-hessian}

The dispersionless Toda hierarchy associates to a simply connected domain
$D \subset \mathbb{C}$ a $\tau$-function whose logarithm $\mathcal{F}=\log\tau$
serves as a generating function for the harmonic moments. In the present
paper, however, we use this framework only for the polynomial family
\eqref{eq:map}.

For the maps \eqref{eq:map} with $0<\zeta<\zeta_{\mathrm{univ}}$, the boundary is
real-analytic and the Wiegmann--Zabrodin kernel identity applies. The mixed
second derivatives of $\mathcal{F}$ with respect to the Toda times $(t_m,
	\bar{t}_n)$ are then expressed via the inverse conformal map $w(z)$ as
follows~\cite{WiegmannZabrodin2000,BOOKtau}:
\begin{equation}\label{eq:hessian-original}
	\frac{\partial^2 \mathcal{F}}{\partial t_m \, \partial \bar{t}_n}
	= -mn \, [z^{-m}][\bar{z}'^{-n}] \log\!\left(1 - \frac{1}{w(z)\,\overline{w(z')}}\right),
\end{equation}
where $z$ and $z'$ are treated as independent complex variables and the
coefficient extraction $[z^{-m}]$ refers to the Laurent expansion at infinity.

The right-hand side of \eqref{eq:hessian-original} depends on the conformal
radius $r$ through the normalization $w(z) \sim z/r$ at infinity. This
dependence is inessential for spectral questions: it can be absorbed by
passing to the scale-free variable $x = r/z$ introduced in
Section~\ref{sec:inversion}. In terms of $x$, we have
$w(z)^{-1} = x \, U(x;\zeta)$ by \eqref{eq:w-U}, and the kernel becomes
\begin{equation}\label{eq:kernel-scaled}
	{K}(x, \bar{x}')
	:= -\log\!\left(1 - x\bar{x}' \, U(x;\zeta) \, \overline{U(x';\zeta)}\right).
\end{equation}
This expression depends on $(x, \bar{x}', \zeta)$ alone, with no residual
$r$-dependence.

\begin{definition}[Scale-invariant Hessian]
	\label{def:Hessian}
	The \emph{scale-invariant Hessian} is the infinite matrix $H = (H_{mn})_{m,n \ge 1}$
	defined by
	\begin{equation}\label{eq:H-def}
		H_{mn} := mn \, [x^{m}][\bar{x}'^{n}] \, {K}(x, \bar{x}').
	\end{equation}
\end{definition}

The prefactor $mn$ arises from the coefficient extraction in
\eqref{eq:hessian-original} and ensures that $H$ has the correct homogeneity
to act on mode sequences. Concretely, $H$ is the natural scale-invariant version of the Toda Hessian.
\begin{remark}[Operator viewpoint and role of renormalization]
	\label{rem:operator}
	The scale-invariant Hessian $H=(H_{mn})_{m,n\ge1}$ is naturally a positive
	semidefinite quadratic form on finitely supported sequences. Spectral
	statements require a compatible Hilbert realization, and the standard
	$\ell^2(\NN)$ topology is not stable at analytic criticality. The appropriate
	framework is the weighted Gram realization introduced later in
	Definition~\ref{def:weighted-conj}. All spectral statements in
	Sections~\ref{sec:regime-I}--\ref{sec:regime-2} refer to these renormalized
	operators.
\end{remark}

\subsection{Expansion of the kernel and block structure}
\label{subsec:kernel-expansion}

We now expand the kernel \eqref{eq:kernel-scaled} using the Raney formalism
from Section~\ref{sec:inversion}. The logarithm expands as
\begin{equation}\label{eq:log-series}
	{K}(x, \bar{x}')
	= \sum_{p=1}^{\infty} \frac{1}{p} \left( x\bar{x}' \, U(x;\zeta) \, \overline{U(x';\zeta)} \right)^p
	= \sum_{p=1}^{\infty} \frac{1}{p} \, x^p \bar{x}'^p \, U(x;\zeta)^p \, \overline{U(x';\zeta)}{}^p.
\end{equation}
By Lemma~\ref{lem:inverse-powers}, each power $U(x;\zeta)^p$ has the expansion
\[
	U(x;\zeta)^p = \sum_{m=0}^{\infty} R_{s,p}(m) \, (\zeta x^s)^m
	= \sum_{m=0}^{\infty} R_{s,p}(m) \, \zeta^m \, x^{sm},
\]
and similarly for the conjugate factor. Substituting into \eqref{eq:log-series}
and collecting powers of $x^{m} \bar{x}'^{n}$ yields:

\begin{proposition}[Kernel coefficients]
	\label{prop:kernel-coeff}
	The coefficient $[x^{m}][\bar{x}'^{n}] {K}$ vanishes unless
	$m\equiv n \pmod{s}$. When this congruence holds,
	\begin{equation}\label{eq:kernel-coeff}
		[x^{m}][\bar{x}'^{n}] {K}
		= \sum_{\substack{p \ge 1 \\ p \equiv m \pmod{s}}}
		\frac{1}{p} \, R_{s,p}(k) \, R_{s,p}(l) \, \zeta^{k+l}
	\end{equation}
	where $k = (m - p)/s$ and $l = (n - p)/s$.
\end{proposition}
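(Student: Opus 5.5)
The plan is to expand the kernel \eqref{eq:kernel-scaled} as a formal double power series in \(x\) and \(\bar x'\) and read off the coefficient of \(x^m\bar x'^n\) directly. First I will justify the manipulations. For \(0<\zeta<\zeta_c\) and \(|x|,|x'|\) small, Lemma~\ref{lem:U-critical} gives \(U(x;\zeta)\) holomorphic near \(x=0\) with \(U(0;\zeta)=1\). Hence \(|x\bar x' U(x)\overline{U(x')}|<1\) on a small polydisk, and the logarithm series \eqref{eq:log-series} converges absolutely and uniformly there. Alternatively, one can work purely with formal power series: each term \(p\) contributes only monomials of total degree at least \(2p\), so every coefficient receives finitely many contributions. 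Either way, rearranging and extracting coefficients termwise is legitimate. I will also use that \(\zeta\) is real, so \(\overline{U(x';\zeta)}=U(\bar x';\zeta)\) has real Raney coefficients in \(\bar x'\).

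Next I substitute Lemma~\ref{lem:inverse-powers} into the \(p\)-th term. This gives
\[
\frac1p\,x^p U(x)^p\,\bar x'^p\,\overline{U(x')}^{\,p}
=\frac1p\sum_{k,l\ge0}R_{s,p}(k)R_{s,p}(l)\,\zeta^{k+l}\,x^{p+sk}\,\bar x'^{\,p+sl}.
\]
The exponents of \(x\) in this term are therefore exactly \(p+sk\), and those of \(\bar x'\) are \(p+sl\). The coefficient of \(x^m\bar x'^n\) comes from triples \((p,k,l)\) with \(p\ge1\), \(k,l\ge0\), \(m=p+sk\) and \(n=p+sl\). For fixed \(p\), the values \(k=(m-p)/s\) and \(l=(n-p)/s\) are uniquely determined, and they exist exactly when \(p\equiv m\equiv n\pmod s\) and \(p\le\min(m,n)\).

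Two conclusions follow. If \(m\not\equiv n\pmod s\), no triple exists and the coefficient vanishes. If \(m\equiv n\), summing over admissible \(p\) gives \eqref{eq:kernel-coeff}, with the sum effectively finite since \(p\le\min(m,n)\). I will state that terms with \(k<0\) or \(l<0\) are absent; this is the convention of setting \(R_{s,p}(j)=0\) for \(j<0\).

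There is no real obstacle here. The only point needing care is the interchange of the \(p\)-sum with the double expansions. The formal-series degree argument settles it without appeal to analytic convergence, so that is the justification I will use.
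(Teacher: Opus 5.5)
Your proposal is correct and follows essentially the same route as the paper: expand the logarithm, substitute the Raney expansions of $U^p$, and match exponents via $m=p+sk$, $n=p+sl$. Your justification of termwise coefficient extraction (only finitely many $p$ contribute to each monomial) and your explicit constraint $p\le\min(m,n)$, i.e.\ $k,l\ge0$, make precise details that the paper's one-line proof leaves implicit.
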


\begin{proof}
	The monomial $x^{m} \bar{x}'^{n}$ arises from the $p$-th term in
	\eqref{eq:log-series} when $m = p + sk$ and $n = p + sl$ for some
	$k, l \ge 0$. This requires $m \equiv p \equiv n \pmod{s}$.
	Summing over all contributing values of $p$ gives \eqref{eq:kernel-coeff}.
\end{proof}

The selection rule has an immediate structural consequence for the Hessian.

\begin{corollary}[Block decomposition]
	\label{cor:block-decomp}
	The scale-invariant Hessian decomposes as
	\begin{equation}\label{eq:block-decomp}
		H = \bigoplus_{q=1}^{s} H^{(q)},
	\end{equation}
	where $H^{(q)}$ is the restriction of $H$ to indices $m, n \equiv q \pmod{s}$.
	Explicitly, identifying the $q$-th block with an operator on $\ell^2(\mathbb{N}_0)$
	via the correspondence $m = q + js \leftrightarrow j$, the matrix elements are
	\begin{equation}\label{eq:block-entry}
		H^{(q)}_{j_1 j_2} = H_{q + j_1 s, \, q + j_2 s}, \qquad j_1, j_2 \ge 0.
	\end{equation}
\end{corollary}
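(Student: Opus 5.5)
The plan is to read the block decomposition directly off Proposition~\ref{prop:kernel-coeff}, since all the analytic work is already done there. By Definition~\ref{def:Hessian}, $H_{mn}=mn\,[x^m][\bar x'^n]K$. Proposition~\ref{prop:kernel-coeff} says this coefficient vanishes unless $m\equiv n\pmod s$. Hence $H_{mn}=0$ whenever $m$ and $n$ lie in different residue classes modulo $s$.

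Next I partition the index set $\NN=\{1,2,\dots\}$ into the $s$ residue classes $I_q:=\{m\ge1: m\equiv q \pmod s\}$ for $q=1,\dots,s$. Each class is indexed by $m=q+js$ with $j\ge0$; this is where the representative $q=s$ rather than $0$ matters, since it makes $j=0$ correspond to $m=s\ge1$. These classes are pairwise disjoint and exhaust $\NN$. Because the entries of $H$ linking $I_q$ to $I_{q'}$ vanish for $q\ne q'$, the quadratic form $H$ on finitely supported sequences splits as an orthogonal direct sum over the coordinate subspaces spanned by $I_q$. This gives \eqref{eq:block-decomp}, with $H^{(q)}$ the restriction of $H$ to $I_q\times I_q$.

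Finally, the map $j\mapsto q+js$ is a bijection $\NN_0\to I_q$. It therefore induces a unitary identification of $\ell^2(I_q)$ with $\ell^2(\NN_0)$, or simply a relabeling at the level of finitely supported sequences, and under it the matrix entries become exactly \eqref{eq:block-entry}. For later use I would also record the explicit entries. Substituting $m=q+j_1s$ and $n=q+j_2s$ into \eqref{eq:kernel-coeff}, the sum runs over $p=q+is$ with $0\le i\le\min(j_1,j_2)$, and $k=j_1-i$, $l=j_2-i$.

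There is no real obstacle here. The only care needed is bookkeeping: choose the representative $q=s$ for the zero residue class so that the indexing starts at $j=0$, and note that the statement is made at the level of matrices or quadratic forms, so no boundedness is needed at this stage.
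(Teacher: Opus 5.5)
Your proposal is correct and follows the paper's route. The paper states the corollary as an immediate consequence of the selection rule in Proposition~\ref{prop:kernel-coeff}: combined with Definition~\ref{def:Hessian}, it kills every entry $H_{mn}$ with $m\not\equiv n \pmod s$, and the rest is the relabeling $j\mapsto q+js$ you describe. Your explicit summation range $p=q+is$, $0\le i\le\min(j_1,j_2)$, is also correct, and it makes explicit the constraint $k,l\ge0$ that \eqref{eq:kernel-coeff} leaves implicit.
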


All spectral statements in this paper are formulated and proved blockwise. The
$s$-fold rotational symmetry of the conformal map thus reduces the spectral
problem to $s$ independent components.

\subsection{Gram representation}
\label{subsec:gram-representation}

The factorized structure of \eqref{eq:kernel-coeff}, namely, a product of Raney
coefficients, one depending on $m$ and one on $n$, suggests rewriting
the Hessian as a sum of rank-one operators. This \emph{Gram representation}
is the key algebraic structure underlying the spectral analysis.

\begin{proposition}[Gram representation]
	\label{prop:gram}
	Define vectors $\bm{v}^{(p)} = (v^{(p)}_{m})_{m \ge 1}$ by
	\begin{equation}\label{eq:gram-vector}
		v^{(p)}_{m} :=
		\begin{cases}
			\displaystyle \frac{m}{\sqrt{p}} \, R_{s,p}(k) \, \zeta^k,
			   & \text{if } m = p + ks \text{ for some } k \ge 0, \\[1ex]
			0, & \text{otherwise}.
		\end{cases}
	\end{equation}
	Then the scale-invariant Hessian admits the exact entrywise representation
	\begin{equation}\label{eq:gram-rep}
		H = \sum_{p=1}^{\infty} \bm{v}^{(p)} \otimes \bm{v}^{(p)*},
	\end{equation}
	where $\bm{v} \otimes \bm{v}^*$ denotes the rank-one operator
	$(\bm{v} \otimes \bm{v}^*)_{m n} = v_{m} \overline{v_{n}}$. Equivalently,
	\eqref{eq:gram-rep} holds as a quadratic-form identity on finitely supported
	sequences.
\end{proposition}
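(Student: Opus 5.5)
The proof is a direct entrywise comparison: I will compute the $(m,n)$ entry of the right-hand side of \eqref{eq:gram-rep} and match it with $H_{mn}=mn\,[x^m][\bar x'^n]K$ as given by Proposition~\ref{prop:kernel-coeff}.

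\textbf{Step 1: finiteness of the sum.} For fixed $m$, the coordinate $v^{(p)}_m$ is nonzero only when $p\le m$ and $p\equiv m \pmod s$. So for fixed $(m,n)$ only finitely many $p$ contribute to
\[
\sum_p v^{(p)}_m\overline{v^{(p)}_n}.
\]
The entrywise sum is therefore a finite sum, and no convergence issue arises at the level of entries.

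\textbf{Step 2: the case $m\not\equiv n \pmod s$.} Here no $p$ is congruent to both $m$ and $n$, so every product vanishes. This matches the vanishing of $[x^m][\bar x'^n]K$ in Proposition~\ref{prop:kernel-coeff}.

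\textbf{Step 3: the case $m\equiv n\pmod s$.} Write $m=p+ks$ and $n=p+ls$. Since $\zeta>0$ and the Raney numbers are real, conjugation does nothing, and
\[
v^{(p)}_m\overline{v^{(p)}_n}=\frac{mn}{p}\,R_{s,p}(k)R_{s,p}(l)\,\zeta^{k+l}.
\]
Summing over admissible $p$ gives exactly $mn$ times the right-hand side of \eqref{eq:kernel-coeff}, which is $H_{mn}$ by Definition~\ref{def:Hessian}.

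One point needs a brief justification: the coefficient extraction in Proposition~\ref{prop:kernel-coeff} from the double series \eqref{eq:log-series}. This is a formal power series identity, since each monomial $x^m\bar x'^n$ receives contributions only from $p\le\min(m,n)$. It is also valid analytically for $|x|,|x'|$ small, where the series converges absolutely.

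\textbf{Step 4: the quadratic-form version.} For a finitely supported sequence $c$,
\[
\langle c,Hc\rangle=\sum_{m,n}\bar c_m H_{mn}c_n .
\]
Inserting the finite $p$-sums and exchanging the finite sums gives
\[
\langle c,Hc\rangle=\sum_p\Bigl|\sum_m \bar c_m v^{(p)}_m\Bigr|^2\ \ge 0 .
\]
Only $p\le\max\operatorname{supp}c$ contributes, so all sums are finite. This also yields the positive semidefiniteness mentioned in Remark~\ref{rem:operator}.

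I expect no real obstacle. The only care needed is to state that the identity is entrywise (or as forms on finitely supported sequences), rather than as a norm-convergent operator series on $\ell^2$.
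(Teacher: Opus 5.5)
Your proposal is correct and follows the paper's proof. Both compare the $(m,n)$ entry of $\sum_p \bm v^{(p)}\otimes\bm v^{(p)*}$ with $mn$ times the coefficient in Proposition~\ref{prop:kernel-coeff}, and both note that only the finitely many $p\le\min(m,n)$ with $p\equiv m\pmod s$ contribute. Your explicit computation $\langle c,Hc\rangle=\sum_p\bigl|\sum_m \bar c_m v^{(p)}_m\bigr|^2$ simply spells out the quadratic-form equivalence and positivity that the paper states in one line.
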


\begin{proof}
	Combining Definition~\ref{def:Hessian} with Proposition~\ref{prop:kernel-coeff},
	we have for $m = p + ks$ and $n = p + ls$:
	\[
		H_{mn} = mn \sum_{p' \equiv m} \frac{1}{p'} R_{s,p'}(k') R_{s,p'}(l') \zeta^{k'+l'},
	\]
	where $k' = (m - p')/s$ and $l' = (n - p')/s$. On the other hand,
	\[
		\sum_{p=1}^{\infty} v^{(p)}_{m} \overline{v^{(p)}_{n}}
		= \sum_{p \equiv m \equiv n} \frac{mn}{p} R_{s,p}(k) R_{s,p}(l) \zeta^{k+l},
	\]
	with $k = (m-p)/s$ and $l = (n-p)/s$. These expressions coincide.

	Finally, for each fixed $(m,n)$ only finitely many $p$ contribute
	(because $v^{(p)}_m=0$ unless $p\le m$ and $p\equiv m \!\!\pmod{s}$),
	so \eqref{eq:gram-rep} holds as an entrywise identity, equivalently as a
	quadratic-form identity on the vector space of finitely supported sequences on $\mathbb N$.
\end{proof}

\begin{remark}[Structure of the Gram representation]
	\label{rem:gram-structure}
	The Gram form \eqref{eq:gram-rep} makes positivity of $H$ manifest. Namely, it is a
	sum of positive semidefinite rank-one operators. More importantly, it
	separates two sources of complexity: (i) The \emph{individual} contribution of each logarithmic mode $p$,
	encoded in the vector $\bm{v}^{(p)}$, and (ii) the \emph{collective} mixing of modes within each symmetry sector,
	which determines the eigenvalue distribution.

	The spectral behavior of $H$ is controlled by the interplay between these
	two effects. As we show next, the critical phenomenon at $\zeta = \zeta_c$
	arises from the borderline summability of the Gram entries, rather than by a singularity in any individual term.
\end{remark}

\subsection{Gram weights and the analytic threshold}
\label{subsec:gram-weights}

The Gram representation expresses $H$ as a superposition of rank-one
contributions. Whether this sum defines a bounded operator on the naive
unweighted $\ell^2$-space depends on the size of the individual terms,
measured by their squared norms.

\begin{definition}[Gram weights]
	\label{def:gram-weights}
	For $p \ge 1$, the \emph{$p$-th Gram weight} is
	\begin{equation}\label{eq:gram-weight-def}
		\sigma_p(\zeta) := \|\bm{v}^{(p)}\|_{\ell^2}^2
		= \sum_{m=0}^{\infty} \frac{(p+ms)^2}{p} \, R_{s,p}(m)^2 \, \zeta^{2m}.
	\end{equation}
\end{definition}

The Gram weights are the natural quantities controlling operator bounds for
the Hessian. Their behavior as $\zeta$ varies reveals the analytic threshold.

\begin{proposition}[Threshold for Gram weights]
	\label{prop:gram-divergence}
	For each fixed $p \ge 1$, one has $\sigma_p(\zeta) < \infty$ if and only if
	$\zeta < \zeta_c$. In particular, the analytic threshold for the Gram weights is exactly~$\zeta_c$.
\end{proposition}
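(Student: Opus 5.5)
The plan is to read off the convergence of the series \eqref{eq:gram-weight-def} from the coefficient asymptotics \eqref{eq:Raney-asymp} of Remark~\ref{rem:coeff-asymp}, treating the three regimes $\zeta<\zeta_c$, $\zeta=\zeta_c$ and $\zeta>\zeta_c$ separately. (I work with $\zeta>0$ as fixed in Section~\ref{sec:inversion}, and interpret the statement for $0<\zeta$.) All terms of the series are nonnegative, so finiteness of $\sigma_p(\zeta)$ is just convergence of the series. Inserting \eqref{eq:Raney-asymp}, the general term is
\[
	\frac{(p+ms)^2}{p}\,R_{s,p}(m)^2\,\zeta^{2m}
	= \frac{s^2A_{s,p}^2}{p}\, m^{-1}\Bigl(\frac{\zeta^2}{\zeta_c^2}\Bigr)^{m}\bigl(1+O(m^{-1})\bigr),
	\qquad m\to\infty .
\]
The factor $(p+ms)^2\sim s^2m^2$ exactly cancels two powers of the decay $m^{-3}$ coming from $R_{s,p}(m)^2$. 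This leaves a harmonic-type term, and that is precisely why the threshold is borderline.

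For $\zeta<\zeta_c$ the term is bounded by $C\,m^{-1}\rho^m$ with $\rho=\zeta^2/\zeta_c^2<1$, so the series converges. Alternatively, one can avoid the asymptotics entirely here. The power series $\sum_m R_{s,p}(m)t^m=U^p$ has radius $\zeta_c$ in $t$ by Lemma~\ref{lem:U-critical}, so the ratio or root test gives $\limsup R_{s,p}(m)^{1/m}=\zeta_c^{-1}$. The polynomial factor does not change the radius.

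For $\zeta>\zeta_c$ the terms grow geometrically and hence do not tend to zero. For $\zeta=\zeta_c$ the terms are comparable to $s^2A_{s,p}^2/(p\,m)$ with $A_{s,p}>0$, so the series diverges by limit comparison with the harmonic series. Summing tail terms up to $N\approx(1-\zeta^2/\zeta_c^2)^{-1}$ also gives the logarithmic rate $\sigma_p(\zeta)\sim (s^2A_{s,p}^2/p)\,L(\zeta)$. I would record this as a remark, since it previews Theorem~\ref{thm:A}.

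The only delicate point is the critical case $\zeta=\zeta_c$, where the root test is inconclusive. There one genuinely needs the precise $m^{-3/2}$ exponent and the positivity of $A_{s,p}$. I would verify these directly from the closed form \eqref{eq:raney-closed} via Stirling's formula rather than relying on the transfer theorem. Writing $\binom{sm+p}{m}$ with Stirling's formula gives
\[
	R_{s,p}(m)\sim \frac{p}{s}\,\sqrt{\frac{s}{2\pi (s-1)}}\;\Bigl(\frac{s}{s-1}\Bigr)^{p}\, m^{-3/2}\,\zeta_c^{-m}.
\]
This yields an explicit $A_{s,p}>0$, uniform in the sense needed for a fixed $p$. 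Once this is in hand, the rest of the argument is immediate.
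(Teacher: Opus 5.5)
Your proposal is correct and is essentially the paper's argument: insert the Raney asymptotics $R_{s,p}(m)\sim A_{s,p}\,m^{-3/2}\zeta_c^{-m}$, with $A_{s,p}>0$ obtained from Stirling's formula exactly as in Proposition~\ref{prop:raney-asymp}, to reduce the summand to $\frac{s^2A_{s,p}^2}{p}\,m^{-1}(\zeta/\zeta_c)^{2m}$, then compare with geometric and harmonic series. Your explicit treatment of $\zeta>\zeta_c$ fills in a case the paper leaves implicit, and your constant $A_{s,p}=\frac{p}{\sqrt{2\pi s(s-1)}}\bigl(\frac{s}{s-1}\bigr)^{p}$ agrees with the paper's.
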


\begin{proof}
	By Proposition~\ref{prop:raney-asymp}, the Raney coefficients satisfy
	$R_{s,p}(m) \sim A_{s,p} \, m^{-3/2} \, \zeta_c^{-m}$ as
	$m \to \infty$. Hence the summand in \eqref{eq:gram-weight-def} has the
	borderline form
	\[
		\frac{(p+ms)^2}{p} R_{s,p}(m)^2 \zeta^{2m}
		\sim \frac{s^2 A_{s,p}^2}{p} \, m^{-1}
		\left(\frac{\zeta}{\zeta_c}\right)^{2m}.
	\]
	The factor $m^{-1}$ is summable exactly when $\zeta<\zeta_c$, and not
	summable at $\zeta=\zeta_c$. This proves the claim.
\end{proof}

\begin{remark}[Nature of the threshold]
	\label{rem:threshold-nature}
	Proposition~\ref{prop:gram-divergence} shows that $\zeta_c$ is exactly the
	threshold at which the vectors $\bm v^{(p)}$ cease to belong to $\ell^2$.
	Hence the elementary Gram realization of this subsection is valid only for
	$\zeta<\zeta_c$. The coefficient matrix \(H\), however, remains well defined
	throughout the geometric regime \(\zeta<\zeta_{\mathrm{univ}}\), and the
	weighted framework of Section~\ref{sec:regime-I} is introduced to study the
	singular behavior near \(\zeta_c\) on a fixed Hilbert scale.
\end{remark}

The strict inequality \(\zeta_c<\zeta_{\mathrm{univ}}\), proved later in
Proposition~\ref{prop:separation}, produces the intermediate regime
\(\zeta_c<\zeta<\zeta_{\mathrm{univ}}\) in which spectral criticality
precedes geometric breakdown. The detailed analysis of these two regimes is
carried out in Sections~\ref{sec:regime-I} and~\ref{sec:regime-2}.




\section{Subcritical phase \texorpdfstring{$(0<\zeta<\zeta_c)$}{(0<zeta<zeta_c)}}
\label{sec:regime-I}


We now turn to the operator-theoretic picture in the analytic subcritical
regime. Proposition~\ref{prop:gram} gives a canonical Gram factorization of
each symmetry block of the Hessian. Although the unrenormalized Gram weights
$\sigma_p(\zeta)$ diverge as $\zeta\uparrow\zeta_c$
(Proposition~\ref{prop:gram-divergence}), an explicit diagonal weighting
renormalizes the column modes. The resulting compact positive operators have a
single logarithmically diverging eigenvalue, while the remaining soft
spectrum stays bounded and admits a well-defined limiting description after
compression.

\subsection{Operator framework, block factorization, and weighted renormalization}
\label{subsec:gram-operators}

Fix a symmetry sector \(q\in\{1,\dots,s\}\), and write
\[
	p_j:=q+js,\qquad j\in\NN_0,
\]
for the indices belonging to that block; see
Corollary~\ref{cor:block-decomp}. Restricting the Gram vectors
\(\bm v^{(p)}\) from \eqref{eq:gram-vector} to the \(q\)-th block, we obtain
a synthesis map \(V(\zeta)\), initially defined on finitely supported
sequences, whose \(j\)-th column is the restriction of
\(\bm v^{(p_j)}\). The block Hessian and block Gram operators are then
\begin{equation}\label{eq:Hq-VV}
	H^{(q)}(\zeta)=V(\zeta)V(\zeta)^*,
	\qquad
	G^{(q)}(\zeta)=V(\zeta)^*V(\zeta).
\end{equation}
At this stage, \(V(\zeta)\) is used only on finitely supported sequences, and
\eqref{eq:Hq-VV} is an identity of matrix coefficients (equivalently, of
quadratic forms on finitely supported vectors).

The difficulty is that the unweighted realization is not stable as
\(\zeta\uparrow\zeta_c\). Indeed, Proposition~\ref{prop:raney-uniform}
shows that the Raney coefficients satisfy
\[
	R_{s,p}(m)\le C_s\,p\,M^{p}\,\zeta_c^{-m}\,m^{-3/2},
	\qquad
	M:=\frac{s}{s-1}.
\]
Accordingly, the \(j\)-th column of \(V(\zeta)\) carries the noncritical
background growth \(p_jM^{p_j}\) in the block index \(j\), together with the
borderline transfer tail \(m^{-3/2}\) in the summation variable \(m\). To
obtain a compact operator on a fixed Hilbert space, one therefore rescales
the columns by an explicit diagonal weight that removes this background
growth but leaves the genuine logarithmic singularity untouched.

\begin{definition}[Weighted realization]
	\label{def:weighted-conj}
	Fix \(\beta>0\), set \(M:=\frac{s}{s-1}\), and define
	\begin{equation}\label{eq:weights}
		w_j:=p_j^{\,\frac32+\beta}M^{p_j},
		\qquad
		j\in\NN_0.
	\end{equation}
	Let
	\begin{equation}\label{eq:weighted-Hilbert}
		\mathcal H_\beta
		:=
		\Bigl\{x=(x_j)_{j\ge0}:\,
		\|x\|_{\mathcal H_\beta}^2
		=
		\sum_{j\ge0}|x_j|^2w_j^2<\infty\Bigr\},
	\end{equation}
	with inner product
	\[
		\langle x,y\rangle_{\mathcal H_\beta}
		:=
		\sum_{j\ge0}x_j\overline{y_j}\,w_j^2.
	\]
	Let \(\mathcal W:\mathcal H_\beta\to\ell^2(\NN_0)\) be the unitary map:
	\[
		(\mathcal Wx)_j=w_jx_j.
	\]
	Identifying \(\mathcal W\) with the diagonal operator
	\(\operatorname{diag}(w_j)_{j\ge0}\) on \(\ell^2(\NN_0)\) via the canonical
	basis, the renormalized synthesis operator is
	\[
		\widetilde V(\zeta):=V(\zeta)\mathcal W^{-1}:\ell^2(\NN_0)\to\ell^2(\NN_0),
	\]
	initially defined on finitely supported sequences.
	Proposition~\ref{prop:subcrit-compact} shows that it extends uniquely by
	continuity to a Hilbert--Schmidt operator. Its associated weighted Gram and
	Hessian operators are
	\begin{equation}\label{eq:Gtilde}
		\widetilde G^{(q)}(\zeta)
		:=
		\widetilde V(\zeta)^*\widetilde V(\zeta),
		\qquad
		\widetilde H^{(q)}(\zeta)
		:=
		\widetilde V(\zeta)\widetilde V(\zeta)^*.
	\end{equation}
\end{definition}

The structure of the weights in \eqref{eq:weights} is dictated by the Raney
asymptotics. The factor \(M^{p_j}\) compensates the exponential dependence on
the block index. The exponent \(3/2\) matches the borderline tail
\(m^{-3/2}\), and the additional parameter \(\beta>0\) provides a small
margin that turns the residual \(p_j^{-2}\) behavior into an absolutely
summable sequence. In particular, the weighted realization is
\(\zeta\)-independent.

\begin{remark}[Intrinsic object versus weighted realization]
	\label{rem:renorm-intrinsic}
	The coefficient matrix of the Hessian block is the intrinsic object. The weights \eqref{eq:weights} neither alter these coefficients nor remove the critical instability. Their role is to place the same coefficients in a fixed $\zeta$-independent Hilbert scale in which the singular part is realized as a compact-plus-rank-one operator.

	The singular direction is already present at the coefficient level and is
	encoded by the intrinsic amplitudes \(d_j^{(q)}\); see
	\eqref{eq:app-d-def}. The vector
	\[
		\widetilde{\bm d}^{(q)}
		=
		\bigl(d_j^{(q)}/w_j\bigr)_{j\ge0}\in\ell^2(\NN_0)
	\]
	is simply its realization in the weighted scale. Varying \(\beta>0\) changes
	this realization, but not the underlying one-dimensional singular direction.
	Thus Theorems~\ref{thm:rankone-decomp}
	and~\ref{thm:spectral-asymptotics} are statements about the weighted
	realization for fixed \(\beta>0\), whereas their intrinsic content is the
	existence of a single dominant singular direction in each symmetry sector.
\end{remark}

The first consequence of this construction is compactness for every fixed
subcritical parameter.

\begin{proposition}[Subcritical compactness]
	\label{prop:subcrit-compact}
	Fix \(q\in\{1,\dots,s\}\) and \(\beta>0\). For every \(0<\zeta<\zeta_c\),
	the operator
	\[
		\widetilde V(\zeta):\ell^2(\NN_0)\to\ell^2(\NN_0)
	\]
	is Hilbert--Schmidt. Consequently,
	\(\widetilde G^{(q)}(\zeta)\) and \(\widetilde H^{(q)}(\zeta)\) are
	trace class, hence compact.
\end{proposition}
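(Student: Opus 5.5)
The plan is to bound the Hilbert--Schmidt norm directly by summing the squared matrix entries of \(\widetilde V(\zeta)\).

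\textbf{Entries of \(\widetilde V(\zeta)\).} Its \(j\)-th column is the restriction of \(\bm v^{(p_j)}/w_j\) to the \(q\)-th block. By \eqref{eq:gram-vector}, the nonzero entries sit at rows \(n=p_j+ks\), \(k\ge0\), and equal \((p_j+ks)R_{s,p_j}(k)\zeta^k/(\sqrt{p_j}\,w_j)\). The squared Hilbert--Schmidt norm is therefore
\[
	\|\widetilde V(\zeta)\|_{\HS}^2=\sum_{j\ge0}\frac{\sigma_{p_j}(\zeta)}{w_j^2},
\]
with \(\sigma_p\) the Gram weight of Definition~\ref{def:gram-weights}. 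Proposition~\ref{prop:gram-divergence} shows that each \(\sigma_{p_j}(\zeta)\) is finite for \(\zeta<\zeta_c\). What remains is a bound on \(\sigma_p(\zeta)\) that is uniform enough in \(p\) to make the series over \(j\) converge.

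\textbf{Uniform bound on the Gram weights.} I would use the uniform Raney bound of Proposition~\ref{prop:raney-uniform}, \(R_{s,p}(k)\le C_s\,p\,M^{p}\zeta_c^{-k}k^{-3/2}\) for \(k\ge1\), together with \(R_{s,p}(0)=1\). Setting \(\rho:=\zeta/\zeta_c<1\), this gives
\[
	\sigma_p(\zeta)\le p+\frac{C_s^2p^2M^{2p}}{p}\sum_{k\ge1}(p+ks)^2k^{-3}\rho^{2k}.
\]
To control the inner sum, I would use \((p+ks)^2\le 2p^2+2s^2k^2\). The \(p^2\) part is dominated by \(p^2\sum_k k^{-3}\), and the \(k^2\) part by \(\sum_k k^{-1}\rho^{2k}=\log\frac{1}{1-\rho^2}\). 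Hence
\[
	\sigma_p(\zeta)\le C\,M^{2p}\bigl(p^3+p\,L(\zeta)\bigr)+p.
\]

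\textbf{Summing over \(j\).} Dividing by \(w_j^2=p_j^{3+2\beta}M^{2p_j}\) cancels the exponential factor \(M^{2p_j}\). The resulting terms are \(O\bigl(p_j^{-2\beta}+L(\zeta)\,p_j^{-2-2\beta}\bigr)\).

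\textbf{Expected obstacle.} The first term \(p_j^{-2\beta}\) is summable only when \(\beta>1/2\), so the crude bound above fails for small \(\beta\). This power counting is the delicate step, and I expect to refine it as follows. The bound \(p\,M^p\) from Proposition~\ref{prop:raney-uniform} is lossy in the regime \(k\lesssim p\). Writing the Raney number exactly, \(R_{s,p}(k)=\frac{p}{sk+p}\binom{sk+p}{k}\), I would estimate \(\sum_k(p+ks)^2R_{s,p}(k)^2\rho^{2k}\) directly. The generating function \(U^p\) evaluated at \(t=\rho\zeta_c\) satisfies \(U(t)^p\le U(\rho\zeta_c)^p\), and \(U(\rho\zeta_c)<M\) strictly when \(\rho<1\). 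I would therefore use the Cauchy-type bound \(R_{s,p}(k)\le U(\rho'\zeta_c)^p(\rho'\zeta_c)^{-k}\) with \(\rho<\rho'<1\). This yields \(\sigma_p(\zeta)\le C_{\rho}\,p^{3}\,U(\rho'\zeta_c)^{2p}\), which is exponentially smaller than \(M^{2p}\). With this, \(\sum_j\sigma_{p_j}/w_j^2<\infty\) holds for every \(\beta>0\) at each fixed \(\zeta<\zeta_c\). Only pointwise-in-\(\zeta\) finiteness is needed here; the uniform estimate is a concern for later results.

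\textbf{Conclusion.} Since \(\widetilde V(\zeta)\) is Hilbert--Schmidt, the operators \(\widetilde G^{(q)}=\widetilde V^*\widetilde V\) and \(\widetilde H^{(q)}=\widetilde V\widetilde V^*\) are products of two Hilbert--Schmidt operators. They are therefore trace class, hence compact. The extension from finitely supported sequences to all of \(\ell^2(\NN_0)\) is unique by density.
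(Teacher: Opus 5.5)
Your final argument is correct, but it handles the key estimate differently from the paper. Both routes start from $\|\widetilde V(\zeta)\|_{\HS}^2=\sum_j\sigma_{p_j}(\zeta)/w_j^2$. Both trace the failure of the crude bound to the factor $p/(sk+p)\le p/(sk)$ built into Proposition~\ref{prop:raney-uniform}, which is lossy when $k\lesssim p$.

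The paper removes that loss algebraically. Since $(p+ks)\cdot\frac{p}{sk+p}=p$, one has exactly $\sigma_p(\zeta)=p\sum_{k\ge0}\binom{sk+p}{k}^2\zeta^{2k}$. A uniform Stirling bound $\binom{sk+p}{k}\le C_sM^p\zeta_c^{-k}k^{-1/2}$ then gives $\sigma_p(\zeta)\le p+C_s^2\,p\,M^{2p}L(\zeta)$, so the terms of the series over $j$ are $O\bigl(p_j^{-2-2\beta}\bigr)$.

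You instead apply a Cauchy estimate at an intermediate radius $\rho'\in(\rho,1)$. This needs only nonnegativity of the coefficients of $U^p$ and the strict inequality $U(\rho'\zeta_c)<U(\zeta_c)=M$. It buys an exponential factor $(U(\rho'\zeta_c)/M)^{2p_j}$, so the series converges for every $\beta$, even without the polynomial part of the weight. Your prefactor is in fact $O(p)$ rather than $p^3$, since $\sum_k(p+ks)^2(\rho/\rho')^{2k}=O(p^2)$.

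What your route gives up is any control as $\zeta\uparrow\zeta_c$: then $U(\rho'\zeta_c)\to M$ and $C_\rho\to\infty$. The paper's bound yields $\|\widetilde V(\zeta)\|_{\HS}^2=O(1+L(\zeta))$ on the whole subcritical range. It thus already exhibits the logarithmic growth that Theorem~\ref{thm:rankone-decomp} later resolves into a rank-one spike. For the proposition as stated only pointwise finiteness is needed, so either route suffices.
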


\begin{proof}
	By construction, the \(j\)-th column of \(\widetilde V(\zeta)\) is
	\[
		\widetilde V(\zeta)e_j
		=
		\frac{\bm v^{(p_j)}}{w_j},
	\]
	restricted to the \(q\)-block. Therefore
	\begin{equation}\label{eq:HS-sigma-sum}
		\|\widetilde V(\zeta)\|_{\mathrm{HS}}^2
		=
		\sum_{j\ge0}\|\widetilde V(\zeta)e_j\|_{\ell^2}^2
		=
		\sum_{j\ge0}\frac{\|\bm v^{(p_j)}\|_{\ell^2}^2}{w_j^2}
		=
		\sum_{j\ge0}\frac{\sigma_{p_j}(\zeta)}{w_j^2}.
	\end{equation}
	It is thus enough to bound \(\sigma_p(\zeta)\) uniformly in \(p\). Using the closed form for \(R_{s,p}(m)\) from
	Proposition~\ref{prop:raney-lagrange}, the Gram weights may be written as
	\begin{equation}\label{eq:sigma-binomial-form}
		\sigma_p(\zeta)
		=
		p\sum_{m\ge0}\binom{sm+p}{m}^2\zeta^{2m},
		\qquad p\ge1.
	\end{equation}
	We claim that there exists \(C_s>0\) such that, for all \(p\ge1\) and
	\(m\ge1\),
	\begin{equation}\label{eq:binom-uniform-bound}
		\binom{sm+p}{m}
		\le
		C_s\,M^p\,\zeta_c^{-m}\,m^{-1/2}.
	\end{equation}
	Indeed, set $
		N:=sm+p$, and $K:=N-m=(s-1)m+p$.
	A two-sided Stirling bound gives
	\[
		\binom{N}{m}
		\le
		C\,\frac{N^{N+\frac12}}{m^{m+\frac12}K^{K+\frac12}}
		=
		C\,\frac1{\sqrt m}\sqrt{\frac NK}
		\exp\bigl(N\log N-m\log m-K\log K\bigr),
	\]
	with an absolute constant \(C>0\). Writing \(t:=p/m\ge0\), so that
	\(N=m(s+t)\) and \(K=m(s-1+t)\), the exponent becomes $
		m((s+t)\log(s+t)-(s-1+t)\log(s-1+t))$.
	By the convexity estimate used in
	Proposition~\ref{prop:raney-uniform},
	\[
		(s+t)\log(s+t)-(s-1+t)\log(s-1+t)
		\le
		\log(\zeta_c^{-1})+t\log M,
	\]
	hence
	\[
		\exp\bigl(N\log N-m\log m-K\log K\bigr)
		\le
		\zeta_c^{-m}M^p.
	\]
	Moreover,
	\[
		\sqrt{\frac NK}
		=
		\sqrt{\frac{s+t}{s-1+t}}
		\le
		\sqrt{\frac{s}{s-1}},
	\]
	uniformly in \(t\ge0\). This proves \eqref{eq:binom-uniform-bound}.

	Now fix \(0<\zeta<\zeta_c\) and set \(\eta:=\zeta/\zeta_c\in(0,1)\).
	Combining \eqref{eq:sigma-binomial-form} with
	\eqref{eq:binom-uniform-bound}, and separating the term \(m=0\), we obtain
	\[
		\sigma_p(\zeta)
		=
		p+p\sum_{m\ge1}\binom{sm+p}{m}^2\zeta^{2m}
		\le
		p+p\,C_s^2M^{2p}\sum_{m\ge1}m^{-1}\eta^{2m}.
	\]
	Since \(\sum_{m\ge1}m^{-1}\eta^{2m}=-\log(1-\eta^2)<\infty\), this yields
	\begin{equation}\label{eq:sigma-subcritical-bound}
		\sigma_p(\zeta)\le C_s'(\zeta)\,p\,M^{2p},
	\end{equation}
	for some finite constant \(C_s'(\zeta)\) depending on \(\zeta\), but not on
	\(p\). Finally, using \(w_j^2=p_j^{3+2\beta}M^{2p_j}\) and
	\eqref{eq:HS-sigma-sum},
	\[
		\frac{\sigma_{p_j}(\zeta)}{w_j^2}
		\le
		C_s'(\zeta)\,\frac{p_jM^{2p_j}}{p_j^{3+2\beta}M^{2p_j}}
		=
		C_s'(\zeta)\,p_j^{-2-2\beta}.
	\]
	Because \(p_j=q+js\sim sj\) and \(\beta>0\), the series
	\(\sum_{j\ge0}p_j^{-2-2\beta}\) converges. Hence
	\(\widetilde V(\zeta)\) is Hilbert--Schmidt. Therefore
	\[
		\widetilde G^{(q)}(\zeta)=\widetilde V(\zeta)^*\widetilde V(\zeta),
		\qquad
		\widetilde H^{(q)}(\zeta)=\widetilde V(\zeta)\widetilde V(\zeta)^*,
	\]
	are trace class, and in particular compact.
\end{proof}

\begin{lemma}[Isospectrality of the weighted block realizations]
	\label{lem:block-isospectral}
	For each \(0<\zeta<\zeta_c\), the compact positive operators
	\(\widetilde H^{(q)}(\zeta)\) and \(\widetilde G^{(q)}(\zeta)\) have the
	same nonzero eigenvalues, counted with multiplicities.
\end{lemma}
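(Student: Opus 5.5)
The plan is to prove the statement as an instance of the general fact that \(T^*T\) and \(TT^*\) have the same nonzero spectrum, with \(T=\widetilde V(\zeta)\). By Proposition~\ref{prop:subcrit-compact}, \(\widetilde V(\zeta)\) is a bounded, in fact Hilbert--Schmidt, operator on \(\ell^2(\NN_0)\) for every \(0<\zeta<\zeta_c\). Hence \(\widetilde G^{(q)}(\zeta)=\widetilde V^*\widetilde V\) and \(\widetilde H^{(q)}(\zeta)=\widetilde V\widetilde V^*\) are compact, self-adjoint and positive, and each has a discrete nonzero spectrum made of eigenvalues of finite multiplicity. It therefore suffices to show that for every \(\lambda>0\) the eigenspaces \(\ker(\widetilde G^{(q)}-\lambda)\) and \(\ker(\widetilde H^{(q)}-\lambda)\) have equal dimension.

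I will compare the two eigenspaces using \(\widetilde V\) and \(\widetilde V^*\). If \(\widetilde V^*\widetilde V x=\lambda x\) with \(\lambda>0\), then
\[
\widetilde V\widetilde V^*(\widetilde V x)=\widetilde V(\lambda x)=\lambda\,\widetilde V x .
\]
So \(\widetilde V\) maps \(\ker(\widetilde G^{(q)}-\lambda)\) into \(\ker(\widetilde H^{(q)}-\lambda)\). This map is injective, because \(\|\widetilde V x\|^2=\langle \widetilde V^*\widetilde Vx,x\rangle=\lambda\|x\|^2\), so \(\widetilde Vx=0\) forces \(x=0\). The same argument applied to \(\widetilde V^*\) gives an injection in the opposite direction. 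Since both eigenspaces are finite dimensional, their dimensions agree. In fact \(\lambda^{-1/2}\widetilde V\) is an isometry between them, which gives the equality of multiplicities directly.

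An alternative route is the polar decomposition \(\widetilde V=U|\widetilde V|\). It gives \(\widetilde H^{(q)}=U\widetilde G^{(q)}U^*\), and \(U\) is a partial isometry that is unitary from \(\overline{\Ran}\,\widetilde G^{(q)}\) onto \(\overline{\Ran}\,\widetilde H^{(q)}\). This yields the same conclusion. I would mention it only as a remark.

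The one point that needs care is the identification of the abstractly defined operators with the matrix-coefficient objects in \eqref{eq:Hq-VV}. The operator \(\widetilde V\) is first defined on finitely supported sequences and then extended by continuity. Its adjoint is then the bounded operator whose matrix is the transpose-conjugate of that of \(\widetilde V\). The entries of \(\widetilde V\widetilde V^*\) are the absolutely convergent sums \(\sum_j (v^{(p_j)}_m/w_j)\overline{(v^{(p_j)}_n/w_j)}\), where convergence follows from the Hilbert--Schmidt bound. This shows that \(\widetilde H^{(q)}\) is the weighted version of the Gram representation in Proposition~\ref{prop:gram}. I expect this bookkeeping to be the only real obstacle, and it is routine given Proposition~\ref{prop:subcrit-compact}. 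The isospectrality itself is purely algebraic. The kernels, that is the eigenvalue \(0\), may differ, and the lemma correctly excludes them.
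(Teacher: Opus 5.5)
Your proposal is correct and follows essentially the same route as the paper: boundedness of \(\widetilde V(\zeta)\) from Proposition~\ref{prop:subcrit-compact}, followed by \(\widetilde V\) and \(\widetilde V^*\) carrying the \(\lambda\)-eigenspaces of \(\widetilde G^{(q)}(\zeta)\) and \(\widetilde H^{(q)}(\zeta)\) into each other for \(\lambda>0\). Your identity \(\|\widetilde Vx\|^2=\lambda\|x\|^2\) and the resulting isometry \(\lambda^{-1/2}\widetilde V\) make explicit the equality of multiplicities, which the paper only asserts ``by the same correspondence.''
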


\begin{proof}
	By Proposition~\ref{prop:subcrit-compact}, the operator
	\(\widetilde V(\zeta):\ell^2(\NN_0)\to\ell^2(\NN_0)\) is bounded.
	Therefore the standard relation between \(\widetilde V\widetilde V^*\) and
	\(\widetilde V^*\widetilde V\) applies. If
	\(\widetilde G^{(q)}(\zeta)x=\mu\,x\) with \(\mu>0\), then
	\(\widetilde V(\zeta)x\neq0\) and
	\[
		\widetilde H^{(q)}(\zeta)\bigl(\widetilde V(\zeta)x\bigr)
		=
		\widetilde V(\zeta)\widetilde V(\zeta)^*\widetilde V(\zeta)x
		=
		\widetilde V(\zeta)\widetilde G^{(q)}(\zeta)x
		=
		\mu\,\widetilde V(\zeta)x.
	\]
	The converse implication is identical, with \(\widetilde V(\zeta)^*\) in
	place of \(\widetilde V(\zeta)\). Multiplicities are preserved by the same
	correspondence.
\end{proof}

\begin{corollary}[Transfer from the weighted Gram block to the weighted Hessian block]
	\label{cor:block-transfer}
	Fix \(q\in\{1,\dots,s\}\), \(\beta>0\), and \(0<\zeta<\zeta_c\). Then
	\(\widetilde G^{(q)}(\zeta)\) and \(\widetilde H^{(q)}(\zeta)\) have the
	same nonzero eigenvalues, counted with multiplicities. In particular, every
	nonzero-eigenvalue statement proved below for \(\widetilde G^{(q)}(\zeta)\)
	holds verbatim for \(\widetilde H^{(q)}(\zeta)\).
\end{corollary}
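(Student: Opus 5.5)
The corollary follows from Lemma~\ref{lem:block-isospectral} together with the operator-theoretic setting fixed in Proposition~\ref{prop:subcrit-compact}. I will first record that, for fixed \(q\), \(\beta>0\) and \(0<\zeta<\zeta_c\), the renormalized synthesis operator \(\widetilde V(\zeta)\) is Hilbert--Schmidt on \(\ell^2(\NN_0)\). Hence \(\widetilde G^{(q)}(\zeta)=\widetilde V^*\widetilde V\) and \(\widetilde H^{(q)}(\zeta)=\widetilde V\widetilde V^*\) are bounded, positive and trace class. Both are therefore diagonalizable, with discrete nonzero spectra consisting of eigenvalues of finite multiplicity.

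Next I invoke the intertwining argument of Lemma~\ref{lem:block-isospectral}, and I will spell out the multiplicity statement more carefully than there. For \(\mu>0\), the map \(x\mapsto \mu^{-1/2}\widetilde V(\zeta)x\) sends \(\ker(\widetilde G^{(q)}-\mu)\) into \(\ker(\widetilde H^{(q)}-\mu)\). It is isometric on that eigenspace, since \(\|\widetilde Vx\|^2=\langle \widetilde G x,x\rangle=\mu\|x\|^2\). Symmetrically, \(y\mapsto\mu^{-1/2}\widetilde V(\zeta)^*y\) maps the \(\widetilde H\)-eigenspace isometrically into the \(\widetilde G\)-eigenspace. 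The two maps are mutually inverse because \(\mu^{-1}\widetilde V^*\widetilde V x=x\) on the first eigenspace. So the eigenspaces have equal (finite) dimension, and the nonzero eigenvalues agree with multiplicities. The only point needing attention is that \(\widetilde Vx\neq0\) for an eigenvector with \(\mu>0\), which follows from the same norm identity.

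Finally, for the ``in particular'' clause I note that any later statement about nonzero eigenvalues of \(\widetilde G^{(q)}(\zeta)\) depends only on the ordered list of nonzero eigenvalues with multiplicity. This covers, for example, the divergence of \(\mu_1^{(q)}(\zeta)\), the boundedness of the remaining eigenvalues, and the convergence of soft eigenvalues. Since that list coincides for the two operators at every subcritical \(\zeta\), such statements transfer verbatim. I do not expect a genuine obstacle. The one subtlety is that the transfer must be applied pointwise in \(\zeta\) within the subcritical range, where boundedness of \(\widetilde V(\zeta)\) is guaranteed. Asymptotic statements as \(\zeta\uparrow\zeta_c\) are limits of these pointwise identities and need no uniform control of \(\widetilde V\).
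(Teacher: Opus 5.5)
Your proposal is correct and follows the paper's route: the corollary is the isospectrality lemma for \(\widetilde V(\zeta)^*\widetilde V(\zeta)\) and \(\widetilde V(\zeta)\widetilde V(\zeta)^*\), using boundedness of \(\widetilde V(\zeta)\) from the subcritical compactness proposition, together with the observation that the later spectral statements depend only on the ordered nonzero eigenvalues with multiplicity. Your explicit mutually inverse isometries \(\mu^{-1/2}\widetilde V(\zeta)\) and \(\mu^{-1/2}\widetilde V(\zeta)^*\) between the eigenspaces supply the multiplicity claim that the paper only asserts.
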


\begin{proof}
	The first statement is exactly Lemma~\ref{lem:block-isospectral}. The
	remaining assertion follows because all spectral conclusions in
	Sections~\ref{subsec:rankone-spike}--\ref{subsec:spectral-asymptotics} are
	formulated only in terms of the nonzero eigenvalues and their
	multiplicities.
\end{proof}

\subsection{Rank-one logarithmic spike at \texorpdfstring{$\zeta_c$}{zeta_c}}
\label{subsec:rankone-spike}

We now isolate the singular part of the weighted block Gram operator as
\(\zeta\uparrow\zeta_c\). The point is that, after the weighted realization of
Section~\ref{subsec:gram-operators}, the borderline divergence is carried by a
single vector \(\widetilde{\bm d}^{(q)}\), whereas the remaining part stays
compact and converges in norm. This is the operator-theoretic core of
Theorem~\ref{thm:A}.

\begin{theorem}[Rank-one decomposition]
	\label{thm:rankone-decomp}
	Fix \(q\in\{1,\dots,s\}\) and \(\beta>0\). There exist a nonzero vector
	\(\widetilde{\bm d}^{(q)}\in\ell^2(\NN_0)\), independent of \(\zeta\), and a
	family of compact operators \(\widetilde C^{(q)}(\zeta)\) on \(\ell^2(\NN_0)\)
	such that, for every \(0<\zeta<\zeta_c\),
	\begin{equation}\label{eq:rankone-decomp}
		\widetilde G^{(q)}(\zeta)
		=
		L(\zeta)\,\widetilde{\bm d}^{(q)}\otimes \widetilde{\bm d}^{(q)*}
		+
		\widetilde C^{(q)}(\zeta),
		\qquad
		L(\zeta):=\log\frac{1}{1-\zeta^2/\zeta_c^2}.
	\end{equation}
	Moreover,
	\[
		\sup_{0<\zeta<\zeta_c}\|\widetilde C^{(q)}(\zeta)\|<\infty,
		\qquad
		\widetilde C^{(q)}(\zeta)\to \widetilde C_*^{(q)}
		\quad\text{in operator norm as }\zeta\uparrow\zeta_c,
	\]
	for some compact limit operator \(\widetilde C_*^{(q)}\).

	The vector \(\widetilde{\bm d}^{(q)}\) is described explicitly in
	Appendix~\ref{app:rankone-form}. Its intrinsic, weight-independent amplitudes
	\(d^{(q)}\) are defined in \eqref{eq:app-d-def}, and
	\[
		\widetilde d^{(q)}_j=\frac{d^{(q)}_j}{w_j}
	\]
	is their realization in the weighted scale.
\end{theorem}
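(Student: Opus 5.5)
The plan is to work entrywise with the weighted Gram matrix, extract the logarithm from each entry, and then control the remainder in Hilbert--Schmidt norm uniformly in \(\zeta\). Write \(\eta:=\zeta/\zeta_c\in(0,1)\), so that \(L(\zeta)=-\log(1-\eta^2)=\sum_{k\ge1}k^{-1}\eta^{2k}\).

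\emph{Step 1: entries of the Gram matrix.} For \(i\le j\), the common support of \(\bm v^{(p_i)}\) and \(\bm v^{(p_j)}\) is \(m=p_j+ks=p_i+(k+j-i)s\) with \(k\ge0\). Hence
\[
G^{(q)}_{ij}(\zeta)=\sum_{k\ge0}\frac{(p_j+ks)^2}{\sqrt{p_ip_j}}\,R_{s,p_i}(k+j-i)\,R_{s,p_j}(k)\,\zeta^{2k+j-i},
\]
and \(G^{(q)}_{ji}=G^{(q)}_{ij}\). The weighted entries are \(\widetilde G^{(q)}_{ij}=G^{(q)}_{ij}/(w_iw_j)\).

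\emph{Step 2: the leading term.} Insert the transfer asymptotics \eqref{eq:Raney-asymp}, \(R_{s,p}(n)=A_{s,p}n^{-3/2}\zeta_c^{-n}(1+O(n^{-1}))\), into the \(k\)-sum. The \(k\)-th summand becomes
\[
\frac{s^2A_{s,p_i}A_{s,p_j}}{\sqrt{p_ip_j}}\,\eta^{j-i}\,k^{-1}\eta^{2k}
\]
plus an error of order \(k^{-2}\eta^{2k}\), with a constant depending on \(i,j\). I would therefore define the intrinsic amplitudes
\[
d^{(q)}_j:=\frac{s\,A_{s,p_j}}{\sqrt{p_j}},\qquad \widetilde d^{(q)}_j:=\frac{d^{(q)}_j}{w_j},
\]
which is presumably the content of \eqref{eq:app-d-def}. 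The vector \(\widetilde{\bm d}^{(q)}\) is nonzero because every \(A_{s,p}>0\). Set
\[
\widetilde C^{(q)}(\zeta):=\widetilde G^{(q)}(\zeta)-L(\zeta)\,\widetilde{\bm d}^{(q)}\otimes\widetilde{\bm d}^{(q)*}.
\]
Entrywise, \(\widetilde C^{(q)}_{ij}\) has three parts:
\begin{enumerate}[label=(\roman*)]
\item an absolutely convergent series in the \(O(k^{-2})\) corrections, together with finitely many small-\(k\) terms;
\item the discrepancy \((\eta^{|j-i|}-1)L(\zeta)\,\widetilde d_i\widetilde d_j\);
\item the difference between \((p_j+ks)^2\) and \(s^2k^2\), which contributes terms of order \(p_jk^{-2}\) and \(p_j^2k^{-3}\).
\end{enumerate}
Part (ii) tends to \(0\), since \(|\eta^{n}-1|\le n(1-\eta)\) and \((1-\eta)L(\zeta)\to0\). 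Parts (i) and (iii) converge by dominated convergence as \(\eta\uparrow1\). This gives entrywise convergence of \(\widetilde C^{(q)}(\zeta)\) to an explicit matrix \(\widetilde C^{(q)}_*\).

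\emph{Step 3: a uniform Hilbert--Schmidt bound.} To upgrade entrywise convergence to operator-norm convergence with a compact limit, I would show that \(\sum_{i,j}\sup_{\zeta<\zeta_c}|\widetilde C^{(q)}_{ij}(\zeta)|^2<\infty\). Dominated convergence in \(\ell^2(\NN_0\times\NN_0)\) then yields \(\widetilde C^{(q)}(\zeta)\to\widetilde C^{(q)}_*\) in Hilbert--Schmidt norm, hence in operator norm, along with the uniform bound and compactness of the limit. This requires a Raney asymptotic that is uniform in \(p\), of the form
\[
R_{s,p}(n)=A_{s,p}\,n^{-3/2}\zeta_c^{-n}\bigl(1+O(p^2/n)\bigr)\quad\text{for } n\ge p^2,\qquad A_{s,p}\lesssim pM^p.
\]
For \(n<p^2\) I would instead use the crude bound of Proposition~\ref{prop:raney-uniform}, \(R_{s,p}(n)\le C_sp\,M^p\zeta_c^{-n}n^{-3/2}\). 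With these inputs each of (i)--(iii) should be bounded by \(C\,\mathrm{poly}(p_i,p_j)M^{p_i+p_j}\), with a fixed polynomial degree. Part (ii) is handled separately: \(|j-i|(1-\eta)L(\zeta)\) is uniformly bounded by \(C|j-i|\), which is again polynomial. Dividing by \(w_iw_j=(p_ip_j)^{3/2+\beta}M^{p_i+p_j}\) must then leave a square-summable double sequence.

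\emph{Main obstacle.} I expect the uniform error control in Step 3 to be the hardest part. In particular, the split at \(k\approx p^2\) produces a truncated logarithm of size \(\log p\) and polynomial losses in \(p\). It is not clear that \(\beta>0\) alone absorbs them: the naive count suggests terms like \(p_j^2\) from the prefactor, set against only \(p^{-3-2\beta}\) of decay. I would try first to exploit the exact identity
\[
(p+ks)R_{s,p}(k)=p\binom{sk+p}{k},
\]
as in \eqref{eq:sigma-binomial-form}. This rewrites the \(k\)-sum in binomial form, where the bound \eqref{eq:binom-uniform-bound} applies uniformly and cancels the prefactor growth; one then compares directly with the \(k^{-1}\eta^{2k}\) profile. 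If residual polynomial growth remains, I would expect the refined estimates of Appendix~\ref{app:rankone-form} to supply the needed uniformity.
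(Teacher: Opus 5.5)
\textbf{Verdict: genuine gap.} Your architecture is the paper's: extract $\eta^{|i-j|}L(\zeta)\,\widetilde d_i\widetilde d_j$ entrywise, bound the remainder by a $\zeta$-independent square-summable kernel, and conclude by dominated convergence in Hilbert--Schmidt norm (Lemmas~\ref{lem:entrywise-log} and~\ref{lem:toeplitz-removal}). The gap is Step~3, which carries the whole theorem and which you leave open. Moreover, the bounds you do commit to fail for small $\beta$.

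For part~(ii), the majorant $C|j-i|\,\widetilde d_i\widetilde d_j\asymp|j-i|(p_ip_j)^{-1-\beta}$ is not square-summable when $\beta\le\frac12$. You need instead $L(\zeta)(1-\eta^{\Delta})\le L(\zeta)\min\{1,\Delta(1-\eta)\}\le 1+\log(1+\Delta)$, or the paper's estimate $L(\zeta)\|K_\eta-K_1\|_{\mathrm{HS}}\to0$.

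In the range $1\le k\lesssim p_j$, both the $p_j^2k^{-3}$ part of~(iii) and the crude bound of Proposition~\ref{prop:raney-uniform} give a diagonal majorant of order $p^{-2\beta}$. This is not square-summable for $\beta\le\frac14$. The reason is that for $k\lesssim p$ the true $R_{s,p}(k)$ is smaller than the profile $p\,M^p\zeta_c^{-k}k^{-3/2}$ by a factor exponentially small in $p$. The paper handles this range with the exponential gap of Lemma~\ref{lem:raney-gap}.

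\textbf{How your binomial idea closes the gap.} It does so more cleanly than the paper. For $i\le j$ and $\Delta=j-i$, one has $(p_j+sk)R_{s,p_j}(k)=p_j\binom{sk+p_j}{k}$ and $(p_j+sk)R_{s,p_i}(k+\Delta)=p_i\binom{s(k+\Delta)+p_i}{k+\Delta}$. So the $k$-th summand of $\widetilde G^{(q)}_{ij}$ is
\[
U_k=\frac{\sqrt{p_ip_j}}{w_iw_j}\binom{s(k+\Delta)+p_i}{k+\Delta}\binom{sk+p_j}{k}\,\zeta^{2k+\Delta}.
\]
\begin{itemize}
\item For every $k\ge1$, \eqref{eq:binom-uniform-bound} gives $0\le U_k\le C\,\widetilde d_i\widetilde d_j\,\eta^{2k+\Delta}/k$, uniformly in $p_i,p_j,\zeta$. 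The $k=0$ term carries a factor $M^{-p_j}$.
\item For $k\ge p_j^2$, Stirling gives $\binom{sk+p}{k}=\sqrt{s/(2\pi(s-1))}\,M^{p}\zeta_c^{-k}k^{-1/2}\bigl(1+O(p^2/k)\bigr)$. Hence $U_k=\widetilde d_i\widetilde d_j\,\eta^{\Delta}\eta^{2k}k^{-1}\bigl(1+O(p_j^2/k)\bigr)$.
\end{itemize}
Splitting the sum at $k=p_j^2$ yields
\[
\bigl|\widetilde G^{(q)}_{ij}(\zeta)-\eta^{\Delta}L(\zeta)\,\widetilde d_i\widetilde d_j\bigr|\le C\,\widetilde d_i\widetilde d_j\,(1+\log p_j)
\]
uniformly in $\zeta$. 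Combined with the corrected bound for~(ii), this gives the majorant $C\,\widetilde d_i\widetilde d_j(1+\log p_j)$. It is square-summable for every $\beta>0$, and no gap lemma is needed.

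\textbf{On the uniform asymptotic.} Your form, with error $O(p^2/n)$ for $n\ge p^2$, is the correct one. Lemma~\ref{lem:raney-uniform-expansion} overstates the uniformity. Let $\Phi$ be the convex function from the proof of Proposition~\ref{prop:raney-uniform}. The ratio $R_{s,p}(m)/(A_{s,p}\zeta_c^{-m}m^{-3/2})$ equals $e^{-m\Phi(p/m)}\bigl(1+O(p/m)\bigr)$, with $m\Phi(p/m)\asymp p^2/m$ on $p\le\theta m$. So it is not $1+O(1/m)$ there; for $p=\lfloor\theta m\rfloor$ it even tends to $0$. Deferring to Appendix~\ref{app:rankone-form} as your fallback would therefore import that error. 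There the damage is only an extra factor $1+\log p_{j_2}$ in the bound on $E^{\mathrm{tail}}$, which is still Hilbert--Schmidt, so the theorem stands.
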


\begin{proof}
	The detailed coefficient analysis is contained in
	Appendix~\ref{app:rankone-form}. Here we only present the reduction that
	produces \eqref{eq:rankone-decomp}.

	\smallskip
	\noindent\emph{Entrywise extraction of the logarithmic term.}
	By Lemma~\ref{lem:Gram-entry}, each matrix element of
	\(\widetilde G^{(q)}(\zeta)\) is given by a single series in the summation
	index \(m\). Lemma~\ref{lem:entrywise-log} applies the uniform Raney
	expansion of Lemma~\ref{lem:raney-uniform-expansion} to the tail of this
	series and controls the initial range by the global bound of
	Proposition~\ref{prop:raney-uniform}. As a result, if
	\[
		\eta:=\frac{\zeta}{\zeta_c}\in(0,1),
	\]
	and
	\[
		(K_\eta)_{j_1j_2}
		:=
		\eta^{|j_1-j_2|}\,
		\widetilde d^{(q)}_{j_1}\widetilde d^{(q)}_{j_2},
		\qquad
		j_1,j_2\in\NN_0,
	\]
	then
	\begin{equation}\label{eq:rankone-preliminary}
		\widetilde G^{(q)}(\zeta)
		=
		L(\zeta)\,K_\eta
		+
		R^{(q)}(\zeta),
	\end{equation}
	where \(R^{(q)}(\zeta)\) is Hilbert--Schmidt,
	\[
		\sup_{0<\zeta<\zeta_c}\|R^{(q)}(\zeta)\|_{\mathrm{HS}}<\infty,
	\]
	and
	\[
		R^{(q)}(\zeta)\to \widetilde C_*^{(q)}
		\qquad\text{in Hilbert--Schmidt norm as }\zeta\uparrow\zeta_c.
	\]
	In particular, \(R^{(q)}(\zeta)\) is compact for each \(\zeta\), and the
	convergence also holds in operator norm. We denote this operator-norm
	limit by \(\widetilde C_*^{(q)}\). Thus the \(R_*^{(q)}\) of
	Appendix~\ref{app:rankone-form} and the \(\widetilde C_*^{(q)}\) used in
	the main text are the same operator.

	\smallskip
	\noindent\emph{Removal of the residual Toeplitz factor.}
	Set
	\[
		K_1:=\widetilde{\bm d}^{(q)}\otimes \widetilde{\bm d}^{(q)*}.
	\]
	The kernel \(K_\eta\) differs from \(K_1\) only by the Toeplitz damping factor
	\(\eta^{|j_1-j_2|}\). By Lemma~\ref{lem:toeplitz-removal},
	\begin{equation}\label{eq:toeplitz-removal-proof}
		L(\zeta)\,\|K_\eta-K_1\|\longrightarrow0,
		\qquad \zeta\uparrow\zeta_c.
	\end{equation}
	Therefore
	\[
		L(\zeta)\,K_\eta
		=
		L(\zeta)\,K_1
		+
		L(\zeta)\,(K_\eta-K_1),
	\]
	and the second term is negligible at the operator level.

	\smallskip
	\noindent\emph{Final decomposition.}
	Combining \eqref{eq:rankone-preliminary} with
	\eqref{eq:toeplitz-removal-proof}, we obtain
	\[
		\widetilde G^{(q)}(\zeta)
		=
		L(\zeta)\,K_1
		+
		\Bigl(R^{(q)}(\zeta)+L(\zeta)(K_\eta-K_1)\Bigr).
	\]
	Define
	\[
		\widetilde C^{(q)}(\zeta)
		:=
		R^{(q)}(\zeta)+L(\zeta)(K_\eta-K_1).
	\]
	Each \(\widetilde C^{(q)}(\zeta)\) is compact, since
	\(R^{(q)}(\zeta)\) is Hilbert--Schmidt and \(K_\eta-K_1\) is
	Hilbert--Schmidt by Lemma~\ref{lem:toeplitz-removal}.
	Moreover,
	\[
		\sup_{0<\zeta<\zeta_c}\|\widetilde C^{(q)}(\zeta)\|
		\le
		\sup_{0<\zeta<\zeta_c}\|R^{(q)}(\zeta)\|
		+
		\sup_{0<\zeta<\zeta_c}L(\zeta)\|K_\eta-K_1\|
		<\infty,
	\]
	and
	\[
		\widetilde C^{(q)}(\zeta)\to \widetilde C_*^{(q)}
		\qquad\text{in operator norm as }\zeta\uparrow\zeta_c,
	\]
	because \(R^{(q)}(\zeta)\to \widetilde C_*^{(q)}\) in Hilbert--Schmidt norm
	and \(L(\zeta)(K_\eta-K_1)\to0\) in operator norm. Since \(K_1=\widetilde{\bm d}^{(q)}\otimes \widetilde{\bm d}^{(q)*}\), this
	is exactly \eqref{eq:rankone-decomp}.
\end{proof}


\subsection{Spectral asymptotics}
\label{subsec:spectral-asymptotics}

We now convert the rank-one decomposition of
Theorem~\ref{thm:rankone-decomp} into spectral statement. The conclusion is
that the singular part produces one stiff eigenvalue in each symmetry
block, while the remaining spectrum stays bounded and converges to the
compressed limit operator.

\begin{theorem}[Spectral asymptotics at the threshold]
	\label{thm:spectral-asymptotics}
	Fix \(q\in\{1,\dots,s\}\) and \(\beta>0\). Let
	\[
		\mu_1^{(q)}(\zeta)\ge \mu_2^{(q)}(\zeta)\ge\cdots\ge0
	\]
	denote the eigenvalues of \(\widetilde G^{(q)}(\zeta)\), counted with
	multiplicity, and let \(\widetilde{\bm d}^{(q)}\neq0\) be the spike vector
	from Theorem~\ref{thm:rankone-decomp}. Set
	\[
		\Gamma^{(q)}:=\|\widetilde{\bm d}^{(q)}\|_{\ell^2}^2.
	\]
	Then, as \(\zeta\uparrow\zeta_c\),
	\begin{equation}\label{eq:mu-sing}
		\mu_1^{(q)}(\zeta)=\Gamma^{(q)}\,L(\zeta)+O(1),
		\qquad
		\sup_{0<\zeta<\zeta_c}\mu_k^{(q)}(\zeta)<\infty
		\quad (k\ge2).
	\end{equation}
	In particular, in the \(q\)-th symmetry block exactly one eigenvalue
	diverges logarithmically, and this eigenvalue is simple for \(\zeta\)
	sufficiently close to \(\zeta_c\).

	Moreover, if \(\psi_1(\zeta)\) is a unit eigenvector associated with
	\(\mu_1^{(q)}(\zeta)\), then after fixing its phase one has
	\begin{equation}\label{eq:psi-align}
		\psi_1(\zeta)\longrightarrow
		\frac{\widetilde{\bm d}^{(q)}}{\|\widetilde{\bm d}^{(q)}\|_{\ell^2}}
		\qquad\text{in }\ell^2
		\quad\text{as }\zeta\uparrow\zeta_c.
	\end{equation}
	Finally,
	\begin{equation}\label{eq:mu1-asymp}
		\mu_1^{(q)}(\zeta)
		=
		\Gamma^{(q)}\,L(\zeta)
		+
		\Lambda_{\mathrm{fin}}^{(q)}
		+o(1),
		\qquad \zeta\uparrow\zeta_c,
	\end{equation}
	where
	\[
		\Lambda_{\mathrm{fin}}^{(q)}
		:=
		\frac{\langle \widetilde{\bm d}^{(q)},
			\widetilde C_*^{(q)}\widetilde{\bm d}^{(q)}\rangle}
		{\|\widetilde{\bm d}^{(q)}\|_{\ell^2}^2}.
	\]
\end{theorem}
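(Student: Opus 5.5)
The plan is to treat \(\widetilde G^{(q)}(\zeta)\) as a rank-one perturbation of a uniformly bounded operator, using the decomposition of Theorem~\ref{thm:rankone-decomp}:
\[
\widetilde G^{(q)}(\zeta)=L(\zeta)P+\widetilde C(\zeta),\qquad P:=\widetilde{\bm d}\otimes\widetilde{\bm d}^{*},
\]
with \(\|\widetilde C(\zeta)\|\le C_0\) uniformly and \(\widetilde C(\zeta)\to\widetilde C_*\) in norm. Write \(e:=\widetilde{\bm d}/\|\widetilde{\bm d}\|\), so that \(P=\Gamma\,e\otimes e^*\).

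\emph{Step 1: eigenvalue bounds by min--max.}

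For the top eigenvalue, test the Rayleigh quotient on \(e\). This gives
\[
\mu_1\ge \Gamma L+\langle e,\widetilde C e\rangle\ge \Gamma L-C_0 .
\]
Conversely, Weyl's inequality gives \(\mu_1\le \|L P\|+\|\widetilde C\|=\Gamma L+C_0\). Together these yield \(\mu_1=\Gamma L+O(1)\).

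For \(k\ge2\), use the Weyl interlacing bound for rank-one perturbations:
\[
\mu_{k}(A+B)\le \mu_{k-1}(A)+\mu_1(B)\quad\text{with } A=\widetilde C,\ B=LP .
\]
A cleaner route is min--max over the codimension-one subspace \(e^\perp\). There the quadratic form of \(LP\) vanishes, so \(\mu_2\le \sup_{x\perp e,\|x\|=1}\langle x,\widetilde C x\rangle\le C_0\). Hence \(\sup_\zeta\mu_k\le C_0\) for all \(k\ge2\).

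Since \(\mu_1\to\infty\) while \(\mu_2\le C_0\), the top eigenvalue is simple once \(\Gamma L>2C_0\). Exactly one eigenvalue therefore diverges.

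\emph{Step 2: eigenvector alignment.}

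Decompose \(\psi_1=\alpha e+\phi\) with \(\phi\perp e\). Pairing the eigen-equation with \(\phi\) gives
\[
\mu_1\|\phi\|^2=\langle\phi,\widetilde C\psi_1\rangle ,
\]
because \(P\phi=0\) and \(P\) is self-adjoint. It follows that \(\|\phi\|\le C_0/\mu_1=O(1/L)\to0\). Fixing the phase so that \(\alpha>0\), we get \(\alpha\to1\), which proves \eqref{eq:psi-align}.

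\emph{Step 3: the finite part.}

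Pair the eigen-equation with \(\psi_1\):
\[
\mu_1=\Gamma L|\alpha|^2+\langle\psi_1,\widetilde C\psi_1\rangle .
\]
The term \(\langle\psi_1,\widetilde C\psi_1\rangle\to\langle e,\widetilde C_*e\rangle=\Lambda_{\mathrm{fin}}\), by norm convergence of \(\widetilde C\) and Step 2. The remaining issue is \(\Gamma L(1-|\alpha|^2)=\Gamma L\|\phi\|^2\). The bound \(\|\phi\|=O(1/L)\) makes this \(O(1/L)\to0\), so \eqref{eq:mu1-asymp} follows.

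\emph{Where care is needed.}

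The main obstacle is making sure this last term is genuinely \(o(1)\), not merely \(O(1)\). This requires the sharper estimate \(\|\phi\|\lesssim 1/L\), not just \(\|\phi\|\to0\). That sharper estimate is exactly what Step 2 provides, through the identity \(\mu_1\|\phi\|^2=\langle \phi,\widetilde C\psi_1\rangle\) together with \(\mu_1\gtrsim L\).

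Transfer of all conclusions to \(\widetilde H^{(q)}\) is then immediate from Corollary~\ref{cor:block-transfer}.
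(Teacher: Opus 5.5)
Your proposal is correct and follows essentially the same route as the paper. It uses min--max on \((\widetilde{\bm d}^{(q)})^\perp\) for \(\mu_2\), the Rayleigh quotient of the normalized spike vector for \(\mu_1\), the bound \(\|(I-P)\psi_1\|\le \|\widetilde C^{(q)}\|/\mu_1=O(L^{-1})\) (your pairing with \(\phi\) is equivalent to the paper's application of \(I-P\) to the eigen-equation), and \(\Gamma L\|\phi\|^2=O(L^{-1})\) for the finite part. Two small remarks: your explicit upper bound \(\mu_1\le\Gamma L+C_0\) makes the \(O(1)\) in \eqref{eq:mu-sing} two-sided at that step, where the paper leaves it implicit. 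The Weyl inequality you first quote would only be useful in the form \(\mu_k(A+B)\le\mu_{k-1}(A)+\mu_2(B)=\mu_{k-1}(A)\), but the min--max argument you actually rely on is the right one.
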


Theorem~\ref{thm:spectral-asymptotics} has a clear numerical signature:
after weighted renormalization, exactly one eigenvalue in each symmetry block
grows on the logarithmic scale \(L(\zeta)\), while the remaining eigenvalues
stay bounded. In all figures below, \(N\) denotes the truncation size of the
finite weighted block matrix used for numerical diagonalization. Figure~\ref{fig:eigenvalue-trajectories} illustrates this
separation for the block operator \(\widetilde G^{(q)}(\zeta)\) in the sector
\(q=1\) for \(s=3\) and \(s=5\). The top row shows that the leading eigenvalue
\(\mu_1^{(q)}(\zeta)\) is asymptotically affine in \(L(\zeta)\), whereas the
bottom row confirms that after division by \(L(\zeta)\) only the stiff eigenvalue
survives and all soft eigenvalues tend to zero.

\begin{figure}[t]
	\centering
	\includegraphics[width=\textwidth]{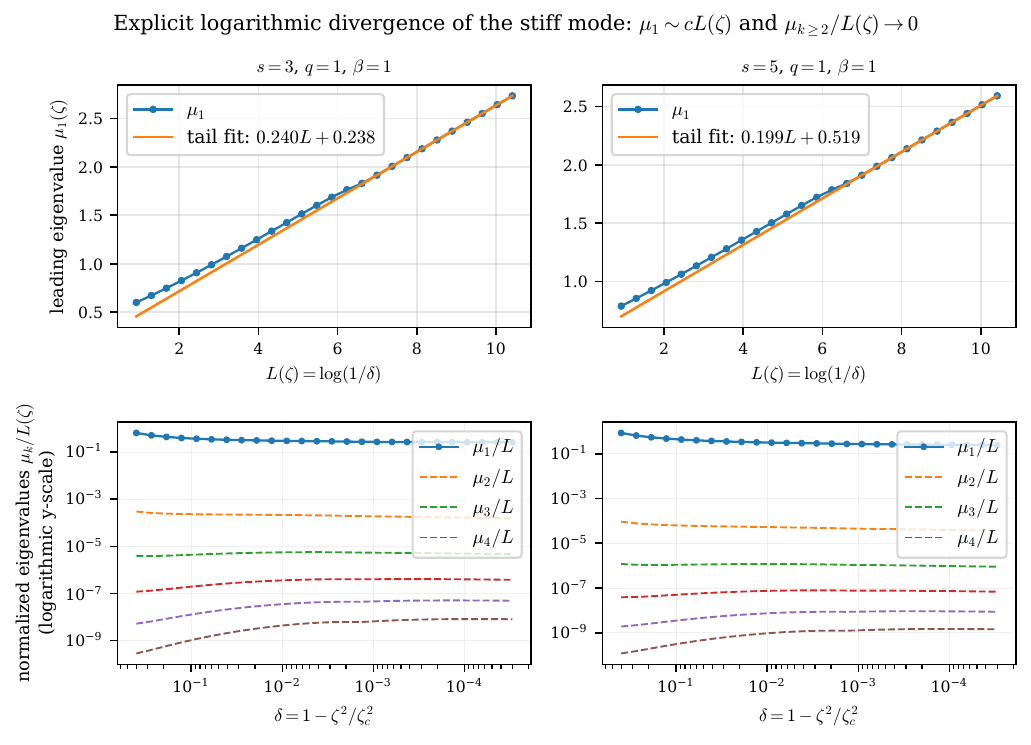}
	\caption{
		Logarithmic spectral asymptotics of the weighted Gram block
		\(\widetilde G^{(q)}(\zeta)\) for \(s=3,5\) in sector \(q=1\)
		(\(\beta=1\), \(N=30\)).
		Top row: the leading eigenvalue \(\mu_1^{(q)}(\zeta)\) plotted against $L(\zeta)$,
		showing the asymptotically affine law
		\(\mu_1^{(q)}(\zeta)=\Gamma^{(q)}\,L(\zeta)+O(1)\). The dashed line is a tail
		linear fit.
		Bottom row: the normalized eigenvalues
		\(\mu_k^{(q)}(\zeta)/L(\zeta)\). The ratio
		\(\mu_1^{(q)}(\zeta)/L(\zeta)\) approaches a positive constant, whereas
		\(\mu_k^{(q)}(\zeta)/L(\zeta)\to0\) for \(k\ge2\).
	}
	\label{fig:eigenvalue-trajectories}
\end{figure}

\begin{proof}[Proof of Theorem~\ref{thm:spectral-asymptotics}]
	Write $
		\widetilde{\bm d}:=\widetilde{\bm d}^{(q)},
		\
		\widehat{\bm d}:=\widetilde{\bm d}/\|\widetilde{\bm d}\|_{\ell^2},
		\
		P:=\widehat{\bm d}\otimes\widehat{\bm d}^{*}$. By Theorem~\ref{thm:rankone-decomp},
	\begin{equation}\label{eq:G-rankone-tight}
		\widetilde G^{(q)}(\zeta)
		=
		\Gamma^{(q)}\,L(\zeta)\,P+\widetilde C^{(q)}(\zeta),
	\end{equation}
	where each \(\widetilde C^{(q)}(\zeta)\) is compact self-adjoint,
	\[
		K_q:=\sup_{0<\zeta<\zeta_c}\|\widetilde C^{(q)}(\zeta)\|<\infty,
	\]
	and $\widetilde C^{(q)}(\zeta)\to \widetilde C_*^{(q)}$
		in operator norm as $\zeta\uparrow\zeta_c$.

	\smallskip
	\noindent\emph{Only one eigenvalue can diverge.}
	If \(x\perp\widetilde{\bm d}\) and \(\|x\|=1\), then \(Px=0\), hence
	\[
		\langle x,\widetilde G^{(q)}(\zeta)x\rangle
		=
		\langle x,\widetilde C^{(q)}(\zeta)x\rangle.
	\]
	Using the one-dimensional trial space \(\Span\{\widetilde{\bm d}\}\) in the
	min--max characterization of \(\mu_2^{(q)}(\zeta)\), we obtain
	\begin{equation}\label{eq:mu2-bound-tight}
		\mu_2^{(q)}(\zeta)
		\le
		\sup_{\substack{\|x\|=1\\x\perp\widetilde{\bm d}}}
		\langle x,\widetilde G^{(q)}(\zeta)x\rangle
		\le
		\|\widetilde C^{(q)}(\zeta)\|
		\le K_q.
	\end{equation}
	Therefore every eigenvalue with index \(k\ge2\) remains uniformly bounded.

	On the other hand, the Rayleigh quotient of \(\widehat{\bm d}\) gives
	\[
		\mu_1^{(q)}(\zeta)
		\ge
		\langle \widehat{\bm d},\widetilde G^{(q)}(\zeta)\widehat{\bm d}\rangle
		=
		\Gamma^{(q)}\,L(\zeta)
		+
		\langle \widehat{\bm d},
		\widetilde C^{(q)}(\zeta)\widehat{\bm d}\rangle
		\ge
		\Gamma^{(q)}\,L(\zeta)-K_q.
	\]
	Since \(L(\zeta)\to+\infty\), it follows that \(\mu_1^{(q)}(\zeta)\to+\infty\)
	while \(\mu_k^{(q)}(\zeta)\), \(k\ge2\), stay bounded. This proves
	\eqref{eq:mu-sing}, and the simplicity of \(\mu_1^{(q)}(\zeta)\) for
	\(\zeta\) close to \(\zeta_c\) follows from the spectral gap
	\[
		\mu_1^{(q)}(\zeta)-\mu_2^{(q)}(\zeta)\to+\infty.
	\]

	\smallskip
	\noindent\emph{The top eigenvector aligns with the spike direction.}
	Let \(\psi_1(\zeta)\) be a unit eigenvector corresponding to
	\(\mu_1^{(q)}(\zeta)\). Applying \(I-P\) to the eigenvalue equation
	\[
		\widetilde G^{(q)}(\zeta)\psi_1(\zeta)
		=
		\mu_1^{(q)}(\zeta)\psi_1(\zeta)
	\]
	and using \((I-P)P=0\), we obtain
	\[
		(I-P)\widetilde C^{(q)}(\zeta)\psi_1(\zeta)
		=
		\mu_1^{(q)}(\zeta)(I-P)\psi_1(\zeta).
	\]
	Hence
	\begin{equation}\label{eq:psi-align-tight}
		\|(I-P)\psi_1(\zeta)\|
		\le
		\frac{\|\widetilde C^{(q)}(\zeta)\|}{\mu_1^{(q)}(\zeta)}
		\le
		\frac{K_q}{\mu_1^{(q)}(\zeta)}
		=
		O\bigl(L(\zeta)^{-1}\bigr).
	\end{equation}
	If $a(\zeta):=\langle \psi_1(\zeta),\widehat{\bm d}\rangle$, then
	\[
		|a(\zeta)|^2
		=
		1-\|(I-P)\psi_1(\zeta)\|^2
		=
		1+O\bigl(L(\zeta)^{-2}\bigr).
	\]
	After fixing the phase so that \(a(\zeta)\ge0\), we obtain
	\eqref{eq:psi-align}.

	\smallskip
	\noindent\emph{Asymptotics of the diverging eigenvalue.}
	Since \(\psi_1(\zeta)\) is normalized,
	\[
		\mu_1^{(q)}(\zeta)
		=
		\langle \psi_1(\zeta),\widetilde G^{(q)}(\zeta)\psi_1(\zeta)\rangle
		=
		\Gamma^{(q)}\,L(\zeta)\,|a(\zeta)|^2
		+
		\langle \psi_1(\zeta),
		\widetilde C^{(q)}(\zeta)\psi_1(\zeta)\rangle.
	\]
	By \eqref{eq:psi-align-tight}, $
		\Gamma^{(q)}\,L(\zeta)\,|a(\zeta)|^2
		=
		\Gamma^{(q)}\,L(\zeta)+o(1)$,
	while \eqref{eq:psi-align} and the operator-norm convergence of
	\(\widetilde C^{(q)}(\zeta)\) imply
	\[
		\langle \psi_1(\zeta),
		\widetilde C^{(q)}(\zeta)\psi_1(\zeta)\rangle
		\to
		\langle \widehat{\bm d},\widetilde C_*^{(q)}\widehat{\bm d}\rangle
		=
		\frac{\langle \widetilde{\bm d},
			\widetilde C_*^{(q)}\widetilde{\bm d}\rangle}
		{\|\widetilde{\bm d}\|_{\ell^2}^2}.
	\]
	This is exactly \eqref{eq:mu1-asymp}.
\end{proof}

\FloatBarrier

The remaining spectral behaviour is governed by the compression of the limiting
remainder to the orthogonal complement of the spike direction.

\begin{proposition}[Convergence of the soft spectrum]
	\label{prop:soft-spectrum-convergence}
	Fix \(q\in\{1,\dots,s\}\) and \(\beta>0\). Define
	\[
		\widehat{\bm d}^{(q)}
		:=
		\frac{\widetilde{\bm d}^{(q)}}{\|\widetilde{\bm d}^{(q)}\|_{\ell^2}},
		\qquad
		P:=\widehat{\bm d}^{(q)}\otimes \widehat{\bm d}^{(q)*},
		\qquad
		Q:=I-P,
	\]
	and let
	\[
		\widetilde C_{*,\perp}^{(q)}
		:=
		Q\,\widetilde C_*^{(q)}\,Q\big|_{(\widetilde{\bm d}^{(q)})^\perp}.
	\]
	Let
	\[
		\mu_{2,*}^{(q)}\ge \mu_{3,*}^{(q)}\ge\cdots\ge0
	\]
	be the eigenvalues of \(\widetilde C_{*,\perp}^{(q)}\), listed in
	nonincreasing order and counted with multiplicity. Then, for each fixed
	\(n\ge2\),
	\begin{equation}\label{eq:soft-limit}
		\mu_n^{(q)}(\zeta)\longrightarrow \mu_{n,*}^{(q)}
		\qquad\text{as }\zeta\uparrow\zeta_c.
	\end{equation}
\end{proposition}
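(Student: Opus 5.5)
The plan is to compare \(\widetilde G^{(q)}(\zeta)\) with its compression to \((\widetilde{\bm d}^{(q)})^\perp\), using \eqref{eq:G-rankone-tight}. We get an upper bound from Cauchy interlacing and a matching lower bound from a Schur-complement estimate in which the large diagonal entry \(\Gamma^{(q)}L(\zeta)\) suppresses the off-diagonal coupling. We then pass to \(\zeta\uparrow\zeta_c\) using operator-norm convergence of the remainder.

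\emph{Block form.} Write \(\ell^2(\NN_0)=\Span\{\widehat{\bm d}^{(q)}\}\oplus(\widetilde{\bm d}^{(q)})^\perp\). By \eqref{eq:G-rankone-tight},
\[
\widetilde G^{(q)}(\zeta)=\begin{pmatrix}\Gamma^{(q)}L(\zeta)+a(\zeta) & b(\zeta)^*\\ b(\zeta) & D(\zeta)\end{pmatrix},
\]
where
\[
a(\zeta)=\langle\widehat{\bm d}^{(q)},\widetilde C^{(q)}(\zeta)\widehat{\bm d}^{(q)}\rangle,\qquad
b(\zeta)=Q\widetilde C^{(q)}(\zeta)\widehat{\bm d}^{(q)},\qquad
D(\zeta)=Q\widetilde C^{(q)}(\zeta)Q\big|_{(\widetilde{\bm d}^{(q)})^\perp}.
\]
By Theorem~\ref{thm:rankone-decomp}, \(|a|\), \(\|b\|\) and \(\|D\|\) are all bounded by \(K_q\). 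Moreover \(D(\zeta)\to\widetilde C_{*,\perp}^{(q)}\) in operator norm, since \(\|D(\zeta)-\widetilde C^{(q)}_{*,\perp}\|\le\|\widetilde C^{(q)}(\zeta)-\widetilde C^{(q)}_*\|\). Let \(\lambda_2(\zeta)\ge\lambda_3(\zeta)\ge\cdots\) denote the eigenvalues of the compact positive operator \(D(\zeta)\), indexed to match the labelling \(\mu_{n,*}^{(q)}\). By the min--max characterization, Weyl's inequality \(|\lambda_n(\zeta)-\mu_{n,*}^{(q)}|\le\|D(\zeta)-\widetilde C^{(q)}_{*,\perp}\|\) gives \(\lambda_n(\zeta)\to\mu^{(q)}_{n,*}\). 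It therefore suffices to show \(|\mu_n^{(q)}(\zeta)-\lambda_n(\zeta)|\to0\) for each fixed \(n\ge2\).

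\emph{Upper bound.} \(D(\zeta)\) is the compression of \(\widetilde G^{(q)}(\zeta)\) to a codimension-one subspace. Cauchy interlacing, obtained from min--max by restricting every trial subspace to be orthogonal to \(\widetilde{\bm d}^{(q)}\) in the max--min form, gives \(\mu_n^{(q)}(\zeta)\le\lambda_n(\zeta)\) in our indexing. No error term is needed here.

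\emph{Lower bound (main step).} Let \(E\) be the span of eigenvectors of \(D(\zeta)\) for \(\lambda_2,\dots,\lambda_n\). If some of these eigenvalues are zero, we use the spectral projection instead, and the claim is trivial for those \(n\). Take the \(n\)-dimensional trial space \(F=\Span\{\widehat{\bm d}^{(q)}\}\oplus E\). Write \(x=\alpha\widehat{\bm d}^{(q)}+y\) with \(y\in E\), and set \(\lambda:=\lambda_n(\zeta)\). Then
\[
\langle x,\widetilde G x\rangle\ge(\Gamma L+a)|\alpha|^2-2|\alpha|\,\|b\|\,\|y\|+\lambda\|y\|^2 .
\]
Minimizing over \(\alpha\) (equivalently, a Schur complement) shows this is at least \(\bigl(\lambda-\|b\|^2/(\Gamma L+a-\lambda)\bigr)\|x\|^2\) once \(\Gamma L+a>\lambda\). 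Min--max then yields
\[
\mu_n^{(q)}(\zeta)\ge\lambda_n(\zeta)-\frac{K_q^2}{\Gamma^{(q)}L(\zeta)-2K_q}=\lambda_n(\zeta)-O\bigl(L(\zeta)^{-1}\bigr).
\]
Combined with the upper bound, this gives \(\mu_n^{(q)}(\zeta)-\lambda_n(\zeta)\to0\), hence \eqref{eq:soft-limit}.

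The delicate point is this lower bound. The trial space must contain \(\widehat{\bm d}^{(q)}\), so it picks up the diverging direction and the coupling \(b(\zeta)\). We must check that the quadratic-form inequality really controls the \(n\)-th eigenvalue, not just the top one: on \(F\), the minimum of the Rayleigh quotient is at least the stated value, and \(\dim F=n\), so \(\mu_n\) is bounded below by this minimum. This rests only on the uniform bound \(K_q\) and on \(L(\zeta)\to\infty\).
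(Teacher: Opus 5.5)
Your argument is correct, but it takes a different route from the paper.

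\textbf{The paper's route.} The paper compresses with the moving spectral projection $Q_\zeta=I-\psi_1(\zeta)\otimes\psi_1(\zeta)^*$. It imports two facts from Theorem~\ref{thm:spectral-asymptotics}: simplicity of $\mu_1^{(q)}$ and the alignment bound $\|(I-P)\psi_1(\zeta)\|=O(L(\zeta)^{-1})$. From these it observes that the spike survives the compression only as $L(\zeta)\|(I-P)\psi_1(\zeta)\|^2=O(L(\zeta)^{-1})$. This gives the norm convergence $Q_\zeta\widetilde G^{(q)}(\zeta)Q_\zeta\to Q\widetilde C_*^{(q)}Q$, and min--max finishes the proof.

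\textbf{Your route.} You compress instead to the fixed hyperplane $(\widetilde{\bm d}^{(q)})^\perp$, where norm convergence of $D(\zeta)$ is immediate. The cost is the two-sided comparison between $\widetilde G^{(q)}(\zeta)$ and $D(\zeta)$: interlacing from above, a Schur-complement trial space from below. This uses only \eqref{eq:G-rankone-tight}, the uniform bound $K_q$, and $L(\zeta)\to\infty$. It never touches the top eigenvector or its simplicity. It also yields an explicit sandwich, uniform in $n$:
$0\le\lambda_n(\zeta)-\mu_n^{(q)}(\zeta)\le K_q^2/(\Gamma^{(q)}L(\zeta)-2K_q)$.

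\textbf{What the paper's route buys instead.} It gives an operator-level statement. Since $Q_\zeta$ commutes with $\widetilde G^{(q)}(\zeta)$, the soft part of $\widetilde G^{(q)}(\zeta)$ itself converges in norm. That would also give convergence of soft eigenprojections at simple limit eigenvalues, which your eigenvalue sandwich does not deliver by itself.

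\textbf{A wording point on the upper bound.} The inequality $\mu_n^{(q)}\le\lambda_n$ comes from the min--max form with $x\perp\widetilde{\bm d}^{(q)}$ imposed as one of the $n-1$ constraints. Equivalently, intersect each $n$-dimensional trial space with $(\widetilde{\bm d}^{(q)})^\perp$; the intersection has dimension at least $n-1$. Literally restricting the trial spaces of the max--min form to lie inside $(\widetilde{\bm d}^{(q)})^\perp$ gives the other interlacing inequality, $\lambda_{n+1}\le\mu_n^{(q)}$.
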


Proposition~\ref{prop:soft-spectrum-convergence} identifies the limiting
bounded spectrum that remains after the stiff direction is removed. This bounded part is most naturally described at the level of the compressed operator,
rather than through the full diverging spectrum. Figure~\ref{fig:soft-spectrum} illustrates this
soft regime for \(s=3\) and \(s=5\): the upper plots show the convergence of
the first soft eigenvalues as \(\zeta\uparrow\zeta_c\), while the lower plots
display near-critical finite-\(\zeta\) soft spectral profiles across the
symmetry sectors.
Together with Figure~\ref{fig:eigenvalue-trajectories}, this makes the
stiff/soft decomposition visually explicit.

\begin{figure}[t]
	\centering
	\includegraphics[width=\textwidth]{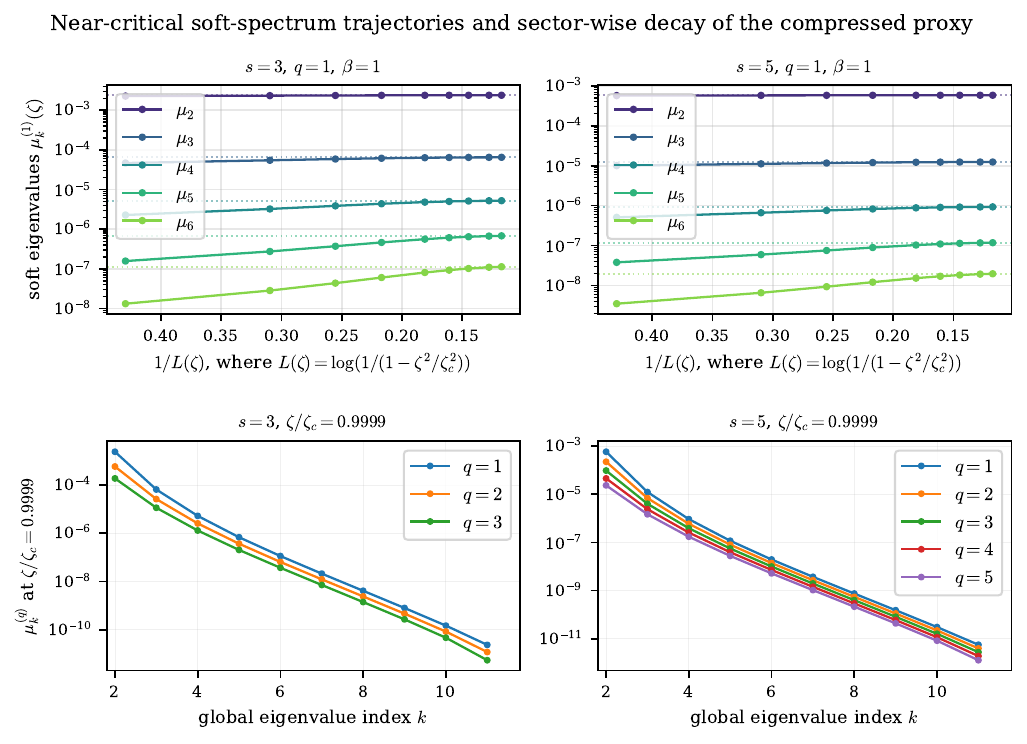}
\caption{
	Soft spectrum after removal of the stiff direction for \(s=3,5\)
	(\(\beta=1\), \(N=40\)).
	Top row: the soft eigenvalues
	\(\mu_2^{(q)}(\zeta),\dots,\mu_6^{(q)}(\zeta)\) in sector \(q=1\),
	plotted against \(1/L(\zeta)\). Their flattening as \(1/L(\zeta)\to0\)
	shows that the soft eigenvalues remain bounded as \(\zeta\uparrow\zeta_c\).
	Bottom row: finite-\(\zeta\) snapshots of the soft eigenvalues for all
	sectors \(q=1,\dots,s\) at \(\zeta/\zeta_c=0.9999\).
}
	\label{fig:soft-spectrum}
\end{figure}

\begin{proof}
	Write $
		\widetilde{\bm d}:=\widetilde{\bm d}^{(q)}$,
		$\widehat{\bm d}:=\widetilde{\bm d}/\|\widetilde{\bm d}\|_{\ell^2}$,
		$
		P:=\widehat{\bm d}\otimes \widehat{\bm d}^{*}$,
		and $Q:=I-P$.
	Let \(\psi_1(\zeta)\) be a normalized eigenvector corresponding to the
	simple eigenvalue \(\mu_1^{(q)}(\zeta)\), and define
	\[
		P_\zeta:=\psi_1(\zeta)\otimes\psi_1(\zeta)^*,
		\qquad
		Q_\zeta:=I-P_\zeta.
	\]
	By \eqref{eq:psi-align}, $
		\|P_\zeta-P\|
		\le
		2\|\psi_1(\zeta)-\widehat{\bm d}\|_{\ell^2}\to0$,
	and therefore
	\begin{equation}\label{eq:proj-conv-soft}
		\|P_\zeta-P\|\to0,
		\qquad
		\|Q_\zeta-Q\|\to0
		\qquad\text{as }\zeta\uparrow\zeta_c.
	\end{equation}

	Since \(P_\zeta\) is the spectral projection of the simple top eigenvalue,
	the restriction of \(\widetilde G^{(q)}(\zeta)\) to \(\Ran Q_\zeta\) has
	eigenvalues precisely $\mu_2^{(q)}(\zeta)\ge \mu_3^{(q)}(\zeta)\ge\cdots$.	It is therefore enough to prove norm convergence of the compressed
	operators:
	\begin{equation}\label{eq:soft-compression-conv}
		\bigl\|
		Q_\zeta\widetilde G^{(q)}(\zeta)Q_\zeta
		-
		Q\widetilde C_*^{(q)}Q
		\bigr\|
		\longrightarrow0.
	\end{equation}

	Using \eqref{eq:G-rankone-tight}, we have
	\[
		Q_\zeta\widetilde G^{(q)}(\zeta)Q_\zeta
		=
		\Gamma^{(q)}\,L(\zeta)\,Q_\zeta P Q_\zeta
		+
		Q_\zeta\widetilde C^{(q)}(\zeta)Q_\zeta.
	\]
	Because \(P=\widehat{\bm d}\otimes\widehat{\bm d}^*\) is rank one and
	\(Q_\zeta=I-P_\zeta\), one has the exact identity
	\[
		Q_\zeta P Q_\zeta
		=
		(Q_\zeta\widehat{\bm d})\otimes(Q_\zeta\widehat{\bm d})^*,
	\]
	hence $
		\|Q_\zeta P Q_\zeta\|
		=
		\|Q_\zeta\widehat{\bm d}\|^2
		=
		1-|\langle\psi_1(\zeta),\widehat{\bm d}\rangle|^2
		=
		\|(I-P)\psi_1(\zeta)\|^2$.
	Therefore \eqref{eq:psi-align-tight} implies
	\begin{equation}\label{eq:spike-vanishes-soft}
		L(\zeta)\,\|Q_\zeta P Q_\zeta\|
		=
		L(\zeta)\,\|(I-P)\psi_1(\zeta)\|^2
		=
		O\bigl(L(\zeta)^{-1}\bigr)
		\longrightarrow0.
	\end{equation}

	Next, $
		\|Q_\zeta \widetilde C^{(q)}(\zeta)Q_\zeta
		-
		Q \widetilde C^{(q)}(\zeta)Q\|
		\le
		2\|\widetilde C^{(q)}(\zeta)\|\,\|Q_\zeta-Q\|
		\to0
	$
	by \eqref{eq:proj-conv-soft}, while
	\[
		\|Q \widetilde C^{(q)}(\zeta)Q-Q\widetilde C_*^{(q)}Q\|
		\le
		\|\widetilde C^{(q)}(\zeta)-\widetilde C_*^{(q)}\|
		\to0
	\]
	by Theorem~\ref{thm:rankone-decomp}. Together with
	\eqref{eq:spike-vanishes-soft}, this yields
	\eqref{eq:soft-compression-conv}.

	Both \(Q_\zeta\widetilde G^{(q)}(\zeta)Q_\zeta\) and
	\(\widetilde C_{*,\perp}^{(q)}\) are compact self-adjoint operators.
	Hence \eqref{eq:soft-compression-conv} and the min--max principle imply
	convergence of every fixed ordered eigenvalue, which is exactly
	\eqref{eq:soft-limit}.
\end{proof}

\begin{remark}[Soft modes]
	\label{rem:soft-modes}
	Proposition~\ref{prop:soft-spectrum-convergence} shows that, after the
	unique stiff direction $\widetilde{\bm d}^{(q)}$ is removed, the remaining
	spectrum has a finite limit as $\zeta\uparrow\zeta_c$. Thus the bounded
	eigenvalues
	\[
		\mu_2^{(q)}(\zeta),\ \mu_3^{(q)}(\zeta),\dots
	\]
	are governed at leading order by the compressed limit operator
	$\widetilde C_{*,\perp}^{(q)}$. In particular, the logarithmic divergence
	of Theorem~\ref{thm:spectral-asymptotics} is entirely carried by the
	rank-one spike.
\end{remark}

Theorem~\ref{thm:rankone-decomp}, Theorem~\ref{thm:spectral-asymptotics}, and
Proposition~\ref{prop:soft-spectrum-convergence} together yield the content of
Theorem~\ref{thm:A} stated in the Introduction.

\FloatBarrier


\section{Scalar Gram continuation beyond the analytic threshold}
\label{sec:regime-2}


Beyond the analytic threshold, three layers of structure must be kept
separate. The intrinsic Hessian coefficients \(H_{mn}\) and the corresponding
unweighted block coefficients remain attached to the conformal map itself and
therefore make sense as long as the geometric map exists. By contrast, the
weighted block operators \(\widetilde G^{(q)}(\zeta)\) from
Section~\ref{sec:regime-I} are a fixed-space Hilbert realization of the
subcritical theory, introduced to isolate the logarithmically singular rank-one
direction from the bounded soft remainder. The scalar quantities
\(\mathcal G_p\), \(\sigma_p\), and \(\rho_p\) reflect the same branch-point
mechanism at coefficient level and are the objects that continue naturally
beyond the weighted operator regime.

Accordingly, the weighted compact-operator framework is not continued past
\(\zeta_c\). As \(\zeta\uparrow\zeta_c\), the Gram weights
\(\sigma_p(\zeta)\) already acquire the borderline logarithmic divergence
isolated in Theorem~\ref{thm:rankone-decomp}. What persists beyond the radius
of convergence is the scalar quantities built from the squared Raney coefficients.
The purpose of this section is to continue those scalar quantities across the slit
plane \(\CC\setminus[\zeta_c^2,\infty)\), identify the branch-point
singularity at \(u=\zeta_c^2\), and recover from it the subcritical
logarithmic divergence.

The section proceeds in three steps. We first identify \(\mathcal G_p(u)\) as a
generalized hypergeometric function and obtain its canonical continuation;
see Proposition~\ref{prop:Gp-hypergeom} and
Corollary~\ref{cor:parametric-excess}. We then analyze the singular point
\(u=\zeta_c^2\) and prove the resonant expansion from
Theorem~\ref{thm:resonant-expansion}. Applying the Euler operator that
reconstructs the Gram weights removes the quadratic prefactor and recovers
the logarithmic divergence from Section~\ref{sec:regime-I}; see
Corollary~\ref{cor:divergence-mechanism}. Finally, the branch-cut jump of
\(\mathcal G_p\) yields a Cauchy--Stieltjes representation, and in the range
\(1\le p\le s\) this representation is positive and hence admits a
Jacobi-matrix realization; see Proposition~\ref{prop:stieltjes},
Proposition~\ref{prop:mu-positive}, and Theorem~\ref{thm:weyl}.

Throughout this section we use the continuation variable \(u=\zeta^2\). In
the Stieltjes/Jacobi interpretation we also use the reciprocal spectral
variable \(t=u^{-1}\). Accordingly, we reserve \(\rho_p(u)\) for the jump
density of the continued Gram weights and \(\varrho_p(t)\) for the
density of the representing measure of \(\mathcal G_p\). For a function
\(f\) defined on the slit plane, we write
\[
	\Disc f(u):=f(u+i0)-f(u-i0),
	\qquad
	u\in(\zeta_c^2,\infty),
\]
for the discontinuity across the branch cut.

\subsection{From Gram weights to scalar generating functions}
\label{subsec:gram-to-gf}

We now isolate the scalar object that carries the singularity. Recall from
Definition~\ref{def:gram-weights} that
\begin{equation}\label{eq:sigma-p-recall}
	\sigma_p(\zeta)
	=
	\|\bm v^{(p)}\|^2
	=
	\sum_{m=0}^{\infty}
	\frac{(p+ms)^2}{p}\,R_{s,p}(m)^2\,\zeta^{2m}.
\end{equation}
For \(\zeta<\zeta_c\) this series converges, while for
\(\zeta\ge\zeta_c\) it diverges. The divergent behavior is entirely encoded
in the squared Raney coefficients, and it is therefore natural to strip away
the explicit polynomial factor \((p+ms)^2/p\).

\begin{definition}[Squared Raney generating function]
	\label{def:Gp}
	For \(p\ge1\), define
	\begin{equation}\label{eq:Gp-def}
		\mathcal G_p(u)
		:=
		\sum_{m=0}^{\infty}R_{s,p}(m)^2\,u^m,
		\qquad
		u\in\CC,
	\end{equation}
	initially in the disk \(|u|<\zeta_c^2\).
\end{definition}

Definition~\ref{def:Gp} removes the explicit polynomial prefactor and leaves
a scalar series whose analytic continuation can be studied directly. The
original Gram weights are recovered from \(\mathcal G_p\) by a fixed
second-order Euler operator.

\begin{lemma}[Euler operator identity]
	\label{lem:euler-identity}
	For \(|u|<\zeta_c^2\),
	\begin{equation}\label{eq:euler-identity}
		\sigma_p(\zeta)
		=
		\frac1p\Bigl(p+s\,u\frac{d}{du}\Bigr)^{\!2}\mathcal G_p(u)
		\Big|_{u=\zeta^2}.
	\end{equation}
\end{lemma}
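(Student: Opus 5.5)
The identity is a termwise computation on the power series defining \(\mathcal G_p\). The plan is to apply the operator \(p+s\,u\frac{d}{du}\) twice to \(\mathcal G_p\), using that it acts diagonally on monomials. Since \(u\frac{d}{du}u^m=m\,u^m\), we have
\[
	\Bigl(p+s\,u\frac{d}{du}\Bigr)u^m=(p+sm)\,u^m,
	\qquad
	\Bigl(p+s\,u\frac{d}{du}\Bigr)^{\!2}u^m=(p+sm)^2\,u^m .
\]
Applying this to \eqref{eq:Gp-def} termwise and multiplying by \(1/p\) gives
\[
	\frac1p\Bigl(p+s\,u\frac{d}{du}\Bigr)^{\!2}\mathcal G_p(u)
	=
	\sum_{m\ge0}\frac{(p+ms)^2}{p}\,R_{s,p}(m)^2\,u^m .
\]
Setting \(u=\zeta^2\) turns \(u^m\) into \(\zeta^{2m}\), and the right-hand side is exactly the series \eqref{eq:sigma-p-recall} for \(\sigma_p(\zeta)\).

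The only analytic point is justifying the termwise differentiation. By Remark~\ref{rem:coeff-asymp}, \(R_{s,p}(m)^2\) grows like \(m^{-3}\zeta_c^{-2m}\), so the series for \(\mathcal G_p\) has radius of convergence exactly \(\zeta_c^2\). A power series may be differentiated termwise inside its disk of convergence, and the radius is unchanged. Hence both sides are holomorphic on \(|u|<\zeta_c^2\) and the termwise computation is legitimate there.

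For complex \(\zeta\) with \(|\zeta|<\zeta_c\), the identity holds as an identity of power series in \(\zeta^2\). For real \(0<\zeta<\zeta_c\), which is where \(\sigma_p\) was originally defined, it is the literal statement. No step here is a genuine obstacle; the only care needed is the radius-of-convergence remark.
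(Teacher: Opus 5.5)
Your proposal is correct and follows the same route as the paper: the operator \(p+s\,u\frac{d}{du}\) acts on \(u^m\) by multiplication by \(p+sm\). Applying it termwise to the series \eqref{eq:Gp-def} inside its disk of convergence \(|u|<\zeta_c^2\) and then setting \(u=\zeta^2\) reproduces \eqref{eq:sigma-p-recall} exactly.
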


\begin{proof}
	For each monomial \(u^m\) one has
	\[
		\Bigl(p+s\,u\frac{d}{du}\Bigr)u^m=(p+sm)u^m,
	\]
	hence
	\[
		\frac1p\Bigl(p+s\,u\frac{d}{du}\Bigr)^{\!2}u^m
		=
		\frac{(p+sm)^2}{p}\,u^m.
	\]
	Applying this termwise to the absolutely convergent series
	\eqref{eq:Gp-def} in the disk \(|u|<\zeta_c^2\), and then setting
	\(u=\zeta^2\), gives \eqref{eq:euler-identity}.
\end{proof}

Formula~\eqref{eq:euler-identity} is the basic bridge between the
subcritical operator theory and the continuation theory developed below. The
operator picture detects the singularity through the logarithmic growth of
\(\sigma_p(\zeta)\), while the scalar continuation problem is governed by the
analytic structure of \(\mathcal G_p(u)\) at the branch point
\(u=\zeta_c^2\).


\subsection{Hypergeometric structure and analytic continuation}
\label{subsec:Gp-analytic}

We next identify the analytic representation for \(\mathcal G_p\). First, its radius of convergence is exactly the analytic threshold
\(u=\zeta_c^2\), already encoded in the large-\(m\) asymptotics of the Raney
coefficients.

\begin{proposition}[Radius of convergence]
	\label{prop:Gp-radius}
	The series \eqref{eq:Gp-def} has radius of convergence \(\zeta_c^2\). More
	precisely, as \(m\to\infty\),
	\begin{equation}\label{eq:Rsquared-asymp}
		R_{s,p}(m)^2
		=
		\frac{C_{s,p}}{m^3}\,\zeta_c^{-2m}\bigl(1+o(1)\bigr),
	\end{equation}
	where \(C_{s,p}>0\) depends only on \((s,p)\).
\end{proposition}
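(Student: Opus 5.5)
The plan is to square the single-coefficient asymptotics already recorded in Remark~\ref{rem:coeff-asymp}, derived in Appendix~\ref{app:raney-asymp} and invoked as Proposition~\ref{prop:raney-asymp} in the proof of Proposition~\ref{prop:gram-divergence}. These give
\[
	R_{s,p}(m)=A_{s,p}\,m^{-3/2}\zeta_c^{-m}\bigl(1+O(m^{-1})\bigr),
	\qquad A_{s,p}>0 .
\]
Squaring yields \eqref{eq:Rsquared-asymp} with $C_{s,p}:=A_{s,p}^2>0$, and even with an $O(m^{-1})$ error in place of $o(1)$. Positivity of $A_{s,p}$ holds because every $R_{s,p}(m)$ is a positive integer by \eqref{eq:raney-closed}.

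To make the argument self-contained rather than relying on the transfer theorem, I would verify the amplitude directly from the closed form \eqref{eq:raney-closed}. Write $R_{s,p}(m)=\frac{p}{sm+p}\binom{sm+p}{m}$ and apply Stirling's formula with $N=sm+p$ and $K=(s-1)m+p$, for fixed $p$ and $m\to\infty$. The exponential part $N\log N-m\log m-K\log K$ expands as
\[
	m\log\zeta_c^{-1}+p\log\frac{s}{s-1}+O(m^{-1}),
\]
and the Stirling square-root prefactor contributes $\sqrt{s/(2\pi(s-1)m)}$. Combined with $p/(sm)$, this gives $R_{s,p}(m)\sim \frac{p}{s}\sqrt{\frac{s}{2\pi(s-1)}}\,M^{p}\,m^{-3/2}\zeta_c^{-m}$. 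Hence
\[
	C_{s,p}=\frac{p^2M^{2p}}{2\pi s(s-1)} .
\]
This amplitude is consistent with the uniform bound of Proposition~\ref{prop:raney-uniform}.

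The radius of convergence then follows from Cauchy--Hadamard. By \eqref{eq:Rsquared-asymp},
\[
	\bigl(R_{s,p}(m)^2\bigr)^{1/m}=\zeta_c^{-2}\,\bigl(C_{s,p}m^{-3}(1+o(1))\bigr)^{1/m}\to\zeta_c^{-2},
\]
so the radius is exactly $\zeta_c^2$. An alternative route is the ratio test: the ratio $R_{s,p}(m+1)^2/R_{s,p}(m)^2$ is an explicit rational function of $m$, and it tends to $\zeta_c^{-2}$. The same ratio will be reused later for the hypergeometric identification.

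The only mildly delicate step is the Stirling expansion of the exponent. One must track the $p$-dependent linear terms, which produce $M^p$, and check that the $1/\sqrt{K}$ versus $1/\sqrt{N}$ factors combine correctly. For fixed $p$ this is routine, so no real obstacle is expected.
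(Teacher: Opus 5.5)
Your proposal is correct and follows the paper's route: the paper also obtains the statement by squaring the Stirling asymptotics of Proposition~\ref{prop:raney-asymp}, giving $C_{s,p}=A_{s,p}^2=\frac{p^2}{2\pi}\,\frac{s^{2p-1}}{(s-1)^{2p+1}}$ (which agrees with your $\frac{p^2M^{2p}}{2\pi s(s-1)}$), and then reads off the radius $\zeta_c^2$. One small quibble: positivity of the integers $R_{s,p}(m)$ by itself only gives $A_{s,p}\ge0$, so strict positivity comes from the explicit Stirling amplitude you compute.
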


\begin{proof}
	By Proposition~\ref{prop:raney-asymp}, equivalently by the squared
	asymptotic formula \eqref{eq:app-Raney-square-asymp} from
	Appendix~\ref{app:raney-asymp}, one has
	\[
R_{s,p}(m)^2\sim C_{s,p}\,m^{-3}\,\zeta_c^{-2m}.
	\]
	This is exactly \eqref{eq:Rsquared-asymp}, and the radius of convergence is
	therefore \(\zeta_c^2\).
\end{proof}

To continue \(\mathcal G_p\) beyond \(|u|<\zeta_c^2\), it is enough to
exploit that the coefficient ratio
\(R_{s,p}(m+1)^2/R_{s,p}(m)^2\) is rational in \(m\). This places
\(\mathcal G_p\) in the generalized hypergeometric class.

\begin{proposition}[Hypergeometric representation]
	\label{prop:Gp-hypergeom}
	For \(|u|<\zeta_c^2\), one has
	\begin{equation}\label{eq:Gp-hypergeom}
		\mathcal G_p(u)
		=
		{}_{2s}F_{2s-1}\!\left(
		\begin{matrix}
			\alpha_1,\ldots,\alpha_{2s} \\
			\beta_1,\ldots,\beta_{2s-1}
		\end{matrix}
		\Big|\,\zeta_c^{-2}u
		\right),
	\end{equation}
	where the parameter multisets are
	\begin{equation}\label{eq:alpha-beta-params}
		\{\alpha_i\}
		=
		2\times\left\{\frac{p+k}{s}:k=0,\ldots,s-1\right\},
		\quad
		\{\beta_j\}
		=
		\{1\}\ \cup\ 2\times\left\{\frac{p+l}{s-1}:l=1,\ldots,s-1\right\}.
	\end{equation}
\end{proposition}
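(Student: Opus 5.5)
The plan is to compute the ratio of consecutive coefficients of \(\mathcal G_p\) explicitly from the closed form in Proposition~\ref{prop:raney-lagrange}, and then match it against the defining ratio of a generalized hypergeometric series. Recall that
\[
	{}_{A}F_{B}\!\left(\begin{smallmatrix}\alpha\\ \beta\end{smallmatrix}\Big|\,z\right)=\sum_{m\ge0}c_m z^m
\]
is characterized by \(c_0=1\) and
\[
	\frac{c_{m+1}}{c_m}=\frac{\prod_i(m+\alpha_i)}{(m+1)\prod_j(m+\beta_j)}.
\]
Since \(R_{s,p}(0)=1\), and since for \(m\ge0\) the ratio of the Raney coefficients will turn out to be a nonvanishing rational function of \(m\) (all Pochhammer parameters are positive), a series with the same constant term and the same term ratio coincides termwise with \(\mathcal G_p\). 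The equality then holds on \(|u|<\zeta_c^2\), where both series converge by Proposition~\ref{prop:Gp-radius}.

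First I rewrite the closed form in Gamma functions:
\[
	R_{s,p}(m)=\frac{p}{sm+p}\binom{sm+p}{m}=\frac{p\,\Gamma(sm+p)}{\Gamma(m+1)\,\Gamma((s-1)m+p+1)}.
\]
Passing from \(m\) to \(m+1\) shifts \(sm+p\) by \(s\) and \((s-1)m+p+1\) by \(s-1\). This gives
\[
	\frac{R_{s,p}(m+1)}{R_{s,p}(m)}
	=\frac{\prod_{k=0}^{s-1}(sm+p+k)}{(m+1)\prod_{l=1}^{s-1}\bigl((s-1)m+p+l\bigr)}
	=\frac{s^s}{(s-1)^{s-1}}\cdot\frac{\prod_{k=0}^{s-1}\bigl(m+\tfrac{p+k}{s}\bigr)}{(m+1)\prod_{l=1}^{s-1}\bigl(m+\tfrac{p+l}{s-1}\bigr)}.
\]
The prefactor here is exactly \(\zeta_c^{-1}\).

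Squaring this ratio produces \(\zeta_c^{-2}\) times a ratio with each numerator parameter \((p+k)/s\) doubled and each denominator parameter \((p+l)/(s-1)\) doubled. The denominator also contains \((m+1)^2\). One factor of \((m+1)\) is the factorial built into the hypergeometric normalization, and the other is recorded as the lower parameter \(\beta=1\). This yields \(2s\) upper parameters and \(1+2(s-1)=2s-1\) lower parameters, with argument \(\zeta_c^{-2}u\), which is exactly \eqref{eq:Gp-hypergeom}--\eqref{eq:alpha-beta-params}.

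The only step needing care is the bookkeeping of the shifted Gamma ratios, namely the ranges of \(k\) and \(l\) and the extraction of the constants \(s^s\) and \((s-1)^{s-1}\). I will check it against small cases: for \(s=2\), \(p=1\) the formula should reproduce the Catalan ratio \(2(2m+1)/(m+2)\). The remaining steps are routine.
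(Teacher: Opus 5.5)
Your proposal is correct and follows essentially the same route as the paper's proof in Appendix~\ref{app:hyp-2}. Both write \(R_{s,p}(m)\) in Gamma form, compute the consecutive-coefficient ratio, and extract the constant \(s^s/(s-1)^{s-1}=\zeta_c^{-1}\). Both then square the ratio and match it with the hypergeometric term ratio, where one factor of \((m+1)\) is the factorial normalization and the other becomes the lower parameter \(1\), and use \(R_{s,p}(0)^2=1\) to conclude termwise equality.
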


\begin{proof}
	See Appendix~\ref{app:hyp-2}.
\end{proof}

\begin{corollary}[Parametric excess]
	\label{cor:parametric-excess}
	For the hypergeometric parameters
	\eqref{eq:Gp-hypergeom}--\eqref{eq:alpha-beta-params},
	\[
		\gamma:=\sum_j\beta_j-\sum_i\alpha_i=2
	\]
	for every \(s\ge2\) and \(p\ge1\).
\end{corollary}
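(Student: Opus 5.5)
The plan is a direct computation from the parameter multisets \eqref{eq:alpha-beta-params}, with no analysis required. Since each multiset carries multiplicity two, I will first sum the two underlying families once each and then double. The upper family \(\{(p+k)/s : k=0,\dots,s-1\}\) has sum
\[
	\frac{1}{s}\sum_{k=0}^{s-1}(p+k)=\frac{sp+s(s-1)/2}{s}=p+\frac{s-1}{2}.
\]
Taking it twice gives \(\sum_i\alpha_i=2p+s-1\).

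For the lower parameters, the family \(\{(p+l)/(s-1) : l=1,\dots,s-1\}\) has sum
\[
	\frac{1}{s-1}\Bigl((s-1)p+\frac{s(s-1)}{2}\Bigr)=p+\frac{s}{2}.
\]
Doubling this and adding the extra parameter \(1\) gives \(\sum_j\beta_j=1+2p+s\).

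Subtracting, \(\gamma=(2p+s+1)-(2p+s-1)=2\). The cancellation of the \(p\)-dependence is exactly what makes \(\gamma\) uniform in \(p\) and \(s\).

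The only point needing care is that the multisets are the correct ones, i.e.\ that they faithfully encode the ratio \(R_{s,p}(m+1)^2/R_{s,p}(m)^2\). This is what Proposition~\ref{prop:Gp-hypergeom} asserts, and I would take it as given.

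As a sanity check, I would recover the value \(\gamma=2\) independently from the asymptotics \eqref{eq:Rsquared-asymp}. For a \({}_{q+1}F_q\) series the coefficients behave like \(m^{-\gamma-1}\) times the geometric factor. Proposition~\ref{prop:Gp-radius} gives decay \(m^{-3}\), so \(\gamma+1=3\), which is consistent. This check also matches the quadratic prefactor \((1-u/\zeta_c^2)^2\log(\cdots)\) anticipated in Theorem~\ref{thm:B}, since the local exponent at the singular point equals \(\gamma\).
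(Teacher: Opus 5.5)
Your computation is correct and is exactly the paper's proof: sum each parameter family once, double it, and add the extra lower parameter \(1\), giving \(\sum_i\alpha_i=2p+s-1\) and \(\sum_j\beta_j=2p+s+1\), hence \(\gamma=2\). The cross-check against the \(m^{-\gamma-1}=m^{-3}\) coefficient decay is a valid consistency check, but the corollary does not need it.
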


\begin{proof}
	A direct computation gives
	\[
		\sum_i\alpha_i
		=
		2\sum_{k=0}^{s-1}\frac{p+k}{s}
		=
		2p+(s-1),
	\]
	while
	\[
		\sum_j\beta_j
		=
		1+2\sum_{l=1}^{s-1}\frac{p+l}{s-1}
		=
		1+2p+s.
	\]
	Hence \(\gamma=2\).
\end{proof}

The positivity of the parametric excess has two immediate consequences.
First, the branch point of \(\mathcal G_p\) occurs at
\(\xi:=\zeta_c^{-2}u=1\), that is, at \(u=\zeta_c^2\). Second, the
hypergeometric differential equation is of resonant type there. In
particular, \(\gamma=2\) forces a logarithmic term with prefactor
\((1-\xi)^2\), rather than a pure algebraic singularity.

\begin{lemma}[Reduced hypergeometric parameters after cancellation]
	\label{lem:reduced-hypergeom-data}
	Let \(c_p\) denote the number of cancelled common upper and lower
	parameters in \eqref{eq:Gp-hypergeom}, counted with multiplicity, and set
	\[
		q_p:=2s-1-c_p.
	\]
	After those cancellations, the germ of \(\mathcal G_p\) at \(u=0\) is
	represented by a reduced generalized hypergeometric function of type
	\({}_{q_p+1}F_{q_p}\) with the same parametric excess \(2\). In
	particular, the reduced equation is nontrivial and has finite singular
	points only at \(\xi=0\) and \(\xi=1\), where \(\xi=\zeta_c^{-2}u\).
\end{lemma}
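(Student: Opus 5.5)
The plan is to work directly with the coefficient ratio underlying Proposition~\ref{prop:Gp-hypergeom}. Writing \(c_m:=R_{s,p}(m)^2\), the closed form \eqref{eq:raney-closed} gives
\[
	\frac{c_{m+1}}{c_m}
	=
	\zeta_c^{-2}\,
	\frac{\prod_{i=1}^{2s}(m+\alpha_i)}{(m+1)\prod_{j=1}^{2s-1}(m+\beta_j)},
\]
with the multisets \eqref{eq:alpha-beta-params}. A common upper and lower parameter \(\alpha_i=\beta_j\) contributes the factor \((m+\alpha_i)/(m+\beta_j)=1\) for every \(m\). Deleting all such matched pairs therefore leaves the coefficient ratio unchanged. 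Since \(c_0=1\), the germ of \(\mathcal G_p\) at \(0\) equals \({}_{q_p+1}F_{q_p}\) in the remaining parameters.

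A few points need checking. First, the lower parameter \(1\) coming from \(m+1\) is never cancelled by convention; equivalently, we only cancel pairs among the remaining \(\beta_j\). This leaves \(2s-c_p\) upper and \(2s-1-c_p=q_p\) lower parameters. Second, no remaining lower parameter is a nonpositive integer: all \(\beta_j\) are positive, since \(p\ge1\). Hence the reduced series is a well-defined hypergeometric series. Third, each cancelled pair removes equal amounts from \(\sum\alpha_i\) and \(\sum\beta_j\), so the excess of Corollary~\ref{cor:parametric-excess} is preserved and equals \(2\). I would note that this excess comparison counts the lower parameter \(1\) on the lower side, as in that corollary.

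For nontriviality, I would argue that the reduced function is not a polynomial, hence the reduced equation is genuinely of order \(q_p+1\ge1\). This follows from Proposition~\ref{prop:Gp-radius}: the coefficients \(c_m\) are all positive and the radius of convergence is finite, namely \(\zeta_c^2\). It also means no remaining upper parameter is a nonpositive integer, which is in any case clear since all \(\alpha_i>0\). One should also rule out \(q_p=0\) degenerating badly. Even then, \({}_1F_0(\alpha;;\xi)=(1-\xi)^{-\alpha}\) is nontrivial, so the claim holds in all cases.

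Finally, the singular points come from the standard generalized hypergeometric operator
\[
	\theta\prod_{j}(\theta+\beta_j-1)-\xi\prod_i(\theta+\alpha_i),
	\qquad \theta=\xi\frac{d}{d\xi},
\]
of order \(q_p+1\). Written in \(d/d\xi\), its leading coefficient is \(\xi^{q_p}(1-\xi)\), up to sign, so the only finite singular points are \(\xi=0\) and \(\xi=1\), with \(\infty\) the remaining one.

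The main obstacle is bookkeeping rather than analysis. One has to be sure that cancellations cannot remove the parameter \(1\) or alter the count so that the order becomes inconsistent. In addition, the cancelled data must not change the meaning of ``same parametric excess''; in particular, the \(1\) must be counted among the \(\beta\)'s exactly as in Corollary~\ref{cor:parametric-excess}. I would handle this by fixing the convention explicitly that only pairs \((\alpha_i,\beta_j)\) with \(\beta_j\) taken from the \(2\times\{(p+l)/(s-1)\}\) family are cancelled.
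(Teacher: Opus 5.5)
Your treatment of the cancellation step matches the paper's, as does your treatment of the singular points: the ratio is unchanged, $c_0=1$, the lower parameters stay positive, the excess is preserved, and the operator has leading coefficient $\xi^{q_p}(1-\xi)$. The gap is in the nontriviality step.

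\textbf{The gap: you never prove $q_p\ge1$.} In this lemma ``nontrivial'' means $q_p\ge1$, i.e.\ the reduction is not of type ${}_1F_0$. This is what the subsequent exponent analysis at $\xi=1$ uses, with its cases $q_p=1$, $q_p=2$, $q_p\ge3$ in Lemma~\ref{lem:xi-one-exponents}. Your proposal does not establish it.
\begin{itemize}
\item Non-polynomiality alone does not exclude $q_p=0$, since $(1-\xi)^{-\alpha}$ is not a polynomial either.
\item Your fallback, that ${}_1F_0(\alpha;;\xi)=(1-\xi)^{-\alpha}$ ``is nontrivial, so the claim holds in all cases,'' is wrong. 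With $\alpha>0$ that function has parametric excess $-\alpha<0$, contradicting the excess $2$ you have just proved.
\end{itemize}
The missing step is one line, and it is exactly the paper's argument. The excess is preserved under cancellation, so a ${}_1F_0$ reduction would need its single upper parameter to equal $-2$. That is impossible, because every surviving upper parameter comes from the positive list \eqref{eq:alpha-beta-params}. Equivalently, the reduced function would be $(1-\xi)^2$, a polynomial, contradicting Proposition~\ref{prop:Gp-radius}, which you already invoke. Hence $q_p\ge1$.

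\textbf{A secondary issue: the convention on the lower parameter $1$.} The ``main obstacle'' you identify is not real, and the convention you adopt to avoid it changes what $c_p$ counts.
\begin{itemize}
\item The factor $(m+1)$ coming from $n!$ is not a parameter. The listed lower parameter $1$, however, is an ordinary one.
\item It may be cancelled against an upper parameter $(p+k)/s=1$, which occurs for $k=s-p$ whenever $1\le p\le s$. This removes one parameter from each side and $1$ from each sum, so the count $q_p=2s-1-c_p$ and the excess stay consistent.
\item For $s=2$, $p=1$ this $1$--$1$ pair is the only common pair, giving ${}_3F_2(\tfrac12,\tfrac12,1;2,2;\xi)$. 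Your final convention would cancel nothing here.
\end{itemize}
Your two stated conventions also disagree with each other: the first cancels among all listed $\beta_j$, the last excludes the listed $1$. The lemma's conclusions survive any partial cancellation, so this does not break the argument, but you should simply cancel all common pairs.
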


\begin{proof}
	Cancelling a common upper and lower parameter removes the same quantity
	from \(\sum_i\alpha_i\) and \(\sum_j\beta_j\), so the parametric excess is
	unchanged. Since Corollary~\ref{cor:parametric-excess} gives excess \(2\)
	before cancellation, the reduced equation also has excess \(2\).

	If the reduction were of type \({}_1F_0\), its unique upper parameter
	would have to equal \(-2\) in order to have excess \(2\), which is
	impossible because every surviving upper parameter comes from the positive
	list \eqref{eq:alpha-beta-params}. Thus \(q_p\ge1\), so the reduced form is
	genuinely of type \({}_{q_p+1}F_{q_p}\). For such equations the only finite
	singular points are \(0\) and \(1\) in the variable \(\xi\); see
	\cite[\S16.8, \S16.11]{NIST:DLMF}.
\end{proof}

\begin{corollary}[Analytic continuation]
	\label{cor:Gp-continuation}
	The function \(\mathcal G_p\) extends to a single-valued holomorphic
	function on the slit plane
	\[
		\CC\setminus[\zeta_c^2,\infty).
	\]
\end{corollary}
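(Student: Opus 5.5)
The plan is to use the generalized hypergeometric structure of Proposition~\ref{prop:Gp-hypergeom}, in the reduced form of Lemma~\ref{lem:reduced-hypergeom-data}, and to argue through the associated linear ODE. In the variable \(\xi=\zeta_c^{-2}u\), the reduced function \({}_{q_p+1}F_{q_p}\) is annihilated by the generalized hypergeometric operator
\[
	\vartheta\prod_{j}(\vartheta+\beta_j-1)-\xi\prod_i(\vartheta+\alpha_i),
	\qquad \vartheta=\xi\frac{d}{d\xi}.
\]
After cancellation this operator has order \(q_p+1\). Its leading coefficient is \(\xi^{q_p+1}(1-\xi)\), up to sign. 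Hence its only finite singular points are \(\xi=0\) and \(\xi=1\), with a further singular point at \(\xi=\infty\).

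First, the power series \(\mathcal G_p\) solves this equation on the disk \(|\xi|<1\). This holds because its coefficients satisfy the first-order recurrence given by the rational ratio \(R_{s,p}(m+1)^2/R_{s,p}(m)^2\), which is established in Appendix~\ref{app:hyp-2}.

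Second, the slit plane \(\Omega:=\CC\setminus[1,\infty)\) in the \(\xi\)-variable is simply connected. The coefficients of the ODE, once normalized by the leading coefficient, are holomorphic on \(\Omega\setminus\{0\}\). At \(\xi=0\) the singularity is regular, and \(\mathcal G_p\) is the holomorphic (Frobenius exponent \(0\)) solution there, so no monodromy around \(0\) arises for this particular solution.

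Concretely, I would continue the germ as follows. Fix a base point \(\xi_0\in(0,1)\). The solution space of a linear ODE with holomorphic coefficients continues along every path in \(\Omega\setminus\{0\}\). By the monodromy theorem, continuation along any path in the simply connected domain \(\Omega\) gives a single-valued result, once we know that loops around \(\xi=0\) act trivially on our solution. To see that they do, note that \(\mathcal G_p\) is already a convergent power series, hence holomorphic, on the full disk \(|\xi|<1\), which contains \(0\). So the continuation of \(\mathcal G_p\) along paths in \(\Omega\) that encircle \(0\) agrees with the power series itself, and the origin poses no obstruction.

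The cleanest way to organize this is to cover \(\Omega\) by the disk \(|\xi|<1\) together with the simply connected domain \(\Omega\setminus(-\infty,0]\cap\{|\xi|\ge 1/2\}\), or more simply to observe that \(\Omega\) contains no singular point of the ODE other than \(0\). Since the solution is holomorphic at \(0\), the function obtained by continuation is holomorphic on all of \(\Omega\). Alternatively, the Barnes/Mellin integral representation in \cite[\S16.5]{NIST:DLMF} gives the continuation to \(|\arg(-\xi)|<\pi\) directly.

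The main obstacle is justifying rigorously that \(\mathcal G_p\) satisfies the reduced equation with no spurious singular points after cancellation. It must also be checked that the cancellation does not hide an apparent singularity: the leading coefficient must keep the factor \(1-\xi\) and acquire no other zeros. That is precisely the content of Lemma~\ref{lem:reduced-hypergeom-data}, which I will invoke.

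Finally, rescaling back by \(u=\zeta_c^2\xi\) maps \(\Omega\) to \(\CC\setminus[\zeta_c^2,\infty)\). This gives the claimed single-valued holomorphic extension, which agrees with \eqref{eq:Gp-def} on \(|u|<\zeta_c^2\).
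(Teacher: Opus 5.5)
Your proposal is correct and takes essentially the same route as the paper: the hypergeometric equation has finite singular points only at \(\xi=0,1\), and the germ therefore continues along every path in the simply connected slit plane. Your explicit handling of the regular singular point at \(\xi=0\), via trivial monodromy of the holomorphic solution around the origin, makes precise a step the paper passes over quickly. Also, the cancellation step is not needed for this corollary, since the unreduced \({}_{2s}F_{2s-1}\) operator already has leading coefficient \(\xi^{2s}(1-\xi)\).
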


\begin{proof}
	By Lemma~\ref{lem:reduced-hypergeom-data}, after cancellation of common
	upper and lower parameters the germ at \(u=0\) is represented by a reduced
	generalized hypergeometric function of type \({}_{q_p+1}F_{q_p}\) with the
	same parametric excess \(2\). For that reduced hypergeometric differential
	equation, the only finite singular points in the variable
	\(\xi=\zeta_c^{-2}u\) are \(\xi=0\) and \(\xi=1\). Equivalently, the only
	finite singular points in the variable \(u\) are \(u=0\) and
	\(u=\zeta_c^2\); see \cite[\S16.8, \S16.11]{NIST:DLMF}. Hence the germ at
	the origin continues uniquely along every path in
	\(\CC\setminus[\zeta_c^2,\infty)\), which is simply connected. This yields
	a single-valued holomorphic continuation of \(\mathcal G_p\) to the slit
	plane.
\end{proof}

\subsection{Resonant expansion at the branch point}
\label{subsec:branch-point}

We now analyze the singularity of \(\mathcal G_p\) at the endpoint
\(u=\zeta_c^2\) of the disk of convergence. Because the parametric excess is
the integer \(\gamma=2\), the local continuation at the branch point is
resonant, and the first possible logarithmic term occurs at order
\((1-u/\zeta_c^2)^2\log(1-u/\zeta_c^2)\).

\begin{lemma}[Reduced local description at the branch point]
	\label{lem:xi-one-exponents}
	Let \(\xi=\zeta_c^{-2}u\), and write the germ of \(\mathcal G_p\) at
	\(\xi=0\) in its reduced hypergeometric form \({}_{q_p+1}F_{q_p}\),
	obtained by cancelling any common upper and lower parameters in
	\eqref{eq:Gp-hypergeom}. Then \(\xi=1\) is a regular singular point of the
	reduced differential equation. Its local exponents are
	\[
		\{0,2\}\quad\text{if }q_p=1,
		\qquad
		\{0,1,2\}\quad\text{if }q_p=2,
		\qquad
		\{0,1,\dots,q_p-1\}\cup\{2\}\quad\text{if }q_p\ge3,
	\]
	where in the last case the exponent \(2\) has multiplicity two. In
	particular, the exponent \(0\) is simple, while the exponent \(1\) is
	either absent or simple. Hence no terms of the form \(\log(1-\xi)\) or
	\((1-\xi)\log(1-\xi)\) can occur in the local continuation of the germ.
\end{lemma}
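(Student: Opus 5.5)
The plan is to reduce everything to the classical local theory of the generalized hypergeometric equation at its unit singularity (N\o rlund's analysis), using only two inputs from the paper: Lemma~\ref{lem:reduced-hypergeom-data}, which gives a reduced ${}_{q_p+1}F_{q_p}$ with $q_p\ge1$, and Corollary~\ref{cor:parametric-excess}, which gives parametric excess $\gamma=2$.

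\textbf{Step 1: the equation near $\xi=1$.} Write the reduced equation in the form
\[
\Bigl[\theta\prod_{j}(\theta+\beta_j'-1)-\xi\prod_i(\theta+\alpha_i')\Bigr]y=0,\qquad \theta=\xi\frac{d}{d\xi},
\]
where the primes denote the parameters that survive cancellation. Expand it in ordinary derivatives $D=d/d\xi$. The coefficient of $D^{q_p+1}$ is $\xi^{q_p+1}(1-\xi)$, so at $\xi=1$ the leading coefficient vanishes to first order only. The lower coefficients are polynomials, and since the leading one has a simple zero, the Fuchs condition holds. Hence $\xi=1$ is a regular singular point.

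\textbf{Step 2: the exponents.} Because the leading coefficient has only a simple zero, the Frobenius recursion at $\xi=1$ can be solved for holomorphic solutions with arbitrarily prescribed values of $y,y',\dots,y^{(q_p-1)}$ at $\xi=1$. Concretely, one divides by $(1-\xi)$ and argues as for an equation of order $q_p$ that is regular at $\xi=1$, up to one extra degree of freedom. This gives $q_p$ independent holomorphic solutions with exponents $0,1,\dots,q_p-1$, and none of them contains a logarithm.

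The remaining exponent $\rho$ is read off from the indicial equation. The standard computation, equating the coefficients of $D^{q_p+1}$ and $D^{q_p}$ at $\xi=1$, gives $\rho=q_p-1+\gamma-(q_p-1)=\gamma=2$. More precisely, N\o rlund's result states that the exponents are $0,1,\dots,q_p-1,\gamma$. I would verify this trace identity directly: the exponent sum is governed by the ratio of the $D^{q_p}$ coefficient to the derivative of the leading coefficient at $\xi=1$, and this ratio equals $\sum\beta_j'-\sum\alpha_i'$ plus a combinatorial constant.

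Specialising $\gamma=2$ gives three cases:
\begin{itemize}
\item $q_p=1$: exponents $\{0,2\}$;
\item $q_p=2$: exponents $\{0,1,2\}$;
\item $q_p\ge3$: exponents $\{0,\dots,q_p-1\}\cup\{2\}$, with $2$ occurring twice.
\end{itemize}

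\textbf{Step 3: where logarithms can appear.} The $q_p$-dimensional space of holomorphic solutions from Step~2 accounts for every exponent except the extra copy of $2$. The remaining independent solution has the form $(1-\xi)^2\bigl(h_1+h_2\log(1-\xi)\bigr)$ with $h_1,h_2$ holomorphic. Any solution, in particular the continued germ of $\mathcal G_p$, is therefore a holomorphic function plus a multiple of this solution. Consequently no $\log(1-\xi)$ or $(1-\xi)\log(1-\xi)$ terms can occur. Exponent $0$ is simple, and exponent $1$ is simple when $q_p\ge2$ and absent when $q_p=1$.

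\textbf{Main obstacle.} The delicate point is Step~2: showing that the holomorphic solution space at $\xi=1$ really has full dimension $q_p$, i.e.\ that the integer exponents $0,\dots,q_p-1$ produce no logarithms even though they differ by integers. My first attempt would be the direct power-series argument exploiting the simple zero of the leading coefficient. If that becomes messy, the fallback is to cite N\o rlund's theorem (equivalently B\"uhring's description of the $\xi=1$ behaviour of ${}_{q+1}F_q$, cf.\ \cite[\S16.11]{NIST:DLMF}).
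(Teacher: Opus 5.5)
Your Step~2 is false when $q_p\le 2$, and this breaks the bookkeeping in Step~3. The case $q_p=2$ does occur in the paper's family: for $s=2$, $p=1$ the reduced function is ${}_3F_2(\tfrac12,\tfrac12,1;2,2;\xi)$. Written in $D=d/d\xi$, its equation is
\[
\xi^3(1-\xi)y'''+5\xi^2(1-\xi)y''+\bigl(4\xi-\tfrac{17}{4}\xi^2\bigr)y'-\tfrac14\,\xi\,y=0 .
\]
Evaluating at $\xi=1$ shows that every solution holomorphic at $\xi=1$ satisfies $y'(1)+y(1)=0$. So $(y(1),y'(1))$ cannot be prescribed freely. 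The holomorphic solutions have leading exponents $0$ and $2$, and it is the exponent-$1$ Frobenius solution that carries $(1-\xi)^2\log(1-\xi)$.

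The underlying mechanism is a resonance that your ``divide by $1-\xi$'' heuristic misses. Put $t=\xi-1$, $\vartheta=t\,d/dt$, and $t^{q_p}L=\sum_{j\ge0}t^jQ_j(\vartheta)$. Then $Q_0(\vartheta)=c\,\vartheta(\vartheta-1)\cdots(\vartheta-q_p+1)(\vartheta-\gamma)$ with $c\neq0$. The simple zero of the leading coefficient only makes the recursion vacuous at orders $t^0,\dots,t^{q_p-1}$, because $t^{q_p}Ly=O(t^{q_p})$ for holomorphic $y$. If $\gamma$ is an integer $\ge q_p$, the recursion imposes one compatibility condition at order $t^{\gamma}$ on the free data $(y_0,\dots,y_{q_p-1})$. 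With $\gamma=2$ this happens exactly when $q_p\le2$. In those cases there is no ``second copy of $2$'' at all, and the logarithm sits in the exponent-$0$ or exponent-$1$ solution, not in a separate exponent-$2$ solution.

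The lemma's conclusion is still true, but you need a different argument.

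For $q_p\le2$, the only obstruction in the recursion is at order $t^2$. The standard Frobenius repair adds $C\log(1-\xi)\,y_{(2)}$, where $y_{(2)}=O\bigl((1-\xi)^2\bigr)$ is the holomorphic exponent-$2$ solution. The constant $C$ can be solved for because $Q_0'(2)\neq0$.

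For $q_p\ge3$, your Step~2 is correct, since $Q_0(M)\neq0$ for all $M\ge q_p$. However, you only assert that the remaining solution has the form $(1-\xi)^2\bigl(h_1+h_2\log(1-\xi)\bigr)$. To prove it, write that solution as $h+v\log t$ with $v$ a holomorphic solution. After subtracting holomorphic solutions you may assume $h=O(t^{q_p})$. The lowest-order term of $t^{q_p}L(h+v\log t)$ is then $Q_0'(e)v_e t^e$, where $e\le q_p-1$ is the order of $v$. This forces $e$ to be a multiple root of $Q_0$, hence $e=2$.

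Alternatively, bypass Steps~2--3 by citing the explicit expansion of ${}_{q+1}F_q$ near unit argument for integer excess (N\o rlund, B\"uhring). For this germ it gives a polynomial of degree $<\gamma$ plus $(1-\xi)^{\gamma}$ times a holomorphic function plus a holomorphic multiple of $\log(1-\xi)$.
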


\begin{proof}
	By Lemma~\ref{lem:reduced-hypergeom-data}, after cancellation
	\(\mathcal G_p\) satisfies a reduced generalized hypergeometric equation of
	type \({}_{q_p+1}F_{q_p}\) with parametric excess \(\gamma=2\). For such an
	equation, the local exponents at \(\xi=1\) are
	\(0,1,\dots,q_p-1,\gamma\); see \cite[\S16.8]{NIST:DLMF}. Substituting
	\(\gamma=2\) gives the stated list, with exponent \(2\) repeated when
	\(q_p\ge3\). The exponent \(0\) is therefore simple, and the exponent
	\(1\) is either absent or simple. Consequently, the local continuation of
	the germ selected by the Taylor series at \(\xi=0\) cannot contain
	logarithmic terms of order \((1-\xi)^0\) or \((1-\xi)^1\). The first
	order at which a logarithmic term may occur is thus \((1-\xi)^2\).
\end{proof}

\begin{theorem}[Resonant expansion]
	\label{thm:resonant-expansion}
	There exist functions \(A(u)\) and \(B(u)\), analytic in a neighborhood of
	\(u=\zeta_c^2\), such that
	\begin{equation}\label{eq:resonant-expansion}
		\mathcal G_p(u)
		=
		A(u)+
		B(u)\Bigl(1-\frac{u}{\zeta_c^2}\Bigr)^2
		\log\Bigl(1-\frac{u}{\zeta_c^2}\Bigr),
	\end{equation}
	for all \(u\) in some slit neighborhood of \(\zeta_c^2\) contained in
	\(\CC\setminus[\zeta_c^2,\infty)\). Moreover,
	\[
		B(\zeta_c^2)<0
	\]
	for all \(s\ge2\) and \(p\ge1\).
\end{theorem}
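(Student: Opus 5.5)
The plan has two parts. First, derive the shape of \eqref{eq:resonant-expansion} from the local Frobenius theory at \(\xi=1\) supplied by Lemma~\ref{lem:xi-one-exponents}. Second, determine the sign of \(B(\zeta_c^2)\) by matching the logarithmic singularity against the known coefficient asymptotics \eqref{eq:Rsquared-asymp}.

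\emph{Step 1: Frobenius structure.} Work in the reduced equation of type \({}_{q_p+1}F_{q_p}\) and set \(\epsilon:=1-\xi\). By Lemma~\ref{lem:xi-one-exponents}, the exponents at \(\xi=1\) are nonnegative integers. The exponent \(0\) is simple and \(1\) is at most simple, and all other exponents are \(0,1,\dots,q_p-1\) together with \(2\).

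A local basis therefore consists of functions holomorphic at \(\epsilon=0\), plus solutions of the form \(\epsilon^{2}h(\epsilon)\log\epsilon+\text{(holomorphic)}\). These logarithmic solutions arise from the resonance of the exponent \(2\) with the integer exponents \(\le q_p-1\), or from its double multiplicity. Higher powers of \(\log\) would require a triple coincidence. I would rule this out by noting that the exponent \(2\) has multiplicity at most two, and that the integer exponents other than \(2\) are simple. The log-free solutions at exponents \(0,1\) then fix the Jordan structure of the local monodromy to have blocks of size at most \(2\), so at most one power of \(\log\) appears.

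Writing the germ \(\mathcal G_p\) in this basis and collecting the terms gives \(A\) and \(B\) analytic near \(\xi=1\), with \(B\) divisible by \(\epsilon^0\) only after the factor \(\epsilon^2\) has been extracted. This is exactly \eqref{eq:resonant-expansion}, valid in a slit neighbourhood by Corollary~\ref{cor:Gp-continuation}.

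\emph{Step 2: the sign of \(B(\zeta_c^2)\).} I would use a transfer argument in the direction singularity \(\to\) coefficients. Corollary~\ref{cor:Gp-continuation} gives analyticity in a \(\Delta\)-domain at \(\xi=1\), since the only other finite singularity is \(0\), where the germ is holomorphic. Standard transfer theorems then apply. The term \(A\) contributes only exponentially smaller corrections after subtracting an entire tail, or more precisely it has no effect on the polynomial-order asymptotics at \(\xi=1\). Using
\[
[\xi^m]\,(1-\xi)^2\log(1-\xi)\sim \frac{2}{m^{3}},
\]
and the fact that the remaining terms \(B(u)-B(\zeta_c^2)=O(\epsilon)\) contribute \(O(m^{-4})\), we get
\[
R_{s,p}(m)^2\zeta_c^{2m}\sim 2B(\zeta_c^2)\,m^{-3}.
\]
Comparing with \eqref{eq:Rsquared-asymp} yields \(B(\zeta_c^2)=C_{s,p}/2\). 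This is positive, but the sign must be tracked carefully. One checks \((1-\xi)^2\log(1-\xi)=-\sum_m \xi^m\,2/(m(m-1)(m-2))\) for large \(m\). The coefficient is therefore \(-2/m^3\), which gives \(B(\zeta_c^2)=-C_{s,p}/2<0\), as claimed.

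\emph{Main obstacle.} The delicate point is Step 1's exclusion of \(\log^2\) terms and the justification that the Taylor germ genuinely carries a nonzero \(\log\) component. The nonvanishing actually follows from Step 2, because \(C_{s,p}>0\). So the real work is bounding the Jordan block size. If the monodromy argument proves awkward, I would fall back on the explicit Barnes/Bühring continuation formulas for \({}_{q+1}F_q\) near unit argument with integer excess. These formulas give exactly a single \(\epsilon^{\gamma}\log\epsilon\) series, and they would also give \(B(\zeta_c^2)\) explicitly as \(-\Gamma\)-ratios, providing a cross-check of the sign. Uniformity over the cancellation cases \(q_p=1,2,\ge3\) is handled since Step 2 does not depend on \(q_p\).
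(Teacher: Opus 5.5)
Your Step~1 uses the same ingredient as the paper: the shape of \eqref{eq:resonant-expansion} comes from the unit-argument theory of the reduced ${}_{q_p+1}F_{q_p}$ equation. Your Step~2 reaches the sign by a genuinely different route.

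The paper (Proposition~\ref{prop:B1-closed}) argues in the Abelian direction.
\begin{itemize}
\item It subtracts $C_{s,p}\operatorname{Li}_3(\xi)$ and notes that the remainder has coefficients $O(m^{-4})$, so it is $C^2$ on $[0,1]$.
\item Such a function cannot contain a term $b(1)(1-\xi)^2\log(1-\xi)$ with $b(1)\neq0$, since that term's second derivative diverges like $2b(1)\log(1-\xi)$.
\item Hence $B(\zeta_c^2)$ equals the $\operatorname{Li}_3$ coefficient $-\tfrac12C_{s,p}$.
\end{itemize}
You argue in the transfer direction.
\begin{itemize}
\item Corollary~\ref{cor:Gp-continuation} supplies a $\Delta$-domain at $\xi=1$.
\item There the local form gives $\mathcal G_p=P_2(\xi)+B(\zeta_c^2)(1-\xi)^2\log(1-\xi)+O(|1-\xi|^3|\log(1-\xi)|)$.
\item Transfer then yields $R_{s,p}(m)^2\zeta_c^{2m}=-2B(\zeta_c^2)/\bigl(m(m-1)(m-2)\bigr)+O(m^{-4}\log m)$, which you match with \eqref{eq:Rsquared-asymp}.
\end{itemize}
Your route needs only first-order coefficient asymptotics but requires uniform control off the real axis. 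The paper's route needs the refined $O(m^{-4})$ remainder but only regularity on $[0,1]$. Both give $B(\zeta_c^2)=-C_{s,p}/2<0$.

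Two clean-ups for the write-up. First, delete the initial claim $[\xi^m](1-\xi)^2\log(1-\xi)\sim 2/m^3$; the exact value is $-2/(m(m-1)(m-2))$, as you later find. Second, instead of saying that $A$ ``has no effect'', replace $A$ by its second-order Taylor polynomial $P_2$ at $\xi=1$, with an $O(|1-\xi|^3)$ error. $A$ is only defined near $1$ and has no Taylor coefficients at $0$ of its own.

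The self-contained argument you sketch in Step~1 for ``at most one logarithm'' does not work as stated.
\begin{itemize}
\item At $\xi=1$ all exponents are integers, so they form a single resonance class of size $q_p+1$.
\item The multiplicities of individual exponents therefore do not bound the Jordan blocks of the unipotent local monodromy.
\item Simplicity of the exponents $0$ and $1$ does not produce log-free solutions with those leading exponents. Only the top exponent of a resonance class is guaranteed log-free.
\end{itemize}
Concretely, take $q_p=2$, e.g.\ $s=2$, $p=1$, where the reduced function is ${}_3F_2(\tfrac12,\tfrac12,1;2,2;\xi)$. The exponents $0,1,2$ are all simple. Yet log-free solutions with leading exponents $0$, $1$ and $2$ would span the whole solution space, leaving no room for the nonzero logarithm your Step~2 proves.

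What actually controls the structure is specific to hypergeometric equations. The ${}_{q+1}F_q$ equation has $q$ independent solutions holomorphic at $\xi=1$, so the local monodromy $M_1$ satisfies $\operatorname{rank}(M_1-I)\le1$ and only a single logarithm can occur. The N{\o}rlund--B\"uhring unit-argument expansion then places its coefficient at order $(1-\xi)^{\gamma}=(1-\xi)^2$. Your stated fallback to the B\"uhring formulas is exactly the kind of input the paper uses: it takes the local form $(1-\xi)^2\bigl(C_0+D_0\log(1-\xi)\bigr)$ from the literature. With that fallback in place of the monodromy sketch, your proof is complete.
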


\begin{proof}
	Set \(\xi:=u/\zeta_c^2\). By Corollary~\ref{cor:Gp-continuation}, the Taylor
	germ at \(\xi=0\) extends holomorphically to
	\(\CC\setminus[1,\infty)\). By Lemma~\ref{lem:reduced-hypergeom-data}, after
	cancellation of common upper and lower parameters in
	\eqref{eq:Gp-hypergeom}, this continuation is a reduced
	\({}_{q_p+1}F_{q_p}\)-function with parametric excess
	\[
		\gamma=2.
	\]

	Hence \(\xi=1\) is a regular singular point of the corresponding differential
	equation. By Lemma~\ref{lem:xi-one-exponents}, the exponents \(0\) and \(1\)
	at \(\xi=1\) are simple, so no logarithmic terms of order
	\((1-\xi)^0\) or \((1-\xi)^1\) can occur in the continuation of the
	Taylor germ from \(\xi=0\). The remaining Frobenius sectors, when present,
	start at higher integer powers and are analytic at \(\xi=1\). For a
	reduced \({}_{q_p+1}F_{q_p}\)-equation with positive integer parametric
	excess \(m\), the analytic continuation of the Taylor germ at \(\xi=0\) has
	a local representation near \(\xi=1\) of the form
	\[
		(1-\xi)^m\bigl(C_0(\xi)+D_0(\xi)\log(1-\xi)\bigr),
	\]
	where \(C_0\) and \(D_0\) are analytic near \(\xi=1\); see
	\cite[\S16.8(i), \S16.11(ii)]{NIST:DLMF} and \cite[Ch.~4]{Slater1966}.
	Since here \(m=\gamma=2\), we obtain on a slit neighborhood of \(\xi=1\)
	\[
		\mathcal G_p(\zeta_c^2\xi)
		=
		A_0(\xi)+B_0(\xi)(1-\xi)^2\log(1-\xi),
	\]
	where \(A_0\) and \(B_0\) are analytic near \(\xi=1\). The simplicity of the
	exponents \(0\) and \(1\) excludes lower-order logarithmic terms. Returning to \(u=\zeta_c^2\xi\), define
	\[
		A(u):=A_0(u/\zeta_c^2),
		\qquad
		B(u):=B_0(u/\zeta_c^2).
	\]
	This gives \eqref{eq:resonant-expansion}. The sign of \(B(\zeta_c^2)\) is
	computed independently in Proposition~\ref{prop:B1-closed}, which gives
	\[
		B(\zeta_c^2)
		=
		-
		\frac{p^2}{4\pi}\,\frac{s^{2p-1}}{(s-1)^{2p+1}}
		<0.
	\]
\end{proof}

\begin{corollary}[Mechanism of Gram weights divergence]
	\label{cor:divergence-mechanism}
	As \(\zeta\uparrow\zeta_c\), equivalently \(u=\zeta^2\uparrow\zeta_c^2\),
	the Gram weights satisfy
	\[
		\sigma_p(\zeta)
		=
		\widetilde A_p
		+
		\frac{2s^2}{p}\,B(\zeta_c^2)\,
		\log\Bigl(1-\frac{\zeta^2}{\zeta_c^2}\Bigr)
		+
		o(1),
	\]
	where \(\widetilde A_p\in\mathbb R\), and \(B(\zeta_c^2)<0\) is the coefficient
	from Theorem~\textup{\ref{thm:resonant-expansion}}. In particular,
	\[
		\sigma_p(\zeta)\to+\infty
	\]
	logarithmically, in agreement with
	Proposition~\textup{\ref{prop:gram-divergence}}.
\end{corollary}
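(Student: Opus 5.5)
The plan is to apply the Euler operator identity of Lemma~\ref{lem:euler-identity} directly to the resonant expansion of Theorem~\ref{thm:resonant-expansion}, and then read off the singular part. Set \(\varepsilon:=1-u/\zeta_c^2\), so that \(u\frac{d}{du}=-(1-\varepsilon)\frac{d}{d\varepsilon}\). The identity \eqref{eq:euler-identity} is proved only in the disk \(|u|<\zeta_c^2\). Both sides are analytic in \(u\) on the slit neighborhood, where \(\mathcal G_p\) is continued by Corollary~\ref{cor:Gp-continuation}. So for real \(u\uparrow\zeta_c^2\), approached from inside the disk, we may substitute the local form \(\mathcal G_p=A+B\varepsilon^2\log\varepsilon\) and differentiate term by term. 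This is legitimate because \(A\), \(B\) are analytic near \(\zeta_c^2\) and \(\varepsilon^2\log\varepsilon\) is smooth for \(\varepsilon>0\).

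The computation splits into two parts.

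\emph{Analytic part.} The term \(\frac1p(p+s\,u\,d/du)^2A\) is analytic at \(u=\zeta_c^2\). It contributes a finite value at \(u=\zeta_c^2\) plus \(o(1)\).

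\emph{Singular part.} Write \(\Theta:=u\frac{d}{du}\). We expand
\[
\frac1p(p+s\Theta)^2=\frac1p\bigl(p^2+2ps\,\Theta+s^2\Theta^2\bigr)
\]
and apply it to \(B\varepsilon^2\log\varepsilon\).
\begin{itemize}[label={}, leftmargin=0pt]
\item The zeroth-order term gives \(O(\varepsilon^2\log\varepsilon)=o(1)\).
\item One application of \(\Theta\) gives \(O(\varepsilon\log\varepsilon)=o(1)\).
\item For \(\Theta^2\), we have \(\frac{d}{d\varepsilon}(\varepsilon^2\log\varepsilon)=2\varepsilon\log\varepsilon+\varepsilon\) and \(\frac{d^2}{d\varepsilon^2}(\varepsilon^2\log\varepsilon)=2\log\varepsilon+3\). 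Hence
\[
\Theta^2(B\varepsilon^2\log\varepsilon)=(1-\varepsilon)^2B\,(2\log\varepsilon+3)+O(\varepsilon\log\varepsilon)
=2B(\zeta_c^2)\log\varepsilon+O(1)+o(1).
\]
Here the \(O(1)\) term is a convergent constant. All terms involving derivatives of \(B\) carry at least one factor of \(\varepsilon\) in front of \(\log\varepsilon\) or \(\varepsilon\log\varepsilon\), so they tend to constants or to zero. Replacing \(B(u)\) by \(B(\zeta_c^2)\) costs \(O(\varepsilon\log\varepsilon)=o(1)\).
\end{itemize}

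Collecting the pieces, the only divergent term is
\[
\frac{s^2}{p}\cdot 2B(\zeta_c^2)\log\varepsilon .
\]
Every other contribution converges as \(\varepsilon\downarrow0\), and we put all the limiting constants into \(\widetilde A_p\). The constant \(\widetilde A_p\) is real because \(A\) and \(B\) are real on the real segment \((\zeta_c^2-\delta,\zeta_c^2)\). This holds since \(\mathcal G_p\) is real there, and the decomposition can be chosen real because \(\log\varepsilon\) is real for \(\varepsilon>0\) and the Frobenius solutions have real coefficients. With \(u=\zeta^2\), this yields the stated formula.

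The divergence to \(+\infty\) follows because \(B(\zeta_c^2)<0\) by Theorem~\ref{thm:resonant-expansion} (Proposition~\ref{prop:B1-closed}), and \(\log\varepsilon\to-\infty\). The main point needing care is the matching of \eqref{eq:euler-identity} with the continued expansion on the real approach, together with the reality of \(A\) and \(B\). As a consistency check against Proposition~\ref{prop:gram-divergence} and \eqref{eq:Rsquared-asymp}, the coefficient \(2s^2B(\zeta_c^2)/p\) should equal \(-s^2C_{s,p}/p\), since \(\sum m^{-1}\eta^{2m}=-\log(1-\eta^2)\). This gives \(C_{s,p}=-2B(\zeta_c^2)\), which is the usual transfer relation for an \(\varepsilon^2\log\varepsilon\) singularity.
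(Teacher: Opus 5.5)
Your proposal is correct and is essentially the paper's own argument: apply Lemma~\ref{lem:euler-identity} to the resonant expansion~\eqref{eq:resonant-expansion} along the real approach $u\uparrow\zeta_c^2$, and observe that only the $s^2\Theta^2$ part, with $\Theta=u\,d/du$, acting on $B\,w^2\log w$ produces a divergent term. The paper records this as $(p+s\,u\,d/du)^2[w^2\log w]=2s^2\log w+3s^2+O(w\log w)$, which yields the divergent term $\tfrac{2s^2}{p}B(\zeta_c^2)\log w$. Two of your details go slightly beyond the paper: you track the derivatives of $B$ explicitly rather than using the paper's $O(w^3\log w)$ shortcut, and your consistency check $C_{s,p}=-2B(\zeta_c^2)$ matches Appendix~\ref{app:B1}. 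The reality of $\widetilde A_p$ follows more simply than your Frobenius argument, because $\sigma_p(\zeta)$ and $B(\zeta_c^2)$ are both real.
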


\begin{proof}
	Apply Lemma~\ref{lem:euler-identity} to the expansion
	\eqref{eq:resonant-expansion}, with \(u=\zeta^2\), and write $w:=1-u/\zeta_c^2$.
	Since \(A(u)\) and \(B(u)\) are analytic at \(u=\zeta_c^2\), one has
	\[
		\mathcal G_p(u)
		=
		A(u)+B(\zeta_c^2)\,w^2\log w+O(w^3\log w).
	\]
	The Euler operator \(\bigl(p+s\,u\frac{d}{du}\bigr)^2\) preserves analyticity,
	so the contribution of \(A(u)\) is a constant plus \(o(1)\), while
	\(O(w^3\log w)\) contributes only \(o(1)\). For the singular term,
	\begin{equation}\label{eq:Euler-on-w2logw}
		\Bigl(p+s\,u\frac{d}{du}\Bigr)^2\!\bigl[w^2\log w\bigr]
		=
		2s^2\log w+3s^2+O\bigl(w\log w\bigr),
		\qquad w\downarrow0.
	\end{equation}
	Hence
	\[
		\sigma_p(\zeta)
		=
		\widetilde A_p
		+
		\frac{2s^2}{p}\,B(\zeta_c^2)\log w
		+
		o(1),
	\]
	for some \(\widetilde A_p\in\mathbb R\). Since
	\(w=1-\zeta^2/\zeta_c^2\), this is the claimed expansion. Finally,
	\(B(\zeta_c^2)<0\) and \(\log(1-\zeta^2/\zeta_c^2)\to-\infty\), hence
	\(\sigma_p(\zeta)\to+\infty\).
\end{proof}
\subsection{Continued Gram weights and edge density}
\label{subsec:continued-weights}

Once \(\mathcal G_p\) has been continued to the slit plane, the Euler
identity from Lemma~\ref{lem:euler-identity} gives a canonical continuation of
the Gram weights as well.

\begin{definition}[Continued Gram weights]
	\label{def:continued-weights}
	For \(u\in\CC\setminus[\zeta_c^2,\infty)\), define
	\begin{equation}\label{eq:sigma-tilde}
		\sigma_p^{\mathrm{cont}}(u)
		:=
		\frac1p\Bigl(p+s\,u\frac{d}{du}\Bigr)^{\!2}\mathcal G_p(u).
	\end{equation}
	For \(|u|<\zeta_c^2\), this agrees with the original Gram weights
	\(\sigma_p(\zeta)\) under the substitution \(u=\zeta^2\), by
	Lemma~\ref{lem:euler-identity}.
\end{definition}

The function \(\sigma_p^{\mathrm{cont}}\) inherits the branch cut
\([\zeta_c^2,\infty)\) from \(\mathcal G_p\). Its jump across the cut is the
natural supercritical analogue of the logarithmic divergence on the
subcritical side.

\begin{definition}[Discontinuity density]
	\label{def:discontinuity}
	For \(u>\zeta_c^2\), define
	\begin{equation}\label{eq:rho-def}
		\rho_p(u)
		:=
		-\frac{1}{2\pi i}\,\Disc\,\sigma_p^{\mathrm{cont}}(u)
		=
		-\frac{1}{2\pi i}
		\bigl(\sigma_p^{\mathrm{cont}}(u+i0)-\sigma_p^{\mathrm{cont}}(u-i0)\bigr).
	\end{equation}
\end{definition}

The sign convention in \eqref{eq:rho-def} is chosen so that the edge value is
positive when \(B(\zeta_c^2)<0\), in agreement with the standard resolvent
orientation.

\begin{theorem}[Nonvanishing edge density]
	\label{thm:edge-density}
	The limit
	\[
		\rho_p(\zeta_c^2)
		:=
		\lim_{u\downarrow\zeta_c^2}\rho_p(u)
	\]
	exists and is nonzero for all \(s\ge2\) and \(p\ge1\). More precisely,
	\begin{equation}\label{eq:rho-edge}
		\rho_p(\zeta_c^2)
		=
		-\frac{2s^2}{p}\,B(\zeta_c^2).
	\end{equation}
\end{theorem}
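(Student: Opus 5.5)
The plan is to compute the jump of \(\sigma_p^{\mathrm{cont}}\) across the cut directly from the resonant expansion of Theorem~\ref{thm:resonant-expansion}, using the Euler operator of Definition~\ref{def:continued-weights}. Write \(w:=1-u/\zeta_c^2\). On a slit neighbourhood of \(\zeta_c^2\) we have \(\mathcal G_p(u)=A(u)+B(u)w^2\log w\) with \(A,B\) analytic. Since \(A\) is single-valued near \(\zeta_c^2\), it contributes nothing to the discontinuity. Also, \(B(u)w^2\) is analytic, so the only source of the jump is \(\log w\). For \(u>\zeta_c^2\) we have \(w<0\), and with the principal branch (real and negative-free on the disk side), \(\log w(u\pm i0)=\log|w|\mp i\pi\). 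Here \(u+i0\) corresponds to \(w-i0\). Hence
\[
\Disc\,\mathcal G_p(u)=-2\pi i\,B(u)\,w^2
\]
holds exactly in a right neighbourhood of \(\zeta_c^2\).

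Next I apply the operator \(\frac1p(p+s\,u\,d/du)^2\). It has analytic coefficients, so it commutes with taking boundary values from either side, and hence with \(\Disc\). This is justified because the boundary values of \(A\) and of \(Bw^2\log w\) together with their derivatives are given by the same local formula on each side. Therefore
\[
\Disc\,\sigma_p^{\mathrm{cont}}(u)=-\frac{2\pi i}{p}\Bigl(p+s\,u\frac{d}{du}\Bigr)^{2}\bigl[B(u)w^2\bigr].
\]
Now expand using \(u\,d/du=-(1-w)\,d/dw\). The second-order part acting on \(w^2\) gives \(s^2u^2\cdot 2/\zeta_c^4\cdot B\), up to terms vanishing at \(w=0\). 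Every other term carries a factor \(w\) or \(w^2\). At \(u=\zeta_c^2\) the limit is therefore \(-\frac{2\pi i}{p}\,2s^2B(\zeta_c^2)\). This agrees with the constant \(2s^2\) appearing in \eqref{eq:Euler-on-w2logw}, which serves as a consistency check.

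Dividing by \(-2\pi i\), as in Definition~\ref{def:discontinuity}, gives \(\rho_p(\zeta_c^2)=-\frac{2s^2}{p}B(\zeta_c^2)\). This expression is positive, and in particular nonzero, by the sign \(B(\zeta_c^2)<0\) from Theorem~\ref{thm:resonant-expansion} (via Proposition~\ref{prop:B1-closed}). The limit exists because \(\rho_p(u)\) is an analytic function of \(u\) near \(\zeta_c^2\), restricted to \(u>\zeta_c^2\).

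The main obstacle is bookkeeping of the branch and orientation conventions: identifying which side of the \(u\)-cut corresponds to \(\arg w=\pm\pi\), so that the sign in \eqref{eq:rho-def} comes out positive. Once the side is fixed, the rest is the direct computation of the second derivative of \(w^2\) at \(w=0\).
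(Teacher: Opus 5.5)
Your outline matches the paper's proof: use the resonant expansion, observe that only \(\log w\) jumps, commute the Euler operator with \(\Disc\), and use \((p+s\,u\,d/du)^2[B(u)w^2]=2s^2B(\zeta_c^2)+O(w)\). The one place you differ is the branch bookkeeping, and there you are right and the paper is not. Since \(\operatorname{Im}w=-\operatorname{Im}u/\zeta_c^2\), the side \(u+i0\) is \(w-i0\), so \(\Disc\log w=-2\pi i\). The paper's proof instead asserts \(\Disc\log w=+2\pi i\). Two independent checks confirm your sign. First, \(\operatorname{Im}\mathrm{Li}_3(x+i0)=\tfrac{\pi}{2}(\log x)^2>0\) for \(x>1\), and this matches the local term \(-\tfrac12w^2\log w\) only if \(\log w(x+i0)=\log|w|-i\pi\). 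Second, for \(1\le p\le s\), positivity of the density in \eqref{eq:mu-jump} near the right endpoint, together with \(B(\zeta_c^2)<0\), forces \(\Disc\mathcal G_p=-2\pi i\,B\,w^2\).

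The gap is your last step. From your own correct formula, \(\Disc\sigma_p^{\mathrm{cont}}(u)\to-\frac{4\pi i s^2}{p}B(\zeta_c^2)\). Then \eqref{eq:rho-def} gives
\[
\rho_p(\zeta_c^2)=-\frac{1}{2\pi i}\Bigl(-\frac{4\pi i s^2}{p}\,B(\zeta_c^2)\Bigr)=+\frac{2s^2}{p}\,B(\zeta_c^2)<0,
\]
not \(-\frac{2s^2}{p}B(\zeta_c^2)\). Dividing by \(-2\pi i\) cancels the minus sign you carried; it does not leave one behind. So your argument correctly shows that the limit exists and is nonzero. It does not prove \eqref{eq:rho-edge} or the positivity you assert; under the conventions as written, it contradicts both. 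The paper reaches \eqref{eq:rho-edge} only through the opposite, incorrect side assignment for \(\Disc\log w\). A correct statement needs one of two fixes. Either change \eqref{eq:rho-def} to \(\rho_p=+\frac{1}{2\pi i}\Disc\sigma_p^{\mathrm{cont}}\), the orientation already used in \eqref{eq:mu-jump}, which gives \(\rho_p(\zeta_c^2)=-\frac{2s^2}{p}B(\zeta_c^2)>0\). Or keep \eqref{eq:rho-def} and state \(\rho_p(\zeta_c^2)=\frac{2s^2}{p}B(\zeta_c^2)\). You flagged the orientation as the main obstacle; when your computation disagreed with the target, that disagreement should have been reported rather than absorbed into the final arithmetic.
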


\begin{proof}
	Let \(u>\zeta_c^2\) and set $w:=1- u / \zeta_c^2\in(-\infty,0)$.
	By Theorem~\ref{thm:resonant-expansion}, in a punctured neighborhood of
	\(u=\zeta_c^2\) one has
	\[
		\mathcal G_p(u)=A(u)+B(u)\,w^2\log w,
	\]
	with \(A\) and \(B\) analytic near \(\zeta_c^2\). Since \(A\) and \(B\) are
	single-valued across the cut, only the logarithm contributes to the jump.
	On the principal branch, $\Disc\,\log w = 2\pi i$, $u>\zeta_c^2$,
	hence
	\begin{equation}\label{eq:disc-Gp-local}
		\Disc\,\mathcal G_p(u)=2\pi i\,B(u)\,w^2,
		\qquad u>\zeta_c^2 \text{ close to }\zeta_c^2.
	\end{equation}

	Because the Euler operator has analytic coefficients away from the endpoint,
	it may be applied separately to the upper and lower holomorphic boundary
	values on the cut. Subtracting the two resulting expressions shows that it
	commutes with the discontinuity operation. Therefore,
	\[
		\Disc\,\sigma_p^{\mathrm{cont}}(u)
		=
		\frac1p\Bigl(p+s\,u\frac{d}{du}\Bigr)^{\!2}\Disc\,\mathcal G_p(u)
		=
		\frac{2\pi i}{p}
		\Bigl(p+s\,u\frac{d}{du}\Bigr)^{\!2}\!\bigl[B(u)w^2\bigr].
	\]

	Now \(u=\zeta_c^2(1-w)\), so $
		u(d/du)=-(1-w)(d/dw)$.
	Since \(B(u)=B(\zeta_c^2)+O(w)\), one finds
	\[
		\Bigl(p+s\,u\frac{d}{du}\Bigr)\!\bigl[B(u)w^2\bigr]
		=
		-2s\,B(\zeta_c^2)\,w+O(w^2),
	\]
	and applying the same operator once more yields
	\begin{equation}\label{eq:Euler-on-w2}
		\Bigl(p+s\,u\frac{d}{du}\Bigr)^{\!2}\!\bigl[B(u)w^2\bigr]
		=
		2s^2\,B(\zeta_c^2)+O(w),
		\qquad u\to\zeta_c^2.
	\end{equation}
	Substituting into \eqref{eq:rho-def} gives
	\[
		\rho_p(u)
		=
		-\frac1p\Bigl(2s^2\,B(\zeta_c^2)+O(w)\Bigr),
	\]
	and letting \(u\downarrow\zeta_c^2\) proves \eqref{eq:rho-edge}.
\end{proof}

\begin{remark}[Matching across the analytic threshold]
	\label{rem:edge-matching}
	Corollary~\ref{cor:divergence-mechanism} and
	Theorem~\ref{thm:edge-density} describe the same resonant singularity from
	opposite sides of the threshold. On the subcritical side
	\((u\uparrow\zeta_c^2)\), the Euler operator converts the term
	\[
		B(u)\Bigl(1-\frac{u}{\zeta_c^2}\Bigr)^2
		\log\Bigl(1-\frac{u}{\zeta_c^2}\Bigr)
	\]
	into a bare logarithm, producing the divergence
	\(\sigma_p(\zeta)\to+\infty\). On the supercritical side
	\((u>\zeta_c^2)\), the same logarithm acquires the jump
	\[
		\Disc\,\log\Bigl(1-\frac{u}{\zeta_c^2}\Bigr)=2\pi i,
	\]
	and this yields the nonzero edge value \(\rho_p(\zeta_c^2)\). Thus the
	subcritical blow-up and the supercritical edge density are two boundary
	manifestations of one and the same branch-point coefficient
	\(B(\zeta_c^2)\).
\end{remark}

To illustrate the continuation across the critical point, we plot in
Figure~\ref{fig:gp-rho} the continued Gram weights
\(\sigma_p^{\mathrm{cont}}(u)\), which agrees with the subcritical Gram
weights for \(u<\zeta_c^2\), together with the supercritical discontinuity
density \(\rho_p(u)\) for several values of \(p\).
The lower plots show that the edge value
\[
\rho_p(\zeta_c^2)= -\frac{2s^2}{p}\,B(\zeta_c^2)
\]
is positive, in agreement with the local expansion at the branch point.
At the same time, the additional curves with larger \(p\) show that
\(\rho_p(u)\) need not remain positive for all \(u>\zeta_c^2\): after
starting from a positive edge value, the density may cross zero and become
negative further along the supercritical branch.
Thus the sign of \(\rho_p\) is determined locally near \(u=\zeta_c^2\), but need not remain fixed throughout the whole continuation domain.

\begin{figure}[t]
	\centering
	\includegraphics[width=\textwidth]{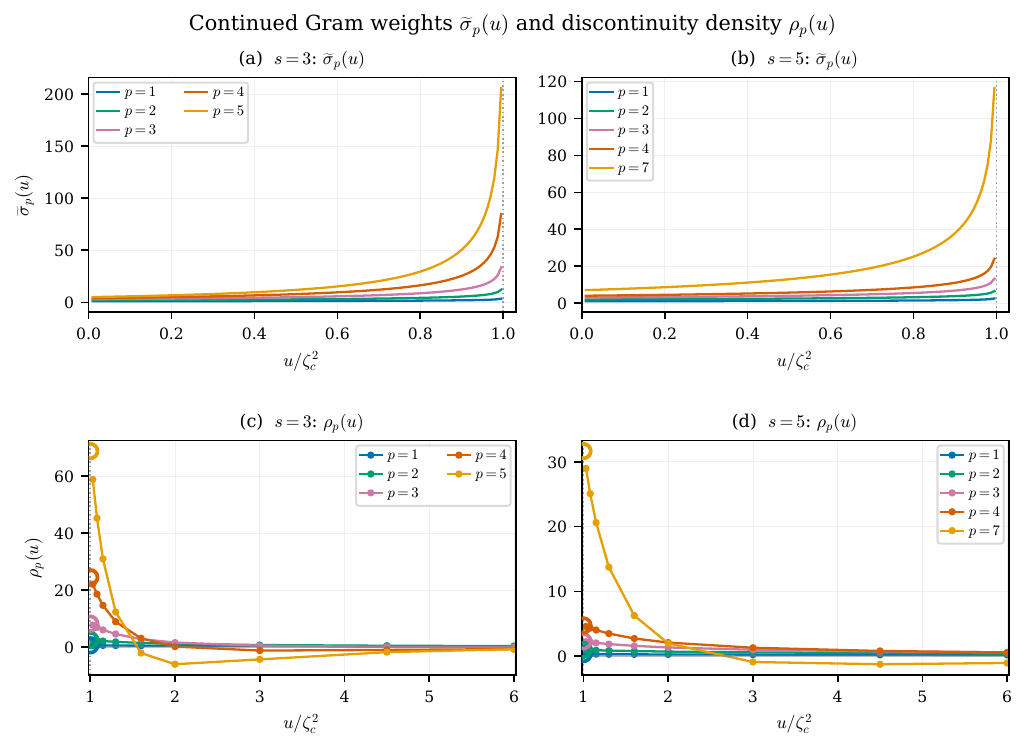}
	\caption{
	Continued Gram weights \(\sigma_p^{\mathrm{cont}}(u)\) and discontinuity density
	\(\rho_p(u)\) for \(s=3,5\).
	Top row: the analytically continued Gram weights
	\(\sigma_p^{\mathrm{cont}}(u)\) on the subcritical side \(0<u<\zeta_c^2\).
	Bottom row: the discontinuity density \(\rho_p(u)\) on the supercritical
	side \(u>\zeta_c^2\).
	The open circles at \(u=\zeta_c^2\) mark the edge values $
	\rho_p(\zeta_c^2)>0$.
	For larger values of \(p\), the density becomes negative away from the
	edge, showing that positivity at \(u=\zeta_c^2\) does not persist on the
	entire supercritical branch.
	}
	\label{fig:gp-rho}
\end{figure}

This numerical behavior is consistent with the edge asymptotics proved above,
while showing that the positivity statement is local in \(u\) and does not
extend to the full supercritical branch.

\subsection{Cauchy--Stieltjes representation}
\label{subsec:stieltjes}

The hypergeometric continuation of \(\mathcal G_p\) on the slit plane
produces canonical boundary values across the cut and therefore a
Cauchy--Stieltjes representation. This representation does not require any
positivity assumption and is valid for every \(p\ge1\).

\begin{lemma}[Growth at infinity]
	\label{lem:Gp-infty}
	Fix \(p\ge1\). As \(|u|\to\infty\) in
	\(\CC\setminus[\zeta_c^2,\infty)\),
	\begin{equation}\label{eq:Gp-infty}
		\mathcal G_p(u)
		=
		O\bigl(|u|^{-p/s}(1+\log|u|)\bigr).
	\end{equation}
	Consequently, for
	\[
		F_p(\eta):=\eta^{-1}\mathcal G_p(1/\eta),
	\]
	one has
	\begin{equation}\label{eq:Fp-zero}
		F_p(\eta)
		=
		O\bigl(|\eta|^{p/s-1}(1+\log(1/|\eta|))\bigr),
		\qquad
		\eta\to0,\ \eta\notin[0,1/\zeta_c^2].
	\end{equation}
\end{lemma}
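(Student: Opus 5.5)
The plan is to read off the behaviour at infinity from the generalized hypergeometric representation of Proposition~\ref{prop:Gp-hypergeom}, in its reduced form from Lemma~\ref{lem:reduced-hypergeom-data}. Set \(\xi=\zeta_c^{-2}u\). Near \(\xi=\infty\) the reduced \({}_{q_p+1}F_{q_p}\) equation has a regular singular point, and the local exponents there, in the variable \(1/\xi\), are the surviving upper parameters \(\alpha_i\). Concretely, \(\mathcal G_p(\zeta_c^2\xi)\) on \(\CC\setminus[1,\infty)\) is a finite linear combination of Frobenius solutions of the form \((-\xi)^{-\alpha_i}\phi_i(1/\xi)\). Here the \(\phi_i\) are analytic near \(0\), and logarithmic factors \(\log(-\xi)\) appear when upper parameters coincide or differ by integers. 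This is the standard connection formula given by the Mellin--Barnes integral \cite[\S16.5, \S16.8(ii)]{NIST:DLMF}.

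I would proceed in three steps.

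First, I would write \(\mathcal G_p\) as a Mellin--Barnes integral. The integrand is
\[
\frac{\prod_i\Gamma(\alpha_i+t)}{\prod_j\Gamma(\beta_j+t)}\,\Gamma(-t)\,(-\xi)^t ,
\]
which is valid on \(|\arg(-\xi)|<\pi\), and this is exactly the slit plane. Closing the contour to the left picks up residues at \(t=-\alpha_i-n\). This gives the asymptotic expansion, uniform in closed subsectors of the slit plane.

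Second, I would identify the leading exponent. The smallest upper parameter is \(\alpha=p/s\) (the \(k=0\) entry of \eqref{eq:alpha-beta-params}), and it occurs twice. A double pole of \(\Gamma(p/s+t)^2\) produces at worst \(|\xi|^{-p/s}\log|\xi|\). If cancellation against a lower parameter removes one or both copies, the leading term only improves. The other upper parameters \((p+k)/s\) may differ from \(p/s\) by integers or coincide with each other, but they are all strictly larger than \(p/s\), so they give smaller contributions, with at most logarithmic enhancement. Every pole is at most double because each value appears in the upper list with multiplicity two. Hence \(\mathcal G_p(u)=O(|u|^{-p/s}(1+\log|u|))\).

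Third, I would handle the approach to the cut. The bound is wanted uniformly as \(|u|\to\infty\) in the whole slit plane, including near the rays \(\arg u\to0^\pm\). The Mellin--Barnes expansion holds up to the boundary values \(\arg(-\xi)=\pm\pi\), because each term \((-\xi)^{-\alpha}\) has modulus \(|\xi|^{-\alpha}\) and \(\log(-\xi)\) has modulus \(\le\log|\xi|+\pi\). Alternatively, I would use the fact that the Frobenius expansion at \(\infty\) converges for \(|\xi|>1\) on each sheet of the two-sided boundary. The bound on \(F_p\) then follows by the substitution \(u=1/\eta\): \(|\eta|^{-1}|\eta|^{p/s}(1+\log(1/|\eta|))\).

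I expect the main obstacle to be justifying uniformity up to the cut. The cleanest route is to note that in the variable \(1/\xi\) the Frobenius series at \(\xi=\infty\) converge in \(|1/\xi|<1\) and contain only the functions \((-\xi)^{-\alpha}\) and \(\log(-\xi)\). These functions have continuous boundary values from both sides on \((1,\infty)\), so the estimate holds on the closure of the slit region for \(|\xi|\ge2\).
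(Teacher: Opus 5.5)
Your proposal is correct and follows essentially the paper's route. The exponents of the hypergeometric equation at \(\xi=\infty\) are the surviving upper parameters, the smallest is \(p/s\) with multiplicity at most two, so at most one logarithm appears, and your Mellin--Barnes residue expansion is just a concrete way of producing that local expansion. (The distinct values \((p+k)/s\) differ by \(k/s\in(0,1)\), never by integers, so that hedge is unnecessary.)

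Your closing argument is sharper than the paper's proof: the Frobenius series at \(\infty\) converge for \(|\xi|>1\) and have continuous two-sided boundary values on the cut, so the bound holds uniformly up to the cut for \(|\xi|\ge2\). The paper only asserts uniformity on closed subsectors of the slit plane. The keyhole argument in Proposition~\ref{prop:stieltjes} needs your stronger version, since it uses the bound on full small circles around \(\eta=0\) and on the lateral boundary values of the cut.
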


\begin{proof}
	By Proposition~\ref{prop:Gp-hypergeom},
	\(\mathcal G_p(u)\) is a generalized hypergeometric function in the
	variable \(\xi=u/\zeta_c^2\), with numerator parameters
	\[
		\alpha_k=\frac{p+k}{s},
		\qquad k=0,\dots,s-1,
	\]
	each occurring with multiplicity \(2\). The generalized hypergeometric
	differential equation has regular singular points at \(\xi=0,1,\infty\),
	and the local exponents at \(\xi=\infty\) are precisely the numerator
	parameters; see \cite[\S16.11(i)]{NIST:DLMF} and \cite[Ch.~4]{Slater1966}.
	Because each exponent appears with multiplicity \(2\), the local basis at
	\(\infty\) contains at most one logarithm for each exponent, so the
	asymptotic expansion takes the form
	\[
		\mathcal G_p(u)
		=
		\sum_{k=0}^{s-1}
		u^{-\alpha_k}\bigl(A_k\log u+B_k\bigr)
		+
		O\!\bigl(|u|^{-(p+1)/s}\log|u|\bigr),
	\]
	uniformly on every closed subsector of the slit plane. The smallest
	exponent is \(\alpha_0=p/s\), which yields \eqref{eq:Gp-infty}.
	Substituting \(u=1/\eta\) gives \eqref{eq:Fp-zero}.
\end{proof}

\begin{proposition}[Stieltjes representation]
	\label{prop:stieltjes}
	There exists a finite real signed Borel measure \(\nu_p\) of bounded total
	variation, supported on \([0,1/\zeta_c^2]\), such that
	\begin{equation}\label{eq:stieltjes-Gp}
		\mathcal G_p(u)
		=
		\int_0^{1/\zeta_c^2}\frac{d\nu_p(t)}{1-ut},
		\qquad
		u\in\CC\setminus[\zeta_c^2,\infty).
	\end{equation}
	Moreover,
	\[
		\nu_p\bigl([0,1/\zeta_c^2]\bigr)=\mathcal G_p(0)=1,
	\]
	and \(\nu_p\) is absolutely continuous on the open interval
	\((0,1/\zeta_c^2)\), with density
	\begin{equation}\label{eq:mu-jump}
		\frac{d\nu_p}{dt}(t)
		=
		\frac{1}{2\pi i\,t}\,
		\Disc\,\mathcal G_p\Bigl(\frac{1}{t}\Bigr),
		\qquad
		0<t<1/\zeta_c^2.
	\end{equation}
\end{proposition}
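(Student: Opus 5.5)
The plan is to derive \eqref{eq:stieltjes-Gp} from a Cauchy integral formula in the reciprocal variable \(\eta=1/u\), applied to the function \(F_p(\eta)=\eta^{-1}\mathcal G_p(1/\eta)\) of Lemma~\ref{lem:Gp-infty}. By Corollary~\ref{cor:Gp-continuation}, \(\mathcal G_p\) is holomorphic on \(\CC\setminus[\zeta_c^2,\infty)\). Under \(u=1/\eta\) this domain becomes \(\CC\setminus[0,1/\zeta_c^2]\), so \(F_p\) is holomorphic off the compact segment \([0,1/\zeta_c^2]\). At \(\eta=\infty\) (that is, \(u\to0\)) one has \(F_p(\eta)=\eta^{-1}+O(\eta^{-2})\), since \(\mathcal G_p(0)=1\).

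For \(\eta\) outside the segment, I would apply Cauchy's formula on a large circle minus a thin "dogbone" contour enclosing \([0,1/\zeta_c^2]\). The large circle contributes zero because \(F_p=O(\eta^{-1})\). This gives
\[
F_p(\eta)=\frac{1}{2\pi i}\oint_{\text{dogbone}}\frac{F_p(t)}{\eta-t}\,dt .
\]
Shrinking the dogbone then produces three pieces. The first is the segment interior, where the integral becomes the jump of \(F_p\) across the cut. The second is a small circle around the right endpoint \(t=1/\zeta_c^2\); its contribution vanishes because, by Theorem~\ref{thm:resonant-expansion}, \(\mathcal G_p\) is bounded near \(u=\zeta_c^2\). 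The third is a small circle around \(t=0\). By \eqref{eq:Fp-zero}, \(F_p(t)=O(|t|^{p/s-1}\log(1/|t|))\), and \(p/s>0\), so the circle of radius \(\varepsilon\) contributes \(O(\varepsilon^{p/s}\log(1/\varepsilon))\to0\).

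Next I convert the jump. For \(0<t<1/\zeta_c^2\), the point \(1/t>\zeta_c^2\), and the map \(t\mapsto 1/t\) reverses the upper and lower half-planes. So \(F_p(t\pm i0)=t^{-1}\mathcal G_p(1/t\mp i0)\), which gives
\[
\frac{1}{2\pi i}\bigl(F_p(t-i0)-F_p(t+i0)\bigr)=\frac{1}{2\pi i\,t}\Disc\mathcal G_p(1/t).
\]
Tracking the contour orientation, this yields \(F_p(\eta)=\int d\nu_p(t)/(\eta-t)\), with \(d\nu_p\) given by \eqref{eq:mu-jump}. Multiplying by \(\eta^{-1}=u\) turns \(1/(\eta-t)\) into \(u/(1-ut)\), and hence gives \eqref{eq:stieltjes-Gp}.

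It remains to check that the density is integrable, so that \(\nu_p\) has finite total variation. Near \(t=1/\zeta_c^2\) the jump is \(2\pi i B w^2\) by \eqref{eq:disc-Gp-local}, which is bounded and in fact vanishes at the edge. In the interior of the segment the boundary values are continuous, since the regular singular points lie only at \(0\), \(1\), and \(\infty\) in \(\xi\). Near \(t=0\), the bound \eqref{eq:Fp-zero} applied to both boundary values gives \(|d\nu_p/dt|\le C t^{p/s-1}\log(1/t)\), which is integrable. So \(\nu_p\) is absolutely continuous with an \(L^1\) density and no point masses. No mass at \(0\) appears because the small-circle contribution vanished. For the total mass, set \(u=0\) in \eqref{eq:stieltjes-Gp} (an interior point of the domain) to get \(\nu_p([0,1/\zeta_c^2])=\mathcal G_p(0)=R_{s,p}(0)^2=1\).

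The main obstacle is the endpoint \(t=0\), i.e.\ \(u\to\infty\). Here I must justify the uniform decay \eqref{eq:Gp-infty} right up to the cut from both sides, not just on closed subsectors. I would handle this by writing the local Frobenius basis at \(\xi=\infty\): it is a finite sum of \(u^{-\alpha_k}(\log u)^{j}\) times functions analytic in \(1/u\). These terms are uniformly controlled on both lips of the cut, with each boundary value taking its own branch of \(\log u\). The leading exponent \(\alpha_0=p/s\) then gives the needed bound on both lips.
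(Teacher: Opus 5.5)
Your proposal is correct and follows essentially the same route as the paper: Cauchy's formula for $F_p(\eta)=\eta^{-1}\mathcal G_p(1/\eta)$ on a contour around $[0,1/\zeta_c^2]$, vanishing of the outer circle and of the small circle at $t=0$ via \eqref{eq:Fp-zero}, integrability of the jump at both endpoints (resonant expansion at the right end, the $t^{p/s-1}\log$ bound at the left), and evaluation at $u=0$ for the total mass. Your explicit treatment of the right-endpoint circle, of the half-plane swap $F_p(t\pm i0)=t^{-1}\mathcal G_p(1/t\mp i0)$, and of the uniformity of \eqref{eq:Gp-infty} up to both lips of the cut is in fact more careful than the paper's write-up. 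The one assertion of the statement you do not address is that $\nu_p$ is real. The paper gets this in one line from Schwarz reflection, and you should add it: $\mathcal G_p$ has real Taylor coefficients, so $\Disc\mathcal G_p(u)=2i\,\Im\mathcal G_p(u+i0)$, and the density $\frac{1}{\pi t}\Im\mathcal G_p(1/t+i0)$ is real.
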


\begin{proof}
	Set
	\[
		F_p(\eta):=\eta^{-1}\mathcal G_p(1/\eta).
	\]
	Since \(\mathcal G_p\) is holomorphic on
	\(\CC\setminus[\zeta_c^2,\infty)\), the function \(F_p\) is holomorphic on
	\(\CC\setminus[0,1/\zeta_c^2]\). From the Taylor expansion
	\(\mathcal G_p(u)=1+O(u)\) at \(u=0\), one obtains $F_p(\eta)=\eta^{-1}+O(\eta^{-2})$, as $|\eta|\to\infty$. On the other hand, Lemma~\ref{lem:Gp-infty} gives $F_p(\eta)=o(|\eta|^{-1})$,
	as	$\eta\to0$,	away from the cut.

	Let \(\eta\in\CC\setminus[0,1/\zeta_c^2]\), and apply Cauchy’s theorem to a
	keyhole contour around the interval \([0,1/\zeta_c^2]\), with outer radius
	\(R\) and inner radius \(\varepsilon\). The contribution of the outer circle
	vanishes as \(R\to\infty\) because \(F_p(\xi)=O(\xi^{-1})\), while the
	inner-circle contribution vanishes as \(\varepsilon\to0\) by the bound
	\eqref{eq:Fp-zero}. Passing to the limit yields
	\begin{equation}\label{eq:keyhole-Fp}
		F_p(\eta)
		=
		\frac{1}{2\pi i}\int_0^{1/\zeta_c^2}
		\frac{\Disc\,F_p(t)}{t-\eta}\,dt.
	\end{equation}

	Define a signed Borel measure on \([0,1/\zeta_c^2]\) by
	\[
		d\nu_p(t):=\frac{1}{2\pi i}\,\Disc\,F_p(t)\,dt.
	\]
	Its total variation is finite, and this is exactly where the endpoint
	estimates enter. If \(0<t<1/\zeta_c^2\) is close to the right endpoint and
	\(u=1/t\), then \(u>\zeta_c^2\) and Theorem~\ref{thm:resonant-expansion}
	gives
	\[
		\Disc\,\mathcal G_p(u)
		=
		2\pi i\,B(u)\Bigl(1-\frac{u}{\zeta_c^2}\Bigr)^2.
	\]
	Hence
	\[
		\Disc\,F_p(t)
		=
		t^{-1}\Disc\,\mathcal G_p(1/t)
		=
		O\Bigl(\Bigl(\frac1{\zeta_c^2}-t\Bigr)^2\Bigr),
		\qquad t\uparrow \frac1{\zeta_c^2},
	\]
	so the density is integrable at the right endpoint. Near \(t=0\),
	\eqref{eq:Fp-zero} gives
	\[
		\Disc\,F_p(t)=O\bigl(t^{p/s-1}(1+|\log t|)\bigr),
	\]
	which is integrable because \(p/s>0\). Therefore \(|\Disc F_p(t)|\) is
	integrable on \([0,1/\zeta_c^2]\), and the resulting measure \(\nu_p\) has
	bounded total variation. Thus \eqref{eq:keyhole-Fp} becomes
	\[
		F_p(\eta)=\int_0^{1/\zeta_c^2}\frac{d\nu_p(t)}{t-\eta}.
	\]
	Substituting \(\eta=1/u\) and multiplying by \(u\) gives
	\eqref{eq:stieltjes-Gp}.

	Next,
	\[
		\Disc\,F_p(t)=t^{-1}\Disc\,\mathcal G_p(1/t),
	\]
	which yields the jump formula \eqref{eq:mu-jump}. Evaluating
	\eqref{eq:stieltjes-Gp} at \(u=0\) gives
	\[
		\nu_p\bigl([0,1/\zeta_c^2]\bigr)=\mathcal G_p(0)=1.
	\]
	Finally, \(\nu_p\) is real because \(\mathcal G_p\) has real Taylor
	coefficients and therefore satisfies Schwarz reflection on the slit plane.
\end{proof}

\begin{remark}
	\label{rem:stieltjes-general}
	For general \(p\), the representation
	\eqref{eq:stieltjes-Gp}--\eqref{eq:mu-jump} is used only as an analytic
	description of the continued scalar Gram quantities. Positivity is not required at
	this stage. The operator-theoretic interpretation enters only later, in the
	range \(1\le p\le s\), where \(\nu_p\) becomes a positive measure and one can
	pass to the Jacobi/OPRL framework.
\end{remark}

\subsection{Jacobi realization in the positive range}
\label{subsec:jacobi}

For general \(p\), Proposition~\ref{prop:stieltjes} yields only a signed
representing measure for \(\mathcal G_p\). We now restrict to the positive
range \(1\le p\le s\), where positivity is supplied by an external
Hausdorff-moment theorem for Raney numbers. Before passing to the Jacobi/OPRL framework, we state only the part needed here, in the normalization used in this paper.

\begin{lemma}[External Hausdorff moment criterion in the present normalization]
	\label{lem:raney-moment-input}
	Assume \(s\ge2\) and \(0\le p\le s\). Then there exists a probability
	measure \(\nu_{s,p}\) supported on
	\[
		[0,\tau_s],
		\qquad
		\tau_s:=\frac{s^s}{(s-1)^{s-1}}=\zeta_c^{-1},
	\]
	such that
	\[
		R_{s,p}(n)=\int_0^{\tau_s} x^n\,d\nu_{s,p}(x),
		\qquad n\ge0.
	\]
	Equivalently, the rescaled sequence
	\(\{R_{s,p}(n)\zeta_c^n\}_{n\ge0}\) is a Hausdorff moment sequence on
	\([0,1]\).

	This is the only lemma that relies on positivity of the external Raney moments. In what follows, we use only the existence of the positive
	measure and the support endpoint \(\tau_s=\zeta_c^{-1}\); no explicit
	formula for the density is needed. These facts are supplied by
	\cite[Thm.~5]{LiuPego2016} together with
	\cite{LiuPegoCorrigendum2026}. For background on the corresponding Raney
	distributions, see also \cite{ForresterLiu2015}.
\end{lemma}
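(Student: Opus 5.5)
The plan is to obtain the lemma as a direct specialization of the Liu--Pego Hausdorff-moment theorem for Raney numbers, taking care to match their normalization with ours. The work is bookkeeping: match the parametrization, check the endpoint cases, identify the support, and rescale.

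\emph{Step 1: match the parametrization.} Liu--Pego index the Raney numbers by a pair (order, shift) in the form $\frac{r}{np+r}\binom{np+r}{n}$. Comparing with \eqref{eq:raney-closed}, their order is our $s$ and their shift is our $p$. Their Theorem~5, as amended in the corrigendum, states that this sequence is the moment sequence of a probability measure on $[0,\,s^s/(s-1)^{s-1}]$ precisely when the order is at least $1$ and $0\le r\le s$. Under our assumptions $s\ge2$ and $0\le p\le s$, this gives the measure $\nu_{s,p}$.

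\emph{Step 2: the endpoint cases.} For $p=0$, Remark~\ref{rem:raney-negative} gives $R_{s,0}(n)=\delta_{n,0}$, so I take $\nu_{s,0}=\delta_0$ and no appeal to the theorem is needed. For $1\le p\le s$, the identity $R_{s,p}(0)=1$ from Proposition~\ref{prop:raney-lagrange} confirms total mass $1$, so the representing measure is a probability measure.

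\emph{Step 3: identify the support endpoint.} A positive measure on a compact subset of $[0,\infty)$ has moments satisfying $\lim_n R(n)^{1/n}=\max\operatorname{supp}\nu$. By the asymptotics \eqref{eq:Raney-asymp}, this limit equals $\zeta_c^{-1}$. So the right endpoint of the support is exactly $\tau_s=\zeta_c^{-1}=s^s/(s-1)^{s-1}$, consistent with Lemma~\ref{lem:U-critical}, and not just some larger interval.

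\emph{Step 4: rescale.} Push $\nu_{s,p}$ forward under $x\mapsto \zeta_c x$. This turns $\{R_{s,p}(n)\zeta_c^n\}$ into the moment sequence of a probability measure on $[0,1]$, which is the stated equivalence. The representing measure is unique, since Hausdorff moment problems are determinate, so this measure is well defined. I would also record the equivalent generating-function form: the moment identity says $\sum_n R_{s,p}(n)t^n=\int d\nu_{s,p}(x)/(1-tx)$, i.e.\ $U^p$ is a Stieltjes function in $t=\zeta x^s$.

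\emph{Main obstacle.} I expect the only delicate point to be the admissible range of the shift parameter, which is exactly what the corrigendum \cite{LiuPegoCorrigendum2026} addresses, together with confirming that its order/shift convention coincides with \eqref{eq:raney-closed} and not a transposed one. To rule out a convention mismatch, I would check the Fuss--Catalan case $p=1$ against the classical Fuss--Catalan (Raney) distribution described in \cite{ForresterLiu2015}.
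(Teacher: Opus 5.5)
Your proposal is correct and follows the paper's proof: you identify $R_{s,p}(n)$ with Liu--Pego's $A_n(s,p)$, invoke their Theorem~5 (as corrected) with $(p,r)=(s,p)$, and rescale by $\tau_s=\zeta_c^{-1}$. Your two additions are harmless refinements: treating $p=0$ directly via $\delta_0$ fills a small gap, since the paper cites Proposition~\ref{prop:raney-lagrange}, which is stated only for $p\ge1$; identifying the support endpoint through $R_{s,p}(n)^{1/n}\to\zeta_c^{-1}$ makes explicit the endpoint the paper takes from the citation.
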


\begin{proof}
	By Proposition~\ref{prop:raney-lagrange},
	\[
		R_{s,p}(n)=\frac{p}{sn+p}\binom{sn+p}{n}.
	\]
	In the notation of \cite{LiuPego2016}, this is exactly the Raney sequence
	\(A_n(s,p)\). Therefore \cite[Thm.~5]{LiuPego2016}, as corrected in
	\cite{LiuPegoCorrigendum2026}, applies with the parameter pair
	\((p,r)=(s,p)\): the hypotheses are satisfied because \(s\ge1\) and
	\(0\le p\le s\). It follows that \(\{R_{s,p}(n)\}_{n\ge0}\) is represented
	by a probability measure supported on \([0,\tau_s]\), with
	\[
		\tau_s=\frac{s^s}{(s-1)^{s-1}}.
	\]
	In the normalization of the present paper,
	\(\tau_s=\zeta_c^{-1}\), which gives exactly the stated moment
	representation. After the rescaling \(x=\tau_s y\), this is equivalent to
	the Hausdorff moment statement for
	\(\{R_{s,p}(n)\zeta_c^n\}_{n\ge0}\) on \([0,1]\).
\end{proof}

\begin{proposition}[Positivity for \(1\le p\le s\)]
	\label{prop:mu-positive}
	Assume \(1\le p\le s\). Then the measure \(\nu_p\) in
	Proposition~\ref{prop:stieltjes} can be chosen to be a positive probability
	measure supported on \([0,1/\zeta_c^{2}]\). Equivalently, after the
	rescaling \(t\mapsto t\,\zeta_c^{2}\), the sequence
	\[
		\{R_{s,p}(n)^2\,\zeta_c^{2n}\}_{n\ge0}
	\]
	is a Hausdorff moment sequence.
\end{proposition}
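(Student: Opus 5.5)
The plan is to obtain positivity by a product-measure argument based on Lemma~\ref{lem:raney-moment-input}, and then to match the result with the signed measure of Proposition~\ref{prop:stieltjes} through uniqueness of the Stieltjes transform. Fix \(1\le p\le s\). Lemma~\ref{lem:raney-moment-input} supplies a probability measure \(\nu_{s,p}\) on \([0,\tau_s]\), with \(\tau_s=\zeta_c^{-1}\), such that \(R_{s,p}(n)=\int x^n\,d\nu_{s,p}(x)\). Squaring this identity gives
\[
	R_{s,p}(n)^2=\iint (xy)^n\,d\nu_{s,p}(x)\,d\nu_{s,p}(y).
\]
Let \(\lambda_p\) be the pushforward of the product measure \(\nu_{s,p}\otimes\nu_{s,p}\) under the map \((x,y)\mapsto xy\). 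Then \(\lambda_p\) is a probability measure supported on \([0,\tau_s^2]=[0,1/\zeta_c^2]\), and its moments are exactly \(R_{s,p}(n)^2\). Rescaling by \(t\mapsto t\,\zeta_c^2\) turns this into the Hausdorff moment statement for \(\{R_{s,p}(n)^2\zeta_c^{2n}\}\) on \([0,1]\).

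Next, I would identify \(\lambda_p\) with \(\nu_p\). For \(|u|<\zeta_c^2\), expand \((1-ut)^{-1}\) as a geometric series; this is uniformly convergent on the support, since \(|ut|\le |u|/\zeta_c^2<1\). Integrating term by term gives
\[
	\int\frac{d\lambda_p(t)}{1-ut}=\sum_m R_{s,p}(m)^2u^m=\mathcal G_p(u).
\]
The left side is holomorphic on \(\CC\setminus[\zeta_c^2,\infty)\), because the support of \(\lambda_p\) lies in \([0,1/\zeta_c^2]\). By uniqueness of analytic continuation, \(\lambda_p\) therefore represents the continuation from Corollary~\ref{cor:Gp-continuation} on the whole slit plane.

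Finally, both \(\nu_p\) and \(\lambda_p\) are finite signed measures on the compact interval \([0,1/\zeta_c^2]\) with the same moments, namely the Taylor coefficients of \(\mathcal G_p\) at \(0\). Polynomials are dense in \(C([0,1/\zeta_c^2])\), so the Riesz representation theorem forces \(\nu_p=\lambda_p\). Alternatively, the Stieltjes inversion formula recovers the measure from the boundary values. Either way, \(\nu_p\) is a positive probability measure, since \(\nu_p([0,1/\zeta_c^2])=\mathcal G_p(0)=1\).

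The main obstacle is the external input, Lemma~\ref{lem:raney-moment-input}, which is valid only for \(p\le s\). Once it is granted, everything else is routine. The one point needing care is the support claim: \(\operatorname{supp}\lambda_p\subset[0,\tau_s^2]\) holds because \(x,y\in[0,\tau_s]\), and this is what keeps the representation consistent with the cut starting at \(\zeta_c^2\).
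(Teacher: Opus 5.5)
Your proposal is correct and follows the paper's proof step for step. The steps are the pushforward of $\nu_{s,p}\otimes\nu_{s,p}$ under $(x,y)\mapsto xy$, the geometric-series identification of its Stieltjes transform with $\mathcal G_p$ on $|u|<\zeta_c^2$ extended to the slit plane by the identity theorem, and a uniqueness argument giving $\nu_p=\lambda_p$. The only difference is in phrasing the last step: you use moment determinacy on a compact interval (Weierstrass plus Riesz), while the paper cites uniqueness of the Cauchy--Stieltjes transform; these are equivalent because the Taylor coefficients of the transform at $u=0$ are exactly the moments.
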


\begin{proof}
	By Lemma~\ref{lem:raney-moment-input}, for \(1\le p\le s\) there exists a
	probability measure \(\nu_{s,p}\) supported on
	\([0,\tau_s]=[0,\zeta_c^{-1}]\) such that
	\[
		R_{s,p}(n)=\int_0^{\tau_s}x^n\,d\nu_{s,p}(x),
		\qquad n\ge0.
	\]

	Let \(\widetilde\nu_p\) be the pushforward of
	\(\nu_{s,p}\otimes\nu_{s,p}\) under the multiplication map
	\[
		(x,y)\mapsto xy.
	\]
	Then \(\widetilde\nu_p\) is a probability measure supported on
	\([0,\tau_s^2]=[0,\zeta_c^{-2}]\), and for every \(n\ge0\),
	\[
		\int_0^{1/\zeta_c^2} t^n\,d\widetilde\nu_p(t)
		=
		\iint (xy)^n\,d\nu_{s,p}(x)\,d\nu_{s,p}(y)
		=
		\biggl(\int_0^{\tau_s}x^n\,d\nu_{s,p}(x)\biggr)^2
		=
		R_{s,p}(n)^2.
	\]
	Thus the rescaled sequence
	\(\{R_{s,p}(n)^2\zeta_c^{2n}\}_{n\ge0}\) is a Hausdorff moment sequence on
	\([0,1]\). Consequently,
	\[
		\int_0^{1/\zeta_c^2}\frac{d\widetilde\nu_p(t)}{1-ut}
		=
		\sum_{n\ge0}R_{s,p}(n)^2\,u^n
		=
		\mathcal G_p(u),
		\qquad |u|<\zeta_c^2.
	\]
	Since both sides are holomorphic on
	\(\CC\setminus[\zeta_c^2,\infty)\), the identity theorem extends this
	representation to the whole slit plane.

	On the other hand, Proposition~\ref{prop:stieltjes} already provides a
	finite signed measure \(\nu_p\) on \([0,1/\zeta_c^2]\) with the same
	Stieltjes transform:
	\[
		\mathcal G_p(u)
		=
		\int_0^{1/\zeta_c^2}\frac{d\nu_p(t)}{1-ut},
		\qquad
		u\in\CC\setminus[\zeta_c^2,\infty).
	\]
	Hence
	\[
		\int_0^{1/\zeta_c^2}\frac{d(\nu_p-\widetilde\nu_p)(t)}{1-ut}=0
	\]
	on the slit plane. By uniqueness of the Cauchy--Stieltjes transform of a
	finite compactly supported measure, one has
	\(\nu_p=\widetilde\nu_p\). Therefore the measure in
	Proposition~\ref{prop:stieltjes} is in fact positive, and being a
	probability measure it has total mass \(1\).
\end{proof}

Once positivity is available, the standard orthogonal-polynomial machinery
produces a Jacobi operator with spectral measure \(\nu_p\).

\begin{definition}[Jacobi operator associated with \(\mathcal G_p\)]
	\label{def:jacobi}
	Assume \(1\le p\le s\), and let \(\{P_n^{(p)}\}_{n\ge0}\) be the monic
	orthogonal polynomials with respect to the positive measure \(\nu_p\) from
	Proposition~\ref{prop:mu-positive}. Then there exist coefficients
	\[
		a_n^{(p)}>0,
		\qquad
		b_n^{(p)}\in\R,
	\]
	such that
	\begin{equation}\label{eq:three-term}
		tP_n^{(p)}(t)
		=
		P_{n+1}^{(p)}(t)+b_n^{(p)}P_n^{(p)}(t)+(a_n^{(p)})^2P_{n-1}^{(p)}(t),
	\end{equation}
	with \(P_{-1}^{(p)}\equiv0\). The associated Jacobi operator \(J_p\) is the
	bounded self-adjoint operator on \(\ell^2(\NN_0)\) with tridiagonal matrix
	\[
		J_p=
		\begin{pmatrix}
			b_0^{(p)} & a_1^{(p)} & 0         & \cdots \\
			a_1^{(p)} & b_1^{(p)} & a_2^{(p)} & \ddots \\
			0         & a_2^{(p)} & b_2^{(p)} & \ddots \\
			\vdots    & \ddots    & \ddots    & \ddots
		\end{pmatrix}.
	\]
	Since \(\textrm{supp}(\nu_p)\subset[0,1/\zeta_c^2]\), one has
	\[
		\sigma(J_p)\subset[0,1/\zeta_c^2].
	\]
\end{definition}

\begin{theorem}[Weyl function identity]
	\label{thm:weyl}
	Assume \(1\le p\le s\). Then \(\mathcal G_p\) is the Weyl \(m\)-function of
	\(J_p\) in the reciprocal spectral parameter:
	\begin{equation}\label{eq:weyl}
		\mathcal G_p(u)
		=
		\langle e_0,(I-uJ_p)^{-1}e_0\rangle,
		\qquad
		u\in\CC\setminus[\zeta_c^2,\infty),
	\end{equation}
	where \(e_0=(1,0,0,\dots)^T\). Equivalently, for \(u\neq0\), the formula
	holds whenever \(u^{-1}\in\rho(J_p)\).
\end{theorem}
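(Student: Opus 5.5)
The identity follows from the spectral theorem for $J_p$, the fact that $\nu_p$ is the spectral measure of $e_0$, and uniqueness of analytic continuation. The ingredients are the positive Stieltjes representation of Proposition~\ref{prop:mu-positive} and the standard orthogonal-polynomial fact that the spectral measure of $J_p$ at the cyclic vector $e_0$ is exactly the orthogonality measure.

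\textbf{Step 1: spectral measure of $e_0$.} Since $\nu_p$ is a probability measure with compact support $\subset[0,1/\zeta_c^2]$, the moment problem is determinate. The map $P^{(p)}_n/\|P^{(p)}_n\|_{L^2(\nu_p)}\mapsto e_n$ is therefore a unitary $\mathcal U:L^2(\nu_p)\to\ell^2(\NN_0)$, since polynomials are dense in $L^2(\nu_p)$ for compactly supported measures. The three-term recurrence \eqref{eq:three-term}, rewritten for the orthonormal polynomials, says that $\mathcal U$ conjugates multiplication by $t$ on $L^2(\nu_p)$ to $J_p$. The constant function $1$, which has norm $1$ because $\nu_p$ is a probability measure, is mapped to $e_0$. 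Hence for every bounded Borel $g$,
\[
\langle e_0,g(J_p)e_0\rangle=\int g(t)\,d\nu_p(t).
\]
In particular $\sigma(J_p)=\operatorname{supp}\nu_p\subset[0,1/\zeta_c^2]$.

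\textbf{Step 2: the resolvent identity.} Fix $u\in\CC\setminus[\zeta_c^2,\infty)$. For $t\in[0,1/\zeta_c^2]$ we have $ut\in[1,\infty)$ only if $u\ge\zeta_c^2$ is real. So $1-ut\ne0$ on the support, and $g_u(t):=(1-ut)^{-1}$ is bounded and continuous there. For $u\neq0$ the same observation shows $u^{-1}\in\rho(J_p)$. By Step 1,
\[
\langle e_0,(I-uJ_p)^{-1}e_0\rangle=\int_0^{1/\zeta_c^2}\frac{d\nu_p(t)}{1-ut}.
\]
This equals $\mathcal G_p(u)$ by Proposition~\ref{prop:mu-positive}, which extended \eqref{eq:stieltjes-Gp} with positive $\nu_p$ to the whole slit plane. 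This proves \eqref{eq:weyl}.

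\textbf{Step 3: equivalent formulation.} For a general $u\neq0$ with $u^{-1}\in\rho(J_p)$, we have $u^{-1}\notin\operatorname{supp}\nu_p$. The same computation then shows that the resolvent side is the analytic Stieltjes transform. It coincides with the continuation of $\mathcal G_p$ wherever the latter is defined. As a cross-check on $|u|<\zeta_c^2$, one can expand in the Neumann series $\sum_n u^n\langle e_0,J_p^ne_0\rangle=\sum_n u^n\int t^n\,d\nu_p=\sum_n R_{s,p}(n)^2u^n$.

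\textbf{Main obstacle.} The only delicate point is the unitary equivalence in Step 1. It requires that $e_0$ be cyclic and that $J_p$ be the closure of the tridiagonal matrix, rather than some other self-adjoint extension. Both follow from boundedness of the Jacobi coefficients, since $|b_n|\le1/\zeta_c^2$ and $0<a_n\le1/\zeta_c^2$ for a measure supported in $[0,1/\zeta_c^2]$. If $\nu_p$ had finite support the orthogonal polynomials would terminate. This cannot happen, because $\nu_p$ has a nonzero density near the right endpoint by Theorem~\ref{thm:edge-density} and \eqref{eq:mu-jump}.
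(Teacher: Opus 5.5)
Your proposal is correct and follows the paper's argument. You identify $\nu_p$ as the spectral measure of $J_p$ at $e_0$, apply the spectral theorem to $(1-ut)^{-1}$, and invoke the positive Stieltjes representation of Proposition~\ref{prop:mu-positive} on the whole slit plane; the only difference is that you spell out the standard orthogonal-polynomial facts the paper delegates to Akhiezer (unitary equivalence with multiplication by $t$, cyclicity of $e_0$, and infinite support of $\nu_p$).
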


\begin{proof}
	By the spectral theorem,
	\[
		\langle e_0,(I-uJ_p)^{-1}e_0\rangle
		=
		\int_0^{1/\zeta_c^2}\frac{d\nu_p(t)}{1-ut},
		\qquad
		u\in\CC\setminus[\zeta_c^2,\infty),
	\]
	where \(\nu_p\) is the spectral measure of \(J_p\) at \(e_0\). By
	Proposition~\ref{prop:mu-positive}, this is precisely the positive
	representing measure of \(\mathcal G_p\), and the right-hand side equals
	\(\mathcal G_p(u)\) by \eqref{eq:stieltjes-Gp}. This proves
	\eqref{eq:weyl}; see, for example, \cite[Ch.~III]{Akhiezer1965}.
\end{proof}

\begin{theorem}[Stieltjes--Perron inversion]
	\label{thm:spectral-density}
	Assume \(1\le p\le s\). Then the positive measure \(\nu_p\) from
	Proposition~\ref{prop:mu-positive} is absolutely continuous on
	\((0,1/\zeta_c^2)\), and its density is
	\begin{equation}\label{eq:perron}
		\varrho_p(t)
		=
		\frac{1}{\pi t}\,\Im\,\mathcal G_p\Bigl(\frac{1}{t}+i0\Bigr)
		=
		\frac{1}{2\pi i\,t}\,\Disc\,\mathcal G_p\Bigl(\frac{1}{t}\Bigr),
		\qquad
		0<t<1/\zeta_c^2.
	\end{equation}
\end{theorem}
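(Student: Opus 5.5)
The plan is to combine the positive Stieltjes representation of Proposition~\ref{prop:mu-positive} with the classical Stieltjes--Perron inversion formula. By Proposition~\ref{prop:mu-positive} and Theorem~\ref{thm:weyl}, for \(u\in\CC\setminus[\zeta_c^2,\infty)\) we have
\[
	\mathcal G_p(u)=\int_0^{1/\zeta_c^2}\frac{d\nu_p(t)}{1-ut}.
\]
Writing \(\eta=1/u\) turns this into \(F_p(\eta)=\eta^{-1}\mathcal G_p(1/\eta)=\int d\nu_p(t)/(\eta-t)\), up to the sign convention already fixed in the proof of Proposition~\ref{prop:stieltjes}. Thus \(-F_p\) is the Cauchy transform of the positive finite measure \(\nu_p\). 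The Perron inversion formula then gives, for every continuity interval,
\[
	\nu_p\bigl((a,b)\bigr)=\lim_{\varepsilon\downarrow0}\frac{1}{\pi}\int_a^b \Im\bigl(\ldots\bigr)(t\mp i\varepsilon)\,dt,
\]
with the boundary values evaluated from the appropriate side.

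Next I will show that \(\nu_p\) has no singular part on \((0,1/\zeta_c^2)\). Take a point \(t_0\) in this open interval, so that \(u_0=1/t_0\in(\zeta_c^2,\infty)\). By Corollary~\ref{cor:Gp-continuation} and Lemma~\ref{lem:reduced-hypergeom-data}, \(u_0\) is not a singular point of the reduced hypergeometric equation. The upper and lower boundary values of \(\mathcal G_p\) therefore extend holomorphically across a neighbourhood of \(u_0\), as continuations of the solution from either side of the cut. Consequently \(F_p(t\pm i\varepsilon)\) converges locally uniformly to continuous limits, and the Perron integrals converge to \(\int_a^b\) of a continuous density for every \([a,b]\subset(0,1/\zeta_c^2)\). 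This shows that \(\nu_p\) is absolutely continuous there, with density \(\frac{1}{2\pi i}\Disc F_p(t)\). This density agrees with \eqref{eq:mu-jump}, and indeed Proposition~\ref{prop:stieltjes} already identified it; positivity only makes it the density of the positive measure of Proposition~\ref{prop:mu-positive}.

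Finally, I will convert the density to the form \eqref{eq:perron}. Because \(\mathcal G_p\) has real Taylor coefficients, Schwarz reflection gives \(\mathcal G_p(\bar u)=\overline{\mathcal G_p(u)}\). Hence for \(u>\zeta_c^2\) we have \(\Disc\mathcal G_p(u)=2i\,\Im\mathcal G_p(u+i0)\). Substituting \(u=1/t\) and using \(\Disc F_p(t)=t^{-1}\Disc\mathcal G_p(1/t)\) yields
\[
	\varrho_p(t)=\frac{1}{2\pi i\,t}\Disc\mathcal G_p(1/t)=\frac{1}{\pi t}\Im\mathcal G_p(1/t+i0).
\]

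The main obstacle is bookkeeping the orientation under \(u=1/t\): the map sends the upper half plane to the lower half plane, so the sign of the Perron limit must be checked against the convention \(\Disc f=f(u+i0)-f(u-i0)\). I would settle this by checking the formula on the total mass, or on the local expansion near the edge, where Theorem~\ref{thm:resonant-expansion} and \(B(\zeta_c^2)<0\) force \(\varrho_p\ge0\). A secondary point is excluding atoms at interior points, which the continuity of the boundary values handles. The endpoints \(0\) and \(1/\zeta_c^2\) carry no mass because the density is integrable there (shown in the proof of Proposition~\ref{prop:stieltjes}), but excluding atoms at them is not needed for the statement on the open interval.
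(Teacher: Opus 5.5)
Your proof is correct, but it gets absolute continuity by a different mechanism than the paper. The paper's proof does not pass through a Perron limit. Proposition~\ref{prop:stieltjes} already produces a representing measure \emph{defined} by the jump density via the keyhole contour, so absolute continuity holds by construction. The theorem's proof then identifies this measure with the positive one of Proposition~\ref{prop:mu-positive} by uniqueness of the Stieltjes transform, and rewrites the jump as an imaginary part using real coefficients. You instead start from the positive measure and recover its density from boundary values. Your key input is that both lateral branches of \(\mathcal G_p\) continue holomorphically across every interior point of the cut; this is Lemma~\ref{lem:interior-cut-ordinary}, which you reprove from the Fuchsian structure. Your route does not depend on the keyhole and endpoint estimates of Proposition~\ref{prop:stieltjes}, and it gives a real-analytic density and the absence of interior atoms directly. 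The paper's route is shorter given the earlier proposition, and it also excludes endpoint atoms, because the signed measure there is an \(L^1\) density on the closed interval. Your own reason for no endpoint mass (``because the density is integrable there'') is not sufficient by itself; it is the uniqueness identification that gives it.

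On the orientation issue you flag, two of your intermediate identities are each off by a sign. Since \(\eta=t+i0\) corresponds to \(u=1/t-i0\), one has \(\Disc F_p(t)=-t^{-1}\Disc\mathcal G_p(1/t)\). Also, for \(F_p(\eta)=\int d\nu_p(t')/(\eta-t')\), Perron inversion gives the density \(-\frac{1}{2\pi i}\Disc F_p(t)\). The two slips cancel (a compensating pair of the same kind sits in the proof of Proposition~\ref{prop:stieltjes}), so \eqref{eq:perron} is correct.

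Your proposed edge check needs care. It works only with \(\Disc\log(1-u/\zeta_c^2)=-2\pi i\), because the upper side in \(u\) is the lower side in \(w=1-u/\zeta_c^2\); this gives density \(-B(\zeta_c^2)w^2/t>0\). With the \(+2\pi i\) written in the proof of Theorem~\ref{thm:edge-density}, the check would point the wrong way. The cleanest fix is to bypass \(F_p\) entirely. For \(u=1/t+i\varepsilon\),
\[
\Im\frac{1}{1-ut'}=\frac{\varepsilon\,t'\,t^2}{(t-t')^2+\varepsilon^2t'^2t^2}
\]
is a positive approximate identity of total mass \(\pi t\) concentrating at \(t'=t\). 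This yields \(\varrho_p(t)=\frac{1}{\pi t}\Im\mathcal G_p(1/t+i0)\) directly, with no orientation bookkeeping.
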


\begin{proof}
	Proposition~\ref{prop:stieltjes} constructs a representing measure
	\(\nu_p\) that is absolutely continuous on the open interval
	\((0,1/\zeta_c^2)\), with density given by the jump formula
	\eqref{eq:mu-jump}. In the range \(1\le p\le s\),
	Proposition~\ref{prop:mu-positive} shows that the same function
	\(\mathcal G_p\) also admits a positive representing measure. By uniqueness
	of the Stieltjes transform of a finite compactly supported measure, this
	positive measure coincides with the \(\nu_p\) from
	Proposition~\ref{prop:stieltjes}. Hence \(\nu_p\) is positive and its
	density in the spectral variable \(t\) can therefore be written in the Stieltjes--Perron form displayed in
	\eqref{eq:perron}; see, for example, \cite[Ch.~III]{Akhiezer1965}.
\end{proof}

\begin{figure}[t]
	\centering
	\includegraphics[width=\textwidth]{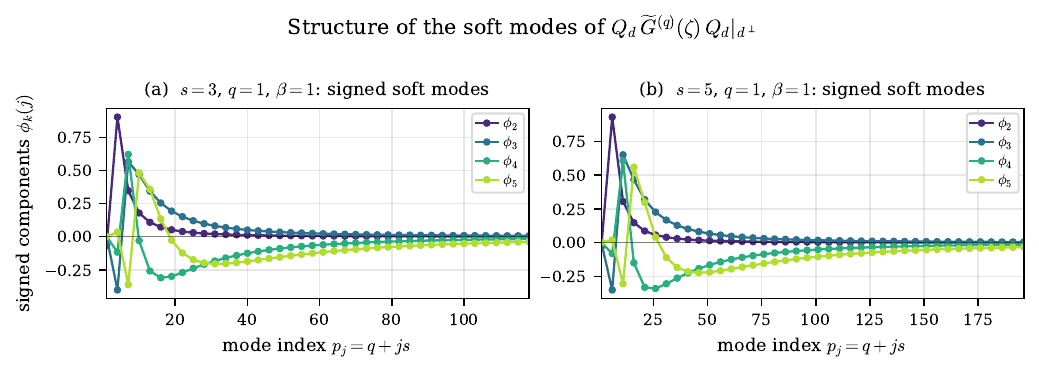}
	\caption{
	Structure of the first soft modes of
	\(\widetilde C^{(q)}_{*,\perp}\) for \(s=3,5\), with \(q=1\),
	\(\beta=1\), and \(N=40\).
	The curves show the signed components of the eigenvectors
	\(\phi_2^{(q)},\phi_3^{(q)},\phi_4^{(q)},\phi_5^{(q)}\),
	corresponding to the soft eigenvalues
	\(\mu_{2,*}^{(q)},\mu_{3,*}^{(q)},\mu_{4,*}^{(q)},\mu_{5,*}^{(q)}\).
	As the mode index increases, the eigenvectors become more oscillatory
	while remaining concentrated near low values of the lattice index
	\(p_j=q+js\).
}
	\label{fig:soft-structure}
\end{figure}

\begin{remark}[Nodal structure of the soft eigenvectors]
	\label{rem:nodal}
	The Jacobi realization suggests a further structural property of the
	compact remainder $\widetilde C^{(q)}_*$ introduced in
	Theorem~\ref{thm:rankone-decomp}.  Let
	$\phi_2^{(q)},\phi_3^{(q)},\ldots$ denote the eigenvectors of
	$\widetilde C^{(q)}_{*,\perp}$, ordered by decreasing eigenvalue
	$\mu_{2,*}^{(q)}\ge\mu_{3,*}^{(q)}\ge\cdots$, and view each $\phi_k^{(q)}$
	as a function on the lattice $\{p_j=q+js\}_{j\ge0}$.  Numerically,
	$\phi_k^{(q)}$ exhibits exactly $k-1$ sign changes in the index~$j$
	(Figure~\ref{fig:soft-structure}), a pattern stable across
	all values of $s$, $q$, and $\beta$ tested.  This oscillation count
	is the hallmark of classical Sturm oscillation for self-adjoint Jacobi
	matrices \cite[Ch.~III]{Akhiezer1965}.  For the range $1\le p\le s$,
	each scalar factor $\mathcal G_{p_j}$ admits such a Jacobi realization by
	Theorem~\ref{thm:weyl}. The observed nodal property of the full
	matrix $\widetilde C^{(q)}_*$ is therefore consistent with the scalar Jacobi framework. It is not, however, a direct corollary of that framework, since the Gram entries combine contributions from different Jacobi operators
	$J_{p_j}$.  We leave a rigorous nodal theorem for
	$\widetilde C^{(q)}_*$ as an open problem.
\end{remark}

The results of this section provide the analytic framework for the
intermediate regime \(\zeta_c<\zeta<\zeta_{\mathrm{univ}}\). The generating
functions \(\mathcal G_p(u)\) extend to the slit plane via their
hypergeometric structure, have a resonant singularity of parametric excess
\(\gamma=2\) at the branch point, and admit a nonvanishing edge density; for
\(1\le p\le s\), \(\mathcal G_p\) also has a Jacobi--Weyl realization.
Thus the scalar spectral quantities persist beyond the loss of the weighted compact
operator picture. Geometric singularity on the domain boundary occurs only at
\(\zeta=\zeta_{\mathrm{univ}}\): for \(s\ge3\) this is cusp formation,
whereas for \(s=2\) the critical map degenerates to the classical Joukowski
segment. We treat this geometric threshold next in
Section~\ref{sec:univalence}.

\FloatBarrier


\section{The geometric threshold \texorpdfstring{$\zeta_{\mathrm{univ}}$}{zeta-univ}}
\label{sec:univalence}


We now return to the geometric side of the problem. For the polynomial map
\[
	f(w)=rw+aw^{1-s},
	\qquad
	\zeta:=\frac{a}{r}>0,
\]
the parameter
\[
	\zeta_{\mathrm{univ}}=\frac{1}{s-1}
\]
marks the loss of univalence of the exterior conformal map. This threshold
is independent of the weighted spectral theory developed in
Section~\ref{sec:regime-I} and of the scalar continuation theory developed in
Section~\ref{sec:regime-2}. The purpose of the present section is to compare
these two notions of criticality.

We first characterize the geometric transition itself. Below
\(\zeta_{\mathrm{univ}}\) the boundary is a smooth Jordan curve. At
\(\zeta_{\mathrm{univ}}\), the critical geometry depends on \(s\): for
\(s\ge3\) the boundary develops semicubical cusps, whereas for \(s=2\) the
Joukowski trace degenerates to a line segment. Above
\(\zeta_{\mathrm{univ}}\), the map is no longer injective. For \(s\ge3\) the
boundary self-intersects, while for \(s=2\) the unit-circle trace remains an
ellipse although the exterior map ceases to be one-to-one. We then show that
\[
	\zeta_c<\zeta_{\mathrm{univ}},
\]
so the logarithmic spectral instability from
Theorems~\ref{thm:rankone-decomp} and
\ref{thm:spectral-asymptotics} occurs while the conformal map is still
univalent. Finally, we prove that the analytically continued scalar Gram quantities
remain finite at \(u=\zeta_{\mathrm{univ}}^2\). Thus the spectral
singularity at \(\zeta_c\) and the geometric singularity at
\(\zeta_{\mathrm{univ}}\) are genuinely distinct.

\subsection{Geometric threshold}
\label{subsec:univ-geom}

For the comparison with the analytic threshold \(\zeta_c\), we only need the
location of the geometric threshold at which the exterior map ceases to be
univalent. The finer local description of the boundary singularity at the
threshold is classical.

\begin{proposition}[Geometric threshold]
	\label{prop:univ-threshold}
	Let
	\[
		f(w)=rw+aw^{1-s}=r\bigl(w+\zeta w^{1-s}\bigr),
		\qquad s\ge2,\qquad \zeta:=a/r>0.
	\]
	Then the univalence threshold on the exterior disk is
	\[
		\zeta_{\mathrm{univ}}=\frac{1}{s-1}.
	\]
	More precisely:
	\begin{enumerate}
		\item if \(\zeta<\zeta_{\mathrm{univ}}\), then \(f\) is univalent on
		      \(\{|w|>1\}\) and extends injectively to \(|w|=1\);
		\item if \(\zeta>\zeta_{\mathrm{univ}}\), then \(f\) is not univalent on
		      \(\{|w|>1\}\);
		\item if \(\zeta=\zeta_{\mathrm{univ}}\), then
		      \[
			      f'(w)=r\bigl(1-(s-1)\zeta w^{-s}\bigr)
		      \]
		      vanishes exactly at the \(s\) points \(w^s=1\) on the unit circle.
	\end{enumerate}
\end{proposition}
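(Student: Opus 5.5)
Write \(g(w):=w+\zeta w^{1-s}\), so \(f=rg\), and work in the variable \(v=1/w\) on the punctured unit disk, where \(h(v):=1/g(1/v)=v/(1+\zeta v^{s})\). Part (3) is immediate: \(g'(w)=1-(s-1)\zeta w^{-s}\) vanishes exactly when \(w^s=(s-1)\zeta\), which for \(\zeta=1/(s-1)\) gives \(w^s=1\).

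For part (1) I would prove injectivity on the closed set \(|w|\ge1\) directly. Suppose \(g(w_1)=g(w_2)\) with \(w_1\neq w_2\) and \(|w_1|,|w_2|\ge1\). Then
\[
w_1-w_2=\zeta\bigl(w_2^{1-s}-w_1^{1-s}\bigr)=\zeta\,(w_1-w_2)\,\frac{\sum_{k=0}^{s-2}w_1^{k}w_2^{s-2-k}}{w_1^{s-1}w_2^{s-1}},
\]
using \(w_1^{s-1}-w_2^{s-1}=(w_1-w_2)\sum_k w_1^kw_2^{s-2-k}\). Dividing by \(w_1-w_2\) gives \(1=\zeta\sum_{k=0}^{s-2}w_1^{k-s+1}w_2^{-1-k}\). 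Each of the \(s-1\) terms has modulus at most \(1\) because all exponents are negative and \(|w_i|\ge1\). Hence the right side has modulus at most \((s-1)\zeta<1\), a contradiction. This gives injectivity on \(\{|w|\ge1\}\) in one stroke, including on the boundary circle. Together with \(g(w)\sim w\) at infinity, this makes \(f\) the normalized exterior conformal map onto the complement of the Jordan domain bounded by \(f(\{|w|=1\})\).

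For part (2), when \(\zeta>1/(s-1)\), the critical points \(w^s=(s-1)\zeta\) have modulus \(((s-1)\zeta)^{1/s}>1\), so they lie in \(\{|w|>1\}\). A holomorphic function with a vanishing derivative at an interior point is locally \(k\)-to-one with \(k\ge2\) there, hence not injective. So \(f\) is not univalent on \(\{|w|>1\}\).

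The only delicate point is the boundary injectivity claim in (1). The argument above handles it automatically because the estimate \((s-1)\zeta<1\) holds uniformly on \(|w_i|\ge1\), so I do not expect a real obstacle. The case \(s=2\) is consistent with this: the sum reduces to the single term \(1/(w_1w_2)\), giving the Joukowski-type criterion \(|w_1w_2|=\zeta<1\).
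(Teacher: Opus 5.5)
Your proof is correct, and for part (1) it takes a different and more direct route than the paper.

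The paper argues in two stages. It first shows that the boundary trace is injective, from $|\sin\delta|=\zeta\,|\sin((s-1)\delta)|\le(s-1)\zeta\,|\sin\delta|$. It then obtains univalence on $\{|w|>1\}$ from the non-vanishing of $f'$ on $|w|\ge1$, combined with a properness/covering-degree argument normalized at infinity.

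Your identity $1=\zeta\sum_{k=0}^{s-2}w_1^{k-s+1}w_2^{-1-k}$ is exactly the off-circle extension of the paper's boundary estimate: for $|w_1|=|w_2|=1$ the modulus of that sum equals $|\sin((s-1)\delta)/\sin\delta|$. Because the bound $(s-1)\zeta<1$ holds uniformly on $|w_i|\ge1$, you get injectivity on the whole closed exterior disk at once, with no topological input. This sidesteps the properness-onto-the-image step, which the paper only sketches: it invokes behaviour at infinity but not near the unit circle.

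Your argument also gives a bonus. Each term has modulus strictly less than $1$ when $|w_i|>1$, so you also get univalence on the open exterior disk at $\zeta=\zeta_{\mathrm{univ}}$ itself.

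What the paper's route records along the way, and yours does not need, is that $f'\neq0$ on $|w|\ge1$ for $\zeta<\zeta_{\mathrm{univ}}$. This means the boundary curve is regularly parametrized, which is the smoothness asserted in Proposition~\ref{prop:separation}. It follows at once from your computation in (3), since the critical points satisfy $|w|^s=(s-1)\zeta<1$.

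Parts (2) and (3) coincide with the paper's argument.
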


\begin{proof}
	For \(w=e^{i\theta}\), write
	\[
		z(\theta):=f(e^{i\theta})
		=
		r\bigl(e^{i\theta}+\zeta e^{-i(s-1)\theta}\bigr).
	\]
	Suppose \(z(\theta)=z(\phi)\). Then $
		e^{i\theta}-e^{i\phi}
		=
		\zeta(e^{-i(s-1)\phi}-e^{-i(s-1)\theta})$.
	Taking absolute values and setting \(\delta=(\theta-\phi)/2\), we obtain
	\[
		|\sin\delta|
		=
		\zeta\,|\sin((s-1)\delta)|
		\le
		(s-1)\zeta\,|\sin\delta|.
	\]
	Hence, if \((s-1)\zeta<1\), then \(\sin\delta=0\), so
	\(\theta\equiv\phi\pmod{2\pi}\). Thus the boundary trace is injective.
	Moreover, $
		f'(w)=r\bigl(1-(s-1)\zeta w^{-s}\bigr)$,
	so for \((s-1)\zeta<1\) all critical points satisfy
	\[
		|w|^s=(s-1)\zeta<1.
	\]
	Therefore \(f'(w)\neq0\) on \(|w|\ge1\), so \(f\) is a holomorphic local
	biholomorphism on \(\{|w|>1\}\). Since \(f(w)=rw+O(1)\) as \(w\to\infty\),
	the map is proper as a map from the exterior disk to its image. Hence \(f\)
	is a covering map of finite degree onto its image. The normalization at
	infinity shows that this degree is \(1\): for \(|z|\) sufficiently large,
	the equation \(f(w)=z\) has exactly one solution in \(|w|>1\), namely the
	branch with \(w=z/r+O(1)\). Therefore the covering is trivial, so \(f\) is
	univalent on \(\{|w|>1\}\) and extends injectively to \(|w|=1\).

	If \((s-1)\zeta>1\), then the equation \(f'(w)=0\) has solutions with
	\[
		|w|^s=(s-1)\zeta>1,
	\]
	so \(f'\) vanishes inside the exterior domain. Since a holomorphic
	injective map cannot have a critical point, \(f\) is not univalent on
	\(\{|w|>1\}\).

	Finally, if \((s-1)\zeta=1\), then \(f'(w)=0\) is equivalent to \(w^s=1\),
	so the critical points lie exactly on the unit circle.
\end{proof}

\begin{remark}
	At the critical value \(\zeta=\zeta_{\mathrm{univ}}\), the boundary
	singularity is standard. For \(s\ge3\), the image of \(|w|=1\) develops
	\(s\) semicubical cusps, while for \(s=2\), the Joukowski trace degenerates to a
	line segment. These local descriptions are classical and are not needed for
	the spectral analysis below.
\end{remark}

\subsection{Separation of the analytic and geometric thresholds}
\label{subsec:separation}

With the geometric threshold identified, we now compare it with the analytic
threshold $\zeta_c = (s-1)^{s-1}/s^s$.

\begin{proposition}[Separation of thresholds]
	\label{prop:separation}
	For every \(s\ge2\),
	\[
		\zeta_c<\zeta_{\mathrm{univ}}.
	\]
	Equivalently,
	\begin{equation}\label{eq:threshold-ratio}
		\frac{\zeta_c}{\zeta_{\mathrm{univ}}}
		=
		\Bigl(\frac{s-1}{s}\Bigr)^s
		<1.
	\end{equation}
	In particular, the logarithmic spectral transition described by
	Theorems~\ref{thm:rankone-decomp} and
	\ref{thm:spectral-asymptotics} takes place while the conformal map is still
	univalent and the boundary is still smooth.
\end{proposition}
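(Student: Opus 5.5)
The plan is a direct computation of the ratio followed by an elementary inequality. Using $\zeta_c=(s-1)^{s-1}/s^s$ from Lemma~\ref{lem:U-critical} and $\zeta_{\mathrm{univ}}=1/(s-1)$ from Proposition~\ref{prop:univ-threshold}, we get
\[
	\frac{\zeta_c}{\zeta_{\mathrm{univ}}}
	=\frac{(s-1)^{s-1}(s-1)}{s^s}
	=\Bigl(\frac{s-1}{s}\Bigr)^{s},
\]
which is \eqref{eq:threshold-ratio}. Since $0<(s-1)/s<1$ for $s\ge2$, any positive power of it is strictly less than $1$. This gives $\zeta_c<\zeta_{\mathrm{univ}}$ for every $s\ge2$. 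One can add that the ratio increases to $e^{-1}$ as $s\to\infty$, so the gap is uniform in $s$, although the statement does not need this.

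For the interpretive clause, take $0<\zeta<\zeta_c$; then also $\zeta<\zeta_{\mathrm{univ}}$. Part~(1) of Proposition~\ref{prop:univ-threshold} then shows that $f$ is univalent on $\{|w|>1\}$ and injective on $|w|=1$. Moreover, the critical points of $f$ satisfy $|w|^s=(s-1)\zeta<1$, so $f'\neq0$ on the unit circle. Hence the boundary curve $\theta\mapsto r(e^{i\theta}+\zeta e^{-i(s-1)\theta})$ is a real-analytic regular Jordan curve, that is, smooth.

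This holds throughout the subcritical interval and also at $\zeta=\zeta_c$ itself, which lies strictly inside the univalent range. Therefore the limit $\zeta\uparrow\zeta_c$ in Theorems~\ref{thm:rankone-decomp} and~\ref{thm:spectral-asymptotics} is taken entirely within the regime of univalent maps with smooth boundary.

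No step is a serious obstacle. The only point needing care is the smoothness claim at $\zeta_c$ itself, which requires nonvanishing of $f'$ on the closed unit circle, not just univalence on the open exterior; the strict inequality supplies this.
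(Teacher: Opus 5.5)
Your proposal is correct and takes the same approach as the paper: the one-line computation $\zeta_c/\zeta_{\mathrm{univ}}=\bigl(\frac{s-1}{s}\bigr)^s<1$. Your justification of the interpretive clause, via part (1) of Proposition~\ref{prop:univ-threshold} together with the nonvanishing of $f'$ on $|w|=1$ when $(s-1)\zeta<1$, is a sound elaboration of a point the paper's proof leaves implicit.
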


\begin{proof}
	The identity \eqref{eq:threshold-ratio} is immediate:
	\[
		\frac{\zeta_c}{\zeta_{\mathrm{univ}}}
		=
		(s-1)\frac{(s-1)^{s-1}}{s^s}
		=
		\Bigl(\frac{s-1}{s}\Bigr)^s.
	\]
	Since \((s-1)/s<1\), the right-hand side is strictly less than \(1\).
\end{proof}

\subsection{\texorpdfstring{Regularity of the continued scalar quantities at $\zeta_{\mathrm{univ}}$}{Regularity of the continued scalar data at zeta-univ}}
\label{subsec:univ-spectral}

We finally show that the continued scalar quantities do not develop any new
singularity at the geometric threshold. This is the final step in separating
analytic and geometric criticality. Since \(\zeta_{\mathrm{univ}}>\zeta_c\), the point
\[
	u_{\mathrm{univ}}:=\zeta_{\mathrm{univ}}^2
\]
lies on the branch cut \([\zeta_c^2,\infty)\) of the analytically continued
functions \(\mathcal G_p(u)\). However, \(u_{\mathrm{univ}}\) is an interior
point of the cut, not its endpoint. The only singular endpoint produced by
the hypergeometric continuation is \(u=\zeta_c^2\). Thus one expects finite
lateral values at \(u=u_{\mathrm{univ}}\), and the next proposition confirms
this. The key point is that every interior point of the cut is an ordinary
point of the hypergeometric differential equation.

\begin{lemma}[Interior points of the cut are ordinary points]
	\label{lem:interior-cut-ordinary}
	Fix \(p\ge1\) and \(u_0>\zeta_c^2\). Let \(\mathcal G_{p,+}\) and
	\(\mathcal G_{p,-}\) denote the analytic continuations of the germ of
	\(\mathcal G_p\) from \(u=0\) to a neighborhood of \(u_0\) through the upper
	and lower half-planes, respectively. Then each branch extends holomorphically
	to a full disk centered at \(u_0\). In particular,
	\[
		\mathcal G_{p,\pm}(u_0),\qquad
		\mathcal G'_{p,\pm}(u_0),\qquad
		\mathcal G''_{p,\pm}(u_0)
	\]
	are finite.
\end{lemma}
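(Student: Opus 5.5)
The argument runs through the reduced hypergeometric differential equation that governs \(\mathcal G_p\). By Lemma~\ref{lem:reduced-hypergeom-data}, after cancelling common upper and lower parameters in \eqref{eq:Gp-hypergeom}, the germ of \(\mathcal G_p\) at \(u=0\) satisfies a linear ODE of order \(q_p+1\) in \(\xi=\zeta_c^{-2}u\). Its finite singular points are only \(\xi=0\) and \(\xi=1\), and it also has a singular point at \(\xi=\infty\). In the variable \(u\), the only finite singular points are therefore \(u=0\) and \(u=\zeta_c^2\). Writing the equation as \(\xi^{q_p}(1-\xi)\,y^{(q_p+1)}+\cdots=0\), its leading coefficient vanishes only at \(\xi\in\{0,1\}\). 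Every other coefficient is polynomial, so after division by the leading coefficient all coefficients are holomorphic on \(\CC\setminus\{0,1\}\).

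Fix \(u_0>\zeta_c^2\), so that \(\xi_0:=u_0/\zeta_c^2>1\). Choose a radius \(0<\delta<\xi_0-1\), so that the disk \(D(\xi_0,\delta)\) avoids \(0\) and \(1\). On this disk the normalized ODE has holomorphic coefficients. By the Cauchy existence theorem for linear ODEs, every local solution near any point of the disk extends to a holomorphic solution on the whole disk, and these solutions form a \((q_p+1)\)-dimensional space.

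Next I would construct \(\mathcal G_{p,+}\). The slit plane is simply connected, and Corollary~\ref{cor:Gp-continuation} shows that \(\mathcal G_p\) is holomorphic there. I would restrict \(\mathcal G_p\) to the upper half-disk \(D(\xi_0,\delta)\cap\{\Im\xi>0\}\); this is the same as continuing the germ from \(0\) through the upper half-plane. Since \(\mathcal G_p\) solves the ODE, this restriction is a solution on a connected open subset of the disk. Take any point of the upper half-disk. The solution of the initial value problem there, with data the derivatives of \(\mathcal G_p\) up to order \(q_p\), is holomorphic on the full disk. By the identity theorem it agrees with \(\mathcal G_p\) on the upper half-disk. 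This full-disk solution is \(\mathcal G_{p,+}\), and the same argument with the lower half-disk gives \(\mathcal G_{p,-}\).

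Finiteness of \(\mathcal G_{p,\pm}(u_0)\), \(\mathcal G'_{p,\pm}(u_0)\) and \(\mathcal G''_{p,\pm}(u_0)\) is then immediate, since each branch is holomorphic at the center of the disk. The main obstacle is not analytic but bookkeeping. One must be sure that the ODE really is the one satisfied by \(\mathcal G_p\), and that no additional apparent singularity appears on \((1,\infty)\) after parameter cancellation. I would handle this by appealing to the explicit form of the generalized hypergeometric operator
\[
\theta\prod_j(\theta+\beta_j-1)-\xi\prod_i(\theta+\alpha_i),\qquad \theta=\xi\frac{d}{d\xi},
\]
whose leading coefficient is \(\xi^{q_p+1}(1-\xi)\) up to sign. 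This explicit form makes it clear that the operator has no other finite singular points (cf.\ \cite[\S16.8]{NIST:DLMF}).
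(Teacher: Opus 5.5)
Your proposal is correct and follows essentially the same route as the paper: the function satisfies a hypergeometric ODE whose only finite singular points are \(\xi=0\) and \(\xi=1\), so it has holomorphic coefficients on a disk around \(u_0\); the half-disk restrictions of \(\mathcal G_p\) then extend to the full disk by standard linear ODE theory. The only difference is that the paper works directly with the unreduced \({}_{2s}F_{2s-1}\) equation, which already annihilates the series and has the same singular set \(\{0,1,\infty\}\), so your extra care about cancellations is harmless but not needed.
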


\begin{proof}
	By Proposition~\ref{prop:Gp-hypergeom}, \(\mathcal G_p\) satisfies the
	generalized hypergeometric differential equation in the variable
	\(\xi=\zeta_c^{-2}u\). This equation is Fuchsian, and its only finite
	singular points are \(\xi=0\) and \(\xi=1\). Equivalently, in the variable
	\(u\) the only finite singular points are \(u=0\) and \(u=\zeta_c^2\); see
	\cite[\S16.8,\S16.11]{NIST:DLMF}. Hence every \(u_0>\zeta_c^2\) with
	\(u_0\neq \zeta_c^2\) is an ordinary point of the differential equation.

	Choose \(r>0\) so small that the closed disk \(\overline{D(u_0,r)}\) avoids
	\(\{0,\zeta_c^2\}\). On \(D(u_0,r)\), the differential equation has analytic
	coefficients. Standard local ODE theory therefore implies that any solution
	defined on a connected open subset of \(D(u_0,r)\) extends uniquely to a
	holomorphic solution on the whole disk. Applying this to the branches
	\(\mathcal G_{p,+}\) and \(\mathcal G_{p,-}\), initially defined on the
	upper and lower half-disks, gives the claimed holomorphic extensions. Their
	values and first two derivatives at \(u_0\) are therefore finite.
\end{proof}

\begin{proposition}[Regularity at the geometric threshold]
	\label{prop:univ-regularity}
	For every \(p\ge1\), the lateral values
	\[
		\mathcal G_p(u\pm i0),
		\qquad
		\sigma_p^{\mathrm{cont}}(u\pm i0),
	\]
	are finite for every \(u>\zeta_c^2\). In particular, they are finite at $u=\zeta_{\mathrm{univ}}^2$,
	and the discontinuity density satisfies
	\[
		\rho_p(\zeta_{\mathrm{univ}}^2)<\infty.
	\]
\end{proposition}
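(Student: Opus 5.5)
The plan is to reduce everything to Lemma~\ref{lem:interior-cut-ordinary}. Fix \(p\ge1\) and \(u_0>\zeta_c^2\). By Corollary~\ref{cor:Gp-continuation}, \(\mathcal G_p\) is holomorphic on the slit plane. Its lateral boundary values at \(u_0\) are the limits of \(\mathcal G_p(u_0\pm i\varepsilon)\) as \(\varepsilon\downarrow0\). On a small upper (resp.\ lower) half-disk around \(u_0\), \(\mathcal G_p\) coincides with the branch \(\mathcal G_{p,+}\) (resp.\ \(\mathcal G_{p,-}\)) obtained by continuing the germ at \(0\) through that half-plane. Lemma~\ref{lem:interior-cut-ordinary} says each branch extends holomorphically to a full disk \(D(u_0,r)\). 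Hence the lateral limits exist and satisfy
\[
\mathcal G_p(u_0\pm i0)=\mathcal G_{p,\pm}(u_0),
\]
and these values are finite. The same argument applied to derivatives gives \(\mathcal G_p'(u_0\pm i0)=\mathcal G'_{p,\pm}(u_0)\) and \(\mathcal G_p''(u_0\pm i0)=\mathcal G''_{p,\pm}(u_0)\), also finite. Derivatives of a function holomorphic on a disk are continuous up to \(u_0\), and on the half-disks they agree with the derivatives of \(\mathcal G_p\).

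Next I will pass to the continued Gram weights. Expanding Definition~\ref{def:continued-weights} gives
\[
\sigma_p^{\mathrm{cont}}(u)=\frac1p\Bigl(p^2\mathcal G_p+(2ps+s^2)u\mathcal G_p'+s^2u^2\mathcal G_p''\Bigr).
\]
This is a linear combination of \(\mathcal G_p,\mathcal G_p',\mathcal G_p''\) with polynomial coefficients, so \(\sigma_p^{\mathrm{cont}}\) on each half-disk coincides with the holomorphic function \(\frac1p(p+s u\,d/du)^2\mathcal G_{p,\pm}\) on \(D(u_0,r)\). It follows that
\[
\sigma_p^{\mathrm{cont}}(u_0\pm i0)=\tfrac1p\bigl(p+s u\,\tfrac{d}{du}\bigr)^2\mathcal G_{p,\pm}\big|_{u=u_0},
\]
which is finite. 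Consequently \(\Disc\,\sigma_p^{\mathrm{cont}}(u_0)\) is a difference of two finite numbers, and \(\rho_p(u_0)\) from Definition~\ref{def:discontinuity} is finite. In fact \(\rho_p\) is real-analytic on \((\zeta_c^2,\infty)\), being a difference of holomorphic functions restricted to the real axis.

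Finally, Proposition~\ref{prop:separation} gives \(\zeta_{\mathrm{univ}}>\zeta_c\). Hence \(u_0=\zeta_{\mathrm{univ}}^2>\zeta_c^2\) is an interior point of the cut, and the specialization follows. The only point needing care is the identification of the lateral limits with the values of the extended branches \(\mathcal G_{p,\pm}\). The issue is that the slit-plane function and the continued branch must agree on the half-disks. This holds because the upper half-disk lies in the simply connected slit plane and is reached from \(0\) through the upper half-plane, so by uniqueness of continuation in Corollary~\ref{cor:Gp-continuation} the two coincide there. Once this is settled the rest is routine.
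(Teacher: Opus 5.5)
Your proposal is correct and follows essentially the same route as the paper: you use Lemma~\ref{lem:interior-cut-ordinary} to extend each lateral branch holomorphically across \(u_0\), pass finiteness through the polynomial-coefficient Euler operator, conclude that \(\rho_p(u_0)\) is a difference of finite numbers, and then specialize to \(u_0=\zeta_{\mathrm{univ}}^2\) using \(\zeta_{\mathrm{univ}}>\zeta_c\). Your explicit identification of the slit-plane function with the continued branches on the half-disks, and your remark that \(\rho_p\) is real-analytic on \((\zeta_c^2,\infty)\), make explicit a step the paper leaves implicit; they do not change the argument.
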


\begin{proof}
	Fix \(u_0>\zeta_c^2\). By Lemma~\ref{lem:interior-cut-ordinary}, each lateral
	branch of \(\mathcal G_p\) extends holomorphically to a neighborhood of
	\(u_0\). Hence the limits
	\[
		\mathcal G_p(u_0\pm i0)
	\]
	exist and are finite, and so do the first two lateral derivatives.

	Now \(\sigma_p^{\mathrm{cont}}\) is obtained from \(\mathcal G_p\) by the Euler
	operator
	\[
		\sigma_p^{\mathrm{cont}}(u)
		=
		\frac1p\Bigl(p+s\,u\frac{d}{du}\Bigr)^2\mathcal G_p(u),
	\]
	whose coefficients are analytic at every finite \(u\). Therefore
	\(\sigma_p^{\mathrm{cont}}(u_0\pm i0)\) is finite for every \(u_0>\zeta_c^2\),
	and in particular at \(u_0=\zeta_{\mathrm{univ}}^2\).

	Finally, Definition~\ref{def:discontinuity} gives
	\[
		\rho_p(u_0)
		=
		-\frac{1}{2\pi i}
		\bigl(
		\sigma_p^{\mathrm{cont}}(u_0+i0)-\sigma_p^{\mathrm{cont}}(u_0-i0)
		\bigr),
	\]
	which is therefore finite as well.\end{proof}

The three regimes of the problem may now be summarized as follows: (i) For
\(0<\zeta<\zeta_c\), each symmetry sector is described by a compact weighted
Gram operator with one stiff eigenvalue diverging logarithmically as
\(\zeta\uparrow\zeta_c\), while the remaining sectorial spectrum stays
bounded. Equivalently, there are at most \(s\) logarithmically diverging
eigenvalues globally. (ii) For \(\zeta_c<\zeta<\zeta_{\mathrm{univ}}\), the
weighted operator realization breaks down, but the scalar quantities
\(\mathcal G_p\), \(\sigma_p^{\mathrm{cont}}\), and \(\rho_p\) admit a
canonical analytic continuation, with Jacobi--Weyl interpretation for
\(1\le p\le s\). (iii) At \(\zeta=\zeta_{\mathrm{univ}}\), the conformal map
loses univalence. For \(s\ge3\) the boundary develops cusp singularities,
whereas for \(s=2\) the critical map degenerates to the Joukowski segment. In
both cases, the continued scalar quantities remain finite.

This completes the proof that analytic criticality and geometric criticality
are distinct in the symmetric one-harmonic family.


\section*{Conclusion}


We have shown that for the \(s\)-fold symmetric one-harmonic polynomial
family the first spectral instability in the weighted block realizations of
the mixed Hessian of the dispersionless Toda \(\tau\)-function is governed by
analytic criticality of the inverse map, not by the later geometric loss of
univalence of the conformal map. The square-root singularity of the inverse map produces the
borderline coefficient decay that drives the transition.

On the subcritical side, for any fixed \(\beta>0\), the weighted Hilbert
realization of Definition~\ref{def:weighted-conj} separates this mechanism
into one stiff direction and a bounded soft sector in each symmetry block.
Accordingly, in each such weighted realization each sector contains exactly
one logarithmically diverging eigenvalue, while the remaining sectorial
spectrum stays bounded. After removing the spike direction, each fixed soft
eigenvalue converges to the corresponding eigenvalue of a compact limiting
remainder. At coefficient level, the same phenomenon is encoded by a single
dominant singular direction in each sector, from which the weighted spikes
are realized. Beyond the analytic threshold, the weighted compact
realization ceases to apply, but the scalar Gram quantities remain well defined
after analytic continuation. They are described by generalized hypergeometric
functions, admit a Cauchy--Stieltjes representation, and in the positive
range fit naturally into the Jacobi/Weyl framework.

The geometric threshold lies strictly beyond the analytic one. Consequently,
the weighted block theory becomes spectrally singular while the conformal map
is still univalent and the boundary is still smooth, and the continued scalar
quantities have finite upper and lower lateral boundary values at the univalence
threshold. In this sense, analytic and geometric criticality are genuinely
distinct in the present family.

The result also has a direct interpretation in adjacent frameworks. In the
Laplacian-growth language, the rank-one logarithmic spike singles out a
distinguished \emph{stiff} deformation mode before any geometric singularity
appears. In the matrix-model or potential-theoretic language, the same
statement says that the mixed second-order response concentrates onto a single
dominant sectorial direction at criticality. These interpretations lie outside the proofs. They serve only to place the theorem in a broader geometric and physical context.

The argument isolates a structural mechanism that should persist beyond the
one-harmonic leaf: a dominant branch point, borderline coefficient decay, and
positive Gram structure. For more general polynomial conformal maps, one
expects several competing dominant orbits and hence a finite-rank version of
the present instability mechanism. Another natural problem is to understand
the bounded soft sector more conceptually, possibly through hidden Jacobi or
total-positivity structure, and to sharpen the leading-order theory by
deriving asymptotics for spectral gaps, sector dependence, and soft
eigenvalues.


\section*{Acknowledgments}


This work is an output of the research project
``Symmetry. Information. Chaos''
implemented as a part of the Basic Research Program at
National Research University Higher School
of Economics (HSE University).

\appendix

\section{Asymptotics of Raney numbers}
\label{app:raney-asymp}

This appendix presents the Stirling-type asymptotics for the Raney numbers used
in the Gram estimates of Section~\ref{sec:regime-I} and in the branch-point
coefficient computation from Appendix~\ref{app:B1}.

\begin{proposition}[Explicit $m^{-3/2}$ asymptotics]\label{prop:raney-asymp}
	Fix integers $s\ge2$, $p\ge1$, and set
	\[
		\zeta_c:=\frac{(s-1)^{s-1}}{s^s},\qquad M:=\frac{s}{s-1}.
	\]
	As $m\to\infty$,
	\begin{equation}\label{eq:app-Raney-asymp}
		R_{s,p}(m)
		=
		A_{s,p}\,\zeta_c^{-m}\,m^{-3/2}\bigl(1+O(m^{-1})\bigr),
		\qquad
		A_{s,p}
		=
		\frac{p}{\sqrt{2\pi}}\,
		\frac{s^{\,p-\frac12}}{(s-1)^{\,p+\frac12}}
		=
		\frac{p\,M^{p}}{\sqrt{2\pi s(s-1)}}.
	\end{equation}
	Consequently,
	\begin{equation}\label{eq:app-Raney-square-asymp}
		R_{s,p}(m)^2
		=
		\frac{p^{2}}{2\pi}\,
		\frac{s^{\,2p-1}}{(s-1)^{\,2p+1}}\,
		\zeta_c^{-2m}\,m^{-3}\bigl(1+O(m^{-1})\bigr).
	\end{equation}
\end{proposition}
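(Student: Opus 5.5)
The plan is to work directly from the closed form $R_{s,p}(m)=\frac{p}{sm+p}\binom{sm+p}{m}$ of Proposition~\ref{prop:raney-lagrange} and expand it with Stirling's formula, keeping $s,p$ fixed. This avoids the transfer theorem and makes $A_{s,p}$ explicit. Write $N=sm+p$ and $K=N-m=(s-1)m+p$. Then
\[
\binom{N}{m}=\frac{\Gamma(N+1)}{\Gamma(m+1)\Gamma(K+1)},
\]
and we apply $\log\Gamma(x+1)=x\log x-x+\tfrac12\log(2\pi x)+O(x^{-1})$ to each of the three factors. The linear terms cancel because $N=m+K$. The prefactor gives
\[
\sqrt{\frac{N}{2\pi mK}}=\frac{1}{\sqrt{2\pi m}}\sqrt{\frac{s}{s-1}}\bigl(1+O(m^{-1})\bigr).
\]

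The main step is the exponent $N\log N-m\log m-K\log K$. I will write $N=sm(1+\tfrac{p}{sm})$ and $K=(s-1)m\bigl(1+\tfrac{p}{(s-1)m}\bigr)$, and expand $x\log x$ with $\log(1+\epsilon)=\epsilon-\epsilon^2/2+\dots$. The terms of order $m$ give $m\bigl(s\log s-(s-1)\log(s-1)\bigr)=m\log\zeta_c^{-1}$. The terms of order $1$ give $p\log s-p\log(s-1)=p\log M$, together with $p-p=0$ from the linear pieces of the expansion. All remaining contributions are $O(m^{-1})$. The bookkeeping here is the step to check most carefully: one must confirm that the $\epsilon^2$ contributions are genuinely $O(1/m)$ and leave no constant. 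They are, since $(sm)\cdot(p/sm)^2=O(1/m)$.

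Combining, we obtain
\[
\binom{sm+p}{m}=\frac{M^{p}\,\zeta_c^{-m}}{\sqrt{2\pi m}}\sqrt{\frac{s}{s-1}}\bigl(1+O(m^{-1})\bigr).
\]
The prefactor satisfies $\frac{p}{sm+p}=\frac{p}{sm}\bigl(1+O(m^{-1})\bigr)$. Multiplying the two,
\[
A_{s,p}=\frac{p\,M^p}{s\sqrt{2\pi}}\sqrt{\frac{s}{s-1}}=\frac{pM^p}{\sqrt{2\pi s(s-1)}}.
\]
Rewriting $M^p=s^p/(s-1)^p$ gives the form $\frac{p}{\sqrt{2\pi}}\,s^{p-1/2}(s-1)^{-p-1/2}$.

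Squaring, and using $(1+O(m^{-1}))^2=1+O(m^{-1})$, yields \eqref{eq:app-Raney-square-asymp} with constant
\[
A_{s,p}^2=\frac{p^2 s^{2p-1}}{2\pi(s-1)^{2p+1}}.
\]
The remaining obstacle is only the careful $O(1/m)$ control in the Stirling remainders, which is routine for fixed $p$.
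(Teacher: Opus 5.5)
Your proof is correct and follows the same route as the paper. Both apply Stirling's formula to the closed form $R_{s,p}(m)=\frac{p}{sm+p}\binom{sm+p}{m}$; the only cosmetic difference is that the paper folds the prefactor into $p\,\Gamma(sm+p)/\bigl(\Gamma(m+1)\Gamma((s-1)m+p+1)\bigr)$, whereas you keep $\frac{p}{sm+p}$ separate, and your bookkeeping of the cancellations (the linear terms and the $\log m$ terms via $N=m+K$, and the order-one $p-p=0$ term) is more explicit than the paper's sketch.
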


\begin{proof}
	Starting from \eqref{eq:raney-closed} and writing factorials via Gamma functions,
	\[
		R_{s,p}(m)
		=
		\frac{p}{sm+p}\,
		\frac{\Gamma(sm+p+1)}{\Gamma(m+1)\,\Gamma((s-1)m+p+1)}
		=
		\frac{p\,\Gamma(sm+p)}{\Gamma(m+1)\,\Gamma((s-1)m+p+1)}.
	\]
	For fixed \(s,p\) we apply Stirling's formula
	\[
		\Gamma(z)=\sqrt{2\pi}\,z^{z-\frac12}e^{-z}\bigl(1+O(z^{-1})\bigr),
		\qquad z\to+\infty,
	\]
	to the three Gamma factors with arguments \(sm+p\), \(m+1\), and
	\((s-1)m+p+1\).  The exponential terms \(e^{-sm}e^{m}e^{(s-1)m}\) cancel,
	while the power terms produce
	\[
		\frac{(sm)^{sm+p-\frac12}}{m^{m+\frac12}((s-1)m)^{(s-1)m+p+\frac12}}
		=
		\Bigl(\frac{s^s}{(s-1)^{s-1}}\Bigr)^m
		m^{-3/2}\frac{s^{p-\frac12}}{(s-1)^{p+\frac12}}
		\bigl(1+O(m^{-1})\bigr).
	\]
	Since \(\zeta_c^{-1}=s^s/(s-1)^{s-1}\), this gives
	\eqref{eq:app-Raney-asymp}. Squaring yields
	\eqref{eq:app-Raney-square-asymp}.
\end{proof}

\begin{proposition}[Uniform upper bound]\label{prop:raney-uniform}
	There exists a constant $C_s>0$ such that for all integers $p\ge1$ and $m\ge1$,
	\begin{equation}\label{eq:raney-uniform}
		R_{s,p}(m)\le C_s\,p\,M^{p}\,\zeta_c^{-m}\,m^{-3/2}.
	\end{equation}
	Moreover, the same bound holds with $m^{-3/2}$ replaced by $(1+m)^{-3/2}$ for all $m\ge0$.
\end{proposition}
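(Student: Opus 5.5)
We need to prove the uniform bound $R_{s,p}(m)\le C_s\,p\,M^p\zeta_c^{-m}m^{-3/2}$ for all $p\ge1$, $m\ge1$, together with its variant using $(1+m)^{-3/2}$ for $m\ge0$.

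The plan is to work directly from the closed form
\[
R_{s,p}(m)=\frac{p}{sm+p}\binom{sm+p}{m}
\]
of Proposition~\ref{prop:raney-lagrange}, and to bound the binomial coefficient uniformly in both $m$ and $p$ using a two-sided Stirling estimate. With $N=sm+p$ and $K=(s-1)m+p$, Stirling's bounds for $n!$ (valid for all $n\ge1$ with absolute constants) give
\[
\binom{N}{m}\le C\,\frac{1}{\sqrt m}\sqrt{\frac NK}\,\exp\bigl(N\log N-m\log m-K\log K\bigr).
\]
Setting $t=p/m\ge0$, the prefactor satisfies $\sqrt{N/K}\le\sqrt{s/(s-1)}$. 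The exponent equals $m\,\phi(t)$, where
\[
\phi(t):=(s+t)\log(s+t)-(s-1+t)\log(s-1+t).
\]

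The key step is the inequality $\phi(t)\le\log\zeta_c^{-1}+t\log M$. At $t=0$ we have $\phi(0)=s\log s-(s-1)\log(s-1)=\log\zeta_c^{-1}$. Differentiating gives $\phi'(t)=\log\frac{s+t}{s-1+t}$, which is decreasing in $t$ and equals $\log M$ at $t=0$; hence $\phi$ is concave. A concave function lies below its tangent line at $0$, which is exactly the claimed inequality. Multiplying by $m$ yields $\exp(m\phi(t))\le\zeta_c^{-m}M^p$. This is the convexity estimate that Proposition~\ref{prop:subcrit-compact} cites. Together with the Stirling bound it gives
\[
\binom{sm+p}{m}\le C_s M^p\zeta_c^{-m}m^{-1/2}.
\]

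The main obstacle is recovering the remaining factor $m^{-1}$ in a way that keeps the prefactor linear in $p$. The factor $\frac{p}{sm+p}$ only supplies $p/(sm)$, so
\[
R_{s,p}(m)\le \frac{p}{sm}\,C_s M^p\zeta_c^{-m}m^{-1/2}=\frac{C_s}{s}\,p\,M^p\zeta_c^{-m}m^{-3/2}.
\]
This is precisely the desired form, with the loss $p/(sm+p)\le p/(sm)$ harmless. So no regime split between $p\lesssim m$ and $p\gg m$ is needed: the tangent-line bound absorbs the large-$p$ regime into the factor $M^p$.

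The Stirling step requires $m\ge1$ and $K\ge1$, both of which hold. For the $(1+m)^{-3/2}$ variant, the case $m=0$ gives $R_{s,p}(0)=1\le p\,M^p$, since $M>1$ and $p\ge1$. For $m\ge1$ we use $m^{-3/2}\le 2^{3/2}(1+m)^{-3/2}$ and enlarge $C_s$ accordingly.
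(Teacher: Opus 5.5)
Your proof is correct and follows essentially the same route as the paper. The paper also combines the Robbins--Stirling bound with $N=sm+p$, $K=(s-1)m+p$, $t=p/m$, the tangent-line inequality $\phi(t)\le\log\zeta_c^{-1}+t\log M$, the bound $\sqrt{N/K}\le\sqrt{M}$, and $p/(sm+p)\le p/(sm)$; it phrases the key step as convexity of the gap $\Phi(t)=\log\zeta_c^{-1}+t\log M-\phi(t)$ with $\Phi(0)=\Phi'(0)=0$, which is just your concavity of $\phi$, and it handles the $m=0$ extension the same way, by enlarging $C_s$.
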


\begin{proof}
	We start from the closed form \eqref{eq:raney-closed}:
	\[
		R_{s,p}(m)=\frac{p}{sm+p}\binom{sm+p}{m},\qquad m\ge1.
	\]
	Set $N:=sm+p$ and $K:=N-m=(s-1)m+p$. Using a standard two-sided Stirling bound
	(e.g.\ Robbins' bound), there is an absolute constant $C>0$ such that
	\[
		\binom{N}{m}
		\le
		C\,
		\frac{N^{N+\frac12}}{m^{m+\frac12}\,K^{K+\frac12}}
		=
		C\,
		\frac{1}{\sqrt{m}}\,
		\sqrt{\frac{N}{K}}\,
		\exp\!\bigl(N\log N-m\log m-K\log K\bigr).
	\]
	Write $t:=p/m\ge0$, so that $N=m(s+t)$ and $K=m(s-1+t)$. The exponent simplifies to
	\[
		N\log N-m\log m-K\log K
		=
		m\Bigl[(s+t)\log(s+t)-(s-1+t)\log(s-1+t)\Bigr].
	\]
	Define
	\[
		\Phi(t):=\log\Bigl(\frac{s^s}{(s-1)^{s-1}}\Bigr)+t\log\Bigl(\frac{s}{s-1}\Bigr)
		-\Bigl[(s+t)\log(s+t)-(s-1+t)\log(s-1+t)\Bigr].
	\]
	A direct computation gives $\Phi(0)=\Phi'(0)=0$ and
	\[
		\Phi''(t)=\frac{1}{(s+t)(s-1+t)}>0,
	\]
	so $\Phi$ is convex with a global minimum at $t=0$. Hence $\Phi(t)\ge0$ for all $t\ge0$, i.e.
	\[
		(s+t)\log(s+t)-(s-1+t)\log(s-1+t)
		\le
		\log(\zeta_c^{-1})+t\log(M).
	\]
	Substituting back yields
	\[
		\exp\bigl(N\log N-m\log m-K\log K\bigr)\le \zeta_c^{-m}\,M^{p}.
	\]
	Also $\sqrt{N/K}\le \sqrt{s/(s-1)}=\sqrt{M}$ and $\frac{p}{sm+p}\le \frac{p}{sm}$.
	Combining these bounds gives
	\[
		R_{s,p}(m)
		\le
		\frac{p}{sm}\cdot C\,\frac{1}{\sqrt{m}}\cdot \sqrt{M}\cdot \zeta_c^{-m}M^p
		\le
		C_s\,p\,M^{p}\,\zeta_c^{-m}\,m^{-3/2},
	\]
	for a constant $C_s$ depending only on $s$. The extension to $m=0$ follows by enlarging $C_s$,
	since $R_{s,p}(0)=1$ and $(1+m)^{-3/2}=1$ at $m=0$.
\end{proof}

\begin{lemma}[Uniform one-term expansion in the tail region]\label{lem:raney-uniform-expansion}
	Fix \(s\ge2\) and \(0<\theta<1\).  There exists a constant \(C_{s,\theta}>0\)
	such that for all integers \(p\ge1\) and \(m\ge1\) satisfying \(p\le \theta m\),
	\begin{equation}\label{eq:raney-uniform-expansion}
		R_{s,p}(m)
		=
		A_{s,p}\,\zeta_c^{-m}\,m^{-3/2}\Bigl(1+\varepsilon_{s,p}(m)\Bigr),
		\qquad
		|\varepsilon_{s,p}(m)|\le \frac{C_{s,\theta}}{m},
	\end{equation}
	where $A_{s,p}$ is as in \eqref{eq:app-Raney-asymp}.
\end{lemma}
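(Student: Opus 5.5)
The plan is to make the Stirling computation of Proposition~\ref{prop:raney-asymp} quantitative and uniform in the region $1\le p\le\theta m$, working with the exact ratio
\[
	1+\varepsilon_{s,p}(m)
	=
	\frac{R_{s,p}(m)}{A_{s,p}\,\zeta_c^{-m}\,m^{-3/2}},
	\qquad
	R_{s,p}(m)=\frac{p\,\Gamma(sm+p)}{\Gamma(m+1)\,\Gamma((s-1)m+p+1)} .
\]
The three Gamma arguments are $sm+p$, $m+1$ and $(s-1)m+p+1$, and all of them are at least $m$. Robbins' bounds $\log\Gamma(z)=(z-\tfrac12)\log z-z+\tfrac12\log 2\pi+O(1/z)$ therefore hold with an error $O(1/m)$ that is uniform in $p$. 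This first step is routine.

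In the second step we write $t:=p/m\in(0,\theta]$ and collect the main terms. The exponential factors cancel exactly. What remains splits into three explicit pieces. The first is the exponent $m[(s+t)\log(s+t)-(s-1+t)\log(s-1+t)]$, which by the function $\Phi$ of Proposition~\ref{prop:raney-uniform} equals $m\log\zeta_c^{-1}+p\log M-m\Phi(t)$. The second is the algebraic prefactor $\sqrt{(s+t)/(s-1+t)}\,(s+t)^{-1}$ together with the shifts $\pm\tfrac12$ and $+1$ in the arguments. The third is the normalisation $A_{s,p}=pM^p/\sqrt{2\pi s(s-1)}$. Dividing out, $\log(1+\varepsilon_{s,p}(m))$ becomes $-m\Phi(t)$ plus a smooth function $g(t)$ with $g(0)=0$, plus $O(1/m)$ uniformly. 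Since $\theta<1$, $t$ stays in a compact interval $[0,\theta]$. On that interval $\Phi$ and $g$ are real-analytic, with $\Phi(0)=\Phi'(0)=0$ and $\Phi''(0)=1/(s(s-1))$. This gives the Taylor bounds $|g(t)|\le Ct$ and $0\le m\Phi(t)\le Cp^2/m$, with $C=C_{s,\theta}$.

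The main obstacle is the third step, converting these bounds into $|\varepsilon_{s,p}(m)|\le C_{s,\theta}/m$. The algebraic prefactor contributes $g(t)=O(p/m)$. The exponent contributes $m\Phi(p/m)\approx p^2/(2s(s-1)m)$. Each piece is $O(1/m)$ only when $p$ stays bounded, so the first thing I would check is whether these two contributions cancel exactly. I would do this by expanding the ratio $R_{s,p}(m)/R_{s,p}(m-1)$, which is rational in $(m,p)$, and comparing it with $\zeta_c^{-1}(1-\tfrac{3}{2m})$ to second order in $p/m$. If no cancellation occurs, then the uniform form of \eqref{eq:raney-uniform-expansion} can only hold with $C_{s,\theta}/m$ replaced by $C_{s,\theta}(1+p^2)/m$, or for $p$ in a fixed bounded range. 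In that case I would record the sharpened remainder $\exp(-m\Phi(p/m)+g(p/m)+O(1/m))-1$ and verify that this is all Lemma~\ref{lem:entrywise-log} actually needs. That lemma uses the expansion only on the tail $m\gg p_j$, where the weights $w_j$ absorb any polynomial factor in $p$.
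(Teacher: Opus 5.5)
Your first two steps are correct and are the same log-Stirling computation the paper uses. The issue you isolate in step~3 is real, and it has a definite answer: the two contributions do not cancel, because they have the same sign. For all $t\ge0$ one has $\Phi(t)\ge0$ (convexity, as in Proposition~\ref{prop:raney-uniform}) and $g(t)=-\tfrac12\log\frac{(s+t)(s-1+t)}{s(s-1)}\le0$. Hence
\[
	\log\bigl(1+\varepsilon_{s,p}(m)\bigr)
	=-m\Phi(p/m)+g(p/m)+O(1/m)
	=-\frac{p^{2}+(2s-1)p}{2s(s-1)\,m}+O\Bigl(\frac{p^{3}}{m^{2}}+\frac1m\Bigr).
\]
Along $p=\lfloor\theta m\rfloor$ the left side is $-m\Phi(\theta)+O(1)\to-\infty$, so $\varepsilon_{s,p}(m)\to-1$. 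For example, $s=2$, $m=100$, $p=50$ gives $1+\varepsilon\approx e^{-5.35}$. Already $p\asymp\sqrt m$ makes $\varepsilon$ tend to a nonzero constant. The paper's own Lemma~\ref{lem:raney-gap}, applied on the overlap $\tfrac\theta2 m\le p\le\theta m$, contradicts the statement in the same way, since $A_{s,p}\asymp pM^{p}$.

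So the lemma is false as stated. No argument can close your step~3, and the ratio test you propose is unnecessary. The paper's proof breaks at exactly the point you flagged: it asserts that the terms of order $m$ ``cancel identically''. That is true only at $t=0$, i.e.\ for fixed $p$; the residual $-m\Phi(p/m)$ is not uniformly $O(1/m)$.

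What your step~2 does prove is a corrected bound. Since the exponent is at most $C/m$, the inequality $|e^{x}-1|\le e^{C}\min\{1,|x|\}$ gives
\[
	|\varepsilon_{s,p}(m)|\le C_{s,\theta}\min\Bigl\{1,\frac{1+p^{2}}{m}\Bigr\},
	\qquad 1\le p\le\theta m,
\]
which reduces to $C/m$ only for bounded $p$.

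Your remark on the downstream use needs one correction: the weights do not absorb an arbitrary polynomial factor in $p$. In the tail $m\ge2p_{j_2}$ of Lemma~\ref{lem:entrywise-log}, summing $(1+p_{j_2}^{2})m^{-2}$ produces an extra factor of order $p_{j_2}$. Since $\widetilde d^{(q)}_j\asymp p_j^{-1-\beta}$, the kernel $\widetilde d^{(q)}_{j_1}\widetilde d^{(q)}_{j_2}\,p_{\max(j_1,j_2)}$ is Hilbert--Schmidt only for $\beta>\tfrac12$. If you keep the $\min\{1,\cdot\}$ form instead, then $\sum_{m\ge2p}m^{-1}\min\{1,p^{2}/m\}=O(1+\log p)$. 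This gives a kernel of the same type as $E^{\mathrm{harm}}$, which is Hilbert--Schmidt for every $\beta>0$. With that replacement the tail estimate in Lemma~\ref{lem:entrywise-log} still closes.
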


\begin{proof}
	Write \(t:=p/m\), so \(0\le t\le\theta\), and write \(p=tm\).
	Using the Gamma-form expression from the proof of
	Proposition~\ref{prop:raney-asymp}, consider
	\[
		\log\!\left(
		\frac{R_{s,p}(m)}{A_{s,p}\,\zeta_c^{-m}\,m^{-3/2}}
		\right).
	\]
	Applying the logarithmic Stirling expansion
	\[
		\log\Gamma(z)=\Bigl(z-\frac12\Bigr)\log z-z+\frac12\log(2\pi)+O(z^{-1})
	\]
	to the three Gamma factors with arguments \(m(s+t)\), \(m+1\), and
	\(m(s-1+t)+1\), one obtains
	\[
		\log R_{s,p}(m)
		=
		-m\log\zeta_c+\log A_{s,p}-\frac32\log m+O(m^{-1}),
	\]
	uniformly for \(t\in[0,\theta]\). Equivalently,
	\[
		\log\!\left(
		\frac{R_{s,p}(m)}{A_{s,p}\,\zeta_c^{-m}\,m^{-3/2}}
		\right)
		=
		O(m^{-1}),
	\]
	uniformly for \(t\in[0,\theta]\). Here the terms of order \(m\), the
	\(\log m\) contribution, and the explicit \(t\)-dependent prefactor
	encoded in \(A_{s,p}\) cancel identically. The remainder is uniform because
	all auxiliary functions of \(t\) are smooth on the compact interval
	\([0,\theta]\). Exponentiating yields
	\[
		\frac{R_{s,p}(m)}{A_{s,p}\,\zeta_c^{-m}\,m^{-3/2}}
		=1+O(m^{-1}),
	\]
	uniformly for \(p\le \theta m\), which is exactly
	\eqref{eq:raney-uniform-expansion}.
\end{proof}

\section{Rank--one tail extraction in the Gram blocks}
\label{app:rankone-form}

This appendix supplies the coefficient-level tail extraction used in the proof
of Theorem~\ref{thm:rankone-decomp}. Throughout we fix a symmetry sector
\(q\in\{1,\dots,s\}\), write
\[
    p_j:=q+js,\qquad j\in\NN_0,
\]
and use the Gram vectors \(\bm v^{(p)}\) from \eqref{eq:gram-vector}. Recall the
block matrix \(G^{(q)}(\zeta)\) determined coefficientwise by the Gram
representation in \eqref{eq:Hq-VV}, and its weighted conjugate
\[
    \widetilde G^{(q)}(\zeta)=\mathcal W^{-1}G^{(q)}(\zeta)\mathcal W^{-1},
\]
where \(\mathcal W=\mathrm{diag}(w_j)\) and \(w_j\) is given by
\eqref{eq:weights}.

\subsection{Entry formula}

\begin{lemma}[Entries of the Gram block]\label{lem:Gram-entry}
	For $j_1\le j_2$ (and $\Delta:=j_2-j_1\ge0$) one has
	\begin{equation}\label{eq:app-Gram-entry}
		G^{(q)}_{j_1j_2}(\zeta)
		=
		\sum_{m=0}^{\infty}
		\frac{(p_{j_2}+sm)^{2}}{\sqrt{p_{j_1}p_{j_2}}}\,
		R_{s,p_{j_1}}(m+\Delta)\,R_{s,p_{j_2}}(m)\,\zeta^{\,2m+\Delta}.
	\end{equation}
	Consequently,
	\begin{equation}\label{eq:app-Gtilde-entry-clean}
		\widetilde G^{(q)}_{j_1j_2}(\zeta)
		=
		\frac{1}{w_{j_1}w_{j_2}}\,G^{(q)}_{j_1j_2}(\zeta),
		\qquad j_1,j_2\in\NN_0.
	\end{equation}
\end{lemma}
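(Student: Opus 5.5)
The identity \eqref{eq:app-Gram-entry} should follow directly from the definition \(G^{(q)}(\zeta)=V(\zeta)^*V(\zeta)\) in \eqref{eq:Hq-VV}. In that definition the \(j\)-th column of \(V(\zeta)\) is the Gram vector \(\bm v^{(p_j)}\) of \eqref{eq:gram-vector}, restricted to the \(q\)-block. The plan is to write the entry as a sum over the row index \(n\), identify exactly which indices contribute, and reparametrize by \(m\). The weighted formula \eqref{eq:app-Gtilde-entry-clean} is then a one-line consequence of \(\widetilde V=V\mathcal W^{-1}\).

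\emph{Step 1 (support matching).} By definition,
\[
G^{(q)}_{j_1j_2}(\zeta)=\sum_{n\ge1}\overline{v^{(p_{j_1})}_n}\,v^{(p_{j_2})}_n .
\]
All entries are real because \(\zeta>0\). The vector \(\bm v^{(p)}\) is supported on \(n=p+ks\) with \(k\ge0\). Since \(p_{j_2}=p_{j_1}+\Delta s\), an index \(n\) lies in both supports exactly when \(n=p_{j_2}+ms\) with \(m\ge0\). The same index then reads \(n=p_{j_1}+(m+\Delta)s\), and the constraint \(k_1=m+\Delta\ge0\) is automatic.

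\emph{Step 2 (substitution).} Inserting \eqref{eq:gram-vector} gives the product
\[
\frac{n}{\sqrt{p_{j_1}}}R_{s,p_{j_1}}(m+\Delta)\zeta^{m+\Delta}\cdot\frac{n}{\sqrt{p_{j_2}}}R_{s,p_{j_2}}(m)\zeta^{m}.
\]
With \(n=p_{j_2}+sm\), this is precisely the summand of \eqref{eq:app-Gram-entry}.

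\emph{Step 3 (convergence).} For \(0<\zeta<\zeta_c\) the series converges absolutely. By Cauchy--Schwarz its absolute value is at most \(\sigma_{p_{j_1}}(\zeta)^{1/2}\sigma_{p_{j_2}}(\zeta)^{1/2}\), and both factors are finite by Proposition~\ref{prop:gram-divergence}. This also justifies rearranging the sum over \(n\) as a sum over \(m\). The case \(j_1>j_2\) follows by the symmetry \(G^{(q)}_{j_1j_2}=G^{(q)}_{j_2j_1}\).

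\emph{Step 4 (weighted entries).} Since \(\mathcal W=\operatorname{diag}(w_j)\) with \(w_j>0\) real, we have \(\widetilde V^*\widetilde V=\mathcal W^{-1}V^*V\mathcal W^{-1}\). Evaluating the \((j_1,j_2)\) entry gives \(w_{j_1}^{-1}w_{j_2}^{-1}G^{(q)}_{j_1j_2}\). This is first an identity on finitely supported sequences, and it persists entrywise after the Hilbert--Schmidt extension of Proposition~\ref{prop:subcrit-compact}.

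There is no real obstacle here. The only point needing care is the index bookkeeping in Step 1: showing that the common support is exactly \(\{p_{j_2}+ms: m\ge0\}\) and that the larger Raney index, \(m+\Delta\), attaches to the smaller block parameter \(p_{j_1}\).
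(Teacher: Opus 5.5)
Your proposal is correct and follows the paper's proof: you identify the common support of $\bm v^{(p_{j_1})}$ and $\bm v^{(p_{j_2})}$ as $\{p_{j_2}+sm : m\ge0\}$, substitute \eqref{eq:gram-vector}, and read off the weighted entries from $\widetilde V=V\mathcal W^{-1}$. Your Cauchy--Schwarz bound by $\sigma_{p_{j_1}}(\zeta)^{1/2}\sigma_{p_{j_2}}(\zeta)^{1/2}$ additionally justifies absolute convergence and the reindexing for $0<\zeta<\zeta_c$; the paper leaves this step implicit.
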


\begin{proof}
	By definition,
	$G^{(q)}_{j_1j_2}(\zeta)=\langle \bm v^{(p_{j_1})},\bm v^{(p_{j_2})}\rangle$
	in $\ell^2(\NN)$, and $\bm v^{(p)}$ is supported on indices $p+sm$.
	For $j_1\le j_2$, the supports overlap precisely at
	\(
	p_{j_2}+sm=p_{j_1}+s(m+\Delta),\ m\ge0.
	\)
	Inserting \eqref{eq:gram-vector} at these indices yields \eqref{eq:app-Gram-entry}.
	The conjugated entry relation \eqref{eq:app-Gtilde-entry-clean} is immediate from
	\eqref{eq:Gtilde}.
\end{proof}

\subsection{The logarithmic tail and the singular vector}

Write
\[
	\eta:=\frac{\zeta}{\zeta_c}\in(0,1),
	\qquad
	\delta:=1-\eta^2=1-\frac{\zeta^2}{\zeta_c^2},
	\qquad
	L(\zeta)=\sum_{m\ge1}\frac{\eta^{2m}}{m}=\log\frac{1}{1-\zeta^2/\zeta_c^2}.
\]
By Remark~\ref{rem:coeff-asymp} (or Appendix~\ref{app:raney-asymp}), for each
fixed $p\ge1$,
\begin{equation}\label{eq:app-Raney-asymp-used}
	R_{s,p}(m)
	=
	A_{s,p}\,\zeta_c^{-m}\,m^{-3/2}\bigl(1+O(m^{-1})\bigr),
	\qquad m\to\infty,
\end{equation}
with an explicit constant $A_{s,p}>0$.  Define the intrinsic amplitudes and their weighted realizations by
\begin{equation}\label{eq:app-d-def}
	d^{(q)}_j:=\frac{s\,A_{s,p_j}}{\sqrt{p_j}},\qquad \widetilde d^{(q)}_j:=\frac{d^{(q)}_j}{w_j},
	\qquad j\in\NN_0.
\end{equation}
Then $\widetilde{\bm d}^{(q)}:=(\widetilde d^{(q)}_j)_{j\ge0}\in\ell^2(\NN_0)$,
and it is the vector appearing in Theorem~\ref{thm:rankone-decomp}.

\begin{lemma}[Uniform exponential gap away from the critical slope]\label{lem:raney-gap}
	Fix $s\ge2$ and $\theta>0$.  There exist constants $C_{s,\theta},c_{s,\theta}>0$
	such that for all integers $p,m\ge1$ satisfying $p\ge\theta m$,
	\[
		R_{s,p}(m)
		\le
		C_{s,\theta}\,p\,M^{p}\,\zeta_c^{-m}\,m^{-3/2}\,e^{-c_{s,\theta}p}.
	\]
\end{lemma}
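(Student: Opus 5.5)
We reuse the Stirling/convexity argument from the proof of Proposition~\ref{prop:raney-uniform}, but now keep the strictly positive slack in the convexity inequality instead of discarding it. With $N=sm+p$, $K=(s-1)m+p$ and $t=p/m$, that argument gives
\[
	R_{s,p}(m)\le \frac{p}{sm}\,C\,m^{-1/2}\sqrt{M}\,
	\exp\Bigl(m\bigl[\log\zeta_c^{-1}+t\log M-\Phi(t)\bigr]\Bigr)
	= C_s'\,p\,M^{p}\,\zeta_c^{-m}\,m^{-3/2}\,e^{-m\Phi(t)},
\]
where $\Phi$ is the convex function defined there, with $\Phi(0)=\Phi'(0)=0$ and $\Phi''(t)=1/((s+t)(s-1+t))$. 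The lemma therefore reduces to the inequality $m\Phi(p/m)\ge c_{s,\theta}\,p$ whenever $t=p/m\ge\theta$, that is,
\[
	\frac{\Phi(t)}{t}\ge c_{s,\theta}\qquad\text{for all } t\ge\theta .
\]

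To prove this, first observe that $\Phi(t)>0$ for $t>0$ by strict convexity. Also $\Phi(t)/t$ is nondecreasing on $(0,\infty)$, because $\Phi$ is convex with $\Phi(0)=0$ (the slope of a chord from the origin increases). Hence $\inf_{t\ge\theta}\Phi(t)/t=\Phi(\theta)/\theta>0$, and we may take $c_{s,\theta}:=\Phi(\theta)/\theta$. Then $m\Phi(t)=p\,\Phi(t)/t\ge c_{s,\theta}p$, which gives the claimed bound with $C_{s,\theta}=C_s'$. As a sanity check on the asymptotics, we expect $\Phi(t)/t\to\log M$ as $t\to\infty$. This is consistent with $R_{s,p}(m)\le\binom{sm+p}{m}$ having no $M^p$ growth for $p\gg m$, so $e^{-cp}$ merely cancels part of $M^p$.

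The only delicate point is the bookkeeping in the Stirling step. We must check that the absolute constant and the factors $\sqrt{N/K}\le\sqrt M$ and $p/(sm+p)\le p/(sm)$ are uniform in $p$, which they are, exactly as in Proposition~\ref{prop:raney-uniform}. We also need $m\ge1$ so that the Robbins bound applies with $m\ge1$ and $K\ge1$. The monotonicity of $\Phi(t)/t$ is the step that converts the merely quadratic vanishing of $\Phi$ near $0$ into a genuine linear-in-$p$ exponential gap away from $t=0$.
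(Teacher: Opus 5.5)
Your proposal is correct and takes essentially the same approach as the paper. The paper writes the same Stirling bound as $C_s\,p\,M^{p}\zeta_c^{-m}m^{-3/2}\exp\bigl(p\,\Xi_s(t)\bigr)$ with $\Xi_s(t)=-\Phi(t)/t$, and gets $\sup_{t\ge\theta}\Xi_s(t)<0$ from strict negativity, continuity, and the limit $-\log M$ at infinity; your monotonicity of $\Phi(t)/t$ (chord slopes of a convex function vanishing at $0$) is a slightly cleaner finish that yields the explicit constant $c_{s,\theta}=\Phi(\theta)/\theta$.
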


\begin{proof}
	Write $t:=p/m\in[\theta,\infty)$.  By the Stirling estimate used in the proof
	of Proposition~\ref{prop:raney-uniform}, one has
	\[
		R_{s,p}(m)
		\le
		C_s\,p\,m^{-3/2}\exp\bigl(m\Psi_s(t)\bigr),
	\]
	where
	\[
		\Psi_s(t):=(s+t)\log(s+t)-(s-1+t)\log(s-1+t).
	\]
	Equivalently,
	\[
		R_{s,p}(m)
		\le
		C_s\,p\,M^p\,\zeta_c^{-m}\,m^{-3/2}
		\exp\Bigl(p\,\Xi_s(t)\Bigr),
	\]
	with
	\[
		\Xi_s(t):=\frac{\Psi_s(t)-\log(\zeta_c^{-1})}{t}-\log M.
	\]
	The function $\Xi_s$ is continuous on $[\theta,\infty)$.  The convexity
	argument used in Proposition~\ref{prop:raney-uniform} yields
	\[
		\Psi_s(t)-t\log M\le \log(\zeta_c^{-1}),
	\]
	with equality only at $t=0$, so $\Xi_s(t)<0$ for every $t>0$.
	Moreover, $\Xi_s(t)\to-\log M<0$ as $t\to\infty$.  Hence
	\[
		\sup_{t\ge\theta}\Xi_s(t)=-c_{s,\theta}<0,
	\]
	which gives the stated bound.
\end{proof}

\begin{lemma}[Logarithmic tail with Hilbert--Schmidt remainder]\label{lem:entrywise-log}
	Let $\eta=\zeta/\zeta_c\in(0,1)$ and $L(\zeta)=\log\frac{1}{1 - \eta^2}$.
	Define the rank--one Toeplitz kernel
	\[
		(K_\eta)_{j_1j_2}:=\eta^{|j_1-j_2|}\,\widetilde d^{(q)}_{j_1}\,\widetilde d^{(q)}_{j_2},
		\qquad j_1,j_2\in\NN_0.
	\]
	Then, for every $0<\zeta<\zeta_c$, the operator
	\[
	R^{(q)}(\zeta):=\widetilde G^{(q)}(\zeta)-L(\zeta)\,K_\eta
	\]
	is Hilbert--Schmidt on $\ell^2(\NN_0)$, with $\sup_{\zeta<\zeta_c}\|R^{(q)}(\zeta)\|_{\mathrm{HS}}<\infty$.
 Moreover, as $\zeta\uparrow\zeta_c$, $R^{(q)}(\zeta)$ converges in Hilbert--Schmidt norm
 to a limit Hilbert--Schmidt operator $R_*^{(q)}$, hence also in operator norm.

	In particular, for each fixed $(j_1,j_2)$ one has the entrywise expansion
	\begin{equation}\label{eq:app-entrywise-log}
		\widetilde G^{(q)}_{j_1j_2}(\zeta)
		=
		\widetilde d^{(q)}_{j_1}\,\widetilde d^{(q)}_{j_2}\,\eta^{|j_1-j_2|}\,L(\zeta)
		+(R_*^{(q)})_{j_1j_2}
		+o(1),
		\qquad \zeta\uparrow\zeta_c,
	\end{equation}
	and the error is $o(1)$ in $\ell^2$--matrix (Hilbert--Schmidt) sense.
\end{lemma}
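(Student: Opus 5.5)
The plan is to work entry by entry from Lemma~\ref{lem:Gram-entry} and then sum the squared entry errors, weighted by $w_{j_1}^{-2}w_{j_2}^{-2}$, to get Hilbert--Schmidt control. Fix $j_1\le j_2$ and write $\Delta=j_2-j_1$. After the substitution $\zeta=\eta\zeta_c$, the $m$-th summand of \eqref{eq:app-Gram-entry} reads
\[
\frac{(p_{j_2}+sm)^2}{\sqrt{p_{j_1}p_{j_2}}}R_{s,p_{j_1}}(m+\Delta)R_{s,p_{j_2}}(m)\zeta_c^{2m+\Delta}\eta^{2m+\Delta}.
\]
Fix $\theta\in(0,1)$ and split the sum over $m$ into two ranges.

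\emph{Tail range, $m\ge p_{j_2}/\theta$.} Here both $p_{j_1}\le\theta(m+\Delta)$ and $p_{j_2}\le\theta m$ hold. So Lemma~\ref{lem:raney-uniform-expansion} replaces each Raney factor by $A_{s,p}\zeta_c^{-n}n^{-3/2}(1+O(1/n))$, uniformly in $p$. The summand then becomes
\[
d_{j_1}d_{j_2}\,\eta^{\Delta}\,\frac{\eta^{2m}}{m}\bigl(1+O(p_{j_2}/m)+O(\Delta/m)\bigr).
\]
This uses $(p_{j_2}+sm)^2/(s^2m^2)=1+O(p_{j_2}/m)$ and $(m+\Delta)^{-3/2}m^{-3/2}=m^{-3}(1+O(\Delta/m))$. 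The main terms summed over all $m\ge1$ give $d_{j_1}d_{j_2}\eta^\Delta L(\zeta)$. What is left is controlled as follows:
\begin{itemize}
\item the finitely many missing terms $m<p_{j_2}/\theta$ cost $O(d_{j_1}d_{j_2}\log p_{j_2})$;
\item the $O(p/m^2)$ corrections are absolutely summable, uniformly in $\eta\le1$, with bound $O(d_{j_1}d_{j_2}(p_{j_2}+\Delta))$.
\end{itemize}
At $\eta=1$ each of these pieces converges by dominated convergence. This gives a limit entry and an error that is $O(\cdot)$ uniformly in $\zeta$.

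\emph{Initial range, $m<p_{j_2}/\theta$.} Here I only use Proposition~\ref{prop:raney-uniform}. Each summand is bounded by
\[
C\,p_{j_2}^{3/2}p_{j_1}^{1/2}M^{p_{j_1}+p_{j_2}}(1+m)^{-3/2}(1+m+\Delta)^{-3/2}p_{j_2}^{2}.
\]
Summing over this range gives at most $C\,p_{j_2}^{C'}M^{p_{j_1}+p_{j_2}}$, uniformly in $\eta\le1$. Each term is also continuous in $\eta$, so this part has an entrywise limit as well.

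Combining the two ranges, after weighting,
\[
|R^{(q)}_{j_1j_2}(\zeta)|\le C\,\frac{\mathrm{poly}(p_{j_1},p_{j_2})M^{p_{j_1}+p_{j_2}}}{w_{j_1}w_{j_2}}.
\]
This follows because $d_j\sim p_j^{1/2}M^{p_j}$ up to constants. The weights cancel $M^{p_{j_1}+p_{j_2}}$ and supply the decay $p_{j_1}^{-3/2-\beta}p_{j_2}^{-3/2-\beta}$.

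\emph{Main obstacle.} The polynomial losses above can exceed this decay, for example $p_{j_2}^{2}$ from the prefactor together with $p_{j_2}+\Delta$ from the corrections. So the crude bounds must be sharpened until the weighted errors are square summable over $(j_1,j_2)$. Two refinements do this.
\begin{itemize}
\item In the initial range, use the factor $(1+m)^{-3/2}$ honestly: $\sum_{m<p/\theta}(p+sm)^2(1+m)^{-3}$ is only $O(p^2)$ from $m=0$, but the $R_{s,p_{j_1}}(m+\Delta)$ factor carries $(1+\Delta)^{-3/2}$.
\item In the other regime, $p\gtrsim m$, invoke Lemma~\ref{lem:raney-gap}, whose factor $e^{-cp}$ kills every polynomial loss.
\end{itemize}
So the real work is to choose the split point so that the non-gap region has $p\le\theta m$, where the $m^{-3}$ tail gives errors $O(p\,m^{-2})$, and the gap region absorbs everything else. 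I expect the bound to come out as $p_{j_1}^{a}p_{j_2}^{a}(1+\Delta)^{-1}M^{p_{j_1}+p_{j_2}}$ with $a<1+\beta$. Such a bound would give $\sup_\zeta\|R^{(q)}(\zeta)\|_{\mathrm{HS}}<\infty$.

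Finally, the entrywise convergence together with this $\zeta$-uniform, square-summable dominating matrix gives Hilbert--Schmidt convergence to $R^{(q)}_*$ by dominated convergence on $\ell^2(\NN_0^2)$. Operator-norm convergence follows, and \eqref{eq:app-entrywise-log} is just this statement read entrywise.
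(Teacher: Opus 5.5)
This is essentially the paper's proof (split at $m\asymp p_{j_2}$, uniform one-term expansion on the tail with an $O(\widetilde d^{(q)}_{j_1}\widetilde d^{(q)}_{j_2}\log p_{j_2})$ harmonic correction, Lemma~\ref{lem:raney-gap} together with $R_{s,p}(0)=1$ on the whole initial range, then dominated convergence), and what makes your sharpening work is that the $O(p_{j_2}/m^2+\Delta/m^2)$ tail corrections are summed only over $m\ge p_{j_2}/\theta$: since $\Delta\le p_{j_2}/s$, they total $O(\widetilde d^{(q)}_{j_1}\widetilde d^{(q)}_{j_2})$, whereas your crude bound $O(\widetilde d^{(q)}_{j_1}\widetilde d^{(q)}_{j_2}(p_{j_2}+\Delta))$ is not square-summable for $\beta\le\frac12$. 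One caveat you inherit from the paper: Lemma~\ref{lem:raney-uniform-expansion} does not hold uniformly with error $O(1/m)$ on $p\le\theta m$, because $R_{s,p}(m)/(A_{s,p}\zeta_c^{-m}m^{-3/2})$ contains the factor $e^{-m\Phi(p/m)}\approx e^{-p^2/(2s(s-1)m)}$ (with $\Phi$ as in the proof of Proposition~\ref{prop:raney-uniform}); the argument nevertheless survives, since the true relative error $O(\min\{1,p_{j_2}^2/m\})$, summed against $\eta^{2m}/m$ over $m\ge p_{j_2}/\theta$, costs only $O(\log p_{j_2})$, which the decay $\widetilde d^{(q)}_j\asymp p_j^{-1-\beta}$ still absorbs.
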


\begin{proof}
	By symmetry of the Gram kernel, it is enough to treat the case \(j_1\le j_2\). Write \(\Delta:=j_2-j_1\ge0\).  By
	Lemma~\ref{lem:Gram-entry},
	\[
		\widetilde G^{(q)}_{j_1j_2}(\zeta)=\sum_{m\ge0}U_m(j_1,j_2;\zeta),
	\]
	where
	\[
		U_{m}(j_1,j_2;\zeta)
		:=
		\frac{1}{w_{j_1}w_{j_2}}\,
		\frac{(p_{j_2}+sm)^{2}}{\sqrt{p_{j_1}p_{j_2}}}\,
		R_{s,p_{j_1}}(m+\Delta)\,R_{s,p_{j_2}}(m)\,\zeta^{\,2m+\Delta}.
	\]
	Since \(p_{j_2}\ge p_{j_1}\) and \(p_{j_2}\ge \Delta\), the cutoff
	\[
		M_{j_1j_2}:=\max\{2p_{j_1},\,2p_{j_2},\,2\Delta,\,1\}
	\]
	reduces here to \(M_{j_1j_2}=2p_{j_2}\).  We split the sum into the initial
	range \(0\le m<2p_{j_2}\) and the tail \(m\ge2p_{j_2}\).

	\smallskip
	\noindent\emph{Tail region.}
	If \(m\ge2p_{j_2}\), then \(p_{j_2}\le m/2\) and
	\(p_{j_1}\le (m+\Delta)/2\), so Lemma~\ref{lem:raney-uniform-expansion}
	applies to both Raney factors (with \(\theta=\frac12\)). More explicitly,
	\[
		R_{s,p_{j_1}}(m+\Delta)
		=
		A_{s,p_{j_1}}\zeta_c^{-m-\Delta}(m+\Delta)^{-3/2}
		\Bigl(1+O((m+1)^{-1})\Bigr),
	\]
	and
	\[
		R_{s,p_{j_2}}(m)
		=
		A_{s,p_{j_2}}\zeta_c^{-m}m^{-3/2}
		\Bigl(1+O((m+1)^{-1})\Bigr),
	\]
	uniformly for \(m\ge2p_{j_2}\). Since \(j_1\le j_2\), the quantities
	\(p_{j_1},p_{j_2},\Delta\) are fixed along the tail sum, while
	\[
		(p_{j_2}+sm)^2=s^2m^2\Bigl(1+O((m+1)^{-1})\Bigr)
	\]
	and
	\[
		(m+\Delta)^{-3/2}m^{-3/2}
		=
		m^{-3}\Bigl(1+O((m+1)^{-1})\Bigr).
	\]
	Hence
	\[
		\frac{(p_{j_2}+sm)^2}{\sqrt{p_{j_1}p_{j_2}}}\,
		R_{s,p_{j_1}}(m+\Delta)\,R_{s,p_{j_2}}(m)\,\zeta^{2m+\Delta}
		=
		\frac{s^2A_{s,p_{j_1}}A_{s,p_{j_2}}}{\sqrt{p_{j_1}p_{j_2}}}\,
		\eta^\Delta\frac{\eta^{2m}}{m}
		\Bigl(1+O((m+1)^{-1})\Bigr).
	\]
	Using \eqref{eq:app-d-def} and \(m^{-1}=(m+1)^{-1}+O((m+1)^{-2})\), we
	obtain
	\begin{equation}\label{eq:Sm-decomp}
		U_m(j_1,j_2;\zeta)
		=
		\widetilde d^{(q)}_{j_1}\widetilde d^{(q)}_{j_2}\,\eta^{\Delta}\,
		\frac{\eta^{2m}}{m+1}
		+E_m^{\mathrm{tail}}(j_1,j_2;\zeta),
	\end{equation}
	with
	\begin{equation}\label{eq:Sm-error}
		|E_m^{\mathrm{tail}}(j_1,j_2;\zeta)|
		\le
		C\,\widetilde d^{(q)}_{j_1}\widetilde d^{(q)}_{j_2}\,\eta^{\Delta}\,
		\frac{\eta^{2m}}{(m+1)^2},
		\qquad m\ge 2p_{j_2},
	\end{equation}
	where \(C\) depends only on \(s\) and \(\beta\).  Summing over \(m\ge2p_{j_2}\)
	shows that
	\[
		E^{\mathrm{tail}}_{j_1j_2}(\zeta)
		:=\sum_{m\ge2p_{j_2}}E_m^{\mathrm{tail}}(j_1,j_2;\zeta)
	\]
	defines a uniformly Hilbert--Schmidt kernel, since
	\[
		|E^{\mathrm{tail}}_{j_1j_2}(\zeta)|
		\le
		C\,\widetilde d^{(q)}_{j_1}\widetilde d^{(q)}_{j_2}
		\sum_{m\ge0}\frac{1}{(m+1)^2}
		\le
		C'\,\widetilde d^{(q)}_{j_1}\widetilde d^{(q)}_{j_2},
	\]
	and \(\widetilde{\bm d}^{(q)}\in\ell^2(\NN_0)\).

	\smallskip
	\noindent\emph{Initial range.}
	Write
	\[
		E^{\mathrm{init}}_{j_1j_2}(\zeta):=\sum_{0\le m<2p_{j_2}}U_m(j_1,j_2;\zeta).
	\]
	The term \(m=0\) is estimated by Proposition~\ref{prop:raney-uniform}:
	\[
		|U_0(j_1,j_2;\zeta)|
		\le
		C\,M^{-p_{j_2}}\,p_{j_1}^{-1-\beta}p_{j_2}^{1-\beta}(1+\Delta)^{-3/2}.
	\]
	For \(1\le m<2p_{j_2}\), Proposition~\ref{prop:raney-uniform} applies to
	\(R_{s,p_{j_1}}(m+\Delta)\), while Lemma~\ref{lem:raney-gap} with
	\(\theta=\frac12\) applies to \(R_{s,p_{j_2}}(m)\) because
	\(p_{j_2}\ge m/2\).  Using \(\zeta^{2m+\Delta}\le \zeta_c^{2m+\Delta}\), one gets
	\[
		|U_m(j_1,j_2;\zeta)|
		\le
		C\,e^{-c p_{j_2}}\,
		p_{j_1}^{-1-\beta}p_{j_2}^{-1-\beta}\,
		\frac{(p_{j_2}+sm)^2}{(m+1)^{3/2}(m+\Delta+1)^{3/2}}.
	\]
	Since \(m<2p_{j_2}\), we have \(p_{j_2}+sm\le (1+2s)p_{j_2}\), and therefore
	\[
		\sum_{m=1}^{2p_{j_2}-1}|U_m(j_1,j_2;\zeta)|
		\le
		C\,e^{-c p_{j_2}}\,
		p_{j_1}^{-1-\beta}p_{j_2}^{1-\beta}
		\sum_{m\ge1}\frac{1}{(m+1)^3}
		\le
		C'\,e^{-c p_{j_2}}\,p_{j_1}^{-1-\beta}p_{j_2}^{1-\beta}.
	\]
	Hence
	\[
		|E^{\mathrm{init}}_{j_1j_2}(\zeta)|
		\le
		C''\,e^{-c' p_{j_2}}\,p_{j_1}^{-1-\beta}p_{j_2}^{1-\beta},
	\]
	for constants \(C'',c'>0\) independent of \(\zeta\).  This bound is square
	summable in \((j_1,j_2)\), so \(E^{\mathrm{init}}(\zeta)\) is uniformly
	Hilbert--Schmidt.  Moreover, for each fixed \((j_1,j_2)\) the sum is finite and
	therefore converges termwise as \(\zeta\uparrow\zeta_c\).

	Summing \eqref{eq:Sm-decomp} over \(m\ge2p_{j_2}\) gives
	\[
		\sum_{m\ge2p_{j_2}}
		U_m(j_1,j_2;\zeta)
		=
		\widetilde d^{(q)}_{j_1}\widetilde d^{(q)}_{j_2}\,\eta^\Delta
		\sum_{m\ge2p_{j_2}}\frac{\eta^{2m}}{m+1}
		+
		E^{\mathrm{tail}}_{j_1j_2}(\zeta).
	\]
	Reinstating the omitted harmonic terms produces the correction
	\[
		E^{\mathrm{harm}}_{j_1j_2}(\zeta)
		:=
		\widetilde d^{(q)}_{j_1}\widetilde d^{(q)}_{j_2}\,\eta^\Delta
		\sum_{m=0}^{2p_{j_2}-1}\frac{\eta^{2m}}{m+1}.
	\]
	Since
	\[
		\sum_{m=0}^{2p_{j_2}-1}\frac{\eta^{2m}}{m+1}
		\le
		1+\log(1+2p_{j_2}),
	\]
	we obtain
	\[
		|E^{\mathrm{harm}}_{j_1j_2}(\zeta)|
		\le
		C\,\widetilde d^{(q)}_{j_1}\widetilde d^{(q)}_{j_2}\bigl(1+\log(1+p_{j_2})\bigr).
	\]
	Because \(\widetilde d^{(q)}_j\asymp p_j^{-1-\beta}\), the kernel on the
	right-hand side is Hilbert--Schmidt:
	\[
		\sum_{j_1,j_2\ge0}
		\widetilde d^{(q)}_{j_1}{}^2\widetilde d^{(q)}_{j_2}{}^2
		\bigl(1+\log(1+p_{j_2})\bigr)^2
		<\infty.
	\]

	Thus
	\[
		\widetilde G^{(q)}_{j_1j_2}(\zeta)
		=
		\widetilde d^{(q)}_{j_1}\widetilde d^{(q)}_{j_2}\,\eta^\Delta
		\sum_{m\ge0}\frac{\eta^{2m}}{m+1}
		+
		E^{\mathrm{init}}_{j_1j_2}(\zeta)
		+
		E^{\mathrm{tail}}_{j_1j_2}(\zeta)
		+
		E^{\mathrm{harm}}_{j_1j_2}(\zeta).
	\]
	Since \(\sum_{m\ge0}\frac{\eta^{2m}}{m+1}=\eta^{-2}L(\zeta)\), we write
	\[
		\eta^{-2}L(\zeta)=L(\zeta)+\bigl(\eta^{-2}-1\bigr)L(\zeta).
	\]
	The coefficient \(\bigl(\eta^{-2}-1\bigr)L(\zeta)\) stays bounded as
\(\eta\uparrow1\), and \(K_\eta\) is uniformly Hilbert--Schmidt. Therefore the
extra term $\bigl(\eta^{-2}-1\bigr)L(\zeta)\,K_\eta$
is uniformly Hilbert--Schmidt as well. Collecting all Hilbert--Schmidt pieces
into \(R^{(q)}(\zeta)\) yields
\[
    \widetilde G^{(q)}(\zeta)=L(\zeta)\,K_\eta+R^{(q)}(\zeta),
\]
with \(\sup_{\zeta<\zeta_c}\|R^{(q)}(\zeta)\|_{\HS}<\infty\).

 Finally, each constituent of \(R^{(q)}(\zeta)\) has a pointwise
	limit as \(\zeta\uparrow\zeta_c\), and the dominating Hilbert--Schmidt kernels
	displayed above are independent of \(\zeta\).  Dominated convergence therefore
	yields
	\[
	 R^{(q)}(\zeta)\to R_*^{(q)}
		\qquad\text{in Hilbert--Schmidt norm as }\zeta\uparrow\zeta_c,
	\]
	hence also in operator norm.
\end{proof}

\subsection{From entrywise asymptotics to a rank--one operator}

The prefactor $\eta^{|j_1-j_2|}$ in \eqref{eq:app-entrywise-log} tends to $1$ as
$\zeta\uparrow\zeta_c$.  The next lemma quantifies that, after weighting, it can
be absorbed into the bounded remainder.
\begin{lemma}[Removal of the Toeplitz prefactor]\label{lem:toeplitz-removal}
	Let $D=(\widetilde d^{(q)}_j)_{j\ge0}\in\ell^2(\NN_0)$ and define operators
	\[
		K_\eta:\ (K_\eta)_{j_1j_2}:=\eta^{|j_1-j_2|}D_{j_1}D_{j_2},
		\qquad
		K_1:=D\otimes D^*,
		\qquad 0<\eta\le1.
	\]
	Then \(K_\eta-K_1\to0\) in Hilbert--Schmidt norm as \(\eta\uparrow1\).
	In particular,
	\[
		\|K_\eta-K_1\|\to0,
		\qquad
		L(\zeta)\,\|K_\eta-K_1\|\to0
		\quad\text{as }\zeta\uparrow\zeta_c.
	\]
\end{lemma}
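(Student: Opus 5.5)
The plan is to estimate the Hilbert--Schmidt norm of $K_\eta-K_1$ directly. Because $L(\zeta)\to\infty$, a qualitative convergence statement will not suffice for the second claim; I will extract an explicit power rate in $1-\eta$. The starting point is the identity
\[
	\|K_\eta-K_1\|_{\HS}^2
	=
	\sum_{j_1,j_2\ge0}D_{j_1}^2D_{j_2}^2\bigl(1-\eta^{|j_1-j_2|}\bigr)^2 .
\]

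For the first claim, that $K_\eta-K_1\to0$ in $\HS$ norm, only $D\in\ell^2(\NN_0)$ is needed. Each summand is bounded by $D_{j_1}^2D_{j_2}^2$, which is summable with total $\|D\|^4$. Each summand also tends to $0$ as $\eta\uparrow1$. Dominated convergence then gives the claim, and $\|\cdot\|\le\|\cdot\|_{\HS}$ gives operator-norm convergence.

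For the rate, I use the elementary bound, valid for $\Delta\ge0$, $0<\eta\le1$ and any $\gamma\in(0,1]$,
\[
	1-\eta^{\Delta}\le\min\{1,\Delta(1-\eta)\}\le\Delta^{\gamma}(1-\eta)^{\gamma}.
\]
Combining it with $|j_1-j_2|^{2\gamma}\le C_\gamma(j_1^{2\gamma}+j_2^{2\gamma})$ yields
\[
	\|K_\eta-K_1\|_{\HS}^2
	\le
	2C_\gamma(1-\eta)^{2\gamma}\,\|D\|_{\ell^2}^2\sum_{j\ge0}j^{2\gamma}D_j^2 .
\]
It remains to check that the last series converges for a suitable $\gamma>0$. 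By \eqref{eq:app-d-def} and \eqref{eq:app-Raney-asymp}, $A_{s,p}=pM^p/\sqrt{2\pi s(s-1)}$. Hence
\[
	d_j^{(q)}\asymp p_j^{1/2}M^{p_j},
	\qquad
	D_j=d_j^{(q)}/w_j\asymp p_j^{-1-\beta}.
\]
So $\sum_j j^{2\gamma}D_j^2\asymp\sum_j j^{2\gamma-2-2\beta}$, which converges whenever $2\gamma<1+2\beta$. Fix, for instance, $\gamma:=\min\{1,\tfrac12\}=\tfrac12$, which is admissible since $\beta>0$. This gives
\[
	\|K_\eta-K_1\|\le\|K_\eta-K_1\|_{\HS}\le C_{s,q,\beta}(1-\eta)^{1/2}.
\]

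Finally,
\[
	L(\zeta)=\log\frac1{1-\eta^2}\le\log\frac1{1-\eta},
\]
so $L(\zeta)\|K_\eta-K_1\|\le C(1-\eta)^{1/2}\log\frac1{1-\eta}\to0$ as $\eta\uparrow1$, which is the second claim. The only step with real content is this quantitative rate. It hinges on the interpolation $\min\{1,x\}\le x^\gamma$ and on the polynomial decay $D_j\asymp p_j^{-1-\beta}$. Without the weights that decay would fail: a mere $D\in\ell^2$ would give no rate, and the logarithmic factor could not be absorbed. I expect this to be the main point to verify carefully, in particular that the $M^{p_j}$ factors in $d_j$ and $w_j$ cancel exactly.
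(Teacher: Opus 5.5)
Your proof is correct and follows the same strategy as the paper: the Hilbert--Schmidt identity, the bound $1-\eta^{d}\le\min\{1,d(1-\eta)\}$, the exact decay $D_j=c_s\,p_j^{-1-\beta}$ (the $M^{p_j}$ factors do cancel), and a power rate in $1-\eta$ that absorbs $L(\zeta)$. The difference is only technical: you interpolate with $\min\{1,x\}\le x^{\gamma}$ and reduce to the moment $\sum_j j^{2\gamma}D_j^2$, whereas the paper sums along diagonals using $b_d=\sum_j a_j a_{j+d}\lesssim(1+d)^{-2-2\beta}$ and optimizes a cutoff $N$, which gives the sharper (but unnecessary) rate $(1-\eta)^{\min\{1,\frac12+\beta\}}$ instead of your $(1-\eta)^{1/2}$.
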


\begin{proof}
	By \eqref{eq:app-d-def}, \eqref{eq:weights} and the explicit constant
	$A_{s,p}=\frac{p\,M^p}{\sqrt{2\pi s(s-1)}}$ in \eqref{eq:app-Raney-asymp}, one has
	\[
		D_j=\widetilde d^{(q)}_j
		=
		c_s\,p_j^{-1-\beta},
		\qquad
		c_s:=\sqrt{\frac{s}{2\pi(s-1)}},
	\]
	hence $a_j:=D_j^2\le C(1+j)^{-2-2\beta}$ and $\sum_{j\ge0}a_j<\infty$.

	Set $b_d:=\sum_{j\ge0}a_j a_{j+d}$ for $d\ge0$. By symmetry,
	\[
		\|K_\eta-K_1\|_{\mathrm{HS}}^2
		=
		\sum_{j_1,j_2\ge0}(1-\eta^{|j_1-j_2|})^2a_{j_1}a_{j_2}
		\asymp
		\sum_{d\ge0}(1-\eta^d)^2 b_d,
	\]
	up to an absolute factor. Moreover, since $a_{j+d}\le C(1+d)^{-2-2\beta}$ for all $j\ge0$,
	\[
		b_d=\sum_{j\ge0}a_j a_{j+d}
		\le
		C(1+d)^{-2-2\beta}\sum_{j\ge0}a_j
		\le
		C'(1+d)^{-2-2\beta}.
	\]

	Fix $N\in\NN$. Using $1-\eta^d\le d(1-\eta)$ for $d\le N$ and $1-\eta^d\le1$ for $d>N$, we get
	\[
		\sum_{d\ge0}(1-\eta^d)^2 b_d
		\le
		(1-\eta)^2\sum_{d=0}^{N} d^2 b_d
		+\sum_{d>N} b_d
		\le
		C'(1-\eta)^2\sum_{d=1}^{N} d^{-2\beta}
		+ C'\sum_{d>N} d^{-2-2\beta}.
	\]
	The last tail is $O(N^{-1-2\beta})$. The partial sum satisfies
	$\sum_{d=1}^{N} d^{-2\beta}=O(N^{1-2\beta})$ if $\beta<\frac12$, $O(\log N)$ if $\beta=\frac12$,
	and $O(1)$ if $\beta>\frac12$. Choose $N:=\lfloor(1-\eta)^{-1}\rfloor$ to balance the two terms. This yields
	\[
		\begin{cases}
			O\!\bigl((1-\eta)^{1+2\beta}\bigr),            & 0<\beta<\frac12, \\[2pt]
			O\!\bigl((1-\eta)^2\log\frac{1}{1-\eta}\bigr), & \beta=\frac12,   \\[2pt]
			O\!\bigl((1-\eta)^2\bigr),                     & \beta>\frac12,
		\end{cases}
		\qquad \eta\uparrow1.
	\]
	In particular,
	\[
		\|K_\eta-K_1\|_{\mathrm{HS}}
		=O\!\bigl((1-\eta)^{\min\{1,\,\frac12+\beta\}}\bigr)
	\]
	up to an extra factor \(\sqrt{\log\frac{1}{1-\eta}}\) when
	\(\beta=\frac12\). Since \(\min\{1,\frac12+\beta\}\ge \min\{1,\beta\}\)
	for all \(\beta>0\), this is the required Hilbert--Schmidt estimate.

	Finally, $\|K_\eta-K_1\|\le\|K_\eta-K_1\|_{\mathrm{HS}}$ and
	$L(\zeta)\sim\log\frac{1}{1-\eta}$ with
	$(1-\eta)^\gamma\log\frac{1}{1-\eta}\to0$ for any $\gamma>0$, yield
	$L(\zeta)\|K_\eta-K_1\|\to0$. This proves the lemma.
\end{proof}


\section{Hypergeometric continuation of \texorpdfstring{$\mathcal G_p$}{Gp}}
\label{app:hyp-2}

This appendix proves the hypergeometric representation used in
Section~\ref{sec:regime-2}. We keep the notation
\(\{\alpha_i\}_{i=1}^{2s}\) and \(\{\beta_j\}_{j=1}^{2s-1}\) from
\eqref{eq:alpha-beta-params}.

\begin{proof}[Proof of Proposition~\ref{prop:Gp-hypergeom}]
	Write
	\[
		\mathcal G_p(u)=\sum_{m\ge0} a_m\,u^m,
		\qquad
		a_m:=R_{s,p}(m)^2,
		\qquad
		R_{s,p}(m)=\frac{p}{sm+p}\binom{sm+p}{m}.
	\]
	Using
	\[
		R_{s,p}(m)
		=
		p\,\frac{\Gamma(sm+p)}{\Gamma(m+1)\Gamma((s-1)m+p+1)},
	\]
	one computes
	\[
		\frac{R_{s,p}(m+1)}{R_{s,p}(m)}
		=
		\frac{\prod_{k=0}^{s-1}(sm+p+k)}
		{(m+1)\prod_{l=1}^{s-1}((s-1)m+p+l)}.
	\]
	Squaring and factorizing the linear terms gives
	\[
		\frac{a_{m+1}}{a_m}
		=
		\zeta_c^{-2}\,
		\frac{\prod_{k=0}^{s-1}\left(m+\frac{p+k}{s}\right)^2}
		{(m+1)^2\prod_{l=1}^{s-1}\left(m+\frac{p+l}{s-1}\right)^2},
	\]
	because
	\[
		\frac{s^{2s}}{(s-1)^{2s-2}}=\zeta_c^{-2}.
	\]
	Now consider the generalized hypergeometric series
	\[
		H(u):={}_{2s}F_{2s-1}\!\left(
		\begin{matrix}
			\alpha_1,\ldots,\alpha_{2s} \\
			\beta_1,\ldots,\beta_{2s-1}
		\end{matrix}
		\Big|\, \zeta_c^{-2}u
		\right),
	\]
	with parameter multisets
	\[
		\{\alpha_i\}
		=
		2\times\left\{\frac{p+k}{s}:k=0,\ldots,s-1\right\},
		\qquad
		\{\beta_j\}
		=
		\{1\}\cup
		2\times\left\{\frac{p+l}{s-1}:l=1,\ldots,s-1\right\}.
	\]
	Its coefficients \(b_m\) satisfy \(b_0=1\) and the standard ratio formula
	\[
		\frac{b_{m+1}}{b_m}
		=
		\zeta_c^{-2}\,
		\frac{\prod_{i=1}^{2s}(m+\alpha_i)}
		{(m+1)\prod_{j=1}^{2s-1}(m+\beta_j)}.
	\]
	By the chosen parameter sets, this is exactly the ratio displayed above for
	\(a_{m+1}/a_m\).  Since \(a_0=R_{s,p}(0)^2=1=b_0\), the sequences coincide
	for all \(m\ge0\), and therefore \(\mathcal G_p(u)=H(u)\). This proves
	\eqref{eq:Gp-hypergeom}.
\end{proof}

\section{The logarithmic coefficient at the branch point}
\label{app:B1}

This appendix computes the coefficient $B(\zeta_c^2)$ in the resonant local
expansion \eqref{eq:resonant-expansion} of $\mathcal G_p$ at the branch point
$u=\zeta_c^2$. The argument isolates the universal $\operatorname{Li}_3$ tail
in the coefficient asymptotics.

\begin{proposition}[Explicit value of $B(\zeta_c^2)$]\label{prop:B1-closed}
	For all integers $s\ge2$ and $p\ge1$, the coefficient $B(\zeta_c^2)$ in
	\eqref{eq:resonant-expansion} satisfies
	\begin{equation}\label{eq:B1-closed}
		B(\zeta_c^2)
		=
		-\frac{p^{2}}{4\pi}\,
		\frac{s^{\,2p-1}}{(s-1)^{\,2p+1}}.
	\end{equation}
	In particular, $B(\zeta_c^2)<0$.
\end{proposition}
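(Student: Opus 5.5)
The plan is to read off \(B(\zeta_c^2)\) from the large-\(m\) asymptotics of the Taylor coefficients of \(\mathcal G_p\), by matching two descriptions of these coefficients. The first description comes from Theorem~\ref{thm:resonant-expansion}. Write \(\xi=u/\zeta_c^2\). Near \(\xi=1\) we have \(\mathcal G_p=A+B(1-\xi)^2\log(1-\xi)\). Every other singularity lies farther out, since the continuation is holomorphic on the slit plane and there are no other finite singular points. Transfer theorems of Flajolet--Sedgewick type (\cite{FlajoletSedgewick2009}) then say that the coefficient asymptotics are governed only by the singular part at \(\xi=1\). This requires \(\Delta\)-analyticity, which holds because the function is single-valued on \(\CC\setminus[1,\infty)\). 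We also need growth control near \(\xi=1\); the local Frobenius form supplies this.

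The key computation is \([\xi^m](1-\xi)^2\log(1-\xi)\). Expand \(\log(1-\xi)=-\sum_{n\ge1}\xi^n/n\). Multiplying by \((1-\xi)^2\) gives the coefficient
\[
-\Bigl(\frac1m-\frac{2}{m-1}+\frac{1}{m-2}\Bigr)=-\frac{2}{m(m-1)(m-2)}=-\frac{2}{m^3}\bigl(1+O(m^{-1})\bigr).
\]
The analytic correction \(B(u)-B(\zeta_c^2)\) multiplies \((1-\xi)^3\log(1-\xi)\). That term contributes only \(O(m^{-4})\), and \(A\) contributes exponentially smaller terms relative to \(\zeta_c^{-2m}\). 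We therefore get
\[
R_{s,p}(m)^2=[u^m]\mathcal G_p=\zeta_c^{-2m}\Bigl(-\frac{2B(\zeta_c^2)}{m^3}+O(m^{-4})\Bigr).
\]

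The second description is the Stirling asymptotics \eqref{eq:app-Raney-square-asymp} of Proposition~\ref{prop:raney-asymp}:
\[
R_{s,p}(m)^2=\frac{p^2}{2\pi}\frac{s^{2p-1}}{(s-1)^{2p+1}}\zeta_c^{-2m}m^{-3}(1+O(m^{-1})).
\]
Equating the leading \(m^{-3}\) coefficients of the two expressions gives \(-2B(\zeta_c^2)=\frac{p^2}{2\pi}\frac{s^{2p-1}}{(s-1)^{2p+1}}\), which is \eqref{eq:B1-closed}. Negativity is then immediate.

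The main obstacle is justifying the transfer step. We must make sure no other singularity on \(|\xi|=1\) contributes, which holds because \(\xi=1\) is the only finite singular point besides \(0\). We must also make sure the analytic part \(A\) adds nothing at order \(m^{-3}\). Alternatively, we can avoid the transfer theorems by a Darboux-type argument. Subtract \(B(\zeta_c^2)(1-\xi)^2\log(1-\xi)\) from \(\mathcal G_p\). The remainder is \(A+O((1-\xi)^3\log(1-\xi))\), which is \(C^2\) on the closed unit disk. Its Taylor coefficients are therefore \(o(m^{-2})\) on their own, and can be pushed to \(O(m^{-3-\epsilon})\) by taking one more derivative and integrating by parts on the circle. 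The matching of the \(m^{-3}\) terms is then rigorous.
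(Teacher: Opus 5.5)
Your argument is correct, but it runs in the opposite direction from the paper's proof.

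\textbf{The paper's route.} It does not use singularity analysis at all.
\begin{enumerate}
\item It takes the Stirling asymptotics in the quantitative form $a_m=R_{s,p}(m)^2\zeta_c^{2m}=C_{s,p}m^{-3}+O(m^{-4})$.
\item It subtracts $C_{s,p}\mathrm{Li}_3(\xi)$. The remainder has coefficients $O(m^{-4})$, so $\sum_m m^2|a_m-C_{s,p}m^{-3}|<\infty$ and the remainder is $C^2$ on $[0,1]$.
\item Since $(1-\xi)^2\log(1-\xi)$ is not $C^2$ at $\xi=1$, the remainder carries no such term.
\item The value $B(\zeta_c^2)=-\tfrac12 C_{s,p}$ is then read off from the classical expansion of $\mathrm{Li}_3$ at $\xi=1$.
\end{enumerate}
This is an Abelian, real-variable argument. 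It uses only the local resonant form along $[0,1)$ and needs nothing about $\mathcal G_p$ elsewhere on $|\xi|=1$. It does, however, need the quantitative $O(m^{-4})$ error: an $o(m^{-3})$ error would not make $\sum m^2|\cdot|$ converge.

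\textbf{Your route.} You go from the local singular expansion to the coefficients (by transfer or Darboux) and then match with Stirling.
\begin{itemize}
\item It consumes the global continuation of Corollary~\ref{cor:Gp-continuation}. For transfer this supplies $\Delta$-analyticity; for Darboux it supplies smoothness of the boundary values on $|\xi|=1$ away from $\xi=1$.
\item In exchange it needs only the leading Stirling term.
\item Your identity $[\xi^m](1-\xi)^2\log(1-\xi)=-2/(m(m-1)(m-2))$ plays the role of the $\mathrm{Li}_3$ expansion.
\end{itemize}

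\textbf{Two small remarks.} First, $A$ is only defined near $\xi=1$, so it has no ``exponentially smaller'' coefficients of its own. The correct bookkeeping is the one in your Darboux variant. Alternatively, subtract a Taylor polynomial of $A$ at $1$ and put the rest into the $O((1-\xi)^3\log(1-\xi))$ error for the big-$O$ transfer. Second, in the Darboux variant, three integrations by parts plus Riemann--Lebesgue for the integrable third derivative already give $o(m^{-3})$. That is all the matching requires, so you do not need to push to $O(m^{-3-\epsilon})$.
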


\begin{proof}
	Write $\mathcal G_p(u)=\sum_{m\ge0}R_{s,p}(m)^2\,u^m$ and introduce the scaled variable
	$\xi:=u/\zeta_c^2$, so that
	\[
		\mathcal G_p(\zeta_c^2 \xi)=\sum_{m\ge0}a_m \xi^m,
		\qquad
		a_m:=R_{s,p}(m)^2\,\zeta_c^{2m}.
	\]
	By Proposition~\ref{prop:raney-asymp}, one has the refined asymptotic
	\begin{equation}\label{eq:am-refined}
		a_m = C_{s,p}\,m^{-3}+O(m^{-4}),
		\qquad
		C_{s,p}:=\frac{p^{2}}{2\pi}\,\frac{s^{\,2p-1}}{(s-1)^{\,2p+1}}.
	\end{equation}
	On the other hand, the polylogarithm $\mathrm{Li}_3(\xi)=\sum_{m\ge1}m^{-3}\xi^m$
	has the classical singular expansion at $\xi=1$ (see, e.g., DLMF~\S25.12
	\cite{NIST:DLMF})
	\[
		\mathrm{Li}_3(\xi)
		= \text{(analytic at $\xi=1$)} - \tfrac12(1-\xi)^2\log(1-\xi) + O\!\bigl((1-\xi)^2\bigr),
		\qquad \xi\to1.
	\]

	Define the remainder series
	\[
		H(\xi):=\mathcal G_p(\zeta_c^2\xi)-C_{s,p}\,\mathrm{Li}_3(\xi)
		=\sum_{m\ge1}\Bigl(a_m-C_{s,p}m^{-3}\Bigr)\xi^m.
	\]
	By \eqref{eq:am-refined}, the coefficients satisfy $a_m-C_{s,p}m^{-3}=O(m^{-4})$,
	hence $\sum_{m\ge1}m^2\bigl|a_m-C_{s,p}m^{-3}\bigr|<\infty$. Therefore the series for
	$H(\xi)$, $H'(\xi)$ and $H''(\xi)$ converge absolutely at $\xi=1$, so $H$ extends to a
	$C^2$-function on $[0,1]$ and therefore
	\[
		H(\xi)=H(1)+H'(1)(\xi-1)+\tfrac12H''(1)(\xi-1)^2+o\bigl((1-\xi)^2\bigr),
		\qquad \xi\to1.
	\]
	In particular, $H$ contributes no $(1-\xi)^2\log(1-\xi)$ term.

	Consequently, the $(1-\xi)^2\log(1-\xi)$ coefficient in $\mathcal G_p(\zeta_c^2\xi)$ is exactly
	$-\tfrac12C_{s,p}$, coming from $C_{s,p}\mathrm{Li}_3(\xi)$. Comparing with
	\eqref{eq:resonant-expansion} yields $B(\zeta_c^2)=-\tfrac12C_{s,p}$, which is
	\eqref{eq:B1-closed}.
\end{proof}

\begin{corollary}[Rational--over--$\pi$ structure]\label{cor:B1-rational}
	For all integers $s\ge2$ and $p\ge1$ one has $\pi\,B(\zeta_c^2)\in\QQ$ and $\pi\,B(\zeta_c^2)<0$.
	Consequently, the edge value in Theorem~\ref{thm:edge-density} satisfies
	$\rho_p(\zeta_c^2)\in\pi^{-1}\QQ$.
\end{corollary}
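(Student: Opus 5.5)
The plan is to read the corollary directly off the closed form in Proposition~\ref{prop:B1-closed} and then transport the conclusion to the edge density through Theorem~\ref{thm:edge-density}. Since that proposition already computes the coefficient exactly, no further asymptotic work is needed; the content of the corollary is purely arithmetic.

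\emph{Step 1: rationality of $\pi B(\zeta_c^2)$.} By \eqref{eq:B1-closed},
\[
	\pi\,B(\zeta_c^2)=-\frac{p^2}{4}\,\frac{s^{2p-1}}{(s-1)^{2p+1}} .
\]
For integers $s\ge2$ and $p\ge1$, the exponents $2p-1\ge1$ and $2p+1\ge3$ are nonnegative integers. Hence the numerator $p^2s^{2p-1}$ and the denominator $4(s-1)^{2p+1}$ are positive integers, and the denominator is nonzero because $s-1\ge1$. It follows that $\pi B(\zeta_c^2)\in\QQ$. The overall minus sign, together with the positivity of the numerator and denominator, gives $\pi B(\zeta_c^2)<0$.

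\emph{Step 2: transfer to the edge density.} Theorem~\ref{thm:edge-density} gives
\[
	\rho_p(\zeta_c^2)=-\frac{2s^2}{p}\,B(\zeta_c^2),
\]
which can be rewritten as
\[
	\rho_p(\zeta_c^2)=\pi^{-1}\cdot\Bigl(-\frac{2s^2}{p}\Bigr)\cdot\pi B(\zeta_c^2).
\]
The factor $-2s^2/p$ is rational, and by Step~1 so is $\pi B(\zeta_c^2)$. Their product is therefore rational, and $\rho_p(\zeta_c^2)\in\pi^{-1}\QQ$. Explicitly,
\[
	\rho_p(\zeta_c^2)=\frac{1}{2\pi}\,\frac{p\,s^{2p+1}}{(s-1)^{2p+1}},
\]
which is positive, consistent with the sign discussion following Definition~\ref{def:discontinuity}.

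\emph{Main point to check.} The only real dependence is on the validity of the constant in Proposition~\ref{prop:B1-closed}, namely that the $\mathrm{Li}_3$ comparison correctly identifies $B(\zeta_c^2)=-\tfrac12 C_{s,p}$, with $C_{s,p}$ the rational-over-$\pi$ amplitude from \eqref{eq:app-Raney-square-asymp}. Since that proposition is taken as given, the corollary itself involves no genuine obstacle; it is bookkeeping of the rational factors.
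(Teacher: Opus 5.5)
Your proposal is correct and matches the paper's proof: you read $\pi B(\zeta_c^2)=-\frac{p^2}{4}\,\frac{s^{2p-1}}{(s-1)^{2p+1}}$ off Proposition~\ref{prop:B1-closed}, and you transfer the conclusion to $\rho_p(\zeta_c^2)$ through the rational factor $-2s^2/p$ from Theorem~\ref{thm:edge-density}. Your explicit value $\rho_p(\zeta_c^2)=\frac{p\,s^{2p+1}}{2\pi(s-1)^{2p+1}}$ is also correct, and it makes the positivity of the edge density visible.
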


\begin{proof}
	The first claim is immediate from the closed formula \eqref{eq:B1-closed}. The
	expression for $\rho_p(\zeta_c^2)$ from Theorem~\ref{thm:edge-density} is a
	rational multiple of $-B(\zeta_c^2)$, so the same rational-over-$\pi$
	structure is inherited by the edge value.
\end{proof}

\end{document}